\newcommand{\kvnr}[1]{}
\newcommand{\arir}[1]{}
\newcommand{\ahr}[1]{}
\newcommand{\kvn}[1]{{#1}}
\newcommand{\ah}[1]{{#1}}
\newcommand{\Bf}{\mathrm{BF}}
\newcommand{\BF}{\mathrm{BF}}
\newcommand{\vol}{\mathrm{vol}}
\newcommand{\opt}{\mathrm{Opt}}
\newcommand{\Opt}{\mathrm{Opt}}
\newcommand{\Th}{^{\mathrm{th}}}
\newcommand{\floor}[1]{\left\lfloor#1\right\rfloor}
\newcommand{\ceil}[1]{\left\lceil#1\right\rceil}
\newcommand{\abs}[1]{\left|#1\right|}
\newcommand{\prob}[2][]{\mathbb{P}_{#1}\left[#2\right]}
\newcommand{\expec}[2][]{\mathbb{E}_{#1}\left[#2\right]}
\newcommand{\var}[2][]{\mathrm{Var}_{#1}\left[#2\right]}
\newcommand{\cov}[3][]{\mathrm{Cov}_{#1}\left[#2,#3\right]}
\newcommand{\rank}{\mathrm{rank}}
\newcommand{\case}[1]{\textsf{Case~#1}}
\newcommand{\Case}[1]{\textsf{Case~#1}}
\newcommand{\casess}[1]{\textsf{Cases~#1}}
\newcommand{\bestfit}{Best-Fit}
\newcommand{\nextfit}{Next-Fit}
\newcommand{\whp}{w.h.p.}
\newcommand{\T}{{T}}
\renewcommand{\S}{{S}}
\newcommand{\M}{{M}}
\renewcommand{\L}{{L}}
\newcommand{\LS}{{LS}}
\renewcommand{\epsilon}{\varepsilon}
\renewcommand{\tilde}{\widetilde}
\renewcommand{\hat}{\widehat}
\newcommand{\OPT}{\mathrm{Opt}}
\newcommand{\eps}{\varepsilon}
\newcommand{\mycomment}[1]{}
\newcolumntype{L}{>{$}l<{$}} 
\def\blfootnote{\gdef\@thefnmark{}\@footnotetext}
\newcommand{\typeone}{\textrm{Type-}1}
\newcommand{\typetwo}{\textrm{Type-}3/2}
\newcommand{\eone}{E_1}
\newcommand{\eoneone}{E_{11}}
\newcommand{\eoneoneone}{E_{111}}
\newcommand{\eoneonetwo}{E_{112}}
\newcommand{\eonetwo}{E_{12}}
\newcommand{\etwo}{E_{2}}
\newcommand{\etwoone}{E_{21}}
\newcommand{\etwooneone}{E_{211}}
\newcommand{\etwoonetwo}{E_{212}}
\newcommand{\etwotwo}{E_{22}}
\newcommand{\poneone}{p_{11}}
\newcommand{\poneoneone}{p_{111}}
\newcommand{\poneonetwo}{p_{112}}
\newcommand{\ponetwo}{p_{12}}
\newcommand{\ptwoone}{p_{21}}
\newcommand{\ptwooneone}{p_{211}}
\newcommand{\ptwoonetwo}{p_{212}}
\newcommand{\ptwotwo}{p_{22}}
\newenvironment{proofsketch}{\begin{proof}[Proof Sketch]}{\end{proof}}
\newenvironment{deferredproof}[1]{\begin{proof}[Proof of #1]}{\end{proof}}
\date{\empty}
\newtheorem{theorem}{Theorem}
\newtheorem{lemma}{Lemma}[section]
\newtheorem{claim}{Claim}[section]
\newtheorem{proposition}[lemma]{Proposition}
\newtheorem{remark}{Remark}[section]
\begin{document}

\newcommand\relatedversion{}

\title{\Large Bin Packing under Random-Order: Breaking the Barrier of 3/2\relatedversion}
\author{Anish Hebbar\thanks{Department of Computer Science, Duke University, Durham, USA. This work was done when the author was a student at Indian Institute of Science, Bengaluru. Email: \texttt{anishshripad.hebbar@duke.edu}}
\and Arindam Khan\thanks{Department of Computer Science and Automation, Indian Institute of Science, Bengaluru, India. Research partly supported by Pratiksha Trust Young Investigator Award, Google India Research Award, and SERB Core Research Grant (CRG/2022/001176) on “Optimization under Intractability and Uncertainty”. Email: \texttt{arindamkhan@iisc.ac.in}}
\and K. V. N. Sreenivas\thanks{Department of Computer Science and Automation, Indian Institute of Science, Bengaluru, India. Supported in part by Google PhD Fellowship. Email: \texttt{venkatanaga@iisc.ac.in}}}

\date{}

\maketitle







\begin{abstract} 
\small\baselineskip=9pt 
\bestfit{} is one of the most prominent and practically used algorithms for the bin packing problem, where a set of items with associated sizes needs to be packed in the minimum number of unit-capacity bins. Kenyon [SODA '96] studied online bin packing under random-order arrival, where the adversary chooses the list of items, but the items arrive one by one according to an arrival order drawn uniformly at random from the set of all permutations of the items. Kenyon's seminal result established an upper bound of $1.5$ and a lower bound of $1.08$ on the {\em random-order ratio} of \bestfit{}, and it was conjectured that the true ratio is $\approx 1.15$. The conjecture, if true, will also imply that \bestfit{} (on randomly permuted input) has the best performance guarantee among all the widely-used simple algorithms for (offline) bin packing. This conjecture has remained one of the major open problems in the area, as highlighted in the recent survey on random-order models by Gupta and Singla [Beyond the Worst-Case Analysis of Algorithms '20]. Recently, Albers et al. [Algorithmica '21] improved the upper bound to $1.25$ for the special case when all the item sizes are greater than $1/3$, and they improve the lower bound to $1.1$. Ayyadevara et al. [ICALP '22] obtained an improved result for the special case when all the item sizes lie in $(1/4, 1/2]$, which corresponds to the {\em $3$-partition} problem. The upper bound of 3/2 for the general case, however, has remained unimproved. 
This also has remained the best random-order ratio among all polynomial-time algorithms for online bin packing. 

In this paper, we make the first progress towards the conjecture, by showing that \bestfit{} achieves a random-order ratio of at most $1.5 - \epsilon$, for a small constant $\epsilon>0$.  Furthermore, we establish an improved lower bound of $1.144$ on the random-order ratio of \bestfit{}, nearly reaching the conjectured ratio. 
\end{abstract}
\section{Introduction}

Bin packing is a fundamental strongly NP-complete \cite{GareyJ78} problem in combinatorial optimization.
In bin packing, we are given a list $I:= (x_1, \ldots, x_n)$ of $n$ items with sizes in $(0,1]$, and the goal is to partition them into the minimum number of unit-sized bins such
that the total size of the items in each bin is at most $1$.
Unlike offline algorithms, in online algorithms, we do not have complete information about the list $I$.
In the online model, item sizes are revealed one by one: in round $i$  the item $x_i$ arrives and needs to be {\em irrevocably} assigned to a bin before the next items $(x_{i+1}, \ldots, x_n)$ are revealed.
We measure the performance of an algorithm $\mathcal A$ 
by the following quantity: 
$R_{\mathcal{A}}^{\infty} = \limsup_{m \to \infty} \left(\sup_{I: \OPT(I) = m}   \left({ \mathcal{A}(I)  } / {\OPT(I)} \right) \right)$,
where $\mathcal A(I)$ denotes the number of bins used by $\mathcal A$ to pack an input instance $I$, and $\Opt$ denotes the optimal algorithm.
If $\mathcal A$ is an offline algorithm, $R_{\mathcal{A}}^{\infty}$ is called \emph{Asymptotic Approximation Ratio (AAR)}.
On the other hand, if $\mathcal A$ is an online algorithm, $R_{\mathcal{A}}^{\infty} $ is called \emph{Competitive Ratio (CR)}.
In this paper, we mainly deal with the \emph{random-order model} \cite{Gupta020} in online algorithms. In this model, the input set of items is chosen by the adversary; however, the arrival order of the items is decided according to a permutation chosen uniformly at random from $\mathcal{S}_n$, the set of permutations of $n$ elements. 
This reshuffling of the input items often weakens the adversary and provides better performance guarantees. 
In this model, we measure the performance of an online algorithm $\mathcal A$ using the following quantity, called \emph{random-order ratio} (RR):
\begin{align*}
RR_{\mathcal{A}}^{\infty} = \limsup_{m \to \infty} \left(\sup_{I: \OPT(I) = m}     \frac{ \expec[\sigma]{\mathcal{A}(I_\sigma)}  }  {\OPT(I)} \right)
\end{align*}
Here for a given permutation $\sigma$, we define the list $I_\sigma:=\left(x_{\sigma(1)},x_{\sigma(2)},\dots,x_{\sigma(n)}\right)$ to be the list containing items in $I$ permuted according to the permutation $\sigma$,
and the expectation is taken over the uniform probability distribution wherein each permutation of $n$ items is equally likely.
Note that the random-order ratio is only concerned with the performance for the instances whose optimal value is large, that is, we only care about the asymptotic performance.


The \bestfit{} (BF) algorithm is one of the most widely-used algorithms for bin packing. \bestfit{} packs each item into the {\em fullest} bin where it fits, possibly opening a new bin if the item fits into none of the present open bins.
As was mentioned in \cite{DBLP:conf/soda/Kenyon96}: ``Best-Fit emerges as the winner among the various online algorithms: it is simple, behaves well in practice, and no algorithm is known which beats it both in the worst case and in the average uniform case". Thus, there is an extensive literature studying the behavior of \bestfit{} in various settings: asymptotic approximation \cite{ullman1971performance, DBLP:conf/stoc/GareyGU72,  johnson1974worst}, absolute approximation \cite{simchi1994new, opt-bestfit}, average-case analysis \cite{coffman1993markov}, uniform distributions \cite{coffman1993probabilistic}, etc. 

Kenyon \cite{DBLP:conf/soda/Kenyon96} first introduced the notion of random-order ratio as an alternate measure of performance for online algorithms and established that the random-order ratio of \bestfit{} is upper bounded by $3/2$ and lower bounded by 1.08. 
Kenyon also conjectured that the true random-order ratio should ``lie somewhere close to 1.15". 
Since then, both the random-order model as well as the conjecture has received significant consideration. 
As mentioned in \cite{coffman1997approximation}, this conjecture, if proven, will have implications for the offline bin packing problem as well. 
It will show that \bestfit{} (after performing a random permutation on the input list) has the best worst-case behavior among all the practical algorithms for (offline) bin packing. 
Closing the gap between the upper and lower bounds for \bestfit{} was mentioned as one of the open problems in the recent survey on {\em Random-Order Models} by Gupta and Singla \cite{Gupta020}. 

In recent years, there have been some improvements for certain special cases. 
Albers et al. \cite{albers_et_al_MFCS} proved that the random-order ratio of \bestfit{} is at most $1.25$ when all items are larger than 1/3.
They showed that, when all items are larger than 1/3, \bestfit{}  is monotone (i.e., increasing the size of one or more items can not decrease the number of bins used by the algorithm). 
This is surprising as  \bestfit{} is not monotone even in the presence of a single item of size less than 1/3 \cite{DBLP:journals/dam/Murgolo88}.
Then their analysis utilized this monotonicity property to relate bin packing with online stochastic matching. However, these properties crucially rely on the fact that at most two items can be packed in a bin, and it does not extend to the general case. 
Ayyadevara et al.~\cite{DBLP:conf/icalp/AyyadevaraD0S22} made further progress and exploited these connections to show that the random-order ratio of \bestfit{} is $1$ when all items are larger than 1/3. They also showed that the random-order ratio of \bestfit{} is $\approx 1.4941$, for the special case of 3-partition (when all the item sizes are in (1/4, 1/2]). However, their analysis breaks down in the presence of large items of size greater than 1/2.
Recently, Fischer \cite{fischer_thesis} presented a different {\em exponential-time} randomized algorithm with an RR of $(1+\eps)$.
However, for polynomial-time algorithms, the barrier of 3/2 remains unbroken in the general case. 

For the lower bound, one can generate a list of million items such that, based on a sampling of permutations, the random-order ratio empirically appears to be $\approx 1.144$ \cite{coffman1997approximation}.
The present best-known lower bound, which can be analytically determined, is 1.1 \cite{albers_et_al_MFCS}. It holds even for the i.i.d.~model (where input items come from an i.i.d.~distribution) with only two types of items.  
However, even the empirical conjectured estimate of 1.144 is still open to be proven analytically as a lower bound for \bestfit{} under random-order.

\subsection{Our Contributions}
We improve both the upper and lower bounds of the performance of \bestfit{} in the random-order model.
\subsubsection{Upper Bound}
Our main result is breaking the barrier of 3/2 for the upper bound. 


\begin{theorem}
\label{maintheorem}
Let $\sigma:[n]\to[n]$ be a permutation chosen uniformly at random from $\mathcal{S}_n$, the set of permutations of $n$ elements, and let $I_\sigma$ denote the instance $I$ permuted according to $\sigma$. Then
\begin{align*}
\expec{\Bf(I_\sigma)}\le \left(\frac32-\eps\right){\Opt(I)}+o({\Opt(I)}),
\end{align*}
\end{theorem}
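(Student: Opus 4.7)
My plan is to refine the classical weighting-function analysis of \bestfit{} by leveraging random-order concentration to reduce the contribution of the bin configurations responsible for the $3/2$ bound.

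I would categorize items by size into \emph{large} ($> 1/2$), \emph{medium} ($\in (1/3, 1/2]$), and \emph{small} ($\leq 1/3$), and label each \bestfit{} bin by its final content pattern using the tier names in the preamble (such as $\L$, $\LS$, $\LM$, $\MM$, $\MMS$, $\SSS$). The $3/2$ ratio is driven predominantly by bins of type $\L$ (holding only a large item), since these are the configurations where \Opt{} can significantly improve by pairing the large item with smaller items. The core claim I need is that under a uniformly random arrival order, the expected number of $\L$-bins is strictly smaller, by a constant fraction of $\Opt(I)$, than the worst-case count.

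To establish this, I would partition the random arrival into $O(1/\param^2)$ stages $\stage{1},\stage{2},\ldots$ of length $\sample$ each, so that each stage behaves like a nearly uniform sample of the unrevealed items. Using an early prefix stage, I can estimate (up to $o(\Opt)$ additive error via Chernoff bounds) the counts in each size class and the fraction $\fracsmall$ of small items. Based on these estimates, an auxiliary proxy algorithm $\auxalgo$ computes an idealized matching between large items and smaller complementary items; a stage-by-stage coupling argument then shows that \bestfit{}'s greedy placements track $\auxalgo$'s, so that in expectation a constant fraction of large items that \emph{could} be paired actually \emph{are} paired by \bestfit{}, cutting down the $\L$-bin count.

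The global bound follows from a case split on the fraction of large items and the value of $\fracsmall$. When large items form a significant fraction of \Opt{}, the reduced pairing loss from the previous step directly saves $\Theta(\Opt)$ bins over $3/2$. When large items are scarce, the residual sub-instance has all items of size at most $1/2$, and the analysis of \cite{DBLP:conf/icalp/AyyadevaraD0S22} yields a ratio strictly below $3/2$ on such instances; the two regimes are glued together by choosing the threshold on the large-item fraction and the sampling parameter $\param$ carefully to produce a single uniform $\epsilon > 0$.

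The main obstacle is the coupling between \bestfit{} and the proxy $\auxalgo$. Unlike $\auxalgo$, \bestfit{} cannot reserve a bin for a future large item, so a small item arriving early may fill a bin that the proxy would have kept open. Controlling the cumulative drift across all $O(1/\param^2)$ stages, via an exchange argument combined with a union bound over the stage-wise concentration events, is where almost all of the delicate probabilistic work will lie, and it is the step most likely to require a new idea beyond the Kenyon and Albers et al.\ toolkits.
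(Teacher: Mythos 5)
Your plan takes a genuinely different route from the paper, but it has at least two gaps that I do not see how to close.

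First, you acknowledge yourself that the coupling between \bestfit{} and the proxy $\auxalgo$ is the crux and will ``require a new idea.'' As written, this is the proof, not a detail. \bestfit{} is a greedy algorithm with no foresight: an early small or tiny item can sit in the bin that the proxy ``wants'' to keep free, and the error you incur this way is not obviously cancellable by an exchange argument, because each such bad placement changes the load profile seen by every subsequent item. You would effectively be proving a form of monotonicity or stability for \bestfit{}, and the paper explicitly notes (following \cite{DBLP:journals/dam/Murgolo88}) that \bestfit{} fails to be monotone once items of size $<1/3$ are present --- precisely the regime your ``large items scarce'' branch lives in. Without the coupling worked out, you do not have a proof.

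Second, your fallback for the ``few large items'' regime misinvokes \cite{DBLP:conf/icalp/AyyadevaraD0S22}: their improved ratio is for the $3$-partition special case where \emph{all} item sizes lie in $(1/4,1/2]$, not for arbitrary instances with sizes $\leq 1/2$. Once tiny items (size $\leq 1/4$) are allowed, their gadget argument (consecutive $MS$/$SM$ pairs) no longer goes through, and the present paper points out exactly this as the obstruction to extending their techniques. Dropping the large items from $I$ also does not reduce the problem to a sub-instance in any clean way: \bestfit{} on $I_\sigma$ packs the non-large items differently from \bestfit{} on $I_\sigma$ with the large items deleted, so the ratio on the residual instance does not transfer. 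The paper instead keeps Kenyon's decomposition via $t_\sigma$ (the last time a $\leq 1/3$ item is placed in a $\leq 1/2$-full bin), adds the new construct $t'_\sigma$ for tiny items, and shows that in every regime either (i) many bins have load $\geq 3/4$, or (ii) one can count gadgets (fitting $ML$/$SL$ triplets, $S$-triplets) directly in the \bestfit{} run without any proxy, or (iii) a weight-function computation already beats $3/2$. That structure sidesteps the coupling issue entirely, which is why it closes.
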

where  $\Bf(I_\sigma)$ is the number of bins that BF requires to pack $I_\sigma$ and $\eps$ is a sufficiently small constant.

Let us now briefly explain the approach in \cite{DBLP:conf/soda/Kenyon96} that was used to show that $RR_{\BF}^\infty\le 3/2$. One of the main constructs in \cite{DBLP:conf/soda/Kenyon96} is the quantity $t_\sigma$, which is the last time that $\BF$, on input $I_\sigma$, packs an item of size at most $1/3$ in a bin of load at most $1/2$.
One can show that all bins (except at most one) opened by BF to pack the first $t_\sigma$ items (i.e., $I_\sigma(1,t_\sigma)$) are filled up to the level of at least 2/3.
Thereafter, a counting argument shows that, to pack items arriving after $t_\sigma$ (i.e., $I_\sigma(t_\sigma+1, n)$), BF is within a 3/2 factor of $\Opt$. These observations result in the following two inequalities:
\begin{align}
\BF(I_\sigma(1,t_\sigma)) &\leq \frac{3}{2}  \OPT(I_\sigma(1,t_\sigma)) + 1\label{first-half}\\
\BF(I_\sigma)-\BF(I_\sigma(1,t_\sigma)) &\leq \frac{3}{2} \OPT(I_\sigma(t_\sigma + 1 ,n))  + 1\label{second-half}
\end{align}
Finally, it was shown that 
$\OPT(I_\sigma(1,u))/u$ converges  
to  $\OPT(I)/n$ for a random permutation $\sigma$. 
Combining all these facts, an upper bound of $3/2$ was achieved. 

We explain our techniques now. First, we divide the items into four categories depending on their sizes: Large ($L$), Medium ($M$), Small ($S$), and Tiny ($T$), with sizes in $(1/2, 1]$, $(1/3, 1/2]$, $(1/4, 1/3]$, and $(0, 1/4]$, respectively. \ah{  Kenyon’s \cite{DBLP:conf/soda/Kenyon96}  proof  relies on showing that Best-Fit achieves a 3/2 approximation factor separately for items appearing before $t_\sigma$ and items appearing after $t_\sigma$. Our approach is similar, but we improve the analysis to show that one of the two inequalities above can be improved further in a fruitful way. In particular, if $t_\sigma\ge n/2$, then the factor of $3/2$ in \cref{first-half} 
can be improved to $3/2-2\eps$, and if $t_\sigma<n/2$, then the factor $3/2$ in \cref{second-half} can
be improved to $3/2-2\eps$. Combining both the improved inequalities gives us Theorem \ref{maintheorem}. }

 \ah{At a high level, we do a case analysis based on $t_\sigma$ (and also consider other parameters such as the volume of tiny items and the structure of the optimal solution) and show that either a large fraction of the bins packed by BF is rather full (the load is at least 3/4) or $\BF$ performs relatively well compared to $\OPT$.
We initially obtained a factor better than 3/2 for the case where all items have size $>1/4$, and tried to apply our techniques to the general case. For example, let us suppose $t_\sigma$ is large, and consider the time segment before $t_\sigma$. If the total size of tiny items before $t_\sigma$ was large, then intuitively, Best-Fit should do well as a substantial fraction of bins have low wasted space, as tiny items can be packed efficiently.  On the other hand, if the total size of tiny items that appear before $t_\sigma$ is small, intuitively, this should be similar to the $>1/4$ case, but it is technically still difficult to account for interactions with tiny items. We thus define a construct $t'_\sigma$, which is the last time that $\BF$, on input $I_\sigma$, packs an item of size at most $1/4$ in a bin of load at most $1/2$. One can show that Best-Fit achieves a 4/3 approximation before $t'_\sigma$ as almost all bins opened before $t_\sigma'$ have load at least $3/4$,
and that tiny items do not open new bins after $t_\sigma'$. If $t_\sigma'$ is large, then we have many bins with load at least $3/4$ in BF, allowing us to beat the factor of $3/2$. On the other hand, if $t_\sigma'$ is small, our techniques from the $>1/4$ case can be applied to the relatively large interval [$t'_\sigma$,$t_\sigma$] ($t_\sigma'$ is small, $t_\sigma$ is large), allowing us to beat the factor of 3/2.
}

Now let us describe the three key ideas that we use in this work.

\textbf{Presence of a large number of `gadgets'.} 
One key contribution of our work is the usage of ‘gadgets’ in random-order arrival. For many online optimization problems, for adversarial-order arrival, the items must appear in a specific order so that the algorithm performs poorly compared to the optimal solution. However, we show that we can classify the items and then show the existence of some special gadgets or patterns that will mitigate the poor performance of the algorithm. We show that, unlike adversarial-order arrival, in random-order arrival, such patterns appear frequently, thus leading to an improved performance guarantee. Many algorithms for problems in random-order arrival classify the input items into several item classes (e.g., based on sizes), such as knapsack and GAP \cite{kesselheim2018primal, AlbersKL21}, Machine covering \cite{AlbersGJ23}, etc. Making use of frequently recurring patterns might be helpful in these problems.
Although a rudimentary form of this idea was introduced in \cite{DBLP:conf/icalp/AyyadevaraD0S22} for the special case when the input only has two types of items (medium and small), the pattern they used was restrictive and simple. For example, the items in the pattern were needed to be consecutive. Thus, their analysis cannot be extended to the case where the items in the patterns are nonconsecutive (e.g., some tiny items appear between the medium and small items) or when there are more size classes (e.g., large items) in the input. 
To circumvent this issue, we come up with more intricate gadgets---namely, $S$-triplets, fitting $ML$ triplets, and fitting $ML/SL$ triplets.
An $S$-triplet in a fixed permutation $\sigma$ is a set of three small items in $\sigma$ with only tiny items in between them. A fitting $ML$ triplet is a triplet of fitting pairs of medium and large items
(with only tiny items in between them), where a {\em fitting pair} is defined as a pair of items whose sizes
add up to at most $1$. Fitting $ML/SL$ triplets are defined in a similar way. 
Unlike in \cite{DBLP:conf/icalp/AyyadevaraD0S22}, the presence of tiny items complicates our analysis  
(See \cref{claim:ml-triplet}, \cref{claim:ml-sl-tuple-proof} in {\cref{pf:high-lm-ls-bins}}, and \cref{claim:s-triplet-good} in \cref{pf:high-mss-mms-sss-bins}). Moreover, counting the number of gadgets in the random input sequence also turns out to be harder.
For example, to count the number of fitting $ML$ triplets, we must also ensure that each $ML$ pair is fitting; see \cref{proportionality-lm}. We handle these issues with a technically involved analysis.   

\textbf{Weight functions.} 
 Another technical contribution of our work is the use of weight functions – for the first time – in the random-order model. Weight functions map item sizes to some real numbers which we refer to as weights.  Finding suitable weight functions has been helpful in bin packing and other related problems \cite{johnson1974worst,lee-lee}, as it helps us to study interactions between item types and relate optimal packing with the packing of the algorithm. 
However, none of the previous works
on bin packing under random-order arrival used this technique. The work \cite{DBLP:conf/icalp/AyyadevaraD0S22}, e.g., 
uses combinatorial techniques to analyze $\BF{}$ in the special case when all the items are either medium or small; their
techniques are difficult to extend due to the less-understood interactions between the large and tiny items.
We use weight functions to analyze $\BF$ under random-order (See Case 2 of \cref{final-case-wt-fn}).
By forgetting the actual contents of a bin and, instead, focusing on
the weight of the bin, we show that \bestfit{} `packs' more weight in a large number of bins (See, e.g., \cref{lem:high-lm-ls-bins,lem:high-mss-mms-sss-bins} for details). This leads to a better performance.

\textbf{`One good permutation suffices'.} Another idea that we use is that if there is
one ``good permutation" (i.e., satisfying certain properties), then it is possible to extract some additional information about the input and deduce  that
at least a constant fraction of the $n!$ permutations can be packed well using \bestfit{}. This idea is
the main ingredient in analyzing some bottleneck cases
(See \cref{lem:high-lm-ls-bins,lem:high-mss-mms-sss-bins}).

Now we briefly discuss the high-level proof structure of the result. 
\kvn{See \Cref{fig:organization} for an overview of the cases we consider.}
First, we consider the case when $t_\sigma>n/2$ (\case{1}). 
Then we further classify depending on the volume of tiny items among the first $t_\sigma$ items.
If it is high (\case{1.2}), then intuitively, many bins can be shown to have a load of at least $3/4$. \ah{Otherwise the volume of tiny items before $t_\sigma$ is low (\case{1.1}), and we consider cases based on the size of $t_\sigma'$.} If $t'_\sigma$ is large (\case{1.1.2}), we can again show that 
many bins have a load of at least $3/4$. Otherwise (\case{1.1.1}), we define appropriate weight functions and show the existence of many fitting $ML$/$SL$ triplets or $S$-triplets, depending on the structure of $\OPT$. This (along with the idea that `one good permutation suffices') enables establishing the presence of many ``well-packed'' bins in the packing by BF. 
In the other case, when $t_\sigma \leq n/2$ (\case{2}), we consider if the number of $LM$ bins (bins containing one $L$ and one $M$ item) in \ah{$\OPT_1'$}  is low or not.\footnote{\kvn{Please refer to the caption of \cref{fig:organization} for the definitions of $\Opt_1$, $\Opt_1'$.}} Intuitively, \ah{we can ignore the tiny items as they don't open bins after $t_\sigma$}, and the items in two $LM$ bins in $\Opt_1'$ can be suboptimally packed by $\BF$ into three bins (one $MM$ and two $L$ bins).
Thus, informally, if the number of $LM$ bins is low (\case{2.2}) in \ah{$\OPT_1'$} 
then $\BF$ does not perform too badly compared to $\Opt$. 
Otherwise, 
\ah{if $\Opt_1'$ is bounded away from $\Opt_1$ (\case{2.1.2}), then an analysis similar to \case{2.2} shows that $\BF$ does well. Finally, if $\Opt_1'$ is close to $\Opt_1$ (\case{2.1.1}), the number of $LM$ pairs is comparable to $\Opt$. Consequently, we can show that a random instance contains many fitting $ML$ triplets, implying that BF contains sufficiently many $LM$ bins}  -- showing a better performance guarantee of $\BF$.

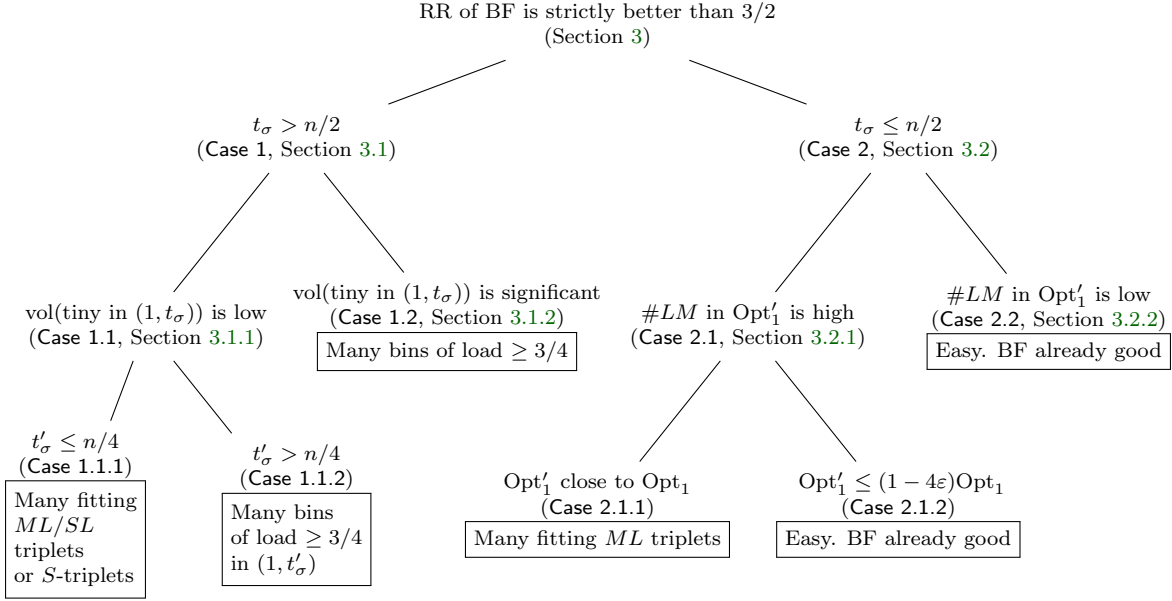
\begin{figure}
\centering
\begin{tikzpicture}
[font=\footnotesize,xscale=-1,
grow=down, 
level 1/.style={sibling distance=8cm,level distance=1.5cm},
level 2/.style={sibling distance=4cm,level distance=2.5cm}]
\node {\makecell[c]{RR of BF is strictly better than $3/2$\\(\cref{sec:general-case})}} 
child { node {\makecell[c]{$t_\sigma\le n/2$\\(\Case{2}, \cref{sec:t-sigma-small})}}
    child { node {\makecell[c]{$\#LM$ in \ah{$\Opt_1'$} is low\\(\Case{2.2}, \cref{sec:case11})\\\fbox{\makebox{Easy. BF already good}}}} }
    child { node {\makecell[c]{$\#LM$ in \ah{$\Opt_1'$} is high\\(\Case{2.1}, \cref{sec:case12})}}
        child { node {\makecell[c]{\ah{ $\Opt_1'\le(1-4\eps)\Opt_1$ }\\(\Case{2.1.2})\\\fbox{\makebox{Easy. BF already good}}}} }
        child { node {\makecell[c]{ \ah{$\Opt_1'$ close to $\Opt_1$} \\(\Case{2.1.1})\\\fbox{\makebox{Many fitting $ML$ triplets}}}} }
    }
}
child { node {\makecell[c]{$t_\sigma>n/2$\\(\Case{1}, \cref{sec:t-sigma-big})}}
    child { node {\makecell[c]{vol(tiny in $(1,t_\sigma)$) is significant\\(\Case{1.2}, \cref{sec:case21})\\\fbox{\makebox{Many bins of load $\ge3/4$}}}} }
    child { node {\makecell[c]{vol(tiny in $(1,t_\sigma)$) is low\\(\Case{1.1}, \cref{final-case-wt-fn})}}
        child { node {\makecell[c]{$t'_\sigma>n/4$\\(\Case{1.1.2})\\\fbox{\makebox{\makecell[l]{Many bins\\of load $\ge3/4$\\in $(1,t'_\sigma)$}}}}} }
        child { node [right]{\makecell[c]{$t'_\sigma\le n/4$\\(\Case{1.1.1})\\\fbox{\makebox{\makecell[l]{Many fitting \\ $ML/SL$ \\triplets\\ or $S$-triplets}}}}} }
    }
};
\end{tikzpicture}
\caption{
The overview of our case analysis.
For brevity, we write $\Opt_1$ instead of $\Opt(I_\sigma(t_\sigma+1,n))$,
$\Opt'_1$ instead of $\Opt(I'_\sigma(t_\sigma+1,n))$ (here $I'_\sigma(t_\sigma+1,n)$ denotes the list $I_\sigma(t_\sigma+1,n)$ after removing the small and tiny items), and $\#LM$ instead of ``number of $LM$-bins''.
}
\label{fig:organization}
\end{figure}
\subsubsection{Lower Bound}
We also make progress on the lower bound, arriving at the mentioned empirical estimate of 1.144 in \cite{coffman1997bin} and almost matching the conjectured ratio by Kenyon \cite{DBLP:conf/soda/Kenyon96}.
\begin{theorem}\label{aarratio}
For online bin packing under the random-order model, the random-order ratio of Best-Fit is greater than $1.144$, i.e., $RR_{\Bf}^{\infty} > 1.144$.
\end{theorem}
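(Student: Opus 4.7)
The plan is to exhibit an explicit family $\{I_m\}_{m \ge 1}$ of instances with $\Opt(I_m) = m$ and to compute $\expec[\sigma]{\Bf((I_m)_\sigma)}/m$ in the $m \to \infty$ limit, showing that it strictly exceeds $1.144$ and therefore matches the empirical estimate of \cite{coffman1997bin}. The natural candidate, motivated by those experiments, is a discrete input supported on a small number of item sizes---typically of the form $1/2 + \eps_1$, $1/3 + \eps_2$, $1/4 + \eps_3$ (or a closely related finite set)---in carefully chosen proportions, so that $\Opt$ admits a near-perfect packing with one item of each class per bin, while \bestfit{} is forced into provably suboptimal configurations on a positive fraction of permutations. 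I would first fix the exact sizes and proportions, and verify $\Opt(I_m) = m$ by exhibiting the intended packing and a matching volume lower bound.

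Next, I would analyse \bestfit{} on a uniformly random permutation by modelling its execution as a finite-state Markov chain whose states are the unordered multisets of partially filled open bins (for example, for the three-size construction above, the relevant states are a small collection of partial loads such as a lone $S$, a pair of $S$'s, a lone $M$, a lone $L$, an $LS$-bin, etc.). The key enabling fact is that, for a uniformly random permutation of a large multiset with fixed type profile, the empirical distribution of any $\Theta(n)$-length prefix is concentrated around the overall distribution; standard concentration bounds (Hoeffding / McDiarmid on the number of items of each type seen by time $t$) then let me couple \bestfit{}'s trajectory with its i.i.d.\ counterpart, in which the chain mixes to its stationary regime. Writing down the balance equations for this chain and solving them symbolically yields the asymptotic rate at which \bestfit{} opens new bins and accrues waste.

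Finally, dividing by $\Opt(I_m)$ and taking $m \to \infty$ gives the random-order ratio, which I would verify by direct symbolic calculation to exceed $1.144$, possibly after optimising over the remaining free parameters (the $\eps_i$ and the proportions). Because the target $1.144$ is very close to Kenyon's conjectured value $\approx 1.15$, no slack can be absorbed by a loose analysis, so the sizes and proportions must be tuned carefully and the stationary computation must be exact. The main technical obstacle I anticipate is controlling the gap between the Markov chain's behaviour at time $n$ and its fluid/stationary limit with error $o(\Opt(I_m))$: the bin-configuration chain needs a Lyapunov-style mixing argument, and together with the coupling to i.i.d.\ sampling this is what turns a purely numerical estimate into the analytic lower bound required by the theorem.
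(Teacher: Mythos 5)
Your high-level plan is aligned with the paper's: build a discrete instance, model Best-Fit's open-bin configurations as a finite-state Markov chain, and read off the asymptotic rate of bin openings from the stationary distribution. But there are three concrete problems.

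First, the candidate instance does not work. Sizes $1/2+\eps_1$, $1/3+\eps_2$, $1/4+\eps_3$ sum to more than $1$, so the intended ``one of each per bin'' optimum is infeasible as written; and, more importantly, the paper's experiments (its Table) show that the best achievable ratio with \emph{three} item types is only about $1.118$, far short of $1.144$. Reaching $1.144$ required \emph{seven} item types, all lying in $(0.245,0.46)$ --- i.e., with no large item at all --- which is a structurally different optimal packing (bins such as $\{0.3,0.3,0.38\}$, $\{0.27,0.27,0.46\}$, $\{0.245,0.245,0.25,0.26\}$). You should expect to need an instance of this flavour, and the resulting chain has $357$ states, which rules out the ``solve the balance equations symbolically'' step; the paper resorts to a computer-assisted evaluation of the stationary vector.

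Second, the coupling-plus-mixing argument you sketch (Hoeffding/McDiarmid on prefix type counts, Lyapunov-style control of the gap to the fluid limit) is more machinery than is needed, and is also where your proposal is vaguest. The paper sidesteps the entire issue with a clean reduction: for any online algorithm $\mathcal A$ and any discrete distribution $F$, one has
\[
\sup_J \frac{\expec[\sigma]{\mathcal{A}(J_\sigma)}}{\OPT(J)} \;\ge\; \frac{\expec{\mathcal{A}(I^n(F))}}{\expec{\OPT(I^n(F))}}
\]
as $n\to\infty$ (its \cref{iidmodel}, following \cite{AlbersKL21}). This lets you do all the analysis in the i.i.d.\ model, where the Markov chain is exact and no coupling back to random order is required; the stationary distribution directly gives $\expec{\BF(I^n(F))} \sim n\sum_{(R,S)\in U}\omega_R q_{RS}$. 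You should invoke (or reprove) this lemma rather than attempt a direct random-order concentration argument. A separate, easier concentration step (Chebyshev plus a union bound) is still needed to bound $\expec{\OPT(I^n(F))}$ from above, since the item sizes are not of the form $1/m$ and there is no closed form for $\OPT$; the paper handles this by exhibiting an explicit near-perfect packing for the high-probability type counts.

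So: same skeleton, but your instance is wrong, your stationary computation cannot be done by hand, and your route from i.i.d.\ to random order is the hard way around a problem that has a one-line reduction.
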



The main idea in the previous works on lower bounds \cite{DBLP:conf/soda/Kenyon96, albers_et_al_MFCS} is to instead consider the i.i.d.\ model to show a lower bound for \bestfit{} under random-order arrival. In the i.i.d.\ model, the input is a sequence of items drawn from a common probability distribution. This model is much easier to analyze compared to the random-order model, as the arrival of an item does not depend on the preceding input sequence. The key fact used is that the random-order ratio for any bin packing algorithm is lower bounded by the corresponding ratio in the i.i.d.\ model. 

The asymptotic performance of \bestfit{} in the i.i.d.\ model can be found exactly by computing the stationary probabilities of an underlying Markov chain. Essentially, the states are different open bin configurations, and the transitions correspond to different item arrivals. Estimating the performance of \bestfit{} thus comes down to counting the expected number of transitions where \bestfit{} opens a new bin. \cref{table:iid} summarizes our lower bounds and describes the best item list that we found and corresponding probabilities for up to seven types of items.
\begin{table}
\label{table:iid}
\begin{center}
\begin{tabular}{||L | L | L | L||} 
 \hline
 \text{\#Item }  & \text{Item Sizes}  & \text{Probabilities}  & \text{Lower}  \\ 
\text{ Types}  &   &   & \text{Bound}  \\
 \hline\hline
 2 & [1/4, 1/3] & [0.594, 0.406] & 1.1037 \\ 
 \hline
  3 & [0.25, 0.31, 0.38 ] & [0.466, 0.356, 0.178] & 1.1182 \\ 
 \hline
4 & [0.25, 0.26, 0.32, 0.44] & [0.454, 0.234, 0.195, 0.117] & 1.1334\\
 \hline
  5 & [0.25, 0.26, 0.3, 0.4, 0.46] & [0.43, 0.204, 0.176, 0.088, 0.102] & 1.1378 \\ 
 \hline
  6 & [0.245, 0.26, 0.27, 0.3, 0.38, 0.46]  & [0.35, 0.116, 0.194, 0.162, 0.081, 0.097] & 1.1419 \\ 
 \hline
  7 & [0.245, 0.25, 0.26, 0.27, 0.3, 0.38, 0.46] & [0.26, 0.13, 0.13, 0.17, 0.15, 0.075, 0.085] & 1.1440 \\ 
 \hline
\end{tabular}
\end{center}
\caption{\kvn{Different distributions and the performance of \bestfit{} when items are sampled from these distributions.}}
\end{table}

As the number of item types increased, we saw diminishing returns and an exponential increase in the size of the Markov state space and running time. While the initial example with two items discussed in \cite{albers_et_al_MFCS} has nine states in total, our example with seven items has $357$ states, making manual analysis infeasible, due to which we analyze the Markov chain with the help of a computer-assisted proof.\footnote{The code is available at: \url{https://github.com/bestfitroa/BinPackROA}.} One key difference in our example is that we use items that are not of the type $1/m$ for integral $m$, making analysis of the optimal algorithm in the i.i.d.\ model more complicated, as it often uses hybrid (consisting of multiple item types) bins. 
Thus, even though there are many possible open bin configurations, only a few of them are perfectly packed, causing \bestfit{} to  pack a large fraction of bins suboptimally. At the same time, increasing the number of item types, intuitively, increases the average load of a closed bin, resulting in less wasted space by \bestfit{}. These two conflicting factors consequently give diminishing returns with an increasing number of item types. See Section \ref{sec:Lower Bounds for Best Fit}  for a  detailed discussion on the lower bound. 
\subsection{Related Work}
For offline bin packing, the present best polynomial-time approximation algorithm returns a solution using $\OPT+O(\log \OPT)$ bins \cite{DBLP:conf/soda/HobergR17}.  
However, bin packing can be solved {\em exactly} in polynomial-time  \cite{GoemansR20} when we have a constant number of item types.  
For online bin packing (under adversarial-order arrival), the present best upper and lower bounds on the CR are 1.57829 \cite{BaloghBDEL18} and 1.54278 \cite{BaloghBDEL19}, respectively.
For the i.i.d.~model, Rhee and Talagrand \cite{rhee1993lineB} exhibited an algorithm that, w.h.p., achieves a packing in  $\OPT+O(\sqrt{n}\log^{3/4}n)$ bins for any  distribution on $(0,1]$.
Ayyadevara et al.~\cite{DBLP:conf/icalp/AyyadevaraD0S22} achieved a near-optimal performance guarantee for the i.i.d.~model. For any arbitrary unknown distribution, they gave a  meta-algorithm that takes an $\alpha$-{asymptotic} approximation algorithm as input and provides a polynomial-time $(\alpha+\eps)$-competitive algorithm. 
 



Johnson et al.~\cite{johnson1974worst} studied several heuristics for bin packing such as  Best-Fit (BF), First-Fit (FF), Best-Fit-Decreasing (BFD), First-Fit-Decreasing (FFD) and showed their (asymptotic) approximation guarantees to be $17/10, 17/10, 11/9, 11/9$, respectively.
After a sequence of improvements \cite{DBLP:conf/stoc/GareyGU72, garey1976resource, simchi1994new}, the tight performance guarantee of \bestfit{} 
 (for adversarial-order) was shown to be $\lfloor 1.7 \cdot \OPT \rfloor$ \cite{opt-bestfit}.
 Another $O(n \log n)$ time algorithm Modified-First-Fit-Decreasing (MFFD) \cite{mffd} attains an AAR of $71/60\approx 1.1834$ and has the current best provable performance guarantee among all the simple and fast algorithms for offline bin packing.  
 Among all practically popular algorithms, Best-Fit (on a random permutation of the input) is conjectured to beat MFFD in terms of worst-case performance guarantee \cite{coffman1997bin}.

Note that the asymptotic polynomial-time approximation schemes (APTAS) for bin packing \cite{VegaL81, KarmarkarK82, DBLP:conf/soda/HobergR17} are theoretical in nature and seldom used in practice. 
We refer the readers to the surveys \cite{coffman2013bin, ChristensenKPT17} for a comprehensive treatment of the existing literature on bin packing and its variants.



Starting from the prototypical secretary problem \cite{freeman1983secretary}, the random-order model has been studied extensively for many optimization problems: from computational geometry \cite{clarkson1993four} to packing integer programs \cite{kesselheim2018primal}, from online matching \cite{MahdianY11} to facility location \cite{meyerson2001online}, from set cover \cite{GuptaSCR} to knapsack \cite{AlbersKL21}.
See the recent survey \cite{Gupta020} for details on random-order models.

\subsection{Organization of the Paper}
In \cref{sec:notation}, we discuss notations  
and introduce weight functions.
Then, in \cref{sec:general-case}, we prove the main result of the paper---RR of \bestfit{} is strictly better than $3/2$.
Our analysis is divided into multiple cases, and this organization is shown in \cref{fig:organization}.
Due to space limitations, many of the intermediate claims and lemmas have been delegated to the appendix.
Then, in \cref{sec:Lower Bounds for Best Fit}, we establish the lower bound of $1.144$ on the random-order ratio of \bestfit{}.
Finally, \cref{sec:conclusion} concludes with some remarks and open problems.
\section{Preliminaries}
\label{sec:notation}
We denote the size of an item $x_i$ by $s(x_i)$. Any item is categorized into one of the four different categories as follows:
(i) \emph{Large (\L):} if its size lies in the range $(1/2,1]$,
(ii) \emph{Medium (\M):} if its size lies in the range $(1/3,1/2]$,
(iii) \emph{Small (\S):} if its size lies in the range $(1/4,1/3]$,
(iv) \emph{Tiny (\T):} if its size lies in the range $(0,1/4]$. For the input sequence $I$ and two timestamps/indices
$t_1,t_2\in[n]$ such that $t_1\le t_2$, we denote by $I(t_1,t_2)$ the subsequence that arrived from time
$t_1$ to $t_2$ (including $t_1,t_2$). 

The load (or volume) 
of bin $B$ is given by  $\vol(B) := \sum_{x \in B} s(x)$. \ah{Similarly, the volume of a set of items $T$ is given by $\vol(T) := \sum_{x \in T} s(x)$.}
Observe that a bin can contain at most one large item, at most two medium items, and at most three small items.
We often indicate a bin by the items
of type $L/M/S$ it contains, e.g., an {$\LS$}-bin is a bin that
contains a large item and a small item, an {$MMS$-bin} contains two medium items
and a small item, etc. Note that we do not indicate the tiny items that a bin might contain. For any $k\in[3]$,
we say that a bin $B$ is a $k$-bin if the number of items of type $\L,\M,$ or $\S$ in it is $k$
(again, we do not indicate the tiny items, if any).
If no future items can be packed into a bin, we say it is {\em closed}, otherwise, it is {\em open}. 


We say an event occurs with high probability if its probability approaches $1$ as \ah{$\OPT(I)$} tends to infinity. For example,
an event that occurs with probability $1-1/\log(\OPT(I))$ is said to occur with high probability, or \whp{} in short.

\subsection{Weight Functions}
The concept of weight functions has been used extensively in the analysis of packing algorithms \cite{johnson1974worst,lee-lee}. It gives us a method to
upper bound the number of bins used by the algorithm that we want to analyze and lower bound the optimal solution. A weight function $W:[0,1]\to \mathbb R^+$  maps the item sizes to some rounded values, \ah{ and we generally round up the item size. 
\kvn{For brevity, we just write $W(x)$ instead of $W(s(x))$ to denote the weight of an item $x$.} The weight of a bin $B$ is given by $W(B) = \sum_{x \in B} W(x)$.}
The following lemma has been used in all the prior works which rely on weight function based analyses 
(see, e.g., \cite{johnson1974worst}). 

\ah{
\begin{lemma}[Folklore] \label{weight_function}
    Consider any given instance of items $I$ packed using an algorithm $\mathcal A$ and a weight function $W$.
    Suppose the bins $B$ in the packing $\mathcal A(I)$ satisfy the following lower bound on their total weight
    $$\sum_{B\in \mathcal A(I) }W(B)\ge\alpha_1 \mathcal A(I) - O(1)$$ 
    for some constant $\alpha_1$. Intuitively, this means that the average weight of the bins is at least $\alpha_1$, ignoring lower order terms. Further, suppose that for any set of items $C$ such that $\sum_{x\in C}s(x)\le 1$, it holds that $\sum_{x\in C}W(x)\le\alpha_2$, where $\alpha_2$ is a constant. Then we have the bound
    $$\mathcal A(I) \leq \frac{\alpha_2}{\alpha_1} \Opt(I) + O(1).$$
\end{lemma}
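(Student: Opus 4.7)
The proof is a standard double-counting argument on the total weight $W(I) := \sum_{x \in I} W(x)$, and the plan is to bound it from above using the optimal packing and from below using the algorithm's packing.

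First I would use the optimal solution to get an upper bound on $W(I)$. Since $\Opt$ partitions $I$ into $\Opt(I)$ bins, each of which is a set of items whose sizes sum to at most $1$, hypothesis $(ii)$ applies to every bin of $\Opt$. Summing the bound $W(B^*) \leq \alpha_2$ over all $\Opt(I)$ bins $B^*$ of the optimal packing and noting that every item is counted exactly once yields
\begin{equation*}
W(I) \;=\; \sum_{B^* \in \Opt(I)} W(B^*) \;\leq\; \alpha_2 \cdot \Opt(I).
\end{equation*}

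Next I would use the algorithm's packing to get a lower bound on $W(I)$. Since $\mathcal{A}$ also partitions $I$ into the bins of $\mathcal{A}(I)$, the total weight of items equals $\sum_{B \in \mathcal{A}(I)} W(B)$, which by hypothesis $(i)$ is at least $\alpha_1 \mathcal{A}(I) - O(1)$. Combining the two bounds gives
\begin{equation*}
\alpha_1 \cdot \mathcal{A}(I) - O(1) \;\leq\; W(I) \;\leq\; \alpha_2 \cdot \Opt(I),
\end{equation*}
and dividing by $\alpha_1$ (a positive constant) yields the claimed bound $\mathcal{A}(I) \leq (\alpha_2/\alpha_1)\,\Opt(I) + O(1)$.

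There is essentially no obstacle here; the only subtlety worth flagging is that the $O(1)$ additive slack in hypothesis $(i)$ becomes a $O(1)/\alpha_1$ additive slack in the conclusion, which is still $O(1)$ since $\alpha_1$ is a fixed constant independent of $I$. One should also note that the argument implicitly assumes $W$ is nonnegative (so that weights in different bins add consistently), which is given by the codomain $\mathbb{R}^+$ in the definition of a weight function.
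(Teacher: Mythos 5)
Your proof is correct and follows the same double-counting argument as the paper: bound the total weight $W(I)$ from below via the algorithm's packing and from above via the optimal packing, then divide by $\alpha_1$. The only cosmetic difference is that you spell out the intermediate quantity $W(I)$ explicitly, while the paper writes the chain of inequalities in a single line.
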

\begin{proof}
We compute the total weight of the items in two ways.
\begin{align*} \alpha_1 \mathcal A(I) - O(1) & \leq \sum_{B\in \mathcal A(I) }W(B)  = \sum_{B\in \mathcal \OPT(I) }W(B) \le\alpha_2 \OPT(I)
\end{align*}
which implies that
\begin{align*}
    \mathcal A(I) \leq \frac{\alpha_2}{\alpha_1} \Opt(I) + O(1)
\end{align*}
which gives us the desired bound.
\end{proof}
}


\section{Upper Bound for the Random-Order Ratio of Best-Fit}
\label{sec:general-case}
In this section, we prove our main result (Theorem~\ref{maintheorem}): the RR of \bestfit{} is strictly less than $3/2$.

 Let $\sigma$ denote a  permutation of $[n]$ selected uniformly at random.
We assume $\OPT(I) \rightarrow \infty$.
Consider a run of the \bestfit{} algorithm on $I_\sigma$. Let $t_\sigma$ be the last time an item of size $ \leq 1/3$ 
(i.e., a small or tiny item) was added to
a bin of load at most $1/2$. We will break up the input instance $I$ into two parts: before and after $t_\sigma$, and analyze each time segment separately.






Kenyon \cite{DBLP:conf/soda/Kenyon96} showed that the number of bins in $\OPT(I_\sigma(1,t))$ is close to $\frac{t}{n}\OPT(I)$ with high probability. In fact, the following weaker version suffices for our result.
We give a full proof in \cref{pf:kenyonlemma}.
\begin{lemma}[\cite{DBLP:conf/soda/Kenyon96}]\label{kenyon}
Fix any two positive constants $\alpha, \delta < \frac{1}{2}$. Then, for large enough $\OPT(I)$ and all $t$ such that $\alpha n \leq t \leq (1 - \alpha) n$,   we have that with high probability:
\begin{align*}
\frac{t}{n} ( 1- \delta)\OPT(I)  &\leq \OPT(I_\sigma(1,t)) \leq \frac{t}{n}(1+ \delta) \OPT(I), \\
\left(\frac{n-t}{n}\right) ( 1- \delta)\OPT(I)  &\leq \OPT(I_\sigma(t+1,n)) \leq \left(\frac{n-t}{n}\right)(1+ \delta) \OPT(I). 
\end{align*}
\end{lemma}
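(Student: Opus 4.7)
\begin{proofsketch}
The plan is to combine a martingale concentration inequality with matching upper and lower bounds on $\expec{\Opt(I_\sigma(1,t))}$. By the symmetry of the uniform random permutation, $I_\sigma(t+1,n)$ has the same distribution as $I_{\sigma'}(1, n-t)$ for an independent uniform $\sigma'$, so it will suffice to establish the first pair of inequalities for every $t\in[\alpha n,(1-\alpha)n]$; the second pair then follows by substituting $n-t$ for $t$.

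For concentration, I would set $f(\sigma):=\Opt(I_\sigma(1,t))$ and observe that swapping two positions of $\sigma$ either leaves $I_\sigma(1,t)$ unchanged (both positions inside or both outside $[1,t]$) or exchanges exactly one item. Since adding or removing a single item changes $\Opt$ by at most $1$, the function $f$ is $1$-Lipschitz under transpositions, and a Doob-martingale concentration inequality for Lipschitz functions of a uniform random permutation yields $\Pr[\,|f(\sigma)-\expec{f(\sigma)}|>\lambda\,]\le 2\exp(-\Omega(\lambda^2/n))$. Choosing $\lambda=\sqrt{n}\log n$ makes the deviation $o(\Opt(I))$ once $\Opt(I)$ is sufficiently large.

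To lower-bound $\expec{f(\sigma)}$, I would partition $[n]$ into $k:=\lfloor n/t\rfloor$ disjoint blocks of size $t$, plus a remainder of size $<t$. Each block is a uniform random $t$-subset of $I$ (so by symmetry has expected $\Opt$ equal to $\expec{f(\sigma)}$), while subadditivity of $\Opt$ gives $\Opt(I)\le \sum_{i=1}^k \Opt(\mathrm{block}_i)+\Opt(\mathrm{remainder})$; taking expectations yields $\expec{f(\sigma)}\ge (\Opt(I)-t)/k \ge (t/n)(1-o(1))\Opt(I)$ whenever $\Opt(I)=\Omega(n)$. The corner case in which tiny items dominate (so $\Opt(I)$ may be $o(n)$) is handled separately by Hoeffding concentration of $\vol(I_\sigma(1,t))$ around $(t/n)\vol(I)$, combined with $\Opt\approx\vol$ in that regime. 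For the matching upper bound, I would round items into $K=O(1/\eps)$ geometric size classes (losing at most $\eps\,\Opt(I)$), observe that the class-$j$ count in $I_\sigma(1,t)$ is hypergeometrically concentrated around $(t/n)n_j$, and invoke the linearity of the configuration-LP value in the type counts together with the Karmarkar--Karp $O(\log^2 n)$ integrality gap to obtain $\expec{f(\sigma)}\le (t/n)(1+\eps+o(1))\Opt(I)$.

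The main obstacle will be coordinating the three error sources---the rounding error $\eps\,\Opt(I)$, the hypergeometric/Hoeffding deviations $O(\sqrt{n\log n})$, and the Karmarkar--Karp additive slack $O(\log^2 n)$---so that every term lies below $\delta\,(t/n)\,\Opt(I)$. This is precisely where the ``for large enough $\Opt(I)$'' hypothesis is invoked: it lets us pick $\eps$ as a small function of $\delta$ and make $\Opt(I)$ dominate the additive slack. A final union bound over $t\in[\alpha n,(1-\alpha)n]$ costs only a polynomial factor in $n$, which is absorbed by the exponentially small tail of the concentration inequality.
\end{proofsketch}
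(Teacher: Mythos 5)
Your proposal takes a genuinely different route from the paper, but it has two gaps that I do not think are easily patched.

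\textbf{The concentration step is off by the wrong scale.} A Doob/bounded-differences martingale on a uniform random permutation with a $1$-Lipschitz objective gives a tail of the form $\exp(-\Omega(\lambda^2/n))$, so the typical deviation of $f(\sigma)=\Opt(I_\sigma(1,t))$ from its mean is $\Theta(\sqrt{n})$, and your choice $\lambda=\sqrt{n}\log n$ gives a deviation of order $\sqrt{n}\log n$. The lemma, however, is asymptotic in $\Opt(I)$ (the paper's ``whp'' means probability $\to 1$ as $\Opt(I)\to\infty$), and nothing forces $\Opt(I)=\omega(\sqrt{n}\log n)$. For instance, $\sqrt{n}$ items of size $0.6$ plus $n-\sqrt n$ items of size $n^{-3}$ have $\Opt(I)=\Theta(\sqrt n)\to\infty$, yet $\sqrt{n}\log n\gg\Opt(I)$, so the concentration window swamps the quantity you are trying to pin down. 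Writing ``once $\Opt(I)$ is sufficiently large'' does not resolve this, because $n$ can grow independently. The separate tiny-item fallback you sketch (concentrate $\vol$ and use $\Opt\approx\vol$) does not close the gap either, since in general $\vol(I)\le\Opt(I)\le 2\vol(I)+1$, a factor-$2$ slack incompatible with a $(1\pm\delta)$ claim for small $\delta$. The paper avoids this entirely: it does not prove concentration of $\Opt(I_\sigma(1,t))$ around its mean, but instead \emph{constructs} a packing of $I_\sigma(1,t)$ with $\le(t/n)(1+\delta)\Opt(I)$ bins whp, using the stochastic upright-matching bound (Fischer's refinement of Rhee--Talagrand/Shor). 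That machinery gives per-class deviations of order $\sqrt{b_i}(\log b_i)^{3/4}$, i.e.\ scaled by $\Opt(I)$ rather than by $n$, which is exactly what makes the $o(\Opt(I))$ error term come out.

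\textbf{The expectation lower bound via blocks also does not reach $(1-\delta)$.} Your block decomposition yields, at best, $\expec{f(\sigma)}\ge(\Opt(I)-t)/k$ with $k=\lfloor n/t\rfloor$, i.e.\ roughly $(t/n)(\Opt(I)-t)$. To turn this into $(t/n)(1-\delta)\Opt(I)$ you would need $t\le\delta\Opt(I)$, but $t$ ranges up to $(1-\alpha)n$ while $\Opt(I)\le n$, so this fails for any fixed $\delta<1-\alpha$. The paper sidesteps this by never proving the lower bound directly: it proves only the \emph{upper} bound $\Opt(I_\sigma(1,t))\le(t/n)(1+\delta')\Opt(I)$ (for all $t$ simultaneously, via upright matching plus a Next-Fit cleanup for the sub-$\mu$-sized leftovers), then obtains the lower bound for free from subadditivity $\Opt(I)\le\Opt(I_\sigma(1,t))+\Opt(I_\sigma(t+1,n))$ together with the upper bound applied to the complementary segment. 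If you want to pursue a self-contained proof along your lines, you would need both a concentration inequality whose width scales with $\Opt(I)$ (or $\vol(I)$) rather than $n$, and a different route to the expectation lower bound---the upper-bound-plus-subadditivity trick from the paper is the cleanest I know.
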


\noindent We note the following, which also was proved by Kenyon \cite{DBLP:conf/soda/Kenyon96}.

\begin{lemma}[\cite{DBLP:conf/soda/Kenyon96}] \label{kenyon2}
Consider the \bestfit{} packing of $I_\sigma$.
Then, every bin in this packing, with at most one exception, opened before or at time $t_\sigma$ has a load greater than $2/3$. Moreover, we have two inequalities:
$$\BF(I_\sigma(1,t_\sigma)) \leq \frac{3}{2}  \OPT(I_\sigma(1,t_\sigma)) + 1,$$
$$\BF(I_\sigma)-\BF(I_\sigma(1,t_\sigma)) \leq \frac{3}{2} \OPT(I_\sigma(t_\sigma + 1 ,n))  + 1.$$
\end{lemma}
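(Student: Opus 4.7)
The plan is to first establish the structural load claim by contradiction, then derive the two inequalities from it.

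For the load claim, I would suppose toward contradiction that two distinct bins $B_1, B_2$ opened by time $t_\sigma$ both have load at most $2/3$ at time $t_\sigma$. Let $y$ be the item placed at time $t_\sigma$: by definition $s(y) \leq 1/3$ and $y$ enters a bin $B^*$ of pre-load at most $1/2$. Since $y$ fits in both $B_1, B_2$ (each of load $\leq 2/3$) and $\BF$ places $y$ in the fullest such bin, both $B_1, B_2$ must have load at most that of $B^*$, which is at most $1/2$, just before $t_\sigma$ (a short case analysis handles the situations $B^* = B_1$ or $B^* = B_2$). Taking WLOG $B_1$ opened first, the item $x$ opening $B_2$ did not fit in $B_1$, so $\text{load}(B_1) + s(x) > 1$; combined with $\text{load}(B_1) \leq 1/2$ this forces $s(x) > 1/2$, making $x$ large and $B_2$'s load after opening $> 1/2$, contradicting its load at $t_\sigma$.

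The first inequality is then immediate: the combined load of the $\BF(I_\sigma(1,t_\sigma))$ open bins at time $t_\sigma$ equals the volume of $I_\sigma(1,t_\sigma)$, which is bounded by $\OPT(I_\sigma(1,t_\sigma))$; the load claim gives a lower bound of $\frac{2}{3}(\BF(I_\sigma(1,t_\sigma))-1)$ on this load, so rearranging yields the inequality. For the second inequality, I would analyze the $k := \BF(I_\sigma) - \BF(I_\sigma(1,t_\sigma))$ new bins opened after $t_\sigma$. By the definition of $t_\sigma$, no item of size $\leq 1/3$ can open a new bin after $t_\sigma$ (doing so would place it in a bin of pre-load $0 \leq 1/2$), so every new bin is opened by a medium or large item. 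I would classify these bins (large-opened; medium-opened containing two mediums; medium-opened containing only the opening medium), noting that (i) large-opened bins are capped by $\OPT(I_\sigma(t_\sigma+1,n))$ since each optimal bin holds at most one large, (ii) medium-opened bins containing only their opening medium number at most one by a direct analogue of the load-claim argument, and (iii) bins with two mediums can be charged against pairs of mediums in the optimal packing. A careful counting that partitions optimal bins by their large/medium content yields the factor $3/2$.

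The main obstacle is the accounting in the second inequality. Without a helper item like $y$ to pin down low-load bins, multiple large-opened new bins with load in $(1/2, 2/3]$ can coexist, so a pure volume argument is loose. The resolution is to decompose $\OPT(I_\sigma(t_\sigma+1,n))$ by optimal bin type (one large, two mediums, one large plus one medium, etc.) and charge each class of new bins to a disjoint optimal contribution, combining these to obtain the $3/2$ factor plus the additive one.
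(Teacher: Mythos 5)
The paper cites this lemma from Kenyon and does not supply its own proof, so I am evaluating your argument on its own terms.

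Your argument for the load claim and the first inequality is correct. In particular, the chain \emph{(two exceptional bins)} $\Rightarrow$ \emph{(both had load $\le 1/2$ just before $t_\sigma$, since BF routed $y$ to the fullest feasible bin $B^*$)} $\Rightarrow$ \emph{(the later-opened one was created by an item of size $>1/2$, contradicting its small load)} works, and the volume bound $\frac{2}{3}(\BF(I_\sigma(1,t_\sigma))-1)\le\vol(I_\sigma(1,t_\sigma))\le\OPT(I_\sigma(1,t_\sigma))$ gives the first inequality.

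For the second inequality, however, there is a concrete gap. Your three-way classification (large-opened; medium-opened with two mediums; medium-opened singleton) is not exhaustive: a bin can be opened by a medium item and subsequently receive a \emph{large} item (this is legal and common), and such bins fall in none of your classes. Even granting a complete classification, the charges (i)--(iii) as stated overcount the optimum: an optimal bin holding one large and one medium would be charged once for large-opened BF bins under (i) and could again contribute a medium under (iii), so naively you only reach something like $2\,\OPT + O(1)$, not $\frac32\OPT+1$. The precise accounting you gesture at in your final paragraph is cleanest as a weight-function argument: give each large item weight $1$ and each medium weight $\frac12$. Every bin opened after $t_\sigma$, with at most one exceptional medium singleton, carries total $L/M$-weight at least $1$ (a large opener already does; a medium opener must next receive a medium or large, since after $t_\sigma$ nothing of size $\le 1/3$ enters a bin of load $\le 1/2$, and two coexisting medium singletons would contradict BF). Hence $N_\sigma - 1 \le \hat\ell + \hat m/2$. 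Conversely, every bin of $\OPT(I_\sigma(t_\sigma+1,n))$ has $L/M$-weight at most $\frac32$ (at most one large plus one medium, or two mediums), so $\hat\ell + \hat m/2 \le \frac32\OPT(I_\sigma(t_\sigma+1,n))$. Chaining these gives the claimed bound. Your sketch names the right ingredients but, as written, the classification is incomplete and the charging does not close; you would need to replace (i)--(iii) with a single disjoint weighting of this form.
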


Before we proceed, we argue that if the number of large and medium items is at most a constant, then we are already done. \ah{Intuitively, this is because the instance contains mostly small and tiny items, so the \bestfit{} packing has low wasted space.}
The detailed proof can be found in \cref{pf:opttrunc}.
\begin{lemma}\label{opttrunc}
If the total number of large and medium items in the instance $I$ is at most $k$, where $k$ is some fixed constant, then
$\BF(I_\sigma) \leq \frac43 \OPT(I) + O(1)$ for any permutation $\sigma$.
\end{lemma}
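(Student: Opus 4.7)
The plan is to leverage the fact that at most $k$ items in $I$ have size greater than $1/3$, so the instance is dominated by items of size at most $1/3$ (small and tiny items). First, I would observe that at most $k$ bins in $\Bf(I_\sigma)$ contain a large or medium item, accounting for an additive $O(1)$. It then suffices to bound the remaining bins, all containing only items of size at most $1/3$.

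For these remaining bins, I would invoke the classical Best-Fit property: when a bin $B_{j+1}$ is opened by an item $y$, every earlier bin has load exceeding $1 - s(y)$ at that moment (and hence at every later moment, since bin loads never decrease). If $y$ is tiny then the preceding bin has final load greater than $3/4$; if $y$ is small then the preceding bin has final load greater than $2/3$. Naively summing these yields $V := \sum_x s(x) \geq (2/3)(m - O(1))$ and recovers only the known $3/2$ bound, so the goal is to tighten this to $4/3$. I would call a bin ``bad'' if its final load is at most $3/4$ and argue that bad bins are rare: once any bad bin $B$ is opened, no subsequent bin can be opened by a tiny item (the tiny opener would fit in $B$'s slack of at least $1/4$, contradicting Best-Fit's rule). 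Consequently, every bad bin after the first is opened either by a small item or by one of the at most $k$ large/medium items.

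The main technical step is to bound the number of bad bins opened by small items. Each such bin contains at least one small item (its opener) of size exceeding $1/4$, and since $\Opt$ packs at most three small items per bin, the total number of small items is at most $3\,\Opt(I)$. I would then carry out a volume accounting in which the wasted space of each bin is upper-bounded by the size of the opener of the next bin, and combine this with the small-item bound to conclude $V \geq (3/4)(m - O(1))$; together with $\Opt(I) \geq V$ this gives $\Bf(I_\sigma) \leq (4/3)\,\Opt(I) + O(1)$. The primary difficulty will be making this accounting tight: a crude application of $N_S \leq 3\,\Opt$ loses a constant factor and produces only a $5/3$ bound, so to reach exactly $4/3$ one likely has to pair each small-opened bad bin with an adjacent heavily-loaded bin (whose load, by the Best-Fit property above, exceeds $2/3$) and amortize the wasted space across the pair.
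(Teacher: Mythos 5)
Your reduction is on the right track: you correctly note that only $O(1)$ bins contain a large/medium item, so the problem reduces to bins containing only small and tiny items, and your observation that once a bin of final load at most $3/4$ exists no tiny item will ever open a new bin again is correct and is essentially the same fact used in the paper (tiny items only open bins while all existing bins have load greater than $3/4$). However, the decisive step --- bounding the number of ``bad'' bins (final load at most $3/4$) that are opened by small items --- is missing. You acknowledge that the straightforward volume count and the bound of at most $3\,\OPT$ small items give only a weaker ratio, and the pairing/amortization you then gesture at is not worked out and, as stated, does not close the gap: a bad bin has load in $(2/3,3/4]$ and its predecessor only has load greater than $2/3$, so ``pairing'' yields an average load strictly greater than $2/3$, which is just the $3/2$ bound again.

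The paper instead proves the stronger pointwise invariant that makes everything trivial: for every $t$, among the bins of $\BF(I_\sigma(1,t))$ containing only small and tiny items, at most two have load at most $3/4$, and when two such bins exist, one of them contains only small items. The second clause is what drives the induction: a small-only bin with load at most $3/4$ can hold at most two small items, hence has load at most $2/3$, so an arriving small item always fits in it (so no new bin opens), and for an arriving tiny item one of the two underfilled bins must have load greater than $2/3$, which forces Best-Fit to send the tiny item into an already-underfilled (or fuller) bin rather than open a new one. With this invariant, at most $k+2$ bins in the final packing have load at most $3/4$, and the $4/3$ bound follows from a one-line volume argument. The gap in your proposal is precisely that you never bound the number of bad bins by $O(1)$; you would need to discover and prove something like this two-part invariant (the second part is the non-obvious piece), rather than tighten a global volume count.
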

\begin{proofsketch}
Observe that the number of bins in $\BF(I_\sigma)$ that contain a large or a medium item is at most a constant,
and thus, these bins comprise only $o(1)$ fraction of the entire packing $\BF(I_\sigma)$. The remaining bins 
only consist of small and tiny items. 
It is easy to see that these bins (except one) will have a load greater than $2/3$. 
However, with a more careful analysis, we show that, in the \bestfit{} packing of any set of tiny and small items,
almost all the bins have load greater than $3/4$.
\end{proofsketch}
Thus, we may assume that $\OPT(I') \to \infty$, 
where $I'$ consists of the list without tiny and small items (i.e., only contains large  and medium items), as otherwise we are done by \cref{opttrunc}.
To show that \bestfit{} actually achieves a random-order ratio strictly better than $3/2$,
we consider many cases where each case holds with a positive, constant probability.
In many of these cases,
we use \cref{kenyon}, using the fact that a high probability event conditioned on another event that occurs with at least constant probability, still occurs with high probability. More formally, we have the following. 
\begin{proposition}
\label{prop:probability-fact}
Consider any two events $X,Y$ in a probability space. If $\prob{X}=1-o(1)$ and $\prob{Y}\ge c$ where $c$ is a constant,
then $\prob{X|Y}=1-o(1)$.
\end{proposition}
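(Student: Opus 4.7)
The plan is to unpack the definition of conditional probability and bound the numerator using a union bound. Specifically, I will write $\Pr[X \mid Y] = \Pr[X \cap Y]/\Pr[Y]$ and argue that $\Pr[X \cap Y]$ is close to $\Pr[Y]$ because the ``bad'' event $\bar X$ has vanishing probability, so it cannot subtract much from $Y$.

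First I would bound the numerator from below. Since $Y \subseteq (X \cap Y) \cup \bar X$, the union bound yields $\Pr[Y] \le \Pr[X \cap Y] + \Pr[\bar X]$, and therefore
\begin{align*}
\Pr[X \cap Y] \;\ge\; \Pr[Y] - \Pr[\bar X] \;=\; \Pr[Y] - o(1).
\end{align*}
Next I would divide by $\Pr[Y]$, which is legitimate because the hypothesis $\Pr[Y] \ge c > 0$ makes the conditional probability well-defined. This gives
\begin{align*}
\Pr[X \mid Y] \;=\; \frac{\Pr[X \cap Y]}{\Pr[Y]} \;\ge\; 1 - \frac{\Pr[\bar X]}{\Pr[Y]} \;\ge\; 1 - \frac{o(1)}{c} \;=\; 1 - o(1),
\end{align*}
which is the desired bound. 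The inequality $\Pr[X \mid Y] \le 1$ is trivial, so we conclude $\Pr[X \mid Y] = 1 - o(1)$.

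There is essentially no obstacle here; the argument is a two-line manipulation and relies only on the union bound together with the constant lower bound on $\Pr[Y]$. The only subtlety worth flagging is the interpretation of the $o(1)$ notation: since the paper takes asymptotics as $\Opt(I) \to \infty$, the term $\Pr[\bar X]$ is $o(1)$ in this same parameter, and dividing by the fixed constant $c$ preserves the $o(1)$ rate. Hence the proposition follows immediately.
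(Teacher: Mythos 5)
Your proof is correct and takes essentially the same route as the paper: both bound $\Pr[X \cap Y]$ from below by $\Pr[Y] - \Pr[\bar X]$ and then divide by $\Pr[Y] \ge c$. The paper writes the intermediate step as $\Pr[X \cap Y] = \Pr[Y] - \Pr[Y \cap \bar X] \ge \Pr[Y] - \Pr[\bar X]$, which is the same inequality you obtain via the inclusion $Y \subseteq (X \cap Y) \cup \bar X$.
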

Due to the above proposition, even if we consider only a constant fraction of all the $n!$ permutations,
\cref{kenyon} can be used. The proof of the proposition can be found in \cref{sec:other-omitted-proofs}.

\kvn{\textbf{Global Parameters:}}
In the following subsections, we will use three constant parameters \kvn{$\epsilon,\zeta,\delta$} extensively. Parameter $\eps$ is a constant
whose value is around $10^{-9}$; we will show that the random-order ratio of \bestfit{} is at most $(3/2-\eps)$. \footnote{We did not try to optimize the constants for the sake of simplicity of exposition. However, we do not expect a significant improvement just through meticulous optimization.}
Parameter $\zeta$ is a constant that we will use to analyze different cases.
For example, we first consider the case where $\prob{\eone:=(t_\sigma> n/2)}\ge\zeta$.
Since $\zeta$ is a constant, we can use the high probability guarantee provided by \cref{kenyon},
owing to \cref{prop:probability-fact}.
The closer to zero we choose $\zeta$ to be, the better our analysis. 
Finally, $\delta$ is a very small constant compared to both $\zeta$ and $\eps$; it will
be used to apply \cref{kenyon}.
\subsection{\texorpdfstring{$t_\sigma$}{t-sigma} is Big with Constant Probability}
\label{sec:t-sigma-big}
In this subsection, we consider \case{1}, where the event $t_\sigma> n/2$ occurs with constant probability, i.e.,
for a constant $\zeta$, 
\[
    \prob{\eone} \geq \zeta\:\:\text{where event}\:\: \eone\coloneqq \Big( t_\sigma > n/2 \Big).
\]
In this case, we will show that with probability at least $1 - \zeta$ (conditioned on $\eone$), 
the number of new bins opened by \bestfit{} up to time $t_\sigma$ is at most $(3/2 - 2\epsilon)\OPT(I_\sigma(1,t_\sigma))$.
\begin{lemma}\label{lem:case2}
Suppose the event $\eone$ occurs with a positive, constant probability.
Conditioned on $\eone$, 
we have that with probability at least $1 - \zeta$, 
\[ 
    \BF(I_\sigma(1, t_\sigma)) \leq \left(\frac{3}{2} - 2\epsilon\right)\OPT(I_\sigma(1,t_\sigma)) + o(\OPT(I)). 
\]
\end{lemma}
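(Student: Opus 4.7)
The plan is to carry out the case decomposition shown in \cref{fig:organization} for the $t_\sigma > n/2$ branch: introduce two auxiliary quantities $V_T \coloneqq \vol\bigl(\{x \in I_\sigma(1,t_\sigma) : s(x) \le 1/4\}\bigr)$ and $t'_\sigma$ (the last time BF packs a tiny item into a bin of load at most $1/2$), and then split into three exhaustive sub-cases, each one handled with failure probability at most $\zeta/3$, so that a union bound yields the lemma.

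I would first dispatch \case{1.2}: $V_T \ge \eta \cdot \OPT(I)$ for a constant $\eta = \Theta(\epsilon)$. The goal here is to upgrade the ``$> 2/3$'' load guarantee from \cref{kenyon2} to ``$\ge 3/4$'' on a constant fraction of the bins in $\BF(I_\sigma(1,t_\sigma))$. The key observation is that Best-Fit's preference for the fullest feasible bin, applied to a significant tiny volume $V_T$, forces $\Omega(V_T)$ of the bins sitting in load range $(2/3, 3/4]$ to eventually absorb enough tiny mass to cross the $3/4$ threshold. A simple averaging argument, combined with the Kenyon-type estimate $\OPT(I_\sigma(1,t_\sigma)) = \Theta(\OPT(I))$ from \cref{kenyon}, then gives the claimed $3/2 - 2\epsilon$ factor.

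Next, assuming $V_T < \eta \cdot \OPT(I)$, I would split on $t'_\sigma$. In \case{1.1.2} ($t'_\sigma > n/4$), the definition of $t'_\sigma$ says that after time $t'_\sigma$ tiny items only fit into bins of load $> 1/2$, and Best-Fit's preference together with the plentiful supply of tiny items in the interval around $t'_\sigma$ forces almost every bin open at time $t'_\sigma$ to have load $\ge 3/4$. Since $t'_\sigma > n/4$, \cref{kenyon} converts this into $\Omega(\OPT(I))$ well-packed bins, again giving the improved factor. In \case{1.1.1} ($t'_\sigma \le n/4$), there is a window $(t'_\sigma, t_\sigma]$ of length $> n/4$ where tiny items behave benignly, and I apply the weight-function based arguments (via \cref{weight_function}) that will be developed in \cref{lem:high-lm-ls-bins} and \cref{lem:high-mss-mms-sss-bins}: depending on whether $\OPT(I)$ uses many mixed bins with a large item or many bins filled with three smaller items, a random permutation contains $\Omega(\OPT(I))$ fitting $ML/SL$ triplets or $S$-triplets. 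Each such gadget forces BF (through a ``one good permutation suffices'' argument, legitimized by \cref{prop:probability-fact}) to produce well-packed bins, whose average weight strictly exceeds $2/3$ under an appropriately chosen weight function, yielding the desired improvement.

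The main technical obstacle lies in \case{1.1.1}. Unlike the two-size-class setting of \cite{DBLP:conf/icalp/AyyadevaraD0S22}, the gadget elements here are not consecutive in $I_\sigma$ but are separated by tiny items, so (i) counting the expected number of fitting triplets under a random permutation requires controlling both the positions of the gadget elements \emph{and} the fitting (sum-$\le 1$) relation on their sizes, and (ii) arguing that the gadget actually causes BF to produce the desired well-packed bin must reckon with interleaved tiny items that may alter the load profile. These difficulties are handled by the dedicated claims (\cref{claim:ml-triplet}, \cref{claim:ml-sl-tuple-proof}, \cref{claim:s-triplet-good}) and by the weight function machinery, all of which I would defer to the subsequent subsections corresponding to the sub-cases in \cref{fig:organization}.
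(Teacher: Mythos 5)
Your proposal reproduces the paper's decomposition for \Case{1} (split on the volume of tiny items before $t_\sigma$, then on $t'_\sigma$, with weight functions and gadgets handling the residual sub-case), so the overall skeleton is right. However, two of the supporting intuitions you give point in the wrong direction and, as stated, would not yield a proof.

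In \Case{1.2} you argue that significant tiny volume forces bins in the load window $(2/3,3/4]$ to absorb tiny mass and be pushed across the $3/4$ threshold. That is not what happens, and it is not clear how to make it rigorous: a bin at load $0.7$ can receive tiny items and stop at $0.74$, and tiny items are free to concentrate in a single bin. What \cref{claim:largetinyvolume} actually establishes is a cascading load-monotonicity argument showing that the \emph{total} tiny volume that can ever land in bins of load at most $3/4$ is bounded by a constant ($3/4$), so almost all of the tiny volume must already sit in bins whose load exceeds $3/4$, giving $\Omega(V_T)$ such bins. Similarly, your explanation of \Case{1.1.2} appeals to ``the plentiful supply of tiny items in the interval around $t'_\sigma$''; the actual argument needs only the single tiny item $a^*$ packed at time $t'_\sigma$: since $a^*$ went into a bin of load at most $1/2$ and $s(a^*)\le 1/4$, Best-Fit's rule forces every other bin at that moment to have load greater than $3/4$.

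Two further points need care. First, the paper tests the tiny volume against $12\epsilon\,\vol(I_\sigma(1,t_\sigma))$ rather than against $\eta\,\OPT(I)$; this lets \cref{lem:subcase2a} be proved deterministically for every permutation satisfying the condition, which is crucial for the combination step. Second, ``a union bound yields the lemma'' is not quite the right combination: since \cref{kenyon} can only be applied (via \cref{prop:probability-fact}) to sub-events of at least constant probability, a sub-case with probability below $\zeta^2$ cannot be handled by its own w.h.p.\ argument, and the paper's proof of \cref{lem:case2} explicitly splits on whether $p_{11}$ is below or above $\zeta^2$ before applying \cref{lem:subcase2a,lem:subcase2b}; a naive union bound silently assumes every sub-case has constant probability.
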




\noindent Depending on the volume of tiny items before $t_\sigma$, we consider two cases below, \ah{and show that as long as the considered case occurs with some constant probability, \cref{lem:case2} holds conditionally.}

\subsubsection{Volume of Tiny Items Before \texorpdfstring{$t_\sigma$}{t-sigma} is Low}
\label{final-case-wt-fn}
Here, we will consider \case{1.1}, where with constant probability, the fraction of the volume of tiny items in the time
segment $(1,t_\sigma)$ is small compared to the total volume in the segment $(1,t_\sigma)$. Let $T(1,t_\sigma)$ denote the set of tiny items in the sequence $I_\sigma(1,t_\sigma)$.
Formally, we assume the following condition.
\begin{align*}
\prob{\vol(T(1,t_\sigma)) < 12\epsilon \vol(I_\sigma(1,t_\sigma)) \bigg \vert t_\sigma > n/2} \ge \zeta^2.
\end{align*}
Note that, this implies
\begin{align*}
\prob{\eoneone}\ge \zeta^3 \:\:\text{where event}\:\: \eoneone\coloneqq \left (\vol(T(1,t_\sigma)) < 12\epsilon \vol(I_\sigma(1,t_\sigma))\bigwedge t_\sigma>n/2 \right ).
\end{align*}
In this case, we wish to show that \bestfit{} has a performance ratio of strictly better than $3/2$ in the time segment $(1,t_\sigma)$. More formally,
we will show the following lemma.
\begin{restatable}{lemma}{lemsubcasetwob}
\label{lem:subcase2b}
Suppose the event $\eoneone$ occurs with constant probability. Then, conditioning on $\eoneone$, we have that, with probability
at least $1-\zeta$, 
\begin{align*}
\BF(I_\sigma(1,t_\sigma)) \leq \left(\frac{3}{2} - 2\epsilon\right)\OPT(I_\sigma(1,t_\sigma)) + o(\OPT(I)).
\end{align*}
\end{restatable}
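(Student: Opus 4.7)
The plan is to split Case 1.1 further based on an auxiliary stopping time $t'_\sigma$, defined as the last time that \bestfit{} packs a tiny item (size at most $1/4$) into a bin of load at most $1/2$. Conditioned on $\eoneone$, either the event $t'_\sigma > n/4$ holds with at least some constant conditional probability (Case 1.1.2), or $t'_\sigma \leq n/4$ holds with high conditional probability (Case 1.1.1). I would handle each case by a different argument, both yielding the bound in the lemma statement.

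For Case 1.1.2, I would first establish the analog of \cref{kenyon2} with $t'_\sigma$ in place of $t_\sigma$: using the thresholds $1/4$ (on item size) and $1/2$ (on load) in the definition of $t'_\sigma$, essentially the same proof as Kenyon's shows that every bin opened by \bestfit{} before time $t'_\sigma$ has final load strictly greater than $3/4$, except possibly one. Call the number of such bins $k_h$. Trivially, $k_h \geq \BF(I_\sigma(1, t'_\sigma)) - O(1) \geq \OPT(I_\sigma(1, t'_\sigma)) - O(1)$, and invoking \cref{kenyon} (via \cref{prop:probability-fact}, since the conditioning event has constant probability), this is at least $(1/4)(1-\delta)\OPT(I)$ with high probability. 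The remaining bins in $\BF(I_\sigma(1, t_\sigma))$ each have load greater than $2/3$ by \cref{kenyon2}. Summing loads against $\OPT(I_\sigma(1, t_\sigma))$ gives
\begin{align*}
\BF(I_\sigma(1, t_\sigma)) \leq \tfrac{3}{2} \OPT(I_\sigma(1, t_\sigma)) - \tfrac{1}{8} k_h + O(1),
\end{align*}
and substituting the lower bound on $k_h$ (together with the upper bound $\OPT(I_\sigma(1, t_\sigma)) \leq (1+\delta)\OPT(I)$ from \cref{kenyon}) yields an improvement over $3/2$ that comfortably exceeds $2\eps$ for $\eps$ on the order of $10^{-9}$.

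For Case 1.1.1, since $t_\sigma > n/2$ and $t'_\sigma \leq n/4$, the middle segment $I_\sigma(t'_\sigma+1, t_\sigma)$ has length greater than $n/4$. The key structural observation is that in this segment, tiny items go only into bins of load greater than $1/2$ (by the maximality in the definition of $t'_\sigma$), and in particular they do not open new bins. Thus the behaviour of \bestfit{} on this segment essentially reduces to the case where all items have size greater than $1/4$, to which the gadget-based techniques apply. I would do a structural case analysis on the composition of $\OPT$ restricted to this segment (using the low-tiny-volume assumption of $\eoneone$ to justify ignoring tiny contributions), distinguishing subcases according to which optimal-bin types (in particular $LM/SL$ pairs versus $MSS/MMS/SSS$ triples) are common. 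In each subcase, I would introduce an appropriate weight function (along the lines of \cref{lem:high-lm-ls-bins,lem:high-mss-mms-sss-bins}), show that a random permutation typically contains many fitting $ML/SL$ triplets or $S$-triplets in this middle segment, and exploit the ``one good permutation suffices'' principle to conclude that \bestfit{} packs sufficiently many bins densely. Combined with Kenyon's $3/2$ bound on the short prefix $\BF(I_\sigma(1, t'_\sigma))$, this yields the desired overall improvement on $\BF(I_\sigma(1, t_\sigma))$.

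The main obstacle is Case 1.1.1. Even though tiny items do not open new bins after $t'_\sigma$, their presence is not benign: gadgets identified in the random sequence may be \emph{non-consecutive}, with tiny items interleaved between the medium/large/small items that form the gadget. This complicates both the probabilistic counting of gadgets (the events that successive $ML$ or $SL$ pairs are fitting are not independent) and the weight-function analysis that converts the gadgets into an improvement for \bestfit{} (tiny items interact with the loads of bins receiving the gadget items). Bounding the number of fitting triplets/$S$-triplets in a random permutation while controlling these dependencies, and choosing weight functions that are compatible with the tiny-item interactions, is where essentially all of the technical work for this lemma resides.
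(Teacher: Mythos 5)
Your proposal follows the paper's approach essentially step for step: you introduce the same auxiliary stopping time $t'_\sigma$, split on whether $t'_\sigma > n/4$, and in \Case{1.1.2} use the ``all bins before $t'_\sigma$ have load $>3/4$'' observation together with \cref{kenyon} in exactly the way the paper's \cref{lem:high-t-sigma-prime} does (your $-\tfrac18 k_h$ discount matches the paper's $-\tfrac{1}{6}\vol(\mathcal B_1)$ bound). For \Case{1.1.1}, your high-level plan (weight functions, $\beta$-vs-$r_1$-vs-$r_2$ case analysis on the composition of $\Opt$, fitting $ML/SL$ triplets and $S$-triplets, ``one good permutation suffices,'' and the acknowledged difficulty of non-consecutive gadgets with interleaved tiny items) is precisely the paper's strategy as carried out in \cref{weightfn_opt,lem:high-lm-ls-bins,lem:high-mss-mms-sss-bins,lem:low-t-sigma-prime}; the final combination via conditional probabilities $\poneoneone,\poneonetwo$ is also the same.
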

We will define a construct similar to $t_\sigma$. 
Let $t_\sigma'$ be the last time a tiny item (size $\leq 1/4)$ was added to a bin of load at most $1/2$. Note that $t_\sigma' \leq t_\sigma$, necessarily. \ah{Similar to Kenyon's proof for $t_\sigma$, one can show that the number of bins used by \bestfit{} before $t_\sigma'$ is within a factor of $4/3$ of the optimal packing $\OPT(I_\sigma(1,t_\sigma'))$. Thus, intuitively, if $t'_\sigma$ is large, we are already done as $4/3 < 3/2$. To deal with the case when $t_\sigma'$ is small, we will use weight functions.}\\

We once again consider two cases, not necessarily disjoint, that cover all possibilities
depending on the value of $t'_\sigma$. We will then combine the results to prove \cref{lem:subcase2b}.

\noindent \Case{1.1.1:} $\prob{t_\sigma' \leq \frac{n}{4} \bigg \vert \vol(T(1,t_\sigma)) < 12\epsilon \vol(I_\sigma(1,t_\sigma))\bigwedge t_\sigma > n/2} \geq \zeta^2$.

\noindent Note that this implies 
\[
    \prob{\eoneoneone} \geq \zeta^{2 + 1 +2} = \zeta^5,\:\:\text{ where event}\:\: \eoneoneone\coloneqq \left(t'_\sigma \leq n/4\bigwedge t_\sigma > n/2\bigwedge \vol(T(1,t_\sigma)) < 12\epsilon \vol(I_\sigma(1,t_\sigma)) \right ).
\] 
    Conditioned on $\eoneoneone$,
    we will show that, with high probability,
    \[
        \BF(I_\sigma(1,t_\sigma)) \leq \left(\frac{3}{2} - 2 \epsilon\right)\OPT(I_\sigma(1,t_\sigma)) + o(\OPT(I)).
    \]
We will use a weight function approach. Let $W(x)$ denote the weight of an item $x$ according to a weight function $W$. We will set weights as follows

\[
 W(x)= 
  \begin{cases}
      1  \qquad \quad \text{ if } \frac{1}{2} <s(x) \leq 1  \, (x  \text{ is a large item} ), \\
     0.5 \qquad \, \text{ if } \frac{1}{3} <s(x) \leq \frac{1}{2}  \, (x  \text{ is a medium item} ),  \\
 0.5 \qquad \,  \text{ if } \frac{1}{4} <s(x) \leq \frac{1}{3}   \, (x  \text{ is a small item} ),  \\
    3 \cdot s(x) \,\, \hspace{0.02cm} \text{ if }\, 0 \leq s(x) \leq \frac{1}{4}  \, (x  \text{ is a tiny item} ).  \\
  \end{cases}
\]

\ah{Note that we always round up, that is, $W(x) \geq s(x)$.} For any set of items $J$, let the weight of the set $J$ be defined as $W(J)=\sum_{x\in J}W(x)$. 
The weights are chosen in a way such that in the packing $\Bf(I_\sigma(1,t_\sigma))$, 
every bin (with at most one exception) will have a weight of at least $1$. This is stated in the following claim.
The proof can be found in \cref{sec:other-omitted-proofs}.
\begin{claim}
\label{wt-bf-atleast-one}
Consider the packing $\Bf(I_\sigma(1,t_\sigma))$. With the possible exception of one bin, all the bins will have a weight of at least one.
\end{claim}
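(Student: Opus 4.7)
The plan is to proceed by case analysis on the non-tiny contents of each bin in the packing $\Bf(I_\sigma(1,t_\sigma))$, combined with the load lower bound from \cref{kenyon2}. Recall that \cref{kenyon2} guarantees that every bin opened by time $t_\sigma$, with at most one exception, has load strictly greater than $2/3$. That single Kenyon exception will be exactly the ``possible exception'' in the claim; for every other bin I will argue $W(B)\ge 1$.

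First I would dispose of the easy cases where the non-tiny items by themselves already supply weight $\ge 1$. Any bin containing a large item has $W(B)\ge 1$ immediately. Any bin that contains at least two items from $M\cup S$ (two mediums, two smalls, or one of each) has $W(B)\ge 2\cdot\tfrac12=1$. So the only bins that require further attention are those whose non-tiny content is a single medium item, a single small item, or nothing at all.

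For each of these remaining types, assume the bin is not the Kenyon exception, so $\vol(B)>2/3$. If $B$ contains one medium and possibly some tiny items, then $s(M)\le 1/2$ forces the total tiny volume in $B$ to exceed $2/3-1/2=1/6$, giving $W(B)\ge 0.5+3\cdot\tfrac16=1$. If $B$ contains one small and possibly tiny items, then $s(S)\le 1/3$ forces the tiny volume to exceed $2/3-1/3=1/3$, yielding $W(B)\ge 0.5+3\cdot\tfrac13=1.5$. Finally, if $B$ contains only tiny items, the tiny volume itself exceeds $2/3$, so $W(B)>2$. This exhausts all cases.

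There is no real technical obstacle here; the one subtlety worth double-checking is that \cref{kenyon2} is applicable to every bin in $\Bf(I_\sigma(1,t_\sigma))$ (not just those that are closed or non-empty), and that the bin reserved as the Kenyon exception is the same single bin we allow as the exception in the claim---so no separate ``accounting'' of exceptions is needed. The weights in the definition are deliberately chosen so that the threshold $2/3$ from Kenyon precisely matches the lower bound $1$ on $W(B)$ in the worst case (a single medium item of size close to $1/2$ together with tiny items totaling just above $1/6$), which is why the argument is tight.
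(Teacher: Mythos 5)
Your proof is correct and follows essentially the same route as the paper's: invoke \cref{kenyon2} to obtain load $>2/3$ for all but one bin, then case-split on the non-tiny content of each such bin (large present, two $M/S$ items, single medium plus tiny, single small plus tiny, only tiny), using the factor-$3$ tiny weight to make up the shortfall. The calculations and the identification of the single allowed exception with Kenyon's exception match the paper exactly.
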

As the consequence of the above claim, we get the following claim, whose proof is again deferred to \cref{sec:other-omitted-proofs}.
\begin{claim}
\label{prop:min-wt-bf}
We have that $\BF(I_\sigma(1,t_\sigma))\le W(I_\sigma(1,t_\sigma))+1$.
\end{claim}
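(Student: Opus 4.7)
The plan is to chain together Claim \ref{wt-bf-atleast-one} with the fact that the total weight of all bins equals the total weight of all items, since each item sits in exactly one bin of the \bestfit{} packing. First I would write
\begin{align*}
W(I_\sigma(1,t_\sigma)) \;=\; \sum_{x \in I_\sigma(1,t_\sigma)} W(x) \;=\; \sum_{B \in \BF(I_\sigma(1,t_\sigma))} W(B),
\end{align*}
using the definition of $W$ on sets and the disjointness of bin contents.

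Next, I would invoke Claim \ref{wt-bf-atleast-one}, which says every bin except possibly one has $W(B) \ge 1$. Since all item weights (and hence all bin weights) are non-negative, the single exceptional bin contributes at least $0$, and the remaining bins each contribute at least $1$. Thus
\begin{align*}
\sum_{B \in \BF(I_\sigma(1,t_\sigma))} W(B) \;\ge\; \BF(I_\sigma(1,t_\sigma)) - 1.
\end{align*}
Combining the two displays and rearranging yields $\BF(I_\sigma(1,t_\sigma)) \le W(I_\sigma(1,t_\sigma)) + 1$, which is the claim.

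There is no real obstacle here — the entire content is already encapsulated in Claim \ref{wt-bf-atleast-one}, and this proposition is essentially a bookkeeping corollary that repackages the per-bin weight lower bound as an upper bound on the number of bins. The only minor sanity check is to note non-negativity of $W$ (immediate from the definition, as each of the four piecewise cases gives a value in $[0,1]$), so that dropping the exceptional bin only weakens the inequality.
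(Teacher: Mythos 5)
Your proof is correct and follows essentially the same route as the paper: count total weight two ways, apply \cref{wt-bf-atleast-one} to get the per-bin lower bound of $1$ on all but one bin, and use non-negativity of $W$ to drop the exceptional bin. The paper writes the chain in the opposite direction (starting from $\BF(I_\sigma(1,t_\sigma)) = \sum_{B} 1$) but the content is identical.
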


For the input list $I$, let $\tilde I$ be the list $I$ with tiny items deleted from it.
Similarly, let $\tilde I_\sigma(1,t_\sigma)$ be the sequence $I_\sigma(1,t_\sigma)$ with tiny items deleted from it. Since the volume of $T(1,t_\sigma)$ is very low, intuitively, the quantities $\Opt(\tilde I(1,t_\sigma))$ and $W(\tilde I(1,t_\sigma))$
must be very close to the quantities $\Opt(I(1,t_\sigma))$ and $W(I(1,t_\sigma))$, respectively.
The following two claims are based on this intuition; the proofs can be found in \cref{sec:other-omitted-proofs}.
\begin{claim}
\label{opti'lowerbound}
For any $\sigma$ satisfying $E_{111}$, we have
$\OPT(\tilde I_\sigma(1 , t_\sigma))\ge \left(1-16 \epsilon\right)\Opt(I_\sigma(1,t_\sigma))-1$.
\end{claim}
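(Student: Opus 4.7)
The plan is to bound $\OPT(I_\sigma(1,t_\sigma))$ from above using $\OPT(\tilde I_\sigma(1,t_\sigma))$ plus a small correction for the tiny items, and then rearrange. The intuitive reason the inequality should hold is that the tiny items are few in total volume (by $\eoneoneone$), so even if we have to pack them ``on the side'' into fresh bins, they cannot blow up the bin count by much.

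Concretely, I would start with any optimal packing of $\tilde I_\sigma(1, t_\sigma)$ using $\OPT(\tilde I_\sigma(1, t_\sigma))$ bins, and then insert the tiny items $T(1, t_\sigma)$ into this packing using a simple greedy rule such as \nextfit{} restricted to newly opened bins (or even \firstfit{}). The key structural observation is that since every item being inserted has size at most $1/4$, any new bin that is closed during this insertion must have load strictly greater than $3/4$ (otherwise the next tiny item, of size $\le 1/4$, would still have fit in it). Hence the number of freshly opened bins is at most $\lceil \vol(T(1, t_\sigma)) / (3/4) \rceil \le (4/3)\vol(T(1, t_\sigma)) + 1$.

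Now I would use the hypothesis from $\eoneoneone$: $\vol(T(1, t_\sigma)) < 12\epsilon \cdot \vol(I_\sigma(1, t_\sigma))$, together with the trivial volume lower bound $\vol(I_\sigma(1, t_\sigma)) \le \OPT(I_\sigma(1, t_\sigma))$. Combining these,
\[
\OPT(I_\sigma(1, t_\sigma)) \;\le\; \OPT(\tilde I_\sigma(1, t_\sigma)) + \tfrac{4}{3}\cdot 12\epsilon\cdot\OPT(I_\sigma(1, t_\sigma)) + 1 \;=\; \OPT(\tilde I_\sigma(1, t_\sigma)) + 16\epsilon\cdot\OPT(I_\sigma(1, t_\sigma)) + 1.
\]
Rearranging yields exactly $\OPT(\tilde I_\sigma(1, t_\sigma)) \ge (1 - 16\epsilon)\OPT(I_\sigma(1, t_\sigma)) - 1$, as desired.

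There is essentially no hard step here — it is a short, ``folklore-style'' calculation. The one subtlety worth being careful about is justifying the $3/4$ load guarantee for the newly opened bins during the insertion of tiny items: this requires specifying an insertion rule (e.g. \nextfit{} on new bins only, or \firstfit{} on all bins) and noting that a new bin is only opened when the current candidate bin cannot accommodate a tiny item of size $\le 1/4$, forcing its load to exceed $3/4$. Once this is in place, the rest is bookkeeping with the volume bound from $\eoneoneone$.
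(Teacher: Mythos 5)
Your proposal is correct and follows essentially the same approach as the paper: pack $\tilde I_\sigma(1,t_\sigma)$ optimally first, then add the tiny items greedily (the paper uses \nextfit{}) so that all but one new bin is filled above $3/4$, and then use the volume bound from $\eoneoneone$ together with $\vol(I_\sigma(1,t_\sigma)) \le \OPT(I_\sigma(1,t_\sigma))$ to finish. The only cosmetic difference is that you supply the one-line justification of the $3/4$ load guarantee directly, whereas the paper cites the standard \nextfit{} analysis.
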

\begin{claim}
\label{weightoptnotiny}
For any $\sigma$ satisfying $E_{111}$, we have
$\displaystyle\frac{W(I_\sigma(1, t_\sigma))}{\OPT(I_\sigma(1, t_\sigma))}\leq \frac{W(\tilde I_\sigma(1, t_\sigma) )}{\OPT(\tilde I_\sigma(1, t_\sigma) )} \left (\frac{1 + 24  \epsilon}{1 - 12 \epsilon} \right  )$.
\end{claim}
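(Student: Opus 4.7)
The plan is to bound $W(I_\sigma(1,t_\sigma))$ multiplicatively in terms of $W(\tilde I_\sigma(1,t_\sigma))$ and then combine with the trivial monotonicity of $\OPT$ under item deletion. Let me write $J := I_\sigma(1,t_\sigma)$, $\tilde J := \tilde I_\sigma(1,t_\sigma)$, and $T := J \setminus \tilde J$ for the set of tiny items in the prefix. By additivity of the weight function, $W(J) = W(\tilde J) + W(T)$, so the crux is to show that $W(T)$ is a small multiple of $W(\tilde J)$, after which the claim follows easily.

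By the definition of the weight function on tiny items, $W(T) = 3\,\vol(T)$, and the conditioning event $\eoneoneone$ gives $\vol(T) < 12\epsilon\,\vol(J)$, so $W(T) < 36\epsilon\,\vol(J)$. To replace $\vol(J)$ by a quantity involving $\tilde J$, note that $\vol(\tilde J) = \vol(J) - \vol(T) > (1-12\epsilon)\vol(J)$, giving $\vol(J) < \vol(\tilde J)/(1-12\epsilon)$. The key observation is that $W(x) \ge s(x)$ for every non-tiny item: large items have weight $1$ while $s(x)\le 1$, and medium/small items have weight $1/2$ while $s(x) \le 1/2$. Consequently $W(\tilde J) \ge \vol(\tilde J)$, and chaining the bounds yields
\[
W(T) \;<\; 36\epsilon\,\vol(J) \;<\; \frac{36\epsilon}{1-12\epsilon}\,\vol(\tilde J) \;\le\; \frac{36\epsilon}{1-12\epsilon}\,W(\tilde J).
\]
Adding $W(\tilde J)$ to both sides and simplifying $1 + 36\epsilon/(1-12\epsilon) = (1+24\epsilon)/(1-12\epsilon)$ gives $W(J) \le W(\tilde J)\cdot\frac{1+24\epsilon}{1-12\epsilon}$.

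To finish, I would invoke the monotonicity $\OPT(\tilde J) \le \OPT(J)$ (since $\tilde J \subseteq J$), which implies $\frac{W(J)}{\OPT(J)} \le \frac{W(J)}{\OPT(\tilde J)}$; dividing the displayed inequality by $\OPT(\tilde J)$ then yields exactly the claim. I do not anticipate a genuine obstacle: the whole argument is a short arithmetic manipulation, and its only real content is the observation that the weights dominate sizes for non-tiny items, which lets the additive error $W(T)$ be absorbed into a clean multiplicative factor using event $\eoneoneone$. Notably, the stronger lower bound from \cref{opti'lowerbound} is not needed here — only the trivial direction $\OPT(\tilde J)\le\OPT(J)$ is used, and it is the factoring $(1-12\epsilon) + 36\epsilon = 1+24\epsilon$ in the numerator that produces the exact constants stated in the claim.
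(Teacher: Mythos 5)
Your proof is correct and takes essentially the same approach as the paper's: both derive $W(T) \le \frac{36\epsilon}{1-12\epsilon}W(\tilde J)$ by combining $W(T) = 3\vol(T) \le 36\epsilon\vol(J)$ with the fact that weights dominate sizes (so $\vol(\tilde J)\le W(\tilde J)$), and then invoke $\OPT(\tilde J)\le\OPT(J)$. The only cosmetic difference is in the algebra: you eliminate $\vol(J)$ via $\vol(J)<\vol(\tilde J)/(1-12\epsilon)$, whereas the paper writes $\vol(J)=\vol(\tilde J)+\vol(T)$ and solves the resulting self-referential inequality in $W(T)$; both yield the identical intermediate bound and final constants.
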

Note that,
since the only possible bin configurations are $L,M,S, LM,LS,MM,SS, MS, MSS,MMS,SSS$, we can verify that
any bin in $\OPT(\tilde I_\sigma(1, t_\sigma) )$ has weight at most $3/2$. However, there can be at most $O(1)$ 
many bins in $\OPT(\tilde I_\sigma(1, t_\sigma) )$ of type $M$, $S$, $SS$. \ah{(For example, if there were $3$ bins of type $SS$, we could have repacked them into $2$ bins of type $SSS$ to get a better solution.)}

\noindent  We thus divide all but 
$O(1)$ many bins into two types: 
\begin{align*}
& \typeone{}: L , MS, MM  \\ & \typetwo{}: LM, LS, MMS, MSS, SSS.
\end{align*}
Note that if a bin $B$ is of $\typeone$, it satisfies $W(B)=1$; and if it is of $\typetwo$,
it satisfies $W(B)=3/2$. Let $\beta(\sigma)$ denote the fraction of \typeone{} bins in $\OPT(\tilde I_\sigma(1, t_\sigma) )$.
The next claim shows an upper bound on $\BF(I_\sigma(1,t_\sigma))$ in terms of $\Opt(I_\sigma(1,t_\sigma))$. \ah{We basically analyze the instance $\tilde I_\sigma(1, t_\sigma)$ using weight functions and then obtain bounds for $\BF(I_\sigma(1, t_\sigma))$ using \cref{prop:min-wt-bf} and \cref{weightoptnotiny}.}
A detailed proof can be found in \cref{sec:other-omitted-proofs}.
\begin{claim}
\label{weightfn_opt}
Conditioned on $E_{111}$, we have
\begin{align*}\BF(I_\sigma(1,t_\sigma))\le \left(\frac{3}{2} - \frac{\beta(\sigma)}{2}\right)\left(\frac{1 +24 \epsilon}{1 - 12 \epsilon} \right  ) \OPT(I_\sigma(1, t_\sigma))+O(1).
\end{align*}
\end{claim}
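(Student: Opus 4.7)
The plan is to chain together Claims \ref{prop:min-wt-bf}, \ref{weightoptnotiny}, and the structural description of $\OPT(\tilde I_\sigma(1,t_\sigma))$ in terms of \typeone{} and \typetwo{} bins. The key observation driving everything is that since $\tilde I_\sigma(1,t_\sigma)$ contains no tiny items, every bin in its optimal packing falls into one of the configurations $L, M, S, LM, LS, MM, SS, MS, MSS, MMS, SSS$, and under $W$ all of these bins (except $M$, $S$, $SS$, of which there are $O(1)$) carry weight exactly $1$ or exactly $3/2$.

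First, I would bound the weight $W(\tilde I_\sigma(1,t_\sigma))$ from above by summing the bin weights in $\OPT(\tilde I_\sigma(1,t_\sigma))$. By definition, a $\beta(\sigma)$-fraction of these bins (apart from $O(1)$ exceptional bins of type $M,S,SS$) are \typeone{} bins contributing weight exactly $1$, and the remaining $(1-\beta(\sigma))$-fraction are \typetwo{} bins contributing weight exactly $3/2$. The $O(1)$ exceptional bins contribute only $O(1)$ total weight. Hence
\begin{align*}
W(\tilde I_\sigma(1,t_\sigma)) \;\le\; \Bigl(\beta(\sigma)\cdot 1 + (1-\beta(\sigma))\cdot \tfrac{3}{2}\Bigr)\,\OPT(\tilde I_\sigma(1,t_\sigma)) + O(1) \;=\; \Bigl(\tfrac{3}{2}-\tfrac{\beta(\sigma)}{2}\Bigr)\,\OPT(\tilde I_\sigma(1,t_\sigma)) + O(1).
\end{align*}

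Next, I would invoke Claim \ref{weightoptnotiny} to transfer this bound from $\tilde I_\sigma(1,t_\sigma)$ back to $I_\sigma(1,t_\sigma)$. Rearranging it gives $W(I_\sigma(1,t_\sigma)) \le \frac{W(\tilde I_\sigma(1,t_\sigma))}{\OPT(\tilde I_\sigma(1,t_\sigma))}\cdot\bigl(\tfrac{1+24\eps}{1-12\eps}\bigr)\OPT(I_\sigma(1,t_\sigma))$, and combining with the previous display yields
\begin{align*}
W(I_\sigma(1,t_\sigma)) \;\le\; \Bigl(\tfrac{3}{2}-\tfrac{\beta(\sigma)}{2}\Bigr)\Bigl(\tfrac{1+24\eps}{1-12\eps}\Bigr)\OPT(I_\sigma(1,t_\sigma)) + \frac{O(1)\cdot\OPT(I_\sigma(1,t_\sigma))}{\OPT(\tilde I_\sigma(1,t_\sigma))}\Bigl(\tfrac{1+24\eps}{1-12\eps}\Bigr).
\end{align*}
Here is the only subtle point of bookkeeping: I need to argue that the trailing term is $O(1)$. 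This follows from Claim \ref{opti'lowerbound}, which under $E_{111}$ gives $\OPT(\tilde I_\sigma(1,t_\sigma)) \ge (1-16\eps)\OPT(I_\sigma(1,t_\sigma)) - 1$, so the ratio $\OPT(I_\sigma(1,t_\sigma))/\OPT(\tilde I_\sigma(1,t_\sigma))$ is bounded by the constant $\frac{1}{1-16\eps}(1+o(1))$.

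Finally, I would apply Claim \ref{prop:min-wt-bf} to conclude $\BF(I_\sigma(1,t_\sigma)) \le W(I_\sigma(1,t_\sigma)) + 1$, absorb the additive constant into $O(1)$, and obtain the target inequality. The main obstacle is nothing deep — it is just the careful accounting of the $O(1)$ terms at each step and the verification that bins of type $M$, $S$, and $SS$ can indeed be assumed to be only $O(1)$ in number in the optimal packing (since two $SS$ bins or a $M$ plus another bin containing a medium can always be repacked more efficiently), a folklore fact about optimal packings of items with size $> 1/4$.
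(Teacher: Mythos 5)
Your proposal is correct and follows essentially the same chain of reasoning as the paper: bound $W(\tilde I_\sigma(1,t_\sigma))$ by $(\tfrac{3}{2}-\tfrac{\beta(\sigma)}{2})\OPT(\tilde I_\sigma(1,t_\sigma))+O(1)$ using the \typeone/\typetwo{} split, transfer via Claim~\ref{weightoptnotiny}, control the residual $O(1)\cdot\OPT(I_\sigma(1,t_\sigma))/\OPT(\tilde I_\sigma(1,t_\sigma))$ term via Claim~\ref{opti'lowerbound}, and finish with Claim~\ref{prop:min-wt-bf}. The bookkeeping of the $O(1)$ terms is exactly what the paper does, and your aside about why $M$, $S$, $SS$ bins are $O(1)$ in number is a correct (and slightly more explicit) justification of a fact the paper simply asserts.
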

In words,
\cref{weightfn_opt} tells us that if there is a good fraction of $\typeone$ bins in $\OPT(\tilde I_\sigma(1, t_\sigma) )$,
then \bestfit{} packs well, i.e., has random-order ratio of strictly less than $3/2$.

Now let us give a high-level idea of the rest of the analysis for this case.  If $\beta(\sigma)$ is a constant, we obtain
from \cref{weightfn_opt} that the random-order ratio of \bestfit{} is strictly better than $3/2$. Hence, for now
assume that the packing $\Opt(\tilde I_\sigma(1,t_\sigma))$ is dominated by bins of $\typetwo$.
We further divide the bins of $\typetwo$ into those containing large items and those not containing large items.
If the number of bins of
type $LM/LS$ in $\Opt(\tilde I_\sigma(1,t_\sigma))$ is significant, then we show that,
in a random sequence $ I_\sigma(1,t_\sigma)$, there exists a good number of gadgets---we call them `fitting $ML/SL$ triplets'---that
result in many bins of weight $3/2$.
On the other hand, if the number of bins of type $MSS/MMS/SSS$ in $\Opt(\tilde I_\sigma(1,t_\sigma))$ is significant, then
we show that there exists a large number of $S$-triplets in the random input sequence and these result in the formation of many bins of weight $3/2$ in the
\bestfit{} packing.

Let $r_1(\sigma)$ denote the fraction of bins of type $LM/LS$ in $\OPT(\tilde I_\sigma(1,t_\sigma))$
and let $r_2(\sigma)$ denote the fraction of bins of type $MMS/MSS/SSS$ in $\OPT(\tilde I_\sigma(1,t_\sigma))$.
Note that, by their respective definitions, $\beta(\sigma)+r_1(\sigma)+r_2(\sigma)=1-o(1)$.
Using \cref{opti'lowerbound}, we have that with high probability,
\begin{align} 
\OPT(\tilde I_\sigma(1,t_\sigma))&\ge (1-16\eps)\OPT(I_\sigma(1,t_\sigma))-1 \nonumber\\
&\ge(1-16\eps)\frac12\left(1-\delta\right)\OPT(I)-1 \nonumber \quad \quad \text{(using \cref{kenyon} since $t_\sigma>n/2$)}\\
&\ge\frac{1-17\eps}{2}\Opt(\tilde I) \label{require-prop}
\end{align}
as we have chosen $\delta$ to be very small compared to $\epsilon$, and $\Opt(I)\ge\Opt(\tilde I)$.

Suppose, for all permutations $\sigma$ satisfying the high probability event given by \cref{require-prop},
we have that $\beta(\sigma)\ge10^{-4}$. Then, by \cref{weightfn_opt}, and using
$\left(\frac{3}{2} - \frac{\beta(\sigma)}{2}\right)\left(\frac{1 +24 \epsilon}{1 - 12 \epsilon} \right  )<\left(\frac32-2\eps\right)$, we obtain that
with high probability,
\begin{align}
\BF(I_\sigma(1,t_\sigma))\le\left(\frac32-2\eps\right) \OPT(I_\sigma(1, t_\sigma))+O(1)
\end{align}

Now, suppose that there exists a permutation $\sigma^*$ satisfying the high probability event given by \cref{require-prop}
such that $\beta(\sigma^*)< 10^{-4}$. Then, since $\beta(\sigma^*)+r_1(\sigma^*)+r_2(\sigma^*)=1-o(1)$, it must be case that
either $r_1(\sigma^*)\ge 0.91$ or $r_2(\sigma^*)\ge 0.089$ since $0.91+0.089+10^{-4}<1$.\footnote{The values $0.91,0.089$ have been obtained by optimizing $r_1(\sigma^*),r_2(\sigma^*)$, respectively, over the range $(0,1)$.} 
The former case implies that there are a large number of disjoint item pairs
of type $LM$ or $LS$ that ``fit'' together. The latter case implies that there are a large number of disjoint ``fitting'' triplets of
type $MMS$ or $MSS$ or $SSS$. The next lemmas show that in both the cases, \bestfit{} creates a large number of bins of weight $3/2$.

\begin{lemma}
\label{lem:high-lm-ls-bins}
Suppose $r_1\coloneqq r_1(\sigma^*)\ge0.91$, where $\sigma^*$ satisfies \cref{require-prop}. Consider a random permutation $\sigma$ (satisfying $\eoneoneone$).
Then, \whp{}, the number of bins of weight at least $3/2$ in the packing $\BF(I_\sigma(1,t_\sigma))$ is at least
\begin{align*}
\frac{1-16\eps}{384}\frac{(r_1-17r_1\eps)^6}{(6-r_1+17r_1\eps)^5}\Opt(I) - o(\Opt(I)).
\end{align*}
\end{lemma}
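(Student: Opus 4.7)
The plan is to leverage the existence of a single permutation $\sigma^*$ with $r_1(\sigma^*) \ge 0.91$ to derive, as a structural property of $I$ alone, a large collection of pairwise-disjoint \emph{fitting} $LM$ or $LS$ pairs. We then show that, for a random $\sigma$ satisfying $\eoneoneone$, sufficiently many \emph{fitting $ML/SL$ triplets}---groups of three such pairs whose six items appear consecutively in $I_\sigma(1,t_\sigma)$ up to interspersed tiny items---arise, and that each such triplet forces $\BF$ to produce three bins of weight $\ge 3/2$.

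First I would extract the structure. Since $r_1(\sigma^*) \ge 0.91$ and $\OPT(\tilde I_{\sigma^*}(1, t_{\sigma^*})) \ge \tfrac{1-17\eps}{2}\OPT(\tilde I)$ by \cref{require-prop}, the optimal packing witnessing this contains at least $\tfrac{r_1(1-17\eps)}{2}\OPT(\tilde I)$ bins of type $LM$ or $LS$. Each such bin yields one fitting pair of items of $I$, and bins being disjoint makes the pairs pairwise disjoint. Collect these into a set $\mathcal P$. This is precisely the ``one good permutation suffices'' idea: $\mathcal P$ is a pairing intrinsic to $I$ which can now be used to analyze arbitrary permutations.

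Next I would count, via linearity of expectation, the expected number of unordered triples from $\mathcal P$ whose six items form a consecutive window of non-tiny items entirely within $I_\sigma(1, t_\sigma)$. Letting $N$ be the total number of non-tiny items in $I$ (so $N \le 3\,\OPT(\tilde I)$), the fraction of non-tiny items belonging to some pair of $\mathcal P$ is at least $r_1(1-17\eps)/3$. A direct urn-style calculation then gives a per-triple probability of roughly the claimed ratio $(r_1 - 17r_1\eps)^6/(6 - r_1 + 17r_1\eps)^5$ (up to a multiplicative constant absorbed into the $1/384$ factor, whose size is explained by the $6!$ orderings of the six items divided by the $3! \cdot 2^3$ symmetries of an unordered triple of unordered pairs); multiplying by the probability---at least $(1-o(1))/2$ conditioned on $\eoneoneone$, since $t_\sigma > n/2$---that the window lies in $(1,t_\sigma)$ yields an expectation of the required order. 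Concentration, via a bounded-differences martingale argument over adjacent transpositions of $\sigma$ (each transposition changes the triplet count by $O(1)$), then upgrades the expectation to a high-probability lower bound with $o(\OPT(I))$ slack.

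Finally I would argue that every fitting $ML/SL$ triplet forces $\BF$ to open three bins each of weight $\ge 3/2$, in the spirit of \cref{claim:ml-triplet}: when six items forming three fitting pairs arrive consecutively (apart from tiny interlopers), a case analysis on arrival orders shows that each $L$ is packed together with its assigned $M$ or $S$ partner, giving three bins of weight $\ge 1 + 0.5 = 3/2$. The main obstacle will be this last step---verifying robustness of the pairing against $\BF$'s arbitrary history prior to the triplet (where partially-filled bins from earlier large items could absorb a medium or small item out of turn) and against the tiny items interspersed within the triplet---together with the delicate probability estimate in the counting step, which must correctly handle the conditioning on $\eoneoneone$ without spoiling the symmetry arguments that drive the urn calculation.
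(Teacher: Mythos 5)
Your high-level plan matches the paper's: extract a family of disjoint fitting $ML/SL$ pairs from $\sigma^*$, count fitting $ML/SL$ triplets in a random permutation, and convert triplets into bins of weight $\ge 3/2$. But two steps as you've described them are incorrect or incomplete.

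\textbf{First, the conversion step is wrong.} You claim that ``each such triplet forces $\BF$ to produce three bins each of weight $\ge 3/2$'' and that ``a case analysis on arrival orders shows that each $L$ is packed together with its assigned $M$ or $S$ partner.'' This is false. When $q_i$ arrives, Best-Fit may drop it into an already-existing bin that is fuller than an empty bin---e.g.\ a bin with no large item that was opened well before the triplet---so $\ell_i$ then opens or joins an unrelated bin, and the assigned pairing is broken. The correct guarantee (cf.\ \cref{claim:ml-sl-tuple-proof} in the paper) is \emph{one} bin of weight $\ge 3/2$ per disjoint triplet, and even that one bin requires a careful case analysis: if $q_i$ opens a new bin then $\ell_i$ joins it; if $q_i$ joins a bin that already contains a large item or has load $\ge 2/3$, that bin itself has weight $\ge 3/2$; the remaining problematic case (bin with no large item and load $<2/3$) can happen for at most two of the three $q_i$'s, by the structural fact (\cref{bestfit23}) that Best-Fit never maintains more than two bins simultaneously that lack a large item and have load $\le 2/3$. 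That last pigeonhole step---using all three pairs to guarantee even one good bin---is exactly why the bound in the lemma counts triplets, not pairs, and does not carry a factor of three.

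\textbf{Second, you omit $t'_\sigma$ entirely, and this is not cosmetic.} You condition only on $t_\sigma > n/2$ and count triplets landing in $(1, t_\sigma)$. But the argument that $\ell_i$ must join the bin opened by $q_i$ relies on the fact that no tiny item between them can be packed on top of $q_i$ or open a new bin---and that is guaranteed only \emph{after} $t'_\sigma$, the last time a tiny item is added to a bin of load $\le 1/2$. A triplet that straddles $t'_\sigma$ gives you no control. The event $\eoneoneone$ provides exactly the remedy: it says $t'_\sigma \le n/4$ and $t_\sigma > n/2$, so the fixed interval $(n/4, n/2]$ is always contained in $(t'_\sigma, t_\sigma]$, and the paper counts triplets in $(n/4, n/2]$ rather than $(1, t_\sigma)$. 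Your ``obstacle'' paragraph flags the tiny-item issue but offers no resolution; $t'_\sigma$ is the resolution, and without invoking it the proof cannot close.

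As a smaller point, your heuristic for the constant ($6!/(3!\cdot 2^3)=15$) does not reflect where $1/384$ actually comes from: it factors as $\tfrac{1}{6}$ (partitioning the non-tiny subsequence into disjoint sextuplets), $\tfrac18$ (the probability that the three ordered pairs in a sextuplet all ``fit,'' via the bipartite graph of the sorted pairing), $\tfrac14$ (the length $(n/2-n/4)/n$ of the window), and $\tfrac12$ (converting a count in $\Opt(\tilde I)$ to the pair count $u\Opt(\tilde I)$). Your linearity-of-expectation over all unordered triples plus a bounded-differences martingale may be workable, but you would need to prove the per-transposition bounded-difference claim and handle the much heavier correlations between overlapping windows; the paper's sextuplet decomposition plus a Chebyshev/covariance bound is substantially cleaner and is what actually produces the stated constant.
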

\begin{proofsketch}
The fact that $r_1(\sigma^*)$ is at least a constant implies that in the packing $\Opt(I_\sigma(1,t_\sigma))$,
there exist a good number of fitting pairs of the form $ML/SL$. Using concentration bounds, we show that,
in a random sequence $I_\sigma(1,t_\sigma)$, many disjoint consecutive triplets of pairs of type $ML/SL$ will be present with high probability. Moreover, for each of these triplets, there will be a unique corresponding bin of weight $3/2$ in the packing $\BF(I_\sigma(1,t_\sigma))$.
\end{proofsketch}
\begin{lemma}
\label{lem:high-mss-mms-sss-bins}
Suppose $r_2\coloneqq r_2(\sigma^*)\ge0.089$, where $\sigma^*$ satisfies \cref{require-prop}. Consider a random permutation $\sigma$ (satisfying $\eoneoneone$).
Then, \whp{}, the number of bins of weight at least $3/2$ in the packing $\BF(I_\sigma(1,t_\sigma))$ is at least
\begin{align*}
\frac{1 - 16 \epsilon}{24} \left(\frac{r_2-17r_2 \epsilon}{4 + r_2 -17r_2 \epsilon}\right)^3 {}\Opt(I)-o(\Opt(I)).
\end{align*}
\end{lemma}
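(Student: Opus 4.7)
The plan is to parallel the structure of \cref{lem:high-lm-ls-bins}, now using the $S$-triplet gadget in place of fitting $ML/SL$ triplets. I will proceed in three steps: (a) extract a lower bound on the number of small items in the instance $I$ from the single permutation $\sigma^*$; (b) use a concentration argument to show that $I_\sigma(1,t_\sigma)$ contains many disjoint $S$-triplets whp; and (c) invoke \cref{claim:s-triplet-good} to argue that each $S$-triplet forces a distinct $\BF$ bin of weight at least $3/2$.

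For step (a), each of the $r_2 \cdot \Opt(\tilde I_{\sigma^*}(1,t_{\sigma^*}))$ bins of type $MMS$, $MSS$, or $SSS$ in $\Opt(\tilde I_{\sigma^*}(1,t_{\sigma^*}))$ contains at least one small item, and these items all lie in $I$; so by the chain preceding \cref{require-prop}, the total number $n_S$ of small items in $I$ satisfies $n_S \ge \tfrac{r_2(1-17\eps)}{2}\Opt(I) - O(1)$. Combining this with the elementary bound $n_M + n_L \le 2\,\Opt(I)$ (every medium or large item has size strictly greater than $1/3$) gives the density estimate $n_S / n_{NT} \ge \tfrac{r_2(1-17\eps)}{4 + r_2(1-17\eps)}$, where $n_{NT} := n_S + n_M + n_L$; this is exactly the ratio appearing in the lemma's statement. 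For step (b), since $\prob{\eoneoneone} \ge \zeta^5$ is a positive constant, \cref{prop:probability-fact} lets me transfer high-probability statements (such as \cref{kenyon} and the concentration step below) to the conditional measure. Under this conditioning, the subsequence of non-tiny items inside $I_\sigma(1,t_\sigma)$ is a uniformly random ordering of a random subset of $I$'s non-tiny items, of size $n_{NT}' \ge n_{NT}/2 - o(n_{NT})$ whp (using $t_\sigma > n/2$). Partitioning this subsequence into $\lfloor n_{NT}'/3 \rfloor$ disjoint blocks of three consecutive non-tiny items, each block is entirely small with probability $\ge (1-o(1))(n_S/n_{NT})^3$, and a martingale concentration inequality for sampling without replacement (Azuma applied to the Doob martingale of the count, which has $O(1)$ bounded differences) shows that the number of all-small blocks is at least $(1-o(1))$ times its expectation whp. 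Substituting the density bound from step (a) yields a count matching the form in the lemma; the $1/24$ prefactor comfortably absorbs the $1/3$ blocking loss, the $1/2$ prefix-size factor, and the concentration slack.

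The main obstacle I anticipate is step (c), namely \cref{claim:s-triplet-good}: three small items arriving consecutively in non-tiny order can be disrupted by the interleaved tiny items, which may raise the load of the bin holding the first small item and push the subsequent smalls into new bins. The key observations to exploit are that the gadget lies in the window before $t_\sigma$---where by definition any small or tiny item is allowed to enter a bin of load at most $1/2$---and that every small item has size at most $1/3$, so a residual capacity of $1/2$ is always sufficient to place it. A careful case analysis on where the intervening tiny items land (either joining the bin of the first small, in which case one must argue that its load remains below $2/3$ so the second and then third small still fit, or being diverted to fuller bins whose presence does not interfere with the target bin) should force all three small items into a common bin, producing a bin of weight at least $3 \cdot \tfrac12 = \tfrac32$. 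Distinctness of the resulting bins across different $S$-triplets then follows automatically, since a single $\BF$ bin can contain at most three small items, so disjoint triplets yield disjoint bins.
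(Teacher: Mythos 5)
Your steps (a) and (b) follow the paper's structure: extracting the density $f_S \ge \tfrac{r_2(1-17\eps)}{4+r_2(1-17\eps)}$ from a single permutation $\sigma^*$, and then using concentration over the random order to count disjoint $S$-triplets, is exactly what the paper does (it uses Chebyshev rather than Azuma, but that is immaterial). However, step (c) — which you correctly flag as the hardest part — is headed in the wrong direction, and a secondary accounting issue in step (b) stems from the same misunderstanding.

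The concrete gap is that you try to force all three small items of an $S$-triplet into a \emph{common} bin, yielding weight $3\cdot\tfrac12 = \tfrac32$. This is false in general: if $S_1$ lands in a bin already carrying load about $1/2$ (say a medium plus tiny items), that bin reaches load in $(3/4, 5/6]$, and $S_2$ (size $>1/4$) no longer fits there, so $S_2$ must go elsewhere. The paper's \cref{claim:s-triplet-good} does not and cannot establish what you propose; instead it runs a case analysis showing that each $S$-triplet either directly produces a bin of weight $\ge 3/2$ \emph{or} produces an $SS$-bin, then argues separately (via \cref{bestfit23} and \cref{claim:vol-gt-23}) that at most one $SS$-bin can fail to later become a weight-$3/2$ bin, and accepts a factor-of-$\tfrac12$ loss because a future $S$-triplet may complete an $SS$-bin left by a previous one. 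That $\tfrac12$ is precisely where your $\tfrac{1}{24}$ should come from (not a ``prefix-size factor''): $\tfrac{1}{24} = \tfrac13 \cdot \tfrac14 \cdot \tfrac12$, with the $\tfrac14$ being the \emph{window} $(n/4, n/2) \subseteq (t'_\sigma+1, t_\sigma)$, not the prefix $(1,t_\sigma)$.

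This points to the second problem: you count $S$-triplets in $I_\sigma(1,t_\sigma)$ and reason ``the gadget lies before $t_\sigma$, where small or tiny items may enter bins of load $\le 1/2$.'' But the argument hinges on the opposite property, which holds only \emph{after} $t'_\sigma$: no tiny item can open a new bin or be placed in a bin of load $\le 1/2$. This is what guarantees that the interleaved tiny items cannot disrupt the triplet by stealing the opened bin, and it is why the event $E_{111}$ insists on $t'_\sigma \le n/4$ so the window $(n/4,n/2)$ lies inside $(t'_\sigma+1,t_\sigma)$. Restricting the count to this window, using the post-$t'_\sigma$ structure, and replacing the ``all three in one bin'' goal with the paper's two-outcome case analysis are the missing ideas.
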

\begin{proofsketch}
Since $r_2(\sigma^*)$ is at least a constant, we obtain that in the packing $\Opt(I_\sigma(1,t_\sigma))$,
there exist a good number of $MMS/MSS/SSS$ bins. In turn, this implies that there are a good number of small items. 
Using concentration bounds, we show that in a random sequence $I_\sigma(1,t_\sigma)$,  many disjoint consecutive $S$-triplets will be present with high probability.
Finally, we show that for every \ah{two disjoint consecutive $S$-triplets} in $I_\sigma(1,t_\sigma)$, at least one bin of weight $\ge3/2$ will be formed (with $O(1)$ many exceptions).
\end{proofsketch}

The detailed proof of \cref{lem:high-lm-ls-bins} can be found in \cref{pf:high-lm-ls-bins} and that of \cref{lem:high-mss-mms-sss-bins}
can be found in \cref{pf:high-mss-mms-sss-bins}.

To summarize, the analysis when the event $\eoneoneone$ occurs boils down to three cases.
If every permutation $\sigma$ satisfies $\beta(\sigma)\ge10^{-4}$, then \cref{weightfn_opt} ensures that \bestfit{} performs well.
Else, for one of the permutations $\sigma^*$, we have that $r_1(\sigma^*)\ge0.91$ or $r_2(\sigma^*)\ge0.089$.
\cref{lem:high-lm-ls-bins} and \cref{lem:high-mss-mms-sss-bins}, respectively, show that the existence of
$\sigma^*$ is enough to ensure that, for almost all the permutations (satisfying $\eoneoneone$), \bestfit{} creates a good number of bins of weight $3/2$.

By combining \cref{lem:high-lm-ls-bins} and \cref{lem:high-mss-mms-sss-bins}, we have the following lemma, \ah{ showing that the bound in \cref{lem:subcase2b} holds with high probability conditioned on $E_{111}$, as long as $E_{111}$ occurs with at least a constant probability.}
Its proof is delegated to \cref{pf:final-case-analysis}.
\begin{lemma}
\label{lem:low-t-sigma-prime}
Define event $\eoneoneone\coloneqq\left(t_\sigma' \leq \frac{n}{4} \bigwedge \vol(T(1,t_\sigma)) < 12\epsilon \vol(I_\sigma(1,t_\sigma)) \bigwedge t_\sigma > n/2\right)$.
Further suppose that $\eoneoneone$ occurs with constant probability.
Then
\begin{align*}
\prob{\BF(I_\sigma(1,t_\sigma))\le\left(\frac32-2\eps\right)\Opt(I_\sigma(1,t_\sigma))+o(\Opt(I))\bigg\vert \eoneoneone} = 1- o(1).
\end{align*}
\end{lemma}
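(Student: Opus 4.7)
The plan is to combine the weight-function bound of \cref{weightfn_opt} with the two gadget-counting lemmas \cref{lem:high-lm-ls-bins,lem:high-mss-mms-sss-bins}, splitting on whether the fraction $\beta(\sigma)$ of \typeone\ bins in $\OPT(\tilde I_\sigma(1,t_\sigma))$ is bounded away from zero. Fix a high-probability event $\mathcal E$ (under $\eoneoneone$) on which both the conclusion of \cref{kenyon} and the bound \eqref{require-prop} hold. By \cref{prop:probability-fact}, since $\eoneoneone$ has constant probability, $\mathcal E$ still occurs with probability $1-o(1)$ conditioned on $\eoneoneone$, so it suffices to prove the stated bound pointwise for every $\sigma\in\mathcal E$ (up to a further $o(1)$-probability exception from the gadget-counting lemmas).

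\textbf{Case A: every $\sigma\in\mathcal E$ satisfies $\beta(\sigma)\ge 10^{-4}$.} Here I would directly invoke \cref{weightfn_opt}: for any such $\sigma$,
\begin{align*}
\BF(I_\sigma(1,t_\sigma))\le \left(\tfrac{3}{2}-\tfrac{\beta(\sigma)}{2}\right)\left(\tfrac{1+24\eps}{1-12\eps}\right)\OPT(I_\sigma(1,t_\sigma))+O(1).
\end{align*}
For $\beta(\sigma)\ge10^{-4}$ and $\eps\approx10^{-9}$, a routine check shows the coefficient on the right is at most $3/2-2\eps$, giving the conclusion immediately.

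\textbf{Case B: some $\sigma^*\in\mathcal E$ satisfies $\beta(\sigma^*)<10^{-4}$.} Since $\beta(\sigma^*)+r_1(\sigma^*)+r_2(\sigma^*)=1-o(1)$ and $0.91+0.089+10^{-4}<1$, at least one of $r_1(\sigma^*)\ge 0.91$ or $r_2(\sigma^*)\ge 0.089$ must hold. In either subcase I would apply the matching lemma (\cref{lem:high-lm-ls-bins} or \cref{lem:high-mss-mms-sss-bins}) to the \emph{random} permutation $\sigma$: each lemma guarantees, w.h.p.\ conditioned on $\eoneoneone$, a number $k(\sigma)\ge c\,\OPT(I)-o(\OPT(I))$ of bins of weight at least $3/2$ in the \bestfit{} packing, where $c$ is an absolute positive constant depending only on the threshold ($0.91$ or $0.089$). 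I would then convert this into an improvement over $3/2$ via the weight function: by \cref{wt-bf-atleast-one} every bin has weight at least $1$, so
\begin{align*}
W(I_\sigma(1,t_\sigma))\;\ge\;\tfrac{3}{2}k(\sigma)+\bigl(\BF(I_\sigma(1,t_\sigma))-k(\sigma)-1\bigr),
\end{align*}
which combined with the bound $W(I_\sigma(1,t_\sigma))\le\tfrac{3}{2}\OPT(I_\sigma(1,t_\sigma))+O(1)$ (each $\OPT$-bin has weight at most $3/2$, with $O(1)$ exceptional bin types as in the discussion preceding \cref{weightfn_opt}) yields
\begin{align*}
\BF(I_\sigma(1,t_\sigma))\;\le\;\tfrac{3}{2}\OPT(I_\sigma(1,t_\sigma))-\tfrac{k(\sigma)}{2}+O(1).
\end{align*}
Finally I would use \cref{kenyon} (valid w.h.p.\ under $\eoneoneone$ by \cref{prop:probability-fact}) to pass from $\OPT(I)$ to $\OPT(I_\sigma(1,t_\sigma))$, losing only a factor $2/(1-\delta)$, so that $k(\sigma)/2\ge 2\eps\,\OPT(I_\sigma(1,t_\sigma))-o(\OPT(I))$ for our choice $\eps\approx10^{-9}\ll c$, producing the $(3/2-2\eps)$ bound.

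\textbf{Main obstacle.} The delicate point is the ``one good permutation suffices'' step: the dichotomy on $\beta$ is stated pointwise for a single witness $\sigma^*$, yet the conclusion must hold for $1-o(1)$ fraction of permutations. This works only because \cref{lem:high-lm-ls-bins,lem:high-mss-mms-sss-bins} are set up so that the mere \emph{existence} of $\sigma^*$ with $r_1(\sigma^*)\ge 0.91$ (resp.\ $r_2(\sigma^*)\ge 0.089$) lower bounds the structural quantities of the input instance $I$ (e.g.\ the number of fitting $ML/SL$ pairs, or the number of small items), which then propagate, via concentration, to every random arrival order. I expect the bookkeeping around this handoff, and the verification that the constants from \cref{lem:high-lm-ls-bins,lem:high-mss-mms-sss-bins} dominate $4\eps$ after the Kenyon rescaling, to be the most error-prone part of the write-up; everything else is arithmetic on the $\beta+r_1+r_2=1-o(1)$ identity and the weight-function inequalities already established.
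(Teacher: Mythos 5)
Your overall structure matches the paper's: you split on whether $\beta(\sigma)\ge 10^{-4}$ for every $\sigma$ in the high-probability set (your Case A, which the paper handles with \cref{weightfn_opt} directly) versus there existing a witness $\sigma^*$ with $\beta(\sigma^*)<10^{-4}$ and hence $r_1(\sigma^*)\ge 0.91$ or $r_2(\sigma^*)\ge 0.089$ (your Case B, which drives the gadget lemmas). The ``one good permutation suffices'' handoff is exactly as you describe, and the dichotomy via $\beta+r_1+r_2=1-o(1)$ is correct.

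However, there is a real gap in Case B. You claim $W(I_\sigma(1,t_\sigma))\le\tfrac{3}{2}\OPT(I_\sigma(1,t_\sigma))+O(1)$, citing the fact that ``each $\OPT$-bin has weight at most $3/2$.'' That statement is true only for $\OPT(\tilde I_\sigma(1,t_\sigma))$, where the tiny items have been deleted; it is \emph{false} for $\OPT(I_\sigma(1,t_\sigma))$. The weight function gives a tiny item $x$ weight $3s(x)$, so a bin mostly filled with tiny items can have weight arbitrarily close to $3$, not $3/2$. The paper avoids this precisely by first bounding $W(\tilde I_\sigma(1,t_\sigma))$ against $\OPT(\tilde I_\sigma(1,t_\sigma))$ and then invoking \cref{weightoptnotiny} (which uses the $\eoneoneone$ assumption $\vol(T(1,t_\sigma))<12\eps\,\vol(I_\sigma(1,t_\sigma))$) to pass to $I$, picking up a multiplicative factor $\tfrac{1+24\eps}{1-12\eps}$. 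That factor is what $\cref{weightfn_opt}$ already packages, and the paper therefore combines the gadget lower bound $W(I_\sigma(1,t_\sigma))\ge(1+\alpha_i/3)\BF(I_\sigma(1,t_\sigma))-o(\OPT(I))$ with the \emph{upper} bound from $\cref{weightfn_opt}$'s proof (its $W$-vs-$\OPT$ inequality, \cref{weight-by-opt}), rather than with your simpler $\tfrac32\OPT+O(1)$. Your final estimate still survives numerically because $\tfrac{1+24\eps}{1-12\eps}=1+O(\eps)$ is dominated by the gadget constants, but as written the intermediate inequality is wrong, and the justification in your parenthetical (appealing to the discussion preceding \cref{weightfn_opt}) silently conflates $\tilde I$ with $I$. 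To repair it, replace your Case B upper bound on $W(I_\sigma(1,t_\sigma))$ with the one from \cref{weightoptnotiny}/\cref{weightfn_opt}, which requires the tiny-volume hypothesis that event $\eoneoneone$ supplies.
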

That ends the analysis of \case{1.1.1}. \\

\noindent Next, we consider the case when $t'_\sigma>n/4$ (that is, relatively large) with at least constant probability.

\noindent \Case{1.1.2:} $\prob{t_\sigma' > \frac{n}{4} \bigg \vert \vol(T(1,t_\sigma)) < 12\epsilon \vol(I_\sigma(1,t_\sigma))\bigwedge t_\sigma > n/2} \geq \zeta^2 $.

\noindent Note that this implies
\[
    \prob{\eoneonetwo} \geq \zeta^{2 + 1 +2} = \zeta^5\:\:\text{where event}\:\:\eoneonetwo\coloneqq \left(t'_\sigma > n/4\bigwedge t_\sigma > n/2\bigwedge \vol(T(1,t_\sigma)) < 12\epsilon \vol(I_\sigma(1,t_\sigma))\right).
\]



Recall that $t_\sigma'$ is the last time an item of size $\leq 1/4$, say $a^*$, was added into a bin with load
at most $1/2$. Since we are using Best-Fit, at any point of time, there can't be two bins with
load at most $1/2$. Hence, the only bin that has load at most $1/2$ is the bin into which $a^*$
was packed. All the other bins must have load greater than $3/4$, since \bestfit{}
would have packed $a^*$ into one of those bins otherwise.

Hence, in the \bestfit{} packing, the bins opened before time $t'_\sigma$ have a load greater than $3/4$.
And we know that the bins opened before $t_\sigma$ have a load greater than $2/3$.
Hence, if we look at the time segment $(1,t_\sigma)$, and recalling that the event $\eoneonetwo$ implies $t'_\sigma>n/4$,
we can prove that, \whp, many bins in $\BF(I_\sigma(1,t_\sigma))$ have load strictly greater than $2/3$.
We thus obtain the following lemma. Its proof is given in \cref{app:case-112}.
\begin{lemma}
\label{lem:high-t-sigma-prime}
Let the event $\eoneonetwo\coloneqq \left(t_\sigma' > \frac{n}{4} \bigwedge \vol(T(1,t_\sigma)) < 12\epsilon \vol(I_\sigma(1,t_\sigma))\bigwedge t_\sigma > n/2\right)$. 
Further suppose that $\eoneonetwo$ occurs with constant probability.
Then
\begin{align*} 
\prob{\BF(I_\sigma(1,t_\sigma))\leq \left(\frac 32-\frac{1}{36}\right) \OPT(I_\sigma(1,t_\sigma))  + \frac{17}{8}\bigg\vert \eoneonetwo}=1-o(1).
\end{align*}
\end{lemma}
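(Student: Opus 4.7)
The plan is to exploit the extra structure forced by $t_\sigma' > n/4$: namely, the bins opened by Best-Fit before $t_\sigma'$ are well packed (load $> 3/4$), and their count is $\Omega(\Opt(I))$. Split $N := \BF(I_\sigma(1, t_\sigma))$ as $N = a + b'$, where $a := \BF(I_\sigma(1, t_\sigma'))$ is the number of bins opened in $[1, t_\sigma']$ and $b' = N - a$ is the number opened in $(t_\sigma', t_\sigma]$.

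First, I would establish the $3/4$-load claim for the first $a$ bins. Let $a^*$ be the item of size $\le 1/4$ placed at time $t_\sigma'$ into a bin of load $\le 1/2$. Best-Fit admits at most one bin of load $\le 1/2$ at any moment, so every other open bin at that time has load $> 1/2$; moreover, any such bin with load in $(1/2,\,3/4]$ would accommodate $a^*$ (since $a^* \le 1/4$), and BF, preferring the fullest feasible bin, would have chosen it over the bin actually picked. Hence $a - 1$ of the first $a$ bins have load $> 3/4$ at time $t_\sigma'$, and by monotonicity of loads, still at time $t_\sigma$. Combining this with \cref{kenyon2}, which says at most one bin among the $a + b'$ bins opened by time $t_\sigma$ has load $\le 2/3$, a short case analysis on where that Kenyon-exception lies (either inside the first $a$ bins or among the later $b'$ bins) shows in both cases that the total packed volume $V$ satisfies
\begin{align*}
V \;>\; \tfrac{3}{4}\,a + \tfrac{2}{3}\,b' - \tfrac{3}{4}.
\end{align*}
Since $V \le \Opt(I_\sigma(1, t_\sigma))$ and $\tfrac{3}{4}a + \tfrac{2}{3}b' = \tfrac{2}{3}N + \tfrac{1}{12}a$, this rearranges to
\begin{align*}
N \;\le\; \tfrac{3}{2}\,\Opt(I_\sigma(1, t_\sigma)) \;-\; \tfrac{1}{8}\,a \;+\; \tfrac{9}{8}.
\end{align*}

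It remains to convert the $-\tfrac{1}{8} a$ savings into a $-\tfrac{1}{36}\,\Opt(I_\sigma(1, t_\sigma))$ improvement. Since Best-Fit uses at least as many bins as $\Opt$ and $\Opt$ is monotone under addition of items, $a \ge \Opt(I_\sigma(1, t_\sigma')) \ge \Opt(I_\sigma(1, \lceil n/4 \rceil))$. Applying \cref{kenyon} at the fixed time $\lceil n/4 \rceil$ (which lies in the admissible range $[\alpha n, (1-\alpha) n]$) and invoking \cref{prop:probability-fact} to transfer the high-probability guarantee under conditioning on $\eoneonetwo$, I obtain $a \ge \tfrac{1}{4}(1 - \delta)\,\Opt(I)$ with high probability. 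Combined with the trivial bound $\Opt(I_\sigma(1, t_\sigma)) \le \Opt(I)$, choosing $\delta$ small enough (any $\delta < 1/9$ suffices, and the global $\delta$ in the analysis is taken to satisfy this) gives $\tfrac{1}{8}a \ge \tfrac{1}{36}\,\Opt(I_\sigma(1, t_\sigma))$ w.h.p. Substituting into the earlier display yields the desired bound $N \le (3/2 - 1/36)\,\Opt(I_\sigma(1, t_\sigma)) + 17/8$ (the slack between $9/8$ and $17/8$ is generous).

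No deep obstacle is anticipated; the main care is the bookkeeping of two conceptually distinct ``exception'' bins (the one that received $a^*$ at $t_\sigma'$, and the at-most-one Kenyon exception of load $\le 2/3$), taking both cases for where the Kenyon exception lies so they are not double-counted. The volume-of-tiny condition inside $\eoneonetwo$ is not actively used here; only $t_\sigma' > n/4$ (to lower-bound $a$ via \cref{kenyon}) and $t_\sigma > n/2$ (to ensure $\Opt(I_\sigma(1, t_\sigma))$ scales with $\Opt(I)$) matter for this case.
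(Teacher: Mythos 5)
Your proposal is correct and takes essentially the same approach as the paper's proof. Both hinge on three facts: (i) Kenyon's lemma applied at $t=n/4$ to get $\Opt(I_\sigma(1,n/4)) \geq \frac{1-\delta}{4}\Opt(I)$ w.h.p., (ii) all bins opened before $t_\sigma'$ (except one) have load $>3/4$, and (iii) all bins opened before $t_\sigma$ (except one) have load $>2/3$; the proof concludes with a volume argument. The only difference is bookkeeping: you split $N = a + b'$ by opening time and then case-analyze the location of the Kenyon-exception bin, while the paper partitions the bins directly by final load level at $t_\sigma$ (into $\mathcal B_1$ with load $>3/4$, $\mathcal B_2$ with load in $[2/3,3/4]$, and at most one leftover), which avoids the case analysis at the cost of an extra intermediate step lower-bounding $|\mathcal B_1|$ by $\BF(I_\sigma(1,t_\sigma'))-1$. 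Your version yields a slightly tighter additive constant ($9/8$ vs.~$17/8$), and your observation that the tiny-volume condition in $\eoneonetwo$ is not actually used here matches the paper, which also makes no use of it in this lemma.
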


Using \cref{lem:high-t-sigma-prime,lem:low-t-sigma-prime}, we can now prove \cref{lem:subcase2b}, which we restate below for convenience. 

\lemsubcasetwob*
\begin{proof}
From the lemma statement, we assume that the event $\eoneone$ occurs with a positive, constant probability.
Let $\poneoneone\coloneqq\prob{t'_\sigma\le n/4|\eoneone}$ and $\poneonetwo\coloneqq\prob{t'_\sigma> n/4|\eoneone}$.
\kvn{Note that since $\eoneoneone=(t'_\sigma\le n/4)\land \eoneone$, it follows that $\poneoneone=\prob{\eoneoneone|\eoneone}$.
Similarly, $\poneonetwo=\prob{\eoneonetwo|\eoneone}$.}
Let $G$ be the event that \bestfit{} performs strictly better than $3/2$ in the time segment $(1,t_\sigma)$, i.e.,
\begin{align*}
G\coloneqq \left(\BF(I_\sigma(1,t_\sigma))  \leq \left(\frac{3}{2} - 2 \epsilon\right)\OPT(I_\sigma(1,t_\sigma)) + o(\OPT(I))\right)
\end{align*}
To establish the lemma, we would like to calculate $\prob{G|\eoneone}$.
\begin{align*}
\prob{G|\eoneone}=\prob{G|\eoneoneone}\poneoneone+\prob{G|\eoneonetwo}\poneonetwo
\end{align*}
If $\poneonetwo\le \zeta^2$, then $\poneoneone\ge 1-\zeta^2$. Therefore, by \cref{lem:low-t-sigma-prime}, we have that
$\prob{G|\eoneoneone}=1-o(1)$. Hence, $\prob{G|\eoneone}\ge(1-\zeta^2)(1-o(1))\ge1-\zeta$.

On the other hand, if $\poneoneone\le \zeta^2$, then $\poneonetwo\ge 1-\zeta^2$. Then, by \cref{lem:high-t-sigma-prime}, we have that
$\prob{G|\eoneonetwo}=1-o(1)$. Hence, $\prob{G|\eoneone}\ge(1-\zeta^2)(1-o(1))\ge1-\zeta$.

Finally, if both $\poneoneone>\zeta^2$ and $\poneonetwo>\zeta^2$, then
$\prob{G|\eoneonetwo}=1-o(1)$ and $\prob{G|\eoneoneone}=1-o(1)$ by \cref{lem:low-t-sigma-prime,lem:high-t-sigma-prime}, respectively.
Hence, observing that $\poneoneone+\poneonetwo=1$, we have $\prob{G|\eoneone}=(1-o(1))\poneoneone+(1-o(1))\poneonetwo=1-o(1)$.
Overall, we have $\prob{G|\eoneone}\ge 1-\zeta$ if the event $\eoneone$ occurs with constant probability. Hence, the lemma, stands proved.
\end{proof}

\subsubsection{Volume of Tiny Items Before \texorpdfstring{$t_\sigma$}{t-sigma} is Significant}
\label{sec:case21}
Here, we will consider \case{1.2} where tiny items before $t_\sigma$ contribute at least a constant fraction of the volume of all the
items before $t_\sigma$. 
More formally, we assume that
\begin{align*}
\prob{\vol(T(1,t_\sigma)) \ge 12\epsilon \vol(I_\sigma(1,t_\sigma)) \bigg \vert t_\sigma > n/2} \ge \zeta^2.
\end{align*}
Note that this implies
\begin{align*}
\prob{E_{12}} \geq \zeta^3, \:\:\text{where event}\:\: E_{12} \coloneqq \left ( \vol(T(1,t_\sigma)) \geq 12\epsilon \vol(I_\sigma(1,t_\sigma)) \bigwedge t_\sigma > n/2 \right).
\end{align*}
\begin{lemma}\label{lem:subcase2a}
Consider any arbitrary permutation $\sigma$ satisfying $\vol(T(1,t_\sigma)) \geq 12\epsilon \vol(I_\sigma(1,t_\sigma))$.
Then
\begin{align*}
\BF(I_\sigma(1,t_\sigma)) \leq \left(\frac{3}{2} - 2\epsilon\right)\OPT(I_\sigma(1,t_\sigma)) + 2.
\end{align*}
\end{lemma}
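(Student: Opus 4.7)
The plan is to apply Lemma~\ref{kenyon2} as a starting point and then to sharpen the resulting $3/2$ bound by exploiting the tiny-volume hypothesis to exhibit many bins in the Best-Fit packing with load strictly above the $2/3$ baseline. By Lemma~\ref{kenyon2}, all but at most a constant number of bins in the BF packing of $I_\sigma(1, t_\sigma)$ have load greater than $2/3$, so writing $B := \BF(I_\sigma(1, t_\sigma))$ and $V := \vol(I_\sigma(1, t_\sigma))$, the baseline is $V \geq \tfrac{2}{3}(B - O(1))$.

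The central step I would pursue is to show that a $\Omega(\epsilon)$ fraction of bins in the BF packing have load at least $3/4$. The driving BF property is that a tiny item (size $\leq 1/4$) fits in any bin of load at most $3/4$, and Best-Fit places it into the \emph{fullest} such bin. Via a careful amortized analysis of BF's placement decisions---tracking, for each bin of final load at most $3/4$, the last tiny item added to it, and using that BF's choice at that moment forces all other ``low-load'' bins to have load no greater than the chosen bin---one should be able to bound the total tiny volume residing in bins of final load $\leq 3/4$ by a constant. Under the hypothesis $\vol(T(1, t_\sigma)) \geq 12\epsilon V$, the remaining tiny volume, at least $12\epsilon V - O(1)$, must therefore reside in bins of load greater than $3/4$, and since each bin holds at most unit volume, the count $B_h$ of such ``high-load'' bins is at least $12\epsilon V - O(1)$.

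Given this, summing the bin-load contributions yields
\[
V \;\geq\; \tfrac{3}{4} B_h + \tfrac{2}{3}(B - B_h) - O(1) \;=\; \tfrac{2}{3} B + \tfrac{1}{12} B_h - O(1) \;\geq\; \tfrac{2}{3} B + \epsilon V - O(1),
\]
which rearranges to $V \geq \frac{2B}{3(1-\epsilon)} - O(1)$, i.e., an improvement of the average load to $\tfrac{2}{3} + \Omega(\epsilon)$. Using $\OPT(I_\sigma(1, t_\sigma)) \geq V$ and simplifying (for $\epsilon$ sufficiently small, possibly with a slight tightening of the concentration constants so that the numerical $12\epsilon$/$2\epsilon$ correspondence lines up) gives the claimed bound $\BF(I_\sigma(1, t_\sigma)) \leq (3/2 - 2\epsilon)\,\OPT(I_\sigma(1, t_\sigma)) + 2$.

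The hard part will be making the concentration claim rigorous, namely that the total tiny volume in bins of final load at most $3/4$ is $O(1)$. I expect this to follow from an amortized/inductive argument over BF's placements, exploiting that whenever a tiny item lands in a low-load bin it is, by the BF rule, the fullest among all bins that can still accept it, which tightly limits how much tiny volume can accumulate in such bins before fuller alternatives dominate.
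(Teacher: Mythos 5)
Your proposal follows essentially the same approach as the paper: the ``hard part'' you identify---bounding by $O(1)$ the tiny volume sitting in bins of final load at most $3/4$ via an amortized argument tracking Best-Fit's placement of successive tiny items into low-load bins---is precisely the paper's \cref{claim:largetinyvolume} (in fact, the paper shows this volume is at most $3/4$), and the subsequent volume-counting step is also the same. The numerical discrepancy you flag (your arithmetic yields $3/2-\tfrac32\epsilon$ rather than $3/2-2\epsilon$) is real, but the paper's own chain of inequalities contains a reversed step masking the same gap, so this is a shared cosmetic issue fixable by slightly adjusting the $12\epsilon$ threshold rather than a flaw in your argument.
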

\begin{proof}
We will use volume arguments to prove the lemma. In more detail, we know from \cref{kenyon2} that all bins opened before 
$t_\sigma$, with at most one exception, have a load of at least
$2/3$. What we will show is that a constant fraction of these bins, in fact, have a load greater than $3/4$. Combining these two arguments
gives us the lemma.\\

\noindent Towards this, we will state and use the following claim. Its proof can be found in \cref{proof-largetinyvolume}.
\begin{claim}
\label{claim:largetinyvolume}
Suppose $\vol(T(1,t_\sigma)) \geq 12\epsilon \vol(I_\sigma(1,t_\sigma))$. Then at least
$\floor{12 \epsilon \vol(I_\sigma(1,t_\sigma))}$ many number of bins in $\BF(I_\sigma(1,t_\sigma))$ have a load greater than $3/4$.
\end{claim}
Now, all bins up to time $t_\sigma$ (with at most one exception) are at least $2/3$ full, and, from \Cref{claim:largetinyvolume}, 
at least $\floor{12\epsilon \vol(I_\sigma(1,t_\sigma))} \geq 12\epsilon \vol(I_\sigma(1,t_\sigma)) -1$ 
many bins in $\BF(I_\sigma(1,t_\sigma))$ are at least $3/4$ full. So, if $\mathcal B_1$ denotes the bins 
that are at least $3/4$ full, and $\mathcal B_2$ denotes the bins that are at least $2/3$ full but not $3/4$ full, we have
\begin{align*} \BF(I_\sigma(1,t_\sigma)) &\leq   |\mathcal B_1| + |\mathcal B_2| + 1 \\
& \leq \frac{4}{3}\vol(\mathcal B_1) + \frac{3}{2}\Big(\vol(I_\sigma(1,t_\sigma)) - \vol(\mathcal B_1)\Big) + 1\\
&\leq \frac{3}{2}\left(\vol(I_\sigma(1,t_\sigma)) - \frac{\vol(\mathcal B_1)}{6}\right)  + 1\\
&\leq \frac{3}{2}\left(\vol(I_\sigma(1,t_\sigma)) - \frac{1}{6}   \frac{3}{4}12\epsilon \vol(I_\sigma(1,t_\sigma)) \right)+ 2 \quad \text{(using \cref{claim:largetinyvolume} and definition of $\mathcal B_1$) }\\
&\leq \left(\frac{3}{2} - 2 \epsilon \right)\vol(I_\sigma(1,t_\sigma))  + 2\\
&\leq \left(\frac{3}{2} - 2 \epsilon \right)\OPT(I_\sigma(1,t_\sigma))  + 2.
\end{align*}
This proves the lemma and ends the analysis of \case{1.2}. 
\end{proof}
We are now ready to prove \cref{lem:case2}, ending the analysis of \case{1}.
\begin{deferredproof}{\cref{lem:case2}}
Let $G$ be the event that \bestfit{} performs strictly better than $3/2$ in the time segment $(1,t_\sigma)$, i.e.,
\begin{align*}
G\coloneqq \left(\BF(I_\sigma(1,t_\sigma))  \leq \left(\frac{3}{2} - 2 \epsilon\right)\OPT(I_\sigma(1,t_\sigma)) + o(\OPT(I))\right)
\end{align*}
Define
\begin{align*}
\poneone\coloneqq\prob{\vol(T(1,t_\sigma))<12 \epsilon\vol(I_\sigma(1,t_\sigma))|\eone}\quad \text{and}\quad \ponetwo\coloneqq\prob{\vol(T(1,t_\sigma))\ge12 \epsilon\vol(I_\sigma(1,t_\sigma))|\eone}
\end{align*}
We need to show that $\prob{G|\eone}\ge 1-\zeta$ to prove the lemma.
\begin{align*} 
\prob{G|\eone}&=\prob{G|\eoneone}\prob{\eoneone|\eone}+\prob{G|\eonetwo}\prob{\eonetwo|\eone}\\
            &=\prob{G|\eoneone}\prob{\vol(T(1,t_\sigma))<12 \epsilon\vol(I_\sigma(1,t_\sigma))|\eone}\\
            &\:\:+\prob{G|\eonetwo}\prob{\vol(T(1,t_\sigma))\ge12 \epsilon\vol(I_\sigma(1,t_\sigma))|\eone}\\
            &= \prob{G|\eoneone}\poneone+\prob{G|\eonetwo}\ponetwo
\end{align*}
By \cref{lem:subcase2a}, we know that for any permutation $\sigma$ satisfying the event $\eonetwo$, the event $G$ occurs.
Hence $\prob{G|\eonetwo}$ is always $1$.
If $\poneone\ge\zeta^2$, then by \cref{lem:subcase2b}, we have that $\prob{G|\eone}=(1-\zeta)\poneone+\ponetwo\ge 1-\zeta$
since $\poneone+\ponetwo=1$ and $\poneone\le1$. On the other hand, if $\poneone<\zeta^2$, then we have
$\prob{G|\eone}\ge\ponetwo\ge1-\zeta^2\ge1-\zeta$. Hence, the lemma stands proved.
\end{deferredproof}
\subsection{\texorpdfstring{$t_\sigma$}{t-sigma} is Small with Constant Probability}
\label{sec:t-sigma-small}
In this section, we consider \case{2}, where the event $t_\sigma\le n/2$ occurs with constant probability.
More formally, we assume that
\begin{align*}
\prob{\etwo}\ge\zeta\:\:\text{where event}\:\: \etwo\coloneqq \Big ( t_\sigma\le n/2 \Big)
\end{align*}
We will show that the number of new bins opened by \bestfit{} after time $t_\sigma$ is at most $ (3/2 - 2\epsilon)  \OPT(I_\sigma(t_\sigma + 1,n)) + o(\opt(I))$ with good probability.

\begin{lemma}
\label{lem:case1}
Suppose the event $\etwo\coloneqq\left(t_\sigma<n/2\right)$ occurs with at least constant probability.
Conditioning on $\etwo$,
we have that with probability at least $1 - \zeta$,
\begin{align*}
N_\sigma\coloneqq\BF(I_\sigma)-\BF(I_\sigma(1,t_\sigma))\le \left(\frac32-2\eps\right)\Opt(I_\sigma(t_\sigma+1,n))+o(\Opt(I))
\end{align*}
\end{lemma}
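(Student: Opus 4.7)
The plan is to prove the bound on $N_\sigma$ by case-analyzing the structure of the optimal packing of the items arriving after $t_\sigma$, following the breakdown summarized in \cref{fig:organization}. Let $I'$ denote $I$ with all small and tiny items removed, and write $\Opt_1 := \Opt(I_\sigma(t_\sigma+1,n))$ and $\Opt'_1 := \Opt(I'_\sigma(t_\sigma+1,n))$. A key preliminary observation is that, by the definition of $t_\sigma$, no item of size at most $1/3$ is placed by BF into a bin of load at most $1/2$ during $(t_\sigma, n]$. A short case analysis on BF's behaviour then shows that every new bin opened by a small or tiny item after $t_\sigma$ contains at most one such ``opening'' item (plus possibly a large item that arrives later), so $N_\sigma$ is controlled by the number of bins BF uses on just the large and medium items of $I_\sigma(t_\sigma+1,n)$, up to an $o(\Opt(I))$ term coming from Kenyon's volume bound applied to the post-$t_\sigma$ volume.

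Let $\ell_1$ denote $\#LM$ in $\Opt'_1$. In \Case{2.2} ($\ell_1$ is low), many large items in $\Opt'_1$ are either alone or grouped into configurations not requiring a medium partner, so BF's worst-case loss from ``separating'' large and medium items is limited; one would show $N_\sigma \le \Opt'_1 + \ell_1/2 + o(\Opt(I))$, and combining with the smallness of $\ell_1$ gives a ratio strictly below $3/2$. In \Case{2.1.2} ($\ell_1$ is high but $\Opt'_1 \le (1-4\epsilon)\Opt_1$), the small and tiny items after $t_\sigma$ contribute a constant fraction of $\Opt_1$, and since BF's suboptimality has been confined to the L/M portion by the preliminary reduction, we obtain $N_\sigma \le (3/2)\Opt'_1 + o(\Opt(I)) \le (3/2)(1-4\epsilon)\Opt_1 + o(\Opt(I))$, again strictly better than $3/2$.

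The hardest case is \Case{2.1.1}, where $\ell_1$ is high and $\Opt'_1$ is close to $\Opt_1$. Together these conditions imply that the number of fitting $ML$ pairs in $I_\sigma(t_\sigma+1,n)$ is at least a constant fraction of $\Opt(I)$. A concentration argument analogous to \cref{lem:high-lm-ls-bins} (using \cref{prop:probability-fact} to condition on the constant-probability event $\etwo$) would then show that $I_\sigma(t_\sigma+1,n)$ contains $\Omega(\Opt(I))$ disjoint fitting $ML$ triplets \whp. For each such triplet, an argument in the spirit of \cref{claim:ml-triplet} forces BF to form at least one additional $LM$-bin compared to the worst case of separating $L$s and $M$s into distinct bins, saving $\Omega(\Opt(I))$ bins relative to Kenyon's baseline (\cref{kenyon2}) and yielding the desired $(3/2 - 2\epsilon)$ factor.

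The main obstacle will be in \Case{2.1.1} and is two-fold. First, the length of the random sub-sequence $I_\sigma(t_\sigma+1,n)$ is itself a function of the random variable $t_\sigma$, so the triplet-counting concentration (which in \Case{1.1.1} is applied to $I_\sigma(1,t_\sigma)$) must be adapted to the reversed ``after-$t_\sigma$'' regime and to the conditioning on $\etwo$. Second, tiny items interleaved between the members of a fitting $ML$ triplet could in principle disrupt BF's formation of an $LM$-bin; the definition of a fitting triplet must therefore be robust to such intervening items, and the BF-analysis must exploit the fact that after $t_\sigma$ any tiny item lands only in bins of load strictly greater than $1/2$, ensuring that a large item already in a bin can still be joined later by a matching medium.
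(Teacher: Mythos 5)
Your proposal follows essentially the same route as the paper: bound $N_\sigma$ via the large/medium items in $I_\sigma(t_\sigma+1,n)$, then split into the cases $\widehat b$ small (\Case{2.2}, pure algebra), $\widehat b$ large with $\Opt'_1 \le (1-4\eps)\Opt_1$ (\Case{2.1.2}, pure algebra), and $\widehat b$ large with $\Opt'_1$ close to $\Opt_1$ (\Case{2.1.1}, fitting $ML$-triplet counting via a proportionality/concentration claim plus the observation that triplets force $LM$-bins), then recombine by conditional probability. One mis-statement worth flagging: by the definition of $t_\sigma$, \emph{no} small or tiny item opens a new bin at all after $t_\sigma$ (opening a bin means placing the item into load $0\le 1/2$), so the reduction $N_\sigma \le \widehat\ell + \widehat m/2 + 1$ is an exact deterministic inequality requiring no Kenyon volume bound nor any $o(\Opt(I))$ slack; randomness enters only in \Case{2.1.1}'s triplet count, and the intervening items that Claim~\ref{claim:ml-triplet} must handle are both small \emph{and} tiny, not only tiny.
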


Before we proceed, we need some notation. Let $I'$ (respectively, $I'_\sigma$) denote the instance after removing small and tiny items from $I$ (respectively, $I_\sigma$). Similarly, we obtain the list $I'_\sigma(t_\sigma+1,n)$ by removing the small and tiny items from $I_\sigma(t_\sigma+1,n)$.
We use $N_\sigma$ to denote the number of bins opened by \bestfit{} after time $t_\sigma$ to pack $I_\sigma$, i.e., $N_\sigma=\Bf(I_\sigma)-\Bf(I_\sigma(1,t_\sigma))$.

Consider a permutation $\sigma$ for which $t_\sigma \leq n/2$. Let $\widehat{\ell}$ be the number of large 
items in $I_\sigma(t_\sigma + 1 ,n )$  and $\widehat m$ be the number of medium items in $I_\sigma(t_\sigma + 1 ,n )$.
Let $\widehat b$ be 
the number of $LM$ bins in $\OPT(I'_\sigma(t_\sigma + 1 ,n ))$.
Note that $\widehat{\ell},\widehat m,\widehat b$ are functions of the permutation $\sigma$.
We must have 
\begin{align}
    N_\sigma \leq \widehat{\ell} + \frac{\widehat m}{2} + 1.\label{eq:n-sigma}
\end{align}
 This is because, after $t_\sigma$, every bin must be opened by a medium or large item. Moreover, if a medium item
 opens a new bin, then the second item that is packed in this bin must be either large or medium.
Also, we have 
\begin{align}
    \OPT(I'_\sigma(t_\sigma + 1 ,n ))  = \ceil {\widehat{\ell}-\widehat b + \frac{\widehat{m}-\widehat{b}}{2} + \widehat{b}} = \ceil{\widehat{\ell} + \frac{\widehat{m}-\widehat{b}}{2} }\label{kdjfjkdkjfdkj} 
\end{align}
This is because, in the packing $\OPT(I'_\sigma(t_\sigma + 1 ,n ))$,
among the $\hat{\ell}$ large items, $\hat b$ of them are in $LM$-bins.
Therefore, the remaining $\hat\ell-\hat b$ large items must have been packed alone.
Similarly, among the $\hat{m}$ medium items, $\hat b$ of them are in $LM$-bins.
Therefore, each of the remaining $\hat\ell-\hat b$ medium items (with one possible exception)
must have been packed with another medium item.

Notice how the number of bins opened by \bestfit{} after $t_\sigma$ (given by \cref{eq:n-sigma}) and the optimal number of bins for  $I'_\sigma(t_\sigma+1,n)$ 
(given by \cref{kdjfjkdkjfdkj}) are similar in expression except for $\widehat{b}$, \ah{the number of $LM$ bins in $\OPT(I'_\sigma(t_\sigma + 1 ,n ))$.} Hence, depending on whether $\widehat{b}$ is big or small \ah{relative to $\OPT(I'_\sigma(t_\sigma + 1 ,n ))$}, we have two cases
\begin{itemize}
    \item $\prob{\widehat{b} \geq (1- 4 \epsilon)\OPT(I'_\sigma(t_\sigma + 1 ,n )) \bigg \vert t_\sigma \leq n/2 } \geq \zeta $
    \item $\prob{\widehat{b} < (1- 4 \epsilon)\OPT(I'_\sigma(t_\sigma + 1 ,n )) \bigg \vert t_\sigma \leq n/2 } \geq  \zeta $
\end{itemize}
\subsubsection{\texorpdfstring{$\hat b$}{b-hat} is Big with Constant Probability}
\label{sec:case12}
Here, we will consider \case{2.1}, where we assume that
\begin{align*}
\prob{\widehat{b} \geq (1- 4 \epsilon)\OPT(I'_\sigma(t_\sigma + 1 ,n )) \bigg \vert \etwo} \geq \zeta
\end{align*}
Since we assumed that $\prob{\etwo}\ge \zeta$, we have that the event
\begin{align*}
\etwoone\coloneqq \left ( \widehat{b} \geq (1- 4 \epsilon)\OPT(I'_\sigma(t_\sigma + 1 ,n )) \bigwedge t_\sigma<n/2 \right )
\end{align*}
occurs with probability at least $\zeta^2$, which is a constant.\\

\noindent Depending on how $\Opt(I'_\sigma(t_\sigma+1,n))$ compares to $\Opt(I_\sigma(t_\sigma+1,n))$ we have two cases. \ah{The high level idea is that if $\Opt(I'_\sigma(t_\sigma+1,n))$ is comparable to $\Opt(I_\sigma(t_\sigma+1,n))$, which is comparable to $\Opt(I_\sigma)$ by \cref{kenyon} when $t_\sigma \leq n/2$, then we are able to ensure a large number of `gadgets' occurs in a random instance after $t_\sigma$,  allowing us to beat the factor of $3/2$. On the other hand, if $\Opt(I'_\sigma(t_\sigma+1,n))$ is relatively small compared to $\Opt(I_\sigma(t_\sigma+1,n))$, a more refined analysis similar to the proof of \cref{kenyon2} gives us the desired bound. \\
}

\noindent \Case{2.1.1:} $\prob{\Opt(I'_\sigma(t_\sigma+1,n))\ge(1-4\eps)\Opt(I_\sigma(t_\sigma+1,n))\bigg\vert \etwoone}\ge\zeta^2$\\
Let $\etwooneone\coloneqq \left ( \Opt(I'_\sigma(t_\sigma+1,n))\ge(1-4\eps)\Opt(I_\sigma(t_\sigma+1,n))\bigwedge \etwoone \right )$.
Note that $\etwooneone$ occurs with a probability at least $\zeta^4$, which is a small, but positive constant. \ah{ In this case, we will show that the bound in \cref{lem:case1} on the number of bins opened by \bestfit{} after $t_\sigma$ holds with high probability (conditioned on $E_{211}$), that is
$$ \prob{\BF(I_\sigma)-\BF(I_\sigma(1,t_\sigma))\le \left(\frac32-2\eps\right)\Opt(I_\sigma(t_\sigma+1,n))+o(\Opt(I)) \bigg\vert E_{211}} \geq 1 - o(1) $$
}
Now we give a brief intuition for the analysis in this case. Since $\hat b$ denotes
the number of $LM$ bins in the optimal packing of $I'_\sigma(t_\sigma+1,n)$, and the event $\etwooneone$
ensures a lower bound on $\hat b$, there must be a large number of
$L,M$ items in $I$. For a moment, forget about the small and tiny items
as $\Opt(I_\sigma(t_\sigma+1,n))$ and $\Opt(I'_\sigma(t_\sigma+1,n))$ are very close.
\bestfit{} performs badly when large items are packed alone, i.e., without pairing with medium items (if at all they can be paired).
However, in the random-order model, we show that in the \bestfit{} packing, a significant number of large items pair with medium items.
To show this, we first prove that if a contiguous substring of type $MLMLML$ appears in $I'_\sigma(t_\sigma+1,n)$,
this will for sure create at least one $LM$ bin.
Finally, we show that there will be a significant number of substrings of type $MLMLML$ in $I'_\sigma(t_\sigma+1,n)$ using the randomness of $\sigma$
and concentration inequalities.

Now we proceed to formalize this intuition. First, we derive a more concrete lower bound on $\hat b$. As we have conditioned on $\etwooneone$,
we have that $\Opt(I'_\sigma(t_\sigma+1,n))\ge (1-4\eps)\Opt(I_\sigma(t_\sigma+1,n))$ and $\hat b\ge(1-4\eps)\Opt(I'_\sigma(t_\sigma+1,n))$.
Hence, we have
\begin{align}
\hat b&\ge (1-4\eps)^2\Opt(I_\sigma(t_\sigma+1,n))\nonumber\\
    &\ge (1-8\eps)\Opt(I_\sigma(n/2+1,n))\nonumber  \quad \text{(Since event $\etwooneone$ implies $\etwo$, i.e., $t_\sigma<n/2$)}\\
    &\ge (1-8\eps)\frac12(1-\delta)\Opt(I)\nonumber\quad \text{(w.h.p, using \cref{kenyon} with $t=n/2$)}\\
    &\ge \frac12(1-10\eps)\Opt(I)\nonumber\quad \text{(as $\delta$ is chosen to be very small compared to $\eps$.)}\\
    &\ge \left(\frac12-5\eps\right)\Opt(I')\label{eq:b-hat-lb}
\end{align}
For the penultimate inequality above, we used the fact that $\delta$ is very small compared to $\eps$,
and for the last inequality, we used the fact that $I'\subseteq I$.
This establishes a lower bound on $\hat b$ in terms of $\Opt(I')$.

Now, we formally define what fitting $ML$ triplets are, and show that they are good for the performance of \bestfit{}.
We say an $ML$ pair is \textit{fitting} if they both fit in one bin, i.e., 
their sizes add up to at most $1$. A sextuplet of items $(m_1,\ell_1,m_2,\ell_2,m_3,\ell_3)$ in a sequence of items $J$
is said to be a \emph{fitting $ML$ triplet} if all of the below conditions are satisfied.
\begin{itemize}
    \item For each $i\in[3]$, $m_i$ is medium and $\ell_i$ is large.
    \item For each $i\in[3]$, the $ML$ pair $(m_i,\ell_i)$ is fitting.
    \item $m_1\ell_1m_2\ell_2m_3\ell_3$ forms a substring in the sequence $J'$, where $J'$ is the sequence
    obtained after removing the small and tiny items in the sequence $J$.
\end{itemize}

\begin{claim}
\label{claim:ml-triplet}
Consider a fitting $ML$ triplet $(m_1,\ell_1,m_2,\ell_2,m_3,\ell_3)$ 
in the sequence $I_\sigma(t_\sigma+1,n)$. Then, at least 
one of the items in this $ML$ triplet will take part in creation of an $LM$ bin.
\end{claim}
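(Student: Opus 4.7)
The plan is to argue by contradiction: assume that none of the six items $m_1,\ell_1,m_2,\ell_2,m_3,\ell_3$ ends up contributing to the creation of an $LM$ bin in the Best-Fit packing. I will first record two structural facts. The first is a standard Best-Fit invariant: at any moment during execution, at most one open bin has load at most $1/2$. The short justification is that if two such bins existed at some time, the later-opened of the two must have been opened by an item of size strictly greater than $1-\tfrac12=\tfrac12$, which would immediately force that bin's load to exceed $\tfrac12$. The second fact, coming from the definition of $t_\sigma$, is that after time $t_\sigma$ no item of size $\le 1/3$ (small or tiny) is placed into a bin of load at most $1/2$, and in particular no such item opens a new bin. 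Since the triplet is a contiguous substring of $I'_\sigma(t_\sigma+1,n)$, between two consecutive triplet items only small/tiny items can intervene, and by the previous observation these items do not alter the unique low-load bin.

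The main technical step is the following sub-claim: if \emph{any} $m_i$ opens a new bin upon arrival, then an $LM$ bin is created and $\ell_i$ participates in it. Indeed, if $m_i$ opens a new bin $B$, then $B$ has load exactly $s(m_i)\le 1/2$; by the invariant $B$ is the unique bin with load at most $1/2$ immediately after, and by the second fact this remains true just before $\ell_i$ arrives. When $\ell_i$ arrives, the only bins in which it can fit are those of load at most $1-s(\ell_i)<1/2$; by the invariant $B$ is the only candidate, and since $(m_i,\ell_i)$ is a fitting pair, $\ell_i$ does fit in $B$. Best-Fit therefore places $\ell_i$ in $B$, forming an $LM$ bin, contradicting the assumption. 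Hence, under our contradiction hypothesis, \emph{no} $m_i$ opens a new bin; each must join an existing bin, and it cannot be an $L$-alone bin (else $LM$ is created on the spot).

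A dual analysis applies to the large items. When $\ell_i$ arrives, Best-Fit either places it in the unique low-load bin (when one is available and fits) or opens a new bin. If the low-load bin contains an $M$ item, placing $\ell_i$ there creates $LM$, so to avoid this, the low-load bin used by $\ell_i$ must contain only small/tiny items. The only way such a low-load bin can exist after $t_\sigma$ is for it to be the single exceptional bin from \cref{kenyon2} (bins opened after $t_\sigma$ are opened by medium or large items, with the $L$-opened ones being high-load, and we have ruled out new $M$-alone bins during the triplet). Once this exception absorbs an $\ell_i$, its load jumps above $1/2$ and it is no longer available. Consequently at most one $\ell_i$ can avoid opening a new bin, so at least two of $\ell_1,\ell_2,\ell_3$ open fresh $L$-alone bins.

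To close the argument, I track what happens to these $L$-alone bins under Best-Fit's fullest-bin-where-it-fits rule through the remaining arrivals. An L-alone bin $B_{\ell_i}$ persists as L-alone until an $M$ joins it (only an $M$ can both fit and alter its type), so every subsequent $m_j$ that fits in $B_{\ell_i}$ and for which $B_{\ell_i}$ is the fullest fitting bin will create $LM$ there, violating the assumption. The main obstacle is that the cross-pair $(m_j,\ell_i)$ with $j\neq i$ is not guaranteed to be fitting, so not every $m_j$ is eligible to land in every earlier $L$-alone bin. I will resolve this by a case analysis on the ordering of which $\ell_i$'s open new bins combined with the three guaranteed own-pair fitness constraints $s(m_i)+s(\ell_i)\le 1$: combined with the at-most-one-low-load-bin invariant and the fact that no $m_j$ opens a new bin, the pigeonhole on available low-load slots forces some triplet $m_j$ into one of the L-alone bins and so yields the required $LM$ bin, completing the contradiction.
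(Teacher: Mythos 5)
Your first structural step matches the paper's: if some $m_i$ opens a new bin, the at-most-one-low-load-bin invariant and the definition of $t_\sigma$ force $\ell_i$ into that bin, producing an $LM$ bin. After that, your proof diverges from the paper's and, more importantly, does not actually close. You pivot to the large items, argue at least two $\ell_i$ must open fresh $L$-alone bins, and then try to show some later $m_j$ is forced into one of them. You correctly identify the fatal obstacle yourself — $(m_j,\ell_i)$ with $j\neq i$ need not be a fitting pair, so the triplet's own-pair constraints $s(m_i)+s(\ell_i)\le 1$ give you no control over cross-pair loads — but the final sentence merely promises a "case analysis" and "pigeonhole on available low-load slots" without producing one. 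Nothing in the surrounding invariants supplies the missing cross-fitness information, so the contradiction is not established.

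The paper's proof sidesteps the large items entirely and stays with the medium items: after ruling out the cases where $m_i$ opens a new bin or lands in a bin that already holds a large item (both immediately produce $LM$), it observes that otherwise $m_i$ must enter a bin with no large item whose load is at most $2/3$, and then invokes \cref{bestfit23}: in any Best-Fit packing, at most \emph{two} bins simultaneously have no large item and load at most $2/3$. This bound is what makes the pigeonhole go through for the three $m_i$'s without ever needing any cross-pair fitness. That auxiliary lemma is the key ingredient your argument is missing; the route through counting $L$-alone bins opened by the $\ell_i$'s does not seem to lead to it, precisely because of the cross-fitness issue you flagged.
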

\begin{proof}
    After time $t_\sigma$, note that only medium or large items can open a new bin. Moreover, 
    if a medium item opens a new bin, the next item that is packed into that bin must be either a large or medium item. 
    Thus, for each $i\in[3]$, if the medium item $m_i$ opens a new bin, $\ell_i$ must be packed with it as no intermediate item 
    can be packed on top of $m_i$ or can open a new bin, and $m_i$ did not fit into any existing bin when it arrived. If $m_i$ is packed with some
    existing large item, we are still good. Otherwise, $m_i$ is packed into a bin that does not have any large item. This bin must 
    have had load $\leq 2/3$ before $m_i$ was packed into it.

In any packing by \bestfit{}, there can be at most $2$ bins that have no large items and load at most $2/3$ at any point of time.
We defer the proof of this statement to the appendix; see \cref{bestfit23} in \cref{helper-claims}.
So, $m_i$ cannot be packed into a bin with no large item for all three of $i=1,2,3$. Thus, at least one $LM$ bin will be created.
\end{proof}

We will now use the below proportionality result, that states that the number of these fitting $ML$
triplets that occur in a time interval is proportional to the length of the interval.
Its proof is given in the appendix (see \cref{sec:proportionality}).







\begin{claim}\label{proportionality-lm}
For some constant $u>0$, let $d' \geq u \cdot \Opt(I')$ be the maximum number of disjoint fitting $ML$ pairs in $I$.
Let $n_1,n_2$ be integers such that $1\le n_1\le n_2\le n$ and $n_2-n_1=\Theta(n)$.
We have that, with high probability, the number of fitting $ML$ triplets in the sequence $I_\sigma(n_1+1,n_2)$ is at least
\begin{align*}
    \frac{u^5}{1536}\left(\frac{n_2-n_1}{n}\right) d'  - o(d')
\end{align*}
\end{claim}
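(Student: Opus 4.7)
The plan is to group the $d'$ disjoint fitting $ML$ pairs into triples, use linearity of expectation to lower bound the expected number of triples that appear as fitting $ML$ triplets in $I_\sigma(n_1+1, n_2)$, and apply a concentration inequality for random permutations to convert this into a high-probability bound.

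First, fix a maximum family $\mathcal{F} = \{(m_1, \ell_1), \ldots, (m_{d'}, \ell_{d'})\}$ of $d'$ disjoint fitting $ML$ pairs, guaranteed by hypothesis. Partition $\mathcal{F}$ arbitrarily into $M := \lfloor d'/3 \rfloor$ groups of three pairs, $G_1, \ldots, G_M$. For each $i \in [M]$, let $X_i$ be the indicator of the event that the six items of $G_i$ appear as a fitting $ML$ triplet in $I_\sigma(n_1+1, n_2)$---that is, they form a consecutive substring of $I'_\sigma(n_1+1, n_2)$ in one of the six valid pair-matched orderings $m_{\pi(1)} \ell_{\pi(1)} m_{\pi(2)} \ell_{\pi(2)} m_{\pi(3)} \ell_{\pi(3)}$ for $\pi \in S_3$. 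Since the groups $G_i$ use pairwise disjoint items, $\sum_{i=1}^{M} X_i$ lower bounds the number of fitting $ML$ triplets in $I_\sigma(n_1+1, n_2)$.

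To compute $\expec{\sum_i X_i}$, I would lower bound $\prob{X_i = 1}$ by decomposing the event into (a) all six items of $G_i$ landing in positions $(n_1, n_2]$ of $I_\sigma$; (b) these six items being consecutive within $I'_\sigma(n_1+1, n_2)$; and (c) the ordering being one of the six valid patterns. Using the exchangeability of the uniform random permutation on $I'$ together with the bound $|I'| \le 2 \Opt(I') \le 2 d'/u$ (each bin in $\Opt(I')$ holds at most two $L/M$ items), a careful bookkeeping of the three factors yields $\prob{X_i = 1} \ge \frac{u^5}{512} \cdot \frac{n_2 - n_1}{n}$ up to lower-order corrections. Linearity of expectation then gives $\expec{\sum_i X_i} \ge M \cdot \prob{X_i = 1} \ge \frac{u^5}{1536} \cdot \frac{n_2 - n_1}{n} \cdot d' - O(1)$.

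To upgrade this to a high-probability statement, I would invoke McDiarmid's bounded-differences inequality for functions of a uniform random permutation (or equivalently a Doob-martingale analysis that exposes $\sigma$ one position at a time). Each item belongs to exactly one $G_i$ and contributes to at most a constant number of block events, so swapping any two entries of $\sigma$ alters $\sum_i X_i$ by at most a constant. Azuma--Hoeffding then yields $\bigl|\sum_i X_i - \expec{\sum_i X_i}\bigr| = O(\sqrt{n \log n}) = o(d')$ \whp{}, since $d' = \Omega(\Opt(I))$ and $\Opt(I) \to \infty$, completing the proof. The principal obstacle is the probability computation in the previous paragraph: sub-events (a), (b), (c) are coupled through the single random permutation $\sigma$, and extracting the precise $u^5$ dependence requires leveraging $|I'| \le 2d'/u$ at the right places to avoid spurious polynomial factors in $\Opt(I)$; the concentration step is routine but must use the permutation-based McDiarmid framework rather than the i.i.d.\ version.
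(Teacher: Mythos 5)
Your plan partitions the $d'$ \emph{items-pairs} into groups of three and asks, for each such group $G_i$ of six fixed items, whether those six items happen to land consecutively (in the right order) inside $I'_\sigma(n_1+1,n_2)$. That event is extremely unlikely. For a uniformly random permutation of the $z:=|I'|$ items, the probability that six specific items occupy six consecutive positions of $I'_\sigma$ (in any of a constant number of admissible orders) is on the order of $1/z^5$; with $z\ge 2d'$ this is $O(1/(d')^5)$, so after further restricting to the window $(n_1,n_2]$ your $\prob{X_i=1}$ is $O\bigl(\frac{n_2-n_1}{n}\cdot(d')^{-5}\bigr)$, not $\Omega\bigl(\frac{n_2-n_1}{n}\bigr)$. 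Summing over $M=\Theta(d')$ groups then yields $\expec{\sum_i X_i}=O((d')^{-4})\to 0$. The bound $|I'|\le 2d'/u$ cannot rescue this: it bounds $z$, but it does not remove the $1/z^5$ factor coming from the requirement that six named items be adjacent. So the expectation step in your proposal is off by a factor of roughly $(d')^5$, and the concentration step is moot.

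The paper's proof (\cref{gen-proportionality}) avoids this by partitioning \emph{positions}, not items: it groups the $z$ positions of the shuffled sequence $\hat I_\sigma$ into $z/6$ consecutive blocks and, for the $i$-th block, defines indicators $F_i$ (the block's items have the alternating $PQPQPQ$ type pattern drawn from $\Gamma$) and $H_i$ (additionally, the three pairs are fitting, certified via the bipartite ``domination'' graph $G_\Gamma$ built from the sorted fitting pairs). Because the sextuplet is a fixed window of positions and the items occupying it are random, $\prob{F_i=1}=\Theta(1)$ and $\prob{H_i=1\mid F_i=1}\approx 1/8$, so the expected count is $\Theta(z)$ rather than $o(1)$. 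The $u^5$ dependence then comes from $(x/z)^5$-type factors with $z=O(x/u)$, which is where $|I'|\le 2d'/u$ is legitimately used. Concentration in the paper is via Chebyshev with explicit covariance bounds for the $H_i$, together with a separate Chebyshev argument translating the position window $(n_1,n_2]$ in $I_\sigma$ to a window of $\hat I_\sigma$. If you want a proof along the lines you sketched, you would need to switch to this position-block decomposition; without that change, the argument does not go through.
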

Owing to \cref{eq:b-hat-lb}, we can use the above claim with $d'=\widehat b$, $u=(1/2-5\eps)$, $n_1=n/2$, and $n_2=n$.
We then obtain that the number of fitting $ML$ triplets appearing after $t_\sigma$ 
is at least
\begin{align*}
\frac{(1/2-5\eps)^5}{1536}\left(\frac12\right) \widehat b  - o(\widehat b)  &\geq \frac{(1 - 10\epsilon)^5}{1536 \cdot 64} \widehat b - o(\widehat b)\\ 
            & \ge\frac{(1 - 10\epsilon)^5(1-4\eps)}{1536 \cdot 64}\Opt(I'_\sigma(t_\sigma + 1 ,n)) - o(\Opt(I))\\
            &\qquad\qquad\qquad\qquad\qquad\text{(by event $\etwoone$ and since $\hat b\le\Opt(I)$)}\\ 
            & \geq \frac{1 -54 \epsilon}{10^5} \Opt(I'_\sigma(t_\sigma + 1 ,n))-o(\Opt(I))
\end{align*}
with high probability. So, with high probability (conditioned on $E_{211}$), \bestfit{} creates at least 
\begin{align}
\widetilde b \geq \frac{1 - 54 \epsilon}{10^5} \OPT(I'_\sigma(t_\sigma+1,n))-o(\Opt(I))\label{hgjhkjfgk}
\end{align}
many $LM$ bins after $t_\sigma$, as each such fitting $ML$ triplet creates a new $LM$ bin.
We also have the following claim whose proof can be found in \cref{sec:other-omitted-proofs}.
\begin{claim}
\label{claim:opt-prime-lb}
We have $\displaystyle\Opt(I'_\sigma(t_\sigma+1,n))\ge \frac{2\hat\ell+\hat m}{3}-\frac13$.
\end{claim}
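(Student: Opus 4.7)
The plan is to use a straightforward weight function argument on $I'_\sigma(t_\sigma+1,n)$, which by definition contains only the large and medium items among those arriving after time $t_\sigma$. First, I would define a weight function $W'$ on just these two item types, assigning $W'(x) = 2$ when $x$ is large and $W'(x) = 1$ when $x$ is medium. Since there are $\hat{\ell}$ large items and $\hat m$ medium items in $I'_\sigma(t_\sigma+1,n)$, the total weight summed over all items is exactly $2\hat{\ell} + \hat m$.

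Next, I would bound the maximum weight of any single bin in $\Opt(I'_\sigma(t_\sigma+1,n))$. Since a large item has size $>1/2$, at most one such item can fit in a bin, and since a medium item has size $>1/3$, at most two mediums can fit per bin. Furthermore, a bin holding one large and two mediums would have size $> 1/2 + 2/3 > 1$, which is infeasible. Hence the only possible bin configurations are $L$, $M$, $MM$, and $LM$, having weights $2$, $1$, $2$, and $3$, respectively. In all cases, the weight per bin is at most $3$.

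Applying \cref{weight_function} (or simply summing the weights bin by bin), the total weight of items over all bins in the optimal packing is at most $3 \cdot \Opt(I'_\sigma(t_\sigma+1,n))$. Combining with the exact total weight gives
\begin{align*}
3 \cdot \Opt(I'_\sigma(t_\sigma+1,n)) \;\ge\; 2\hat{\ell} + \hat m,
\end{align*}
from which the desired inequality $\Opt(I'_\sigma(t_\sigma+1,n))\ge (2\hat{\ell}+\hat m)/3 - 1/3$ follows immediately (in fact with slack, as the $-1/3$ term is only needed if one prefers an integer-valued bound). There is no real obstacle here; the only subtlety is the case analysis ruling out the $LMM$ configuration, which is handled by the size thresholds defining the categories.
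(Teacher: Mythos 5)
Your proof is correct, and it takes a genuinely different route than the paper does. The paper's proof leans on the exact expression $\OPT(I'_\sigma(t_\sigma+1,n)) = \big\lceil \hat\ell + \tfrac{\hat m - \hat b}{2}\big\rceil$ from \cref{kdjfjkdkjfdkj}, combines it with the trivial fact $\hat b \le \OPT(I'_\sigma(t_\sigma+1,n))$, and algebraically eliminates $\hat b$; the additive $-1/3$ in the stated bound comes from the ceiling slack introduced along the way. Your argument never touches $\hat b$ at all: you assign $W'(L)=2$, $W'(M)=1$, observe that the only feasible configurations for a bin of $I'_\sigma(t_\sigma+1,n)$ are $L$, $M$, $MM$, $LM$ with per-bin weight at most $3$, and conclude $3\cdot\OPT(I'_\sigma(t_\sigma+1,n)) \ge 2\hat\ell + \hat m$ by summing weights over bins. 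This actually gives the marginally tighter bound $\OPT(I'_\sigma(t_\sigma+1,n)) \ge (2\hat\ell+\hat m)/3$ with no $-1/3$ term, and it is self-contained (it does not depend on the structural formula \cref{kdjfjkdkjfdkj}). The paper's version fits more naturally into the surrounding computation, which is already cast in terms of $\hat b$; yours is more elementary and would also stand alone. Both are valid.
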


Moreover, after $t_\sigma$, the tiny or small items cannot be packed into bins of load $\leq \frac{1}{2}$.
Hence, they can only be packed into a bin of type $L,LM, MM$. Hence, the number of bins opened by \bestfit{} after $t_\sigma$ satisfies
\begin{align}
N_\sigma &\leq \widetilde b + (\widehat{\ell} - \widetilde b) + \frac{(\widehat m-\widetilde b)}{2} + 1\nonumber \\
& \leq \widehat{\ell} + \frac{\widehat m}{2}- \frac{\widetilde b}{2}  + 1 \nonumber\\
& \leq \frac{3}{2} \OPT(I'_\sigma (t_\sigma + 1, n)) - \left( \frac{1 - 54 \epsilon}{2\cdot 10^5} \right)\OPT(I'_\sigma (t_\sigma + 1, n)) + o(\opt(I))\tag{using \cref{claim:opt-prime-lb,hgjhkjfgk}}\\
& \leq \left(\frac{3}{2} - \frac{1}{10^6}+ \frac{27\epsilon}{10^5}\right) \OPT(I'_\sigma(t_\sigma + 1 ,n ))+o(\opt(I))\nonumber\\
& \leq \left(\frac{3}{2} - 2\epsilon\right) \OPT(I_\sigma(t_\sigma + 1 ,n ))+o(\opt(I))\label{subcase1b}
\end{align}
with high probability (conditioned on $E_{211}$), where we used that $ 10^{-6}\ge3\eps$. \\

\noindent \Case{2.1.2:} $\prob{\Opt(I'_\sigma(t_\sigma+1,n))<(1-4\eps)\Opt(I_\sigma(t_\sigma+1,n))\bigg\vert \etwoone}\ge\zeta^2$\\
We define the event $\etwoonetwo\coloneqq \left ( \Opt(I'_\sigma(t_\sigma+1,n))<(1-4\eps)\Opt(I_\sigma(t_\sigma+1,n))\bigwedge \etwoone \right )$.
In this case, we will show that the bound in \cref{lem:case1} \ah{always holds} (conditioned on $E_{212}$). 
Since the event $\etwoonetwo$ implies that $\hat b\ge (1-4\eps)\Opt(I'_\sigma(t_\sigma+1,n))$, we have the following string of inequalities.
\begin{align*}
\hat b&\ge(1-4\eps)\Opt(I'_\sigma(t_\sigma+1,n))
    \ge (1-4\eps)\left(\hat{\ell}+\frac{\hat m-\hat b}{2}\right)\quad \text{(from \cref{kdjfjkdkjfdkj})}
\end{align*}
Rearranging terms and using \cref{eq:n-sigma}, \ah{ we obtain that the number of bins opened by \bestfit{} after $t_\sigma$ satisfies }
\begin{align}
N_\sigma&\le \hat{\ell}+\frac{\hat m}{2}+1\nonumber\\
        &\le \left( \frac{3/2-2\eps}{1-4\eps} \right)\hat b+1\nonumber\\
        &\le \left( \frac{3/2-2\eps}{1-4\eps}\right)\Opt(I'_\sigma(t_\sigma+1,n))+1\quad \text{(as $\hat b$ is the number of $LM$ bins in $\Opt(I'_\sigma(t_\sigma+1,n))$)}\nonumber\\
        &\le\left(\frac32-2\eps\right)\Opt(I_\sigma(t_\sigma+1,n))+1\label{djklflgjklkdglkg}
\end{align}
where the last inequality follows as we have conditioned on $\etwoonetwo$.


We combine the analyses of \casess{2.1.1, 2.1.2} to complete the analysis of the case when the event $\etwoone$ occurs,
thereby obtaining the following lemma.
\begin{lemma}
\label{lem:case12}
Suppose the event $\etwoone\coloneqq\left(\widehat{b} \geq (1- 4 \epsilon)\OPT(I'_\sigma(t_\sigma + 1 ,n )) \bigwedge t_\sigma<n/2\right)$ occurs with a constant probability. Then, we have that,
\begin{align*}
\prob{N_\sigma\le\left(\frac32-2\eps\right)\OPT(I_\sigma(t_\sigma + 1 ,n ))+o(\OPT(I))\bigg\vert \etwoone}\ge1-\zeta
\end{align*}
\end{lemma}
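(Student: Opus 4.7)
The plan is to mirror exactly the structure used in the proof of \cref{lem:subcase2b}: apply the law of total probability to split the conditioning on $\etwoone$ into the two sub-events $\etwooneone$ and $\etwoonetwo$, and then rely on the case analyses already carried out in \Case{2.1.1} and \Case{2.1.2}. Let $G$ denote the event
\[
G \coloneqq \Big( N_\sigma \leq \left(\tfrac{3}{2}-2\eps\right)\OPT(I_\sigma(t_\sigma+1,n)) + o(\OPT(I)) \Big),
\]
and set $\ptwooneone \coloneqq \prob{\etwooneone \mid \etwoone}$ and $\ptwoonetwo \coloneqq \prob{\etwoonetwo \mid \etwoone}$. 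Since $\etwooneone$ and $\etwoonetwo$ partition $\etwoone$, we have $\ptwooneone + \ptwoonetwo = 1$, and
\[
\prob{G \mid \etwoone} = \prob{G \mid \etwooneone}\,\ptwooneone + \prob{G \mid \etwoonetwo}\,\ptwoonetwo.
\]

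The two key facts needed are already established earlier in this subsection. From \cref{djklflgjklkdglkg}, the bound required by $G$ holds deterministically for every $\sigma$ satisfying $\etwoonetwo$, so $\prob{G \mid \etwoonetwo} = 1$. From \cref{subcase1b}, provided $\etwooneone$ itself occurs with at least constant probability (which is exactly the hypothesis under which the derivation leading to \cref{subcase1b} -- including the invocation of \cref{kenyon} via \cref{prop:probability-fact} -- is valid), we have $\prob{G \mid \etwooneone} = 1 - o(1)$.

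The final step is a short two-case split on the size of $\ptwooneone$. If $\ptwooneone \geq \zeta^2$, then by the assumption $\prob{\etwoone} \geq \zeta^2$ (which holds since $\prob{\etwo} \geq \zeta$ and we are in \Case{2.1}), the event $\etwooneone$ has probability at least $\zeta^4$, a positive constant; hence \cref{subcase1b} yields $\prob{G \mid \etwooneone} = 1-o(1)$, giving
\[
\prob{G \mid \etwoone} \geq (1-o(1))\ptwooneone + \ptwoonetwo \geq 1 - o(1) \geq 1 - \zeta.
\]
Otherwise $\ptwooneone < \zeta^2$, so $\ptwoonetwo \geq 1 - \zeta^2$, and $\prob{G \mid \etwoone} \geq \ptwoonetwo \geq 1 - \zeta^2 \geq 1 - \zeta$.

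There is no real obstacle here since all the heavy lifting -- the weight/gadget-based argument yielding \cref{subcase1b} in \Case{2.1.1} and the direct algebraic bound \cref{djklflgjklkdglkg} in \Case{2.1.2} -- has already been carried out; the only subtlety is making sure the constant-probability condition needed to justify \cref{subcase1b} is verified in the case $\ptwooneone \geq \zeta^2$, which is why the split is done against the threshold $\zeta^2$ rather than some arbitrary cutoff. This matches verbatim the template of the proof of \cref{lem:subcase2b}.
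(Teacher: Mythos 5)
Your proof is correct and takes essentially the same approach as the paper: decompose $\prob{\cdot \mid \etwoone}$ over the partition $\{\etwooneone,\etwoonetwo\}$, use the deterministic bound from \case{2.1.2} and the high-probability bound from \case{2.1.1}, and case-split on whether $\ptwooneone$ is above $\zeta^2$. Your two-case split is a harmless streamlining of the paper's three-case split, and your explicit check that $\etwooneone$ has constant probability before invoking \cref{subcase1b} is exactly the justification the paper leaves implicit.
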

\begin{proof}
Define the event
\begin{align*}
H\coloneqq \left(N_\sigma \leq \left(\frac{3}{2} - 2\epsilon\right)\OPT(I_\sigma(t_\sigma + 1 , n)) + o(\OPT(I))\right)
\end{align*}
Let 
$$ \ptwooneone \coloneqq \prob{\OPT(I'_\sigma(t_\sigma + 1 ,n )) \ge (1- 4 \epsilon)\OPT(I_\sigma(t_\sigma + 1 ,n )) \bigg \vert \etwoone },$$ 
$$ \ptwoonetwo \coloneqq \prob{\OPT(I'_\sigma(t_\sigma + 1 ,n )) < (1- 4 \epsilon)\OPT(I_\sigma(t_\sigma + 1 ,n )) \bigg \vert \etwoone }$$ 
and note that $\ptwooneone + \ptwoonetwo = 1$.
\kvn{Also, note that since 
\[
\etwooneone=\Big(\OPT(I'_\sigma(t_\sigma + 1 ,n )) \ge (1- 4 \epsilon)\OPT(I_\sigma(t_\sigma + 1 ,n ))\Big)\land \etwoone,
\] it follows that $\ptwooneone=\prob{\etwooneone|\etwoone}$.
Similarly, $\ptwoonetwo=\prob{\etwoonetwo|\etwoone}$.}

\noindent Since $\etwooneone\lor \etwoonetwo=\etwoone$ and since $\etwooneone,\etwoonetwo$ are disjoint, we obtain that
\begin{align*}
\prob{H|\etwoone}&=\prob{H|\etwooneone}\prob{\etwooneone|\etwoone}+\prob{H|\etwoonetwo}\prob{\etwoonetwo|\etwoone}\\
               &=\prob{H|\etwooneone}\ptwooneone+\prob{H|\etwoonetwo}\ptwoonetwo
\end{align*}
If $\ptwoonetwo\le\zeta^2$, then $\ptwooneone\ge1-\zeta^2$. Hence, by \cref{subcase1b} (\case{2.1.1}), we have that $\prob{H|\etwooneone}=1-o(1)$.
Hence, $\prob{H|\etwoone}\ge(1-\zeta^2)(1-o(1))\ge 1-\zeta$.

On the other hand, if $\ptwooneone\le\zeta^2$, then $\ptwoonetwo\ge1-\zeta^2$. Hence, by \cref{djklflgjklkdglkg} (\case{2.1.2}), we have that
$\prob{H|\etwoonetwo}=1$. Hence, $\prob{H|\etwoone}\ge 1-\zeta^2$. 

Finally, if $\ptwooneone>\zeta^2$ and $\ptwoonetwo>\zeta^2$,
then both \cref{subcase1b} and \cref{djklflgjklkdglkg} apply. Hence, $\prob{H|\etwoone}=\ptwoonetwo+\ptwooneone(1-o(1))=1-o(1)$.
To conclude, we have $\prob{H|\etwoone}\ge1-\zeta$. 
\end{proof}
\subsubsection{\texorpdfstring{$\hat b$}{b-hat} is Small with Constant Probability}
\label{sec:case11}
Here, we consider \case{2.2}, where we assume that
\begin{align*}
\prob{\widehat{b} \leq (1- 4 \epsilon)\OPT(I'_\sigma(t_\sigma + 1 ,n )) \bigg \vert \etwo} \geq  \zeta
\end{align*}
\kvn{In this case, we condition on the following event.}
\begin{align*}
\etwotwo\coloneqq \left ( \widehat{b} \leq (1- 4 \epsilon)\OPT(I'_\sigma(t_\sigma + 1 ,n )) \bigwedge t_\sigma<n/2 \right ).
\end{align*}
Conditioning on $E_{22}$, we thus have
\begin{align*}
    \widehat{b} \leq (1- 4 \epsilon)\OPT(I'_\sigma(t_\sigma + 1 ,n ))\leq (1-4\eps)\left(\widehat{\ell} + \frac{\widehat{m}-\widehat{b}}{2} +1\right)\tag{by \cref{kdjfjkdkjfdkj}}
\end{align*}
This is equivalent to saying that 
\begin{align*}
    \widehat b \leq \left(\widehat{\ell} + \frac{\widehat m}{2}+1\right)\frac{1/2 - 2\epsilon}{3/4  - \epsilon}
\end{align*}
which, in turn, is the same as
\begin{align*}
    \widehat{\ell}+\frac{\widehat m}{2}+1\le \left(\frac32-2\eps\right)\left(\widehat{\ell}+\frac{\widehat m-\widehat b}{2}+1\right)\le\left(\frac32-2\eps\right)\left(\OPT(I'_\sigma(t_\sigma + 1 ,n ))+1\right)\tag{by \cref{kdjfjkdkjfdkj}} 
\end{align*}    
Thus, due to \cref{eq:n-sigma}, we have that
\begin{align*}
N_\sigma\le\left(\frac32-2\eps\right)\left(\OPT(I'_\sigma(t_\sigma + 1 ,n ))+1\right)\le\left(\frac32-2\eps\right)\OPT(I_\sigma(t_\sigma + 1 ,n ))+2
\end{align*}
We thus have the following lemma.
\begin{lemma}
\label{lem:case11}
Let the event $\etwotwo\coloneqq\left(\widehat{b} \leq (1- 4 \epsilon)\OPT(I'_\sigma(t_\sigma + 1 ,n )) \bigwedge t_\sigma<n/2\right)$. Then, for any permutation 
$\sigma$ satisfying the event $\etwotwo$, we have
\begin{align*}
N_\sigma\le\left(\frac32-2\eps\right)\OPT(I_\sigma(t_\sigma + 1 ,n ))+2
\end{align*}
\end{lemma}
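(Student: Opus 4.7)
The plan is to prove the lemma by purely algebraic manipulation of the two structural inequalities \cref{eq:n-sigma} and \cref{kdjfjkdkjfdkj} already established earlier in this section. Unlike the preceding \Case{2.1}, I do not expect to need any randomness, gadgets, or weight functions: the hypothesis that $\widehat{b}$ is bounded away from $\OPT(I'_\sigma(t_\sigma+1,n))$ should suffice to give the bound deterministically for every $\sigma$ satisfying $\etwotwo$, losing only a small additive constant from the ceiling and $+1$ terms.

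First, I would recall that \cref{eq:n-sigma} gives $N_\sigma \le \widehat{\ell} + \widehat{m}/2 + 1$, while \cref{kdjfjkdkjfdkj} gives $\OPT(I'_\sigma(t_\sigma+1,n)) \le \widehat{\ell} + (\widehat{m}-\widehat{b})/2 + 1$. Introducing the shorthand $A := \widehat{\ell} + \widehat{m}/2 + 1$ and $B := \widehat{\ell} + (\widehat{m}-\widehat{b})/2 + 1 = A - \widehat{b}/2$, the lemma reduces to showing $A \le (3/2 - 2\epsilon) B$, which is equivalent to
\[
\widehat{b} \;\le\; \frac{1-4\epsilon}{3/2 - 2\epsilon}\, A.
\]
Intuitively, the ratio $A/B$ hits its maximum value $3/2$ precisely when every large item is paired with a medium one ($\widehat{b} = \widehat{\ell}$), so bounding $\widehat{b}$ strictly below its maximum must push the ratio strictly below $3/2$.

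To actually derive this inequality on $\widehat{b}$, I would plug the upper bound from \cref{kdjfjkdkjfdkj} into the hypothesis $\widehat{b} \le (1-4\epsilon)\OPT(I'_\sigma(t_\sigma+1,n))$, obtaining $\widehat{b} \le (1-4\epsilon)B = (1-4\epsilon)(A - \widehat{b}/2)$, and then collect the $\widehat{b}$-terms on the left. The coefficient becomes $1 + (1-4\epsilon)/2 = 3/2 - 2\epsilon$, matching exactly the target factor. Substituting the resulting bound back into $N_\sigma \le A$ yields $N_\sigma \le (3/2 - 2\epsilon)B \le (3/2 - 2\epsilon)(\OPT(I'_\sigma(t_\sigma+1,n))+1)$, and finally I would chain this with the trivial inequality $\OPT(I'_\sigma(t_\sigma+1,n)) \le \OPT(I_\sigma(t_\sigma+1,n))$ (since $I'_\sigma$ is obtained by removing the small and tiny items from $I_\sigma$), absorbing the leftover $+1$ terms into a final additive constant of $2$.

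I do not expect any real obstacle here: all the structural work (the bounds \cref{eq:n-sigma} and \cref{kdjfjkdkjfdkj}, together with the setup of $\widehat{b}$ as the number of $LM$ bins in $\OPT(I'_\sigma(t_\sigma+1,n))$) is already in hand. The only delicate point is noticing that the threshold $(1-4\epsilon)$ in the definition of $\etwotwo$ was chosen precisely so that the target constant $(3/2-2\epsilon)$ falls out cleanly—if $\epsilon$ were changed, the threshold and the target factor would need to be retuned in tandem, but as stated the constants mesh exactly.
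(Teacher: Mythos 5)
Your proof is correct and follows essentially the same route as the paper: both start from the hypothesis $\widehat{b}\le(1-4\epsilon)\OPT(I'_\sigma(t_\sigma+1,n))$, substitute the expression from \cref{kdjfjkdkjfdkj} to get $\widehat{b}\le(1-4\epsilon)(\widehat{\ell}+(\widehat{m}-\widehat{b})/2+1)$, rearrange to isolate $\widehat{b}$ with coefficient $3/2-2\epsilon$, and then chain with \cref{eq:n-sigma} and $\OPT(I'_\sigma)\le\OPT(I_\sigma)$ to close. The introduction of the shorthands $A$ and $B$ is purely notational; the algebra is identical.
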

We are now ready to end the analysis of \case{2}.
We combine \cref{lem:case11} and \cref{lem:case12} to show that in the case when $t_\sigma$ is small with
constant probability, \bestfit{} performs strictly better than $3/2$ in the time segment $(t_\sigma+1,n)$.
\begin{deferredproof}{\cref{lem:case1}}
Define the event
\begin{align*}
H\coloneqq \left(N_\sigma \leq \left(\frac{3}{2} - 2\epsilon\right)\OPT(I_\sigma(t_\sigma + 1 , n)) + o(\OPT(I))\right)
\end{align*}
Let
$$ \ptwoone = \prob{\hat b\ge(1-4\eps)\Opt(I'_\sigma(t_\sigma+1,n)) \bigg \vert \etwo}$$ 
$$ \ptwotwo = \prob{\hat b<(1-4\eps)\Opt(I'_\sigma(t_\sigma+1,n)) \bigg \vert \etwo}$$ 
and note that $\ptwoone+\ptwotwo=1$. 
\kvn{Also, note that since 
\[
\etwoone=\left(\hat b\ge(1-4\eps)\Opt(I'_\sigma(t_\sigma+1,n))\right)\land \etwo,
\] it follows that $\ptwoone=\prob{\etwoone|\etwo}$.
Similarly, $\ptwotwo=\prob{\etwotwo|\etwo}$.}
We have
\begin{align*}
\prob{H|\etwo}&=\prob{H|\etwoone}\prob{\etwoone|\etwo}+\prob{H|\etwotwo}\prob{\etwotwo|\etwo}\\
                &=\prob{H|\etwoone}\ptwoone+\prob{H|\etwotwo}\ptwotwo
\end{align*}
By \cref{lem:case11}, we have that $\prob{H|\etwotwo}=1$.
Hence, if $\ptwoone > \zeta$, then by \cref{lem:case12} (where we conditioned on the event $\etwoone$), we must have 
$\prob{H|\etwo}=\prob{H|\etwoone}\ptwoone+\prob{H|\etwotwo}\ptwotwo\ge (1-\zeta)\ptwoone+\ptwotwo\ge1-\zeta$.

On the other hand, if $\ptwoone<\zeta$, then $\ptwotwo>1-\zeta$. So, by \cref{lem:case11}, we have
$\prob{H|\etwo}\ge \prob{H|\etwotwo}(1-\zeta)= 1-\zeta$.
Thus, \cref{lem:case1} stands proved.
\end{deferredproof}
\subsection{Proof of Theorem~\ref{maintheorem}}
\label{sec:putting-it-together}
Here, we combine \cref{lem:case1,lem:case2} to obtain our main result, Theorem~\ref{maintheorem}.

\noindent Let $G$ be the event that \bestfit{} performs strictly better than $3/2$ in the time segment $(1,t_\sigma)$, i.e.,
\begin{align*}
G\coloneqq \left(\BF(I_\sigma(1,t_\sigma))  \leq \left(\frac{3}{2} - 2 \epsilon\right)\OPT(I_\sigma(1,t_\sigma)) + o(\OPT(I))\right)
\end{align*}
We may assume that $\OPT(\tilde I) \to \infty$, where $\tilde I$ denotes the instance $I$ with tiny items removed;
otherwise \cref{opttrunc} applies and Theorem~\ref{maintheorem} holds. Using \cref{lem:case1,lem:case2}, we show that 
$\BF(I_\sigma) \leq (\frac{3}{2} - \epsilon)\OPT(I) + o(\OPT(I))$ with high probability, i.e., at least $1 - o(1)$.

Let $N_\sigma$ be the number of new bins opened by \bestfit{} 
after time $t_\sigma$. Then \cref{kenyon2} gives the following upper bounds on $N_\sigma$ and $\BF(I_\sigma(1, t_\sigma))$:
\begin{align*}
\BF(I_\sigma(1,t_\sigma))&\le \frac32\Opt(I_\sigma(1,t_\sigma))+1\\
N_\sigma&\le\frac32\Opt(I_\sigma(t_\sigma+1,n))+1
\end{align*}

Depending on the range in which $t_\sigma$ lies, we consider four cases.
To use \cref{kenyon2}, we require a very small constant $\alpha>0$ whose value
can be chosen to be arbitrarily close to zero.
\begin{itemize}
    \item Suppose $\prob{t_\sigma\ge (1-\alpha)n}\ge \zeta$\\
    In this case, we condition on $t_\sigma\ge(1-\alpha)n$. Since, this implies 
    that $t_\sigma\ge n/2$ occurs with constant probability, we can apply \cref{lem:case2}.
    Thus, with probability at least $(1-\zeta)$, we have
    \begin{align*}
    \BF(I_\sigma(1,t_\sigma))&\le \left(\frac32-2\eps\right)\Opt(I_\sigma(1,t_\sigma))+o(\Opt(I))\\
                            &\le \left(\frac32-2\eps\right)\Opt(I)+o(\Opt(I))
    \end{align*}
    We also have
    \begin{align*}
    N_\sigma&\le \frac32\Opt(I_\sigma(t_\sigma+1,n))+1\\
            &\le \frac32\Opt(I_\sigma((1-\alpha)n+1,n))+1\\
            &\le \frac32\alpha(1+\delta)\Opt(I)+1\tag{\whp, by \cref{kenyon}}
    \end{align*}
    Hence, with probability at least $1-\zeta-o(1)\ge 1-2\zeta$, we have
    \begin{align*}
    \BF(I_\sigma)&=\BF(I_\sigma(1,t_\sigma))+N_\sigma\\
                &\le \left(\frac32-2\eps+\frac32\alpha(1+\delta)\right)\opt(I)+o(\Opt(I))\\
                &\le\left(\frac32-\eps\right)(1+\delta)\Opt(I)+o(\Opt(I))
    \end{align*}
    as we have chosen $\alpha,\delta$ to be very small compared to $\epsilon$.

    \item Suppose $\prob{\frac n2<t_\sigma<(1-\alpha)n}\ge \zeta$.\\
    In this case, we condition on $\frac n2< t_\sigma<(1-\alpha)n$. We apply
    \cref{lem:case2} to obtain that, with probability at least $1-\zeta$,
    \begin{align*}
    \BF(I_\sigma(1,t_\sigma))&\le \left(\frac32 -2\eps\right)\Opt(I_\sigma(1,t_\sigma))+o(\Opt(I))\\
                    &\le \left(\frac32 -2\eps\right)(1+\delta)\frac{t_\sigma}{n}\Opt(I)+o(\Opt(I))\tag{\whp, by \cref{kenyon}}
    \end{align*}
    We also have,
    \begin{align*}
    N_\sigma&\le \frac32\Opt(I_\sigma(t_\sigma+1,n))+1\\
            &\le \frac 32\left(\frac{n-t_\sigma}{n}\right)(1+\delta)\Opt(I)+1\tag{\whp, by \cref{kenyon}}
    \end{align*}
    Hence, with probability at least $(1-\zeta)(1-o(1))-o(1)\ge1-2\zeta$, we have
    \begin{align*}
    \BF(I_\sigma)&=\BF(I_\sigma(1,t_\sigma))+N_\sigma\\
                &\le \left(\left(\frac32-2\eps\right)\frac{t_\sigma}{n}+\frac32\left(\frac{n-t_\sigma}{n}\right)\right)(1+\delta)\Opt(I)+o(\Opt(I))\\
                &\le \left(\frac32-\eps\right)(1+\delta)\Opt(I)+o(\Opt(I))\tag{since $t_\sigma>n/2$}
    \end{align*}

    \item Suppose $\prob{\alpha n \leq t_\sigma\le n/2}\ge \zeta$.\\
    In this case, we condition on $\alpha n<t_\sigma\le n/2$. We apply
    \cref{lem:case1} to obtain that, with probability at least $1-\zeta$,
    \begin{align*}
    N_\sigma&\le \left(\frac32-2\eps\right)\Opt(I_\sigma(t_\sigma+1,n))+o(\Opt(I))\\
            &\le \left(\frac32-2\eps\right)\frac{n-t_\sigma}{n}(1+\delta)\Opt(I)+o(\Opt(I))\tag{\whp, by \cref{kenyon}}
    \end{align*}
    We also have,
    \begin{align*}
    \BF(I_\sigma(1,t_\sigma))&\le \frac32\Opt(I_\sigma(1,t_\sigma))+1\\
                    &\le\frac32(1+\delta)\frac{t_\sigma}{n}\Opt(I)+1\tag{\whp, by \cref{kenyon}}
    \end{align*}
    Hence, with probability at least $(1-\zeta)(1-o(1))-o(1)\ge1-2\zeta$, we have
    \begin{align*}
    \BF(I_\sigma)&=\BF(I_\sigma(1,t_\sigma))+N_\sigma\\
                &\le \left(\frac32\frac{t_\sigma}{n}+\left(\frac32-2\eps\right)\left(\frac{n-t_\sigma}{n}\right)\right)(1+\delta)\Opt(I)+o(\Opt(I))\\
                &\le \left(\frac32-\eps\right)(1+\delta)\Opt(I)+o(\Opt(I))\tag{since $t_\sigma\le n/2$}
    \end{align*}
    
    \item Suppose $\prob{t_\sigma < \alpha n}$.
    In this case, we condition on $t_\sigma<\alpha n$. Since this also
    implies that $t_\sigma<n/2$ holds with constant probability,
    we can use \cref{lem:case1}, to obtain, with probability at least $1-\zeta$, that
    \begin{align*}
    N_\sigma&\le \left(\frac 32-2\eps\right)\Opt(I_\sigma(t_\sigma+1,n))+o(\Opt(I))\\
            &\le \left(\frac32-2\eps\right)\Opt(I)+o(\Opt(I))
    \end{align*}
    On the other hand,
    \begin{align*}
    \BF(I_\sigma(1,t_\sigma))&\le \frac32\Opt(I_\sigma(1,t_\sigma))+1 \\ & \le \frac32\Opt(I_\sigma(1,\alpha n))+1\\
                            &\le \frac32\cdot\frac{\alpha n}{n}(1+\delta)\Opt(I)+1\tag{\whp, using \cref{kenyon}}
    \end{align*}
   
\noindent  Hence, we obtain with probability at least $1-\zeta-o(1)\ge 1-2\zeta$ that
    \begin{align*}
    \BF(I_\sigma)&\le \left(\left(\frac32-2\eps\right)+\frac32\alpha(1+\delta)\right)\Opt(I)+o(\Opt(I))\\
                &\le \left(\frac32-\eps\right)(1+\delta)\Opt(I)+o(\Opt(I))
    \end{align*}

\noindent     where the last inequality follows since $\delta,\alpha$ are very small constants compared to $\eps$.
\end{itemize}
Hence, in each of the four cases above, if the case occurs with constant probability, we have that the event
$E\coloneqq \Big ( \BF(I_\sigma)\le\left(\frac32-\eps\right)(1+\delta)\Opt(I)+o(\Opt(I))\Big )$
occurs with probability at least $1-2\zeta$. Now, consider the four events
\begin{align*}
V_1&\coloneqq t_\sigma\ge(1-\alpha)n\\
V_2&\coloneqq n/2<t_\sigma<(1-\alpha)n\\
V_3&\coloneqq \alpha n<t_\sigma\le n/2\\
V_4&\coloneqq t_\sigma\le \alpha n\\
\end{align*}
and let $v_i\coloneqq \prob{V_i}$ for each $i\in[4]$. We have
\begin{align*}
\prob{E}=\sum_{i\in 4}\prob{E|V_i}v_i\ge \sum_{i\in[4]}(1-2\zeta)(v_i-\zeta)=(1-2\zeta)(1-4\zeta)\ge 1-6\zeta
\end{align*}
Hence, to conclude, we obtain that
\begin{align*}
\BF(I_\sigma)\le \left(\frac32-2\eps\right)(1+\delta)\Opt(I)+o(\Opt(I))
\end{align*}
holds with probability at least $1-6\zeta$. Since $\delta,\zeta$ can be made arbitrarily close
to zero, while ensuring that they are constants, it follows that
\begin{align*}
\BF(I_\sigma)\le \left(\frac32-\eps\right)\Opt(I)+o(\Opt(I))
\end{align*}
with high probability.
In the remaining low probability events, we can use the worst-case ratio of $1.7$, i.e., $\BF(I_\sigma)\le 1.7\Opt(I)+O(1)$ (see \cite{johnson1974worst}).
Hence we obtain that
\begin{align*}
\expec{\BF(I_\sigma)}& \leq (1-o(1))\left(\frac32-\eps\right)\Opt(I)+o(1)\cdot(1.7\Opt(I)+O(1))+o(\Opt(I))\\
                &\le \left(\frac32-\eps\right)\Opt(I)+o(\Opt(I))
\end{align*}
concluding the proof of Theorem~\ref{maintheorem}.

\section{Lower Bound for the Random-order Ratio of \bestfit{}}
\label{sec:Lower Bounds for Best Fit}
In this section, we will present an improved lower bound on $RR_{\BF}^{\infty}$, the random-order ratio of \bestfit{},
using a computer-aided proof that relies on generating and analyzing the stationary distribution of a large Markov 
chain similar to \cite{AlbersKL21,DBLP:conf/soda/Kenyon96}.
We thus improve the current best lower bound of $1.1$ \cite{AlbersKL21} on the random-order ratio of \bestfit{}
to $1.144$.

We will make use of a model---namely, the i.i.d.~model---to obtain a lower bound on $RR_{\BF}^{\infty}$. In this model, the input for the bin packing algorithm is a sequence of independent, identically distributed (i.i.d.) random variables in $(0,1]$. If $F$ denotes the probability distribution these variables are drawn from, then the performance measure of an algorithm $\mathcal{A}$ is given by $\lim_{n\to\infty}\frac{\expec{\mathcal{A}(I^n(F))}}{\expec{\OPT(I^n(F))}}$, where $I^n(F) \coloneqq (X_1, \ldots ,X_n)$ is a sequence of $n$ random variables drawn i.i.d.\ from $F$.  As was shown in \cite{AlbersKL21}, this model is weaker than the random-order model.
\begin{lemma}\label{iidmodel}
Consider any online bin packing algorithm $\mathcal{A}$. Let $F$ be a discrete distribution on $(0,1]$, and $I^n(F) = (X_1, \ldots, X_n)$ be a list of i.i.d.\ samples
drawn from $F$. As $n \to \infty$, there exists  a list $J$ of $n$ items such that

\begin{align*} \frac{\expec[\sigma]{\mathcal{A}(J_\sigma)}}{\OPT(J)} \geq  \frac{\expec{\mathcal{A}(I^n(F))}}{\expec{\OPT(I^n(F)) }}\end{align*}
where $\sigma$ is a uniformly drawn random permutation of the elements in $J$. 
\end{lemma}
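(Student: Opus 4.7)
The plan is to exploit the exchangeability of i.i.d.\ samples: drawing $n$ items i.i.d.\ from $F$ is equivalent to first drawing the multiset of values that appear and then presenting that multiset in a uniformly random order. Once this observation is in place, the lemma reduces to a straightforward averaging argument.

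More concretely, first I would fix a large $n$ and write $I^n(F)=(X_1,\ldots,X_n)$. For any multiset $J$ of $n$ elements from the support of $F$, let $p(J)$ denote the probability that the i.i.d.\ sample produces the multiset $J$ (a multinomial probability). By symmetry of i.i.d.\ draws, conditioned on the event that the multiset equals $J$, the sequence $(X_1,\ldots,X_n)$ is distributed as $J_\sigma$, where $\sigma$ is a uniformly random permutation of $[n]$. Taking expectations and noting that $\OPT(J)$ depends only on the multiset $J$, this yields
\begin{align*}
\expec{\mathcal{A}(I^n(F))} &= \sum_J p(J)\, \expec[\sigma]{\mathcal{A}(J_\sigma)},\\
\expec{\OPT(I^n(F))} &= \sum_J p(J)\, \OPT(J).
\end{align*}

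Next I would apply an averaging argument. Set $a_J := \expec[\sigma]{\mathcal{A}(J_\sigma)}$, $o_J := \OPT(J)$, and let $R_n := \expec{\mathcal{A}(I^n(F))}/\expec{\OPT(I^n(F))}$, so that $\sum_J p(J)\, a_J = R_n \sum_J p(J)\, o_J$. Suppose for contradiction that $a_J / o_J < R_n$ for every multiset $J$ in the support of $p$. Multiplying by $p(J)\, o_J \geq 0$ and summing over $J$ gives $\sum_J p(J)\, a_J < R_n \sum_J p(J)\, o_J$, which contradicts the definition of $R_n$. Hence there exists some multiset $J^\star$ (of size $n$) with $a_{J^\star}/o_{J^\star} \geq R_n$, and by presenting $J^\star$ as a list this is exactly the list whose existence the lemma asserts. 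Letting $n \to \infty$ yields a sequence of lists witnessing the i.i.d.\ ratio in the limit.

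There is really no serious obstacle: the only point that must be handled carefully is the exchangeability step, namely verifying that conditioning an i.i.d.\ sequence on its empirical multiset produces a uniformly random arrangement (this is immediate because the joint density/pmf is symmetric in its arguments). Everything else is a one-line averaging argument. The same reasoning is implicit in prior works that use i.i.d.\ distributions to lower-bound random-order performance, so I would present it concisely rather than dwelling on the routine calculation.
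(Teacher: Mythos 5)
The paper does not supply its own proof of this lemma; it cites \cite{AlbersKL21} and uses the statement directly. Your proof is correct, and it is the standard argument one would expect (and that \cite{AlbersKL21} uses): decompose the i.i.d.\ draw into a multiset draw followed by a uniform permutation, note $\OPT(\cdot)$ depends only on the multiset, and apply the mediant/averaging inequality to extract a single $J$ achieving at least the aggregate ratio. The only point worth spelling out slightly more is that $\OPT(J)\ge 1$ for every $J$ in the support (item sizes lie in $(0,1]$ and $n\ge 1$), so the denominators in the averaging step are strictly positive and the division is licit; otherwise your argument is complete.
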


We prove Theorem \ref{aarratio} using \cref{iidmodel}, by exhibiting a probability distribution $F$ that 
causes \bestfit{} to perform relatively badly compared to the optimum solution in the i.i.d.\ model. 
Essentially, we will consider a distribution for which the optimal solution is almost \emph{perfect}, i.e., 
almost all bins are packed to maximum capacity, but \bestfit{} makes many mistakes on average leading to a sub-optimal 
packing. A key difference compared to \cite{AlbersKL21} is that we make 
use of item sizes that are not of the form $1/m$ for some integer $m$, which makes it more difficult to ensure that the optimal packing is almost perfect.

To illustrate the general strategy, we redo the instance used in \cite{AlbersKL21}. 
We will choose $F$ to be the distribution on the item list $\{1/4 , 1/3\}$, with the respective probabilities of item arrivals given 
by $p=0.6,q=0.4$. We say that a bin is open if it has enough space to accommodate
future items, i.e., it has a load at most $3/4$, and closed otherwise.
At any point, only the open bins are of interest to us. And in the \bestfit{} packing of any instance with item sizes in the list $\{1/4,1/3\}$,
there can be at most two open bins at any point of time. All the possibilities of these open bins are shown in the table in \cref{fig:markov}.

Consequently, we can model the behavior of \bestfit{} for this distribution by a Markov chain, where the state space corresponds to the different 
possible open bin configurations, and the transitions correspond to the arrival of different items in $\{1/4,1/3\}$, 
as illustrated on the left side of \cref{fig:markov}. 

\begin{figure}
\centering
    \begin{minipage}{0.45\textwidth}
    \includesvg[scale=0.5]{img/lower-bound.svg}
    \end{minipage}
    \hspace{1cm}
    \begin{minipage}{0.45\textwidth}
    \begin{tabular}{|c|c|}
    \hline
    State& Load of open bin(s)\\
    \hline\hline
    A&No open bins\\\hline
    B&1/4\\\hline
    C&1/3\\\hline
    D&1/2\\\hline
    E&7/12\\\hline
    F&2/3\\\hline
    G&3/4\\\hline
    H&3/4, 1/3\\\hline
    I&3/4, 2/3\\\hline
    \end{tabular}
    \end{minipage}
\caption{The Markov chain of the instance used by \cite{AlbersKL21} to prove a lower bound of $1.1$ on the random-order ratio of \bestfit{}.
The transition probabilities are $p=0.6,q=0.4$.
The bold transitions indicate those that open a new bin.}
\label{fig:markov}
\end{figure}

Consequently, the expected asymptotic behavior of \bestfit{} can be understood  by finding the expected number of transitions in which 
a bin is opened. It can be checked that the chain in \cref{fig:markov} is irreducible and aperiodic, and thus ergodic. So it has a unique stationary distribution 
$\boldsymbol\omega$, with the 
stationary probability of a state $R$ given by $\omega_R$. Let $V_R(t)$ denote the number of visits to a state $R$ of the Markov chain up to time $t$. As the Markov chain is ergodic, we know that $\lim_{t \to \infty} \frac{1}{t} \cdot V_R(t) = \omega_R$ (see \cite{walsh2012knowing}, for example). 
This means that the fraction of time 
spent by the Markov process in the state $R$ approaches its stationary probability $\omega_R$, which we can find computationally by solving a system of linear
equations. We can then find the expected performance of \bestfit{} as follows. 
Let $U$ be the set of all $(R,S)$ such that the transition $R \to S$ opens a new bin. Then, as $n\to\infty$,
\begin{align} 
\label{bfiid} 
\expec{\BF(I^n(F))} \to \sum_{(R,S) \in U} V_R(n)q_{RS} \to n \sum_{(R,S) \in U} \omega_R\cdot q_{RS}
\end{align}
where $q_{RS}$ is the probability that the Markov chain transits from state $R$ to state $S$.
For the distribution given by the list $\{1/4,1/3\}$ and their respective probabilities given by $(p=0.6,q=0.4)$, we can compute $\expec{\BF(I^n(F))}$
to be approximately $3.96n$.
On the other hand, the expected value of the optimal number of bins is given by $\expec{\opt(I^n(F))}\approx4pn+3qn=3.6n$
as the expected number of $1/4$ items is $pn$ and the expected number of $1/3$ items is $qn$.
Overall, we obtain a lower bound of $3.96/3.6=1.1$ on the performance of \bestfit{} in the i.i.d.~model.

Now, we return to our result. We come up with a more complicated distribution to achieve the following result.
However, since the Markov chain corresponding to our example has a large state space, we calculate the stationary probabilities
using a program, which is hosted at \url{https://github.com/bestfitroa/BinPackROA}.
\begin{lemma}
There exists a discrete distribution $F$ such that for $n \to \infty$, we have 
\begin{align*}
\expec{\BF(I_n(F))} > 1.144\cdot\expec{\OPT(I_n(F))}
\end{align*}
\end{lemma}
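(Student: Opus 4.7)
My plan is to extend the two-item warm-up to the seven-item distribution shown in the last row of \cref{table:iid}, with sizes $s_1,\dots,s_7 \in \{0.245, 0.25, 0.26, 0.27, 0.3, 0.38, 0.46\}$ arriving i.i.d.\ with the listed probabilities $p_1,\dots,p_7$, and then invoke \cref{iidmodel} to lift the lower bound from the i.i.d.\ model to the random-order model. The overall structure mirrors the Markov-chain analysis of \cite{AlbersKL21}, but the state space is vastly larger and the optimum is no longer obvious because the sizes are not unit fractions.

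First, I would enumerate all reachable open-bin configurations of \bestfit{} under $F$; a configuration is the unordered multiset of current loads of bins that could still accept some item of positive probability. Starting from the empty configuration, I would perform a BFS, applying \bestfit{}'s rule (place each arriving item in the fullest fitting bin, else open a new bin) for each of the seven possible arrivals. Every reachable load is a nonnegative integer combination of $s_1,\dots,s_7$ bounded by $1$, so the state space is finite; a computer search yields the reported $357$ states together with the transition matrix $Q = (q_{R,S})$, where $q_{R,S_i} = p_i$ when item $i$ drives $R \to S_i$.

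Next, I would verify ergodicity. Irreducibility holds because from any state the empty state $A$ is reached by a sequence of arrivals that close every open bin (e.g., repeatedly drawing items until each open load crosses the ``closed'' threshold, followed by bursts that open and immediately close single bins with bin-filling patterns), and $A$ trivially reaches every state; aperiodicity holds because at least one state admits two return paths of coprime lengths, which the program checks. Given ergodicity, the stationary vector $\boldsymbol\omega$ with $\boldsymbol\omega Q = \boldsymbol\omega$ is unique and is solved as a sparse linear system. Letting $U$ denote the set of transitions $(R,S)$ in which the arriving item opens a new bin (checked by testing whether the item fits in any bin of $R$), the analogue of \cref{bfiid} gives
\begin{align*}
\frac{\expec{\BF(I^n(F))}}{n} \;\longrightarrow\; \sum_{(R,S) \in U} \omega_R \, q_{R,S}
\end{align*}
as $n \to \infty$.

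The main obstacle is the denominator: computing $\expec{\OPT(I^n(F))}$ when sizes are not of the form $1/m$, so the optimum must use hybrid patterns. My plan is to enumerate all \emph{bin patterns}, i.e., multisets $\pi$ of the $s_i$ with $\sum_{s \in \pi} s \le 1$, and solve the fractional pattern-packing LP that, given the expected item counts $n p_i$, minimizes $\sum_\pi x_\pi$ subject to the pattern-usage vector covering each $p_i$. By the concentration of i.i.d.\ counts around $n p_i$ and standard LP-rounding-style arguments for bin packing, the LP value $L \cdot n$ equals $\expec{\OPT(I^n(F))}/n$ up to $o(n)$ additive error, with matching upper bound obtained by exhibiting an almost-perfect integer packing using only the patterns that the LP selects. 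Running this pipeline and plugging the numerical values into $\sum_{(R,S) \in U} \omega_R q_{R,S} / L$ yields a ratio strictly exceeding $1.144$, which by \cref{iidmodel} transfers to the desired lower bound on $RR_{\BF}^\infty$. The computational bottleneck is the $357$-state linear system and the LP, both handled by the referenced code; the mathematical subtlety is verifying that the LP relaxation is tight up to lower-order terms for the chosen $F$, which I would argue by showing that the LP has an integral-in-the-limit optimum (each $p_i$ is a rational with small denominator, and the LP's optimal patterns admit an integer packing of $n p_i + o(n)$ items using $L n + o(n)$ bins).
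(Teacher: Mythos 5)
Your plan is essentially the paper's approach: the same seven-item distribution, the same $357$-state Markov-chain computation for $\expec{\BF(I^n(F))}$ culminating in an application of the chain's ergodic limit, and a concentration-plus-feasible-packing argument for $\expec{\OPT(I^n(F))}$. The one place you overcomplicate matters is the denominator. Since the lemma only needs an \emph{upper} bound on $\expec{\OPT(I^n(F))}$, there is no need to argue that the fractional pattern-packing LP is tight, nor to invoke LP-rounding: it suffices to exhibit a single feasible packing and bound its size. The paper does exactly this with three hand-picked bin patterns, $\{0.245,0.245,0.25,0.26\}$, $\{0.27,0.27,0.46\}$, and $\{0.3,0.3,0.38\}$, which exactly consume the expected item counts; Chebyshev plus a union bound shows that every item type appears at most $np_i + n^{2/3}$ times \whp{}, so this packing uses at most $0.29n + o(n)$ bins \whp{}, and a crude $\OPT \le n$ bound on the low-probability event closes the expectation. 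Your LP step would of course recover the same (or better) patterns, but the integrality-gap verification you flag as a subtlety is simply unnecessary once you realize only the upper-bound direction is needed — so the route works, but it carries more machinery than the argument requires.
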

\begin{proof}
We will take $F$ to be the probability distribution on
the following item list $K$, with the probabilities of each item, respectively, given by $\boldsymbol p$.
\begin{align*}K = (0.245, 0.25, 0.26, 0.27, 0.3, 0.38, 0.46) \qquad
\boldsymbol p = (0.26, 0.13, 0.13, 0.17, 0.15, 0.075, 0.085) 
\end{align*}
It can be computationally checked that the Markov chain corresponding to the behavior of \bestfit{} for the above distribution
has a finite state space ($357$ states).
Moreover, it is irreducible because from the state of $A\coloneqq$ ``no open bins'', we can reach any other state and return back to the state
$A$. Further, state $A$ is also aperiodic because, starting from $A$, both the events ``returning to $A$ in $2$ steps'' and
``returning to $A$ in $3$ steps'' occur with positive probability. (The former event can occur due to the items $0.38,0.46$,
and the latter event can occur due to the items $0.25,0.3,0.3$.) Hence, it follows that the underlying 
Markov chain is irreducible, aperiodic, and hence, ergodic. Then, calculating the stationary distribution $\boldsymbol\omega$
using the code linked above, and using \cref{bfiid}, we obtain that
\begin{align} 
\expec{\BF(I^n(F))} \geq 0.3317621 \cdot n
\end{align}
On the other hand, note that not all items in $K$ are of the form $\frac{1}{m}$ for some integral $m$, hence there is no simple closed form for $\OPT$ simply in terms of the probability of each item in $K$ in general. 
But, we can upper bound the expected performance of the optimal algorithm by 
coming up with a good feasible packing.


\begin{claim} \label{optiid} For the distribution $F$ given by list $K$ and probabilities $\boldsymbol p$, we have $\expec{\OPT(I^n(F))} \leq 0.29 n + o(n)$
\end{claim}
\begin{proof}
We pack the items into the following $3$ bin types.
\begin{align*}
B_1 = \{0.245, 0.245,0.25,0.26 \} \qquad  B_2 = \{0.27, 0.27, 0.46\} \qquad  B_3 = \{0.3,0.3,38\} 
\end{align*}
Let $X_i$ denote the number of items of type $K_i$ (the $i\Th$ item in the list $K$) in the instance $I^n(F)$.
Then $X_i$ is a binomial random variable with mean $np_i$ ($p_i$ refers to the probability of the item $K_i$) and 
variance $np_i(1-p_i) \leq np_i$. Thus, by Chebyshev's inequality
\begin{align*}
\prob{|X_i - np_i| < n^{2/3} }  \leq \frac{np_i}{n^{4/3}} = O\left(\frac{1}{n^{1/3}}\right) 
\end{align*}
Thus, by using a union bound, each $K_i$ appears at most $np_i + n^{2/3}$ times in $I^n(F)$ with high probability. 
When this high probability event occurs, we take $0.13n+n^{2/3},  0.085n+n^{2/3}, 0.075n+n^{2/3}$ number of bins of type $B_1$, $B_2$, $B_3$, respectively,
and it can then be verified that up to $np_i+n^{2/3}$ number of items of type $K_i$ (i.e., all of them) can be packed for all $i$. 
Thus, in this high probability event, we require at most $0.29n+o(n)$ number of bins, which also serves as an upper bound for $\Opt(I^n(F))$.
Consequently, we have the following upper bound on $\OPT(I^n(F))$ with high probability
\begin{align*}\prob{\OPT(I^n(F)) \leq 0.29n + o(n) } = 1 - o(1)
\end{align*}
In the event that occurs with $o(1)$ probability, i.e., when some $K_i$ appears more than $np_i + n^{2/3}$ number of times, 
we use  $\OPT(I^n(F)) \leq n$, to obtain the desired result.
\begin{align*}\expec{\OPT(I^n(F))} \leq \left(0.29n + o(n)\right)\left(1-o(1)\right) + o(1)n = 0.29n + o(n)
\end{align*}
\end{proof}
For the given choice of $K$ and $\boldsymbol p$, using \cref{bfiid} and \cref{optiid}, we finally obtain that
\begin{align*} \lim\limits_{n \to \infty} \frac{\expec{\BF(I^n(F)}  }{\expec{\OPT(I^n(F))}} > 1.144
\end{align*}
as desired.
\end{proof}
Combining this with an application of \cref{iidmodel}, we thus get $RR_{\BF}^{\infty}> 1.144$.
\section{Conclusion}
\label{sec:conclusion}
We have given improved lower and upper bounds on the random-order ratio of \bestfit{}.
To compare with the current best bounds, we have improved the upper bound from $1.5$ to $1.5-\eps$ (for some $\eps\approx 10^{-9}$), and the lower bound
from $1.1$ to $1.144$.
We have not tried to optimize the value of $\varepsilon$ for the sake of simplicity. Moreover, we believe that it is difficult to obtain a significantly 
better upper bound using our techniques. An interesting open question to 
consider is if the conjectured ratio of $1.15$ can be achieved for \bestfit{} in a weaker model, e.g., the i.i.d. model.
Another interesting question is to find a polynomial-time algorithm with a $(1+\varepsilon)$ random-order ratio (or show its impossibility). 
\section{Acknowledgments}
\kvn{We sincerely thank Mohit Singh for many helpful initial discussions. \ah{We would also like to thank Riddhipratim Basu for helpful discussions regarding the concentration bounds. Finally, we thank the anonymous reviewers for their helpful comments.}}
\appendix
\section{Omitted Proofs}
\label{sec:omitted-proofs}
\subsection{Proof of \texorpdfstring{\cref{kenyon}}{Opt(1,t) almost equals t/n Opt}}
\label{pf:kenyonlemma}

We first discuss the upright matching problem introduced in \cite{spaccamela} and state a useful result of a stochastic version of upright matching. In the
\emph{upright matching} problem, we are given a $k$ plus $(+)$ points and $k$ minus $(-)$ points on a 2D plane. 
A plus point $(x_+,y_+)$ can be matched to a minus point $(x_-,y_-)$ only if the plus point lies ``upright'' to the minus point, i.e.,
only if $x_+\ge x_-$ and $y_+\ge y_-$. Further, no two points of the same sign can be matched with each other and a point cannot be
matched to more than one point. The objective of the upright matching problem is to match as many points as possible, or, in other words, minimize
the number of unmatched plus points. We denote this minimum possible number of unmatched plus points by the quantity $\mathcal{U}(P_+,P_-)$, where
$P_+$ denotes the set of plus points and $P_-$ denotes the set of minus points.

One can solve the upright matching problem exactly as follows. Sort all the points in non-decreasing order of their $x$-coordinates. When we encounter
a plus point $(x_+,y_+)$, we try to match it to an unmatched minus point $(x_-,y_-)$ satisfying $x_-\le x_+$ and $y_-\le y_+$, with $y_-$ being as 
large as possible. (If no such minus point exists, then the plus point remains unmatched.) 
It can be shown that this procedure gives us a maximum matching. See, e.g., \cite{spaccamela} for a proof.

When it comes to the bin packing setting, an item can be thought of corresponding to a point on a plane, with its time of arrival as the $x$-coordinate
and its size as the $y$-coordinate. To study bin packing under stochastic models, \cite{rhee1993lineA, spaccamela, fischer_thesis}
studied several stochastic variants of upright matching. For our purpose of showing that $\Opt(I_\sigma(1,t))\approx \frac tn\Opt(I)$,
we use a variant stated and proved by Fischer \cite{fischer_thesis}. 

This convergence result is derived from stochastic upright matching. An instance $\mathcal{P} = (\mathcal{P}^{+}, \mathcal{P}^{-})$ for the upright matching problem consists of two finite point sets in $\mathbb{R}^2$ labeled with a \textit{plus}, \textit{minus} respectively.  The goal is to match  as many   
points from $\mathcal{P}^{+}$ to $\mathcal{P}^{-}$    in an \textit{upright} fashion, i.e., while satisfying the constraints that  
\begin{lemma}\label{random-order-matching}\cite{fischer_thesis}
Let $k\in\mathbb N$, and $x_1,x_2,\dots,x_k,y_1,y_2,\dots,y_k$ be a set of reals in $[0,1]$ such that
$y_1\le x_1\le y_2\le x_2\le\dots\le y_k\le x_k$. Consider a random permutation $\pi$ of $[2k]$ and define a
set of plus points $P^{\pi}_+=\{(\pi(i),x_i):i\in[k]\}$ and a set of minus points $P^{\pi}_-=\{(\pi(k+i),y_i):i\in[k]\}$.
Then, there exist universal constants $\beta, C, K >0$ such that
\begin{align*}
\prob{\mathcal U(P^{\pi}_+,P^{\pi}_-) \geq   K\sqrt{k}(\log k)^{3/4}} \leq  C\exp\left(-\beta(\log k)^{3/2}\right)
\end{align*}
\end{lemma}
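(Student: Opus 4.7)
The plan is to combine a concentration inequality for permutation-invariant functions with an expected-deficit bound obtained via a scale-by-scale random-walk analysis; this is the strategy used by Rhee and Talagrand in their study of random upright matching for bin packing.

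\textbf{Step 1: Lipschitz property.} View $\mathcal{U}(\pi)$ as a function of $\pi\in S_{2k}$. A single transposition in $\pi$ swaps the $x$-coordinates of at most two of the $2k$ points, so the value of the maximum matching (and hence $\mathcal{U}$) changes by at most $2$. Thus $\mathcal{U}$ is $2$-Lipschitz with respect to the transposition metric on $S_{2k}$.

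\textbf{Step 2: Concentration.} The bounded-differences inequality for random permutations (via Maurey's martingale, or equivalently Talagrand's convex-distance inequality on $S_{2k}$) yields, for every $\lambda>0$,
\[
\prob{\mathcal{U}-\expec{\mathcal{U}}\ge\lambda}\;\le\;\exp\!\left(-\frac{\lambda^{2}}{Ck}\right),
\]
for a universal constant $C>0$.

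\textbf{Step 3: Bound on $\expec{\mathcal{U}}$.} The interleaving hypothesis $y_1\le x_1\le y_2\le\cdots\le y_k\le x_k$ forces a plus point at level $i$ to be compatible only with minus points at level $\le i$. For each $L\in[k]$, define
\[
W_L(t)\;:=\;|\{i\le L:\pi(i)\le t\}|-|\{j\le L:\pi(k+j)\le t\}|,
\]
the level-$\le L$ cumulative imbalance by time $t$. Conditioned on the set of $2L$ positions occupied by level-$\le L$ items, $W_L$ restricted to those positions is a symmetric hypergeometric bridge of length $2L$ returning to $0$, so the reflection principle gives $\prob{\max_t W_L(t)\ge u}\le 2\exp(-u^{2}/(2L))$. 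A Hall-type ``staircase-lattice'' argument then shows that $\mathcal{U}$ is controlled by the family $\{\max_t W_L(t)\}_L$ in a dyadic-chaining sense: grouping levels $L=2^{r}$ for $r=0,\ldots,\lceil\log_2 k\rceil$ and integrating the resulting tails (balancing the chosen threshold $u_r$ against the count $k/2^r$ of contributing cuts at each scale) yields
\[
\expec{\mathcal{U}}\;\le\;c\sqrt{k}(\log k)^{3/4}
\]
for an absolute constant $c>0$.

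\textbf{Step 4: Combination.} Choosing $K>2c$ and setting $\lambda=K\sqrt{k}(\log k)^{3/4}-\expec{\mathcal{U}}\ge(K/2)\sqrt{k}(\log k)^{3/4}$, Step 2 gives
\[
\prob{\mathcal{U}\ge K\sqrt{k}(\log k)^{3/4}}\;\le\;\exp\!\left(-\frac{K^{2}(\log k)^{3/2}}{4C}\right),
\]
which is of the required form $C'\exp(-\beta(\log k)^{3/2})$ for suitable universal constants $C',\beta>0$.

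\textbf{Main obstacle.} The technical bottleneck is Step 3: the expected-deficit bound $\expec{\mathcal{U}}=O(\sqrt{k}(\log k)^{3/4})$. A naive union bound that treats each ``rectangular'' cut $(L,t)$ independently only gives $O(\sqrt{k}\log k)$, which is a factor of $(\log k)^{1/4}$ too weak to feed into Step 2 and produce the target stretched-exponential tail. Removing this extra factor requires the Rhee--Talagrand dyadic-chaining technique, carefully balancing scales in both the level parameter $L$ and the time parameter $t$ and accounting for non-rectangular Hall witnesses via lower-closed lattice regions. Once this chaining estimate is in hand, the concentration step is routine.
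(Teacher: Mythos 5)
The paper does not prove this lemma; it is imported verbatim from Fischer's thesis \cite{fischer_thesis}, so there is no in-paper argument to compare yours against. Your high-level decomposition---a sub-Gaussian concentration step around the mean, plus an $O(\sqrt{k}(\log k)^{3/4})$ bound on $\expec{\mathcal{U}}$---is indeed the standard route to tail bounds of this type (Shor; Rhee--Talagrand; and, one expects, Fischer's own proof). Steps 1, 2, and 4 are correct and routine: a transposition of $\pi$ reassigns the $x$-coordinates of two points, the maximum matching changes by at most $1$ per reassigned vertex, so $\mathcal{U}$ is $2$-Lipschitz; the Maurey/McDiarmid bounded-differences inequality on $S_{2k}$ then gives a sub-Gaussian tail of width $\Theta(\sqrt{k})$; and with $\lambda\approx K\sqrt{k}(\log k)^{3/4}$ this yields the stretched-exponential form once $K$ is chosen larger than twice the constant in the expectation bound.

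The substance of the lemma, however, is entirely Step 3, and as written your proposal is only a sketch. Two gaps stand out. First, the reduction ``$\mathcal{U}$ is controlled by $\{\max_t W_L(t)\}_L$'' is not immediate: the Hall deficiency witnesses for this staircase poset are arbitrary lower-closed regions in the level--time plane, not just the rectangular cuts $\{i\le L,\ \pi(i)\le t\}$, so showing that rectangular cuts already capture the deficiency (or controlling the excess over rectangles) is itself a real lemma that your sketch treats as given. Second, obtaining $(\log k)^{3/4}$ rather than $\log k$ in the expectation bound is the Leighton--Shor $\log^{3/4}$ phenomenon; it requires a delicate multi-scale construction occupying the bulk of Shor's and Rhee--Talagrand's original arguments, and a one-line ``integrate the dyadic tails, balancing $u_r$ against $k/2^r$'' does not deliver it---a naive dyadic union bound stalls at $\sqrt{k}\log k$, exactly as you note. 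You correctly identify Step 3 as the bottleneck; be aware that gap (i) also sits upstream of the chaining estimate and must be closed before the balancing even begins.
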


In fact, Fischer \cite{fischer_thesis} chose coordinates $x_i = 2i, y_i = 2i-1$, but the exact values are not relevant. Instead, the key property used for the result was that the conditions $x_i \geq y_{j} $ for all  $1 \leq j \leq i$ and $x_i < y_j $ for all $i+1 \leq j \leq k$ imply that $(\pi(i), x_i)$ can only be matched to $(\pi(k + j), y_j )$ when $i \geq j$ and $\pi(i) \geq \pi(k+j)$.  We can thus rephrase Fischer's result in the following more convenient graph theoretical form.

\begin{lemma}\label{random-order-matching2}
Let $k\in \mathbb{N}$, and let $G = (X,Y,E)$ be a bipartite graph with vertex set  $U = X \cup Y$, $X = (x_1, x_2, \ldots, x_k)$ and $ Y = (y_1, y_2, \ldots, y_k)$,
and edge set $E$ where $(x_i, y_j) \in E$ iff $1 \leq j \leq i$ for all $i \in [k]$. Furthermore, define $u_i = x_i$ for all $i \in [k], u_{k+i} = y_i$ for all $i \in [k] $. Consider a random permutation $\pi$ of $[2k]$, and randomly permute the vertex set $U$ to obtain a sequence of vertices $u_{\pi(1)}, \ldots, u_{\pi(2k)}$. 
Process the vertices in this order, and when vertex $x_i$ arrives, it is matched to a vertex $y_j$ with the largest index $j$ such that $j\le i$ and $y_j$ appears before $x_i$ in $\pi$ and $y_j$ is unmatched (if no such $y_j$ exists, $x_i$ is left unmatched). 
Let $\mathcal U_X(G^\pi)$ denote the number of unmatched vertices in $X$ that have arrived at \emph{any} intermediate step of this process.
Then, there exist universal constants $\beta, C, K >0$ such that
\begin{align*}
\prob{\mathcal U_X(G^{\pi}) \geq   K\sqrt{k}(\log k)^{3/4}} \leq  C\exp\left(-\beta(\log k)^{3/2}\right)
\end{align*}
\end{lemma}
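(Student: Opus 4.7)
The plan is to derive Lemma~\ref{random-order-matching2} by direct reduction to Lemma~\ref{random-order-matching}, since the two statements describe the same combinatorial object (a bipartite matching process indexed by a random permutation of $[2k]$) phrased in two different languages. I would first instantiate the reals in Lemma~\ref{random-order-matching} by setting $x_i = 2i$ and $y_j = 2j-1$ for $i,j\in[k]$. These satisfy the required interleaving $y_1 \leq x_1 \leq y_2 \leq x_2 \leq \cdots \leq y_k \leq x_k$, and crucially they enforce $x_i \geq y_j \iff i \geq j$, which is exactly the edge relation defining $E$ in the graph $G$ of Lemma~\ref{random-order-matching2}.

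Second, I would match up the permutation conventions. In the upright-matching formulation, a plus point lives at $(\pi(i),x_i)$ and a minus point at $(\pi(k+j),y_j)$, and the two are legally matchable iff $\pi(i) \geq \pi(k+j)$ and $x_i \geq y_j$. Reading $\pi(i)$ as the arrival time of the vertex $u_i$, this is precisely the condition under which the online rule in Lemma~\ref{random-order-matching2} is allowed to match $x_i$ with $y_j$: the minus vertex $y_j$ must already be present ($\pi(k+j)\leq\pi(i)$) and the edge $(x_i,y_j)$ must exist in $G$ ($i\geq j$). Since $\pi$ is in both cases uniform over permutations of $[2k]$, the two probability spaces coincide.

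Third, I would argue that the greedy online rule in Lemma~\ref{random-order-matching2} realizes a \emph{maximum} upright matching, so that the final count of unmatched $X$-vertices is exactly $\mathcal{U}(P^+,P^-)$. The optimal rule described just before Lemma~\ref{random-order-matching} scans points in increasing order of $x$-coordinate (i.e., arrival order) and matches each incoming plus point to a feasible unmatched minus point of \emph{largest $y$-coordinate}. Because $y_j = 2j-1$ is strictly increasing in $j$, ``largest $y$-coordinate'' coincides with ``largest index $j\leq i$'', which is exactly the rule stated in Lemma~\ref{random-order-matching2}. Thus the two matchings are identical at every step.

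Finally, I would observe that an $X$-vertex is matched (if at all) at the moment of its own arrival, because the greedy process only pairs plus points on their arrival step; once left unmatched it stays unmatched forever. The number of unmatched $X$-vertices is therefore monotonically non-decreasing in time, so its maximum over any intermediate step, $\mathcal{U}_X(G^\pi)$, equals the final count, which equals $\mathcal{U}(P^+,P^-)$. Plugging this into Lemma~\ref{random-order-matching} with the same universal constants $\beta,C,K$ yields the desired tail bound
$
\prob{\mathcal U_X(G^{\pi}) \geq K\sqrt{k}(\log k)^{3/4}} \leq C\exp\!\left(-\beta(\log k)^{3/2}\right).
$
The only mildly subtle step is the book-keeping between the two permutation conventions: since the combinatorial feasibility relation and the uniform measure on $\mathcal{S}_{2k}$ are both invariant under taking $\pi\mapsto\pi^{-1}$, no extra probabilistic work is required, and the lemma follows.
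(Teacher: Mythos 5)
Your proposal is correct and follows essentially the same route the paper takes: the paper does not prove \cref{random-order-matching2} in a displayed proof but instead explains, in the paragraph preceding the lemma and in the remark following it, exactly the three ingredients you give (the choice $x_i=2i,\ y_j=2j-1$ making the feasibility relation coincide with the edge set $E$; the observation that the greedy online rule is the optimal upright-matching rule; and the monotonicity of the count of unmatched $X$-vertices, which upgrades the final-time bound to an all-times bound). Your extra comment about invariance under $\pi\mapsto\pi^{-1}$ correctly handles the one bookkeeping discrepancy between the two statements' permutation conventions.
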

\begin{remark}
To be precise, Fischer's result (\cref{random-order-matching}) only bounds the final number of unmatched points. But in \cref{random-order-matching2}, the same
bound applies for the number of unmatched vertices in $X$ at \emph{any intermediate step}. This is because, in the matching procedure of \cref{random-order-matching2}, a vertex in $X$ remains unmatched if it is not matched to a point in $Y$ on its arrival. Hence, the number of unmatched vertices in $X$ can only increase with time.
\end{remark}

That ends the discussion on stochastic upright matching. We will be using \cref{random-order-matching} repeatedly in the proof of \cref{kenyon}.
Before starting the proof of \cref{kenyon}, we will state and show two helper claims based on simple probabilistic arguments. These claims
show how the number of items of a particular type and how their volume are distributed in a part of the input, We will need a variant of Hoeffding's inequality that holds for sampling without replacement, mentioned in Hoeffding's original paper \cite{Hoe63}.

\begin{proposition}\label{hoef}
Let $\mathcal{X} = \{x_1, \ldots, x_n\}$ be a finite population of $n$ reals ($\mathcal{X}$ can be a multiset), and $X_1, \ldots , X_m$ be a random sample drawn without replacement from $\mathcal{X}$. Let $a := \min_{1 \leq i \leq n}x_i$ and $b := \max_{1 \leq i \leq n} x_i$. Then, for all $\lambda > 0$, 

\begin{align*}
\prob{\left|\sum_{i=1}^m X_i - \sum_{i=1}^m \expec{X_i}\right| \geq \lambda } \leq 2 \exp\left(-\frac{2\lambda^2}{m(b-a)^2}\right)
\end{align*}
\end{proposition}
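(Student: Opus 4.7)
The plan is to reduce the without-replacement bound to the classical Hoeffding inequality for i.i.d.\ (with-replacement) samples via a convexity comparison. First I would apply the Chernoff/Cram\'er trick: for any $t>0$,
\begin{align*}
\prob{\sum_{i=1}^m X_i - \sum_{i=1}^m \expec{X_i} \geq \lambda} \leq e^{-t\lambda}\,\expec{\exp\!\left(t \sum_{i=1}^m (X_i - \expec{X_i})\right)}.
\end{align*}
If the $X_i$ were drawn with replacement from $\mathcal{X}$, the MGF on the right would factorize into $m$ copies, each of which is a bounded, mean-zero random variable supported in an interval of length at most $b-a$; the standard Hoeffding lemma then gives $\expec{e^{t(Y-\expec{Y})}} \leq \exp(t^2(b-a)^2/8)$, and optimizing the resulting bound at $t = 4\lambda/(m(b-a)^2)$ yields exactly $\exp\!\left(-2\lambda^2/(m(b-a)^2)\right)$.

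The substantive step, and what bridges the without-replacement setting to the with-replacement one, is Hoeffding's convexity comparison: if $S_m$ is a sum of $m$ samples drawn without replacement from $\mathcal{X}$ and $S_m'$ is a sum of $m$ i.i.d.\ samples drawn (uniformly with replacement) from $\mathcal{X}$, then for every continuous convex function $\phi$,
\begin{align*}
\expec{\phi(S_m)} \leq \expec{\phi(S_m')}.
\end{align*}
Applying this to $\phi(x) = \exp(t(x - \mu m))$ with $\mu = \frac{1}{n}\sum_j x_j$ (noting $\expec{X_i} = \expec{Y_i} = \mu$) upper-bounds the MGF of the without-replacement sum by that of the with-replacement sum, after which the i.i.d.\ bound above applies verbatim. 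The comparison itself can be proved by a coupling/exchange argument: one writes the without-replacement sum as a conditional expectation of the with-replacement sum given the multiset of draws, and then invokes Jensen's inequality.

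Finally, applying the same argument to $-X_i$ gives the matching lower-tail bound, and a union bound produces the factor of $2$ in front. The main technical obstacle is the convexity comparison; since this is exactly the content stated and proved in Hoeffding's original paper, my actual write-up would simply invoke \cite{Hoe63} for that step and spell out only the Chernoff optimization and union bound explicitly.
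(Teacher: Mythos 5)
Your proposal is correct and is precisely the argument in Hoeffding's original paper, which the authors cite for this proposition without reproducing a proof: Chernoff bound, Hoeffding's lemma $\expec{e^{t(Y-\expec{Y})}}\le e^{t^2(b-a)^2/8}$ applied to the with-replacement MGF, Hoeffding's convexity comparison (Theorem~4 of \cite{Hoe63}) to dominate the without-replacement MGF, optimization at $t=4\lambda/(m(b-a)^2)$ yielding the exponent $-2\lambda^2/(m(b-a)^2)$, and a union bound for the two-sided statement. Since the paper itself gives no proof, there is nothing to contrast; your write-up is a faithful and complete account of the standard derivation.
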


\begin{claim} \label{binomperm}
Fix some $t$ such that $1 \leq t \leq n$. For any set of items $D$ in $I$, if $H_D$ is the number of items from $D$ in $I_\sigma(1,t)$,  we have that
\begin{align*}
 \frac{t}{n} |D| - |D|^{2/3}  \leq H_D  \leq \frac{t}{n} |D| + |D|^{2/3}
\end{align*}
 with probability at least $1 - 2\exp(-2|D|^{1/3}).$
\end{claim}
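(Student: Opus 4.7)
\textbf{Proof plan for Claim~\ref{binomperm}.} The plan is to express $H_D$ as a sum of $|D|$ indicator variables (one per item in $D$) rather than $t$ indicators (one per position), so that Hoeffding's inequality for sampling without replacement (\cref{hoef}) can be applied with $m = |D|$. Specifically, for each item $d \in D$, let $Z_d$ be the indicator of the event that $d$ lands in one of the first $t$ positions under $\sigma$. Then clearly $H_D = \sum_{d \in D} Z_d$, and by symmetry $\expec{Z_d} = t/n$, so $\expec{H_D} = (t/n)|D|$.

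The key observation is that the multiset of positions $\{\sigma^{-1}(d) : d \in D\}$ is distributed as a uniformly random subset of size $|D|$ drawn \emph{without replacement} from $\{1, 2, \ldots, n\}$. Equivalently, one may view $H_D$ as the sum of a sample of size $m := |D|$, drawn without replacement, from a population $\mathcal{X}$ of $n$ reals consisting of $t$ ones (corresponding to positions $1, \ldots, t$) and $n - t$ zeros (corresponding to positions $t+1, \ldots, n$). Here $a = 0$, $b = 1$, so $b - a = 1$.

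Applying \cref{hoef} with $\lambda = |D|^{2/3}$ and $m = |D|$ yields
\[
\prob{\left|H_D - \tfrac{t}{n}|D|\right| \geq |D|^{2/3}} \;\leq\; 2\exp\!\left(-\frac{2 |D|^{4/3}}{|D|}\right) \;=\; 2\exp\!\left(-2 |D|^{1/3}\right),
\]
which is exactly the claimed tail bound. Taking complements gives the two-sided estimate for $H_D$ with the stated probability.

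The only subtle point is the choice to index the sum over $D$ rather than over the first $t$ positions (the latter would give a weaker bound depending on $t$ in the exponent). Once that observation is made, the rest is a direct application of \cref{hoef}, so no real obstacle arises.
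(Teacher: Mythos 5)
Your proof is correct, and it is \emph{not} the same computation as the paper's. The paper applies \cref{hoef} with the population consisting of $|D|$ ones and $n-|D|$ zeroes and sample size $m=t$; plugging $\lambda=|D|^{2/3}$ into $2\exp\bigl(-2\lambda^2/(m(b-a)^2)\bigr)$ under that parametrization gives $2\exp\bigl(-2|D|^{4/3}/t\bigr)$, which matches the claimed exponent $-2|D|^{1/3}$ only when $t\le|D|$. Since $t$ is allowed to range up to $n$, which can far exceed $|D|$ even when $|D|=\Omega(\OPT(I))$, the paper's stated application does not literally yield the claimed bound. Your decomposition reverses the roles: you take the population to be $t$ ones and $n-t$ zeroes and sample $m=|D|$ of them (the positions of the items of $D$, which form a uniformly random $|D|$-subset of $[n]$). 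This is the same hypergeometric distribution viewed from the other side, but now $m=|D|$ makes the Hoeffding exponent $2|D|^{4/3}/|D|=2|D|^{1/3}$, exactly as stated in the claim, with no restriction on $t$. So your route is the one that actually proves the claim as stated; the paper's write-up appears to have chosen the wrong side of this symmetry, and would need either your swap or a different choice of $\lambda$ (e.g.\ $\lambda=t^{2/3}$, changing the claim's statement) to go through.
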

\begin{proof}
Use \cref{hoef}, where the population $\mathcal{X}$ consists of $|D|$ ones and $n-|D|$ zeroes, with a sample size of $m = t$. Note that $0 \leq a \leq b \leq 1$.
\begin{align*}
 \expec{X_i} = \frac{|D|}{n}  \quad \text{and} \quad  \expec{H_D} = \expec{\sum_{i=1}^t X_i} = \frac{t}{n}|D|
\end{align*}
Thus applying the inequality with $\lambda = |D|^{2/3}$ gives the desired claim.  In particular, note that if $|D| = \Omega(\OPT(I))$, then the bound holds with high probability as $\OPT(I) \to \infty$.

\end{proof}

\begin{claim} \label{binompermvol}
Fix some $t$ such that $1 \leq t \leq n$. We have that $\vol(I_\sigma(1,t))$ is at most
\begin{align*}
 \frac{t}{n} \vol(I) + \vol(I)^{2/3}
\end{align*}
 with probability at least $1 - 2\exp(-2\vol(I)^{1/3}).$
\end{claim}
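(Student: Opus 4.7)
The plan is to prove this claim in direct analogy with the proof of Claim \ref{binomperm}, simply replacing the $0/1$ indicator variables by the actual item sizes. First, I would set up Proposition \ref{hoef} by taking the finite population $\mathcal{X}$ to be the multiset of item sizes $\{s(x_1), \ldots, s(x_n)\}$ in $I$, and drawing a sample of size $m = t$ without replacement. Since $\sigma$ is a uniformly random permutation of $[n]$, the random sample $(X_1, \ldots, X_t)$ has exactly the same joint distribution as the sizes $(s(x_{\sigma(1)}), \ldots, s(x_{\sigma(t)}))$ of the first $t$ arrivals; hence $\sum_{i=1}^{t} X_i = \vol(I_\sigma(1,t))$ and $\sum_{i=1}^{t} \expec{X_i} = t \cdot \vol(I)/n = (t/n)\vol(I)$. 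Because every item size lies in $(0,1]$, the range parameters satisfy $0 \leq a \leq b \leq 1$, so $(b-a)^2 \leq 1$.

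Next, I would apply Proposition \ref{hoef} with $\lambda = \vol(I)^{2/3}$ to obtain
\begin{align*}
\prob{\left|\vol(I_\sigma(1,t)) - \frac{t}{n}\vol(I)\right| \geq \vol(I)^{2/3}} \leq 2\exp\left(-\frac{2\vol(I)^{4/3}}{t\,(b-a)^2}\right) \leq 2\exp\left(-2\vol(I)^{1/3}\right),
\end{align*}
where the final inequality uses $t \leq \vol(I)$, which holds in the regime where the bound is non-trivial (otherwise $\vol(I)$ is so small that the claimed probability bound is vacuous). Discarding the lower tail then gives the desired one-sided inequality $\vol(I_\sigma(1,t)) \leq \frac{t}{n}\vol(I) + \vol(I)^{2/3}$ with probability at least $1 - 2\exp(-2\vol(I)^{1/3})$.

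I do not anticipate any genuine obstacle: the argument is a routine application of the Hoeffding inequality for sampling without replacement, and its structure mirrors the proof of \cref{binomperm} verbatim, with the only change being that the $\{0,1\}$-valued indicators are replaced by the $(0,1]$-valued item sizes. The key bookkeeping step is observing that a uniformly random permutation induces exactly a sample-without-replacement of size $t$ from the population of item sizes, so that the sample-sum of Proposition \ref{hoef} coincides with $\vol(I_\sigma(1,t))$.
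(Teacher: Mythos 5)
Your proof takes the same route as the paper: apply Proposition~\ref{hoef} to the population of item sizes with $m=t$ and $\lambda = \vol(I)^{2/3}$. Where you depart is in trying to justify the arithmetic step that the paper leaves implicit, and that justification is incorrect. The Hoeffding bound gives $2\exp\bigl(-2\vol(I)^{4/3}/(t(b-a)^2)\bigr)$, and to land on $2\exp(-2\vol(I)^{1/3})$ you need $t(b-a)^2 \leq \vol(I)$. You try to argue $t \leq \vol(I)$ ``in the regime where the bound is non-trivial,'' but this is false: $t$ ranges up to $n$, and $n$ can be arbitrarily larger than $\vol(I)$ while $\vol(I)\to\infty$ keeps the claimed bound highly non-trivial. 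Concretely, take $n$ items where $\sqrt n$ of them have size $1/2$ and the rest have size $\varepsilon\to 0$; then $\vol(I)=\Theta(\sqrt n)$, the claimed failure probability $2\exp(-2\vol(I)^{1/3})$ vanishes, yet for $t=n/2$ one has $t(b-a)^2=\Theta(n)\gg\vol(I)$, and the Hoeffding exponent $2\vol(I)^{4/3}/(t(b-a)^2)=\Theta(n^{-1/3})$ actually tends to zero, so Proposition~\ref{hoef} proves nothing here.

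To be fair, the paper's own one-line proof of this claim (and of Claim~\ref{binomperm}) silently contains the same gap: it applies Proposition~\ref{hoef} with $\lambda=\vol(I)^{2/3}$ and simply asserts the result without checking the exponent. The statement is nevertheless true up to the constant in the exponent, but the range-based Hoeffding bound is the wrong tool; one needs a variance-sensitive bound. Since every size lies in $(0,1]$, the population second moment is at most $\vol(I)/n$, so the sample-sum variance is at most $t\,\vol(I)/n\le\vol(I)$, and a Bernstein--Serfling-type inequality for sampling without replacement then yields failure probability $\exp\bigl(-\Omega(\vol(I)^{1/3})\bigr)$. For the paper's downstream use only the order of the exponent matters (the failure probability must be $o(1/\OPT(I))$ for the union bound over $t$), so the constant~$2$ in the claim is immaterial---but the claim as stated does not follow from Proposition~\ref{hoef}, and your patch does not repair that.
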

\begin{proof}
Use \cref{hoef}, where the population $\mathcal{X}$ consists of the weights of the items in $I$, with a sample size of $m = t$. Note that $0 \leq a \leq b \leq 1$.
\begin{align*}
 \expec{X_i} = \frac{\vol(I)}{n}  \quad \text{and} \quad  \expec{V_D} = \expec{\sum_{i=1}^t X_i} = \frac{t}{n}\vol(I)
\end{align*}
Thus applying the inequality with $\lambda = \vol(I)^{2/3}$ gives the desired claim. In particular, note as $\vol(I) \geq \OPT(I)/2 - 1$, the bound holds with high probability as $\OPT(I) \to \infty$.

\end{proof}

We are now ready to begin the proof of \cref{kenyon}. Let $m = \OPT(I)$. Fix some small constants $\alpha\in(0,1/2),\mu, \gamma > 0$, let $v  = \ceil{1/\mu} $, and fix some integer $t$ in $[\alpha n, (1-\alpha)n]$. 

Consider an arbitrary optimal packing $\Opt(I)$.
An item is said to be of rank $j$, if it is the $j\Th$ largest item (breaking ties arbitrarily) in the bin it belongs to in $\Opt(I)$.
We call an item a \emph{master} item if its rank is $1$, i.e., it is the largest in the bin it belongs to in $\opt(I)$.
For any item $x$, we define $m(x)$ as the master item in the bin in $\Opt(I)$ that contains $x$.
Consider a master item $x^*$, and the bin $B$ it belongs to in $\Opt(I)$. The item of rank $j$ in the bin $B$ is denoted by $d_j(x^*)$.
For $i\in[v-1]$, define $C_i$ to be the collection of bins in $\Opt(I)$ that contain exactly $i$ number of items (see \cref{fig:collections}).
Let $C_v$ denote the collection of bins in $\Opt(I)$ that contain at least $v$ number of items.
For $i\in[v-1]$, let $I_i$ denote the set of items in the collection $C_i$, and let $b_i$ denote the number of bins in
the collection $C_i$. Also, define $I_v$ to be the set of items of rank at most $v$ in the collection of bins $C_v$, and let $b_v$ denote the number of bins in the collection $C_v$. Note that $\OPT(I) = \sum_{i=1}^v  b_i$.

\begin{figure}
\centering
\includegraphics{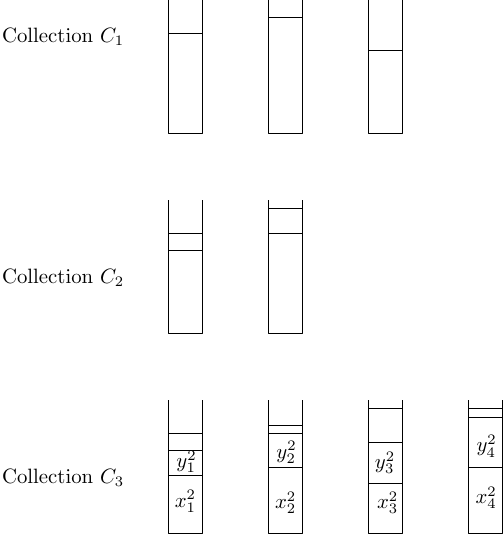}
\caption{The items $y_1^2\le y_2^2\le y_3^2\le y_4^2$
denote the items of rank $2$ in the collection $C_3$. 
The items $x_1^2,x_2^2,x_3^2,x_4^2$ denote the corresponding master items.}
\label{fig:collections}
\end{figure}

Our strategy to bound $\Opt(I_\sigma(1,t))$ is the following. We partition $I_\sigma(1,t)$ into $v+1$ sets
$I_1\cap I_\sigma(1,t),I_2\cap I_\sigma(1,t),\dots,I_{v}\cap I_\sigma(1,t)$, and $I_\sigma(1,t)\setminus (I_1\cup I_2\cup\dots I_v)$.
We consider each $i\in[v]$ separately and pack $I_i\cap I_\sigma(1,t)$ using $i-1$ applications of the procedure detailed in \cref{random-order-matching2}.
We show that, in this way, we can pack $I_i\cap I_\sigma(1,t)$ in at most $\frac tn b_i+o(b_i)$ number of bins. We then try to pack $I_\sigma(1,t)\setminus (I_1\cup I_2\cup\dots I_v)$ using a greedy algorithm like Next-Fit on top of the existing packing, and it can be shown that if we need extra bins, our packing has at approximately $\frac{t}{n} \OPT(I) )+o(\Opt(I))$ many bins with high probability.  In either case, we can compute a packing of $I_\sigma(1,t)$
in approximately $\frac tn(b_1+b_2+\dots+b_v)+o(\Opt(I))=\frac tn \Opt(I)+o(\Opt(I))$ number of bins.

We now provide the formal details. Consider any $i\in[v]$. If $b_i\le \gamma\Opt(I)$, then we trivially have that
$\Opt(I_\sigma(1,t)\cap I_i)\le \gamma\opt(I)$. Now, assume that $b_i>\gamma \Opt(I)$. For the case of $i=1$, each item in $I_1\cap I_\sigma(1,t)$
can be packed in a unique bin, and from \cref{binomperm}, we have the bound
\begin{align*}
\opt\left(I_1\cap I_\sigma(1,t)\right)\le\abs{I_1\cap I_\sigma(1,t)}\le \frac tn\abs{I_1}+\abs{I_1}^{2/3}=\frac tn b_1+b_1^{2/3}
\end{align*}
Now, suppose $i\ge 2$. We construct $i-1$ different graphs as follows. For $j$ such that $2\le j\le i$,
we define a bipartite graph $G_{ij}=(X_{ij},Y_{ij},E_{ij})$ as follows. Let $Y_{ij}=\{y_1^j,y_2^j,\dots,y_{b_i}^j\}$
denote the items of rank $j$ in the collection $C_i$ indexed such that $y_1^j\le y_2^j\le\dots\le y_{b_i}^j$. Let
$X_{ij}=\{x_1^j,x_2^j,\dots,x_{b_i}^j\}$ denote the master items in the collection $C_i$ with $x_r^i=m(y_r^j)$ for
all $r\in[b_i]$. (As as side note, the set $X_{ij}$ is the same for all $j$.)
For $p,q\in[b_i]$, draw an edge between $x_p^j$ and $y_p^j$ if and only if $y_q^j\le y_p^j$, i.e., $(x_p^j,y_1^j)\in E_{ij}$
iff $q\le p$. This graph $G_{ij}$ is exactly the graph in \cref{random-order-matching2} with $k=b_i$.
Also, since $y_p^j=d_j(x_p^j)$, we have that $x_p^j$ shares an edge with $y_q^j$ iff $y_q^j\le d_j(x_p^j)$.

We then apply the procedure in \cref{random-order-matching2} on $G_{ij}$, i.e., we permute the vertices $X_{ij}\cup Y_{ij}$
according to the random permutation $\sigma$ and whenever a vertex $x_p^j$ arrives, we match it with a vertex $y_q^j$ 
(that shares an edge with $x_p^j$ and has already arrived but is yet to be matched) such that $q$ is as large as possible.
Then, \cref{random-order-matching2} tells us that \emph{at all timesteps} in this procedure, the maximum number of unmatched points
in $X_{ij}$ is upper bounded by $O(\sqrt{b_i}(\log b_i)^{3/4})$, with high probability.
In particular, if we consider the matching until the set $I_\sigma(1,t)\cap(X_{ij}\cup Y_{ij})$ arrives, the maximum
number of unmatched points in $X_{ij}$ is at most $O(\sqrt{b_i}(\log b_i)^{3/4})$, with high probability.
Moreover, by \cref{binomperm}, there are at least $\frac tn b_i-b_i^{2/3}$
many items in $I_\sigma(1,t)\cap X_{ij}$. Therefore, with high probability, at least
$\frac tn b_i-b_i^{2/3}-O(\sqrt{b_i}(\log b_i)^{3/4})$ number of points in $I_\sigma(1,t)\cap X_{ij}$ are matched to some
point in $I_\sigma(1,t)\cap Y_{ij}$. By \cref{binomperm}, at most $\frac tn b_i+b_i^{2/3}$ items are in the set
$I_\sigma(1,t)\cap Y_{ij}$ with high probability. Hence, the maximum number of unmatched points in $I_\sigma(1,t)\cap Y_{ij}$
must be at most
\begin{align*}
\left(\frac tn b_i+b_i^{2/3}\right)-\left(\frac tn b_i-b_i^{2/3}-O(\sqrt{b_i}(\log b_i)^{3/4})\right)=2b_i^{2/3}+O(\sqrt{b_i}(\log b_i)^{3/4})
\end{align*}
Hence, overall, the number of items in $I_\sigma(1,t)\cap(X_{ij}\cap Y_{ij})$ that are unmatched is upper bounded by
$2b_i^{2/3}+O(\sqrt{b_i}(\log b_i)^{3/4})+O(\sqrt{b_i}(\log b_i)^{3/4})=o(b_i)$. Using a union bound and summing over all $j$,
which is bounded by $i$, which in turn, is bounded by $v=\ceil{1/\mu}+1$, a constant, we obtain that the number of items that
remain unmatched in $I_\sigma(1,t)\cap I_i$ is at most $o(b_i)$. 

Thus, to pack $I_\sigma(1,t)\cap I_i$, we have the following procedure.
Assign a bin for each master item in $I_1\cap I_\sigma(1,t)$. By \cref{binomperm}, the number of these bins is
at most $\frac tn b_i+b_i^{2/3}$, with high probability, since the number of master items in $I_i$ is $b_i$.
For a non-master item $y$, if it is unmatched, we pack it in a separate bin and close the bin. If it is matched, then it is packed in the
bin in which the master item to which it is matched to is packed. Many items can go into a bin but we claim that this packing is valid.
Indeed, we know that an item $y$ shares and edge with a master item $x$ iff $y\le d_{\rank(y)}(x)$.
And, moreover, no two items of the same rank can be assigned to the same master item. Hence, it follows that no bin overflows its
capacity since $x+d_{1}(x)+\dots+d_i(x)\le 1$.
Hence, the bins in which the matched items is packed is at most $\frac tn b_i+b_i^{2/3}$ in number and since the number of unmatched points
is at most $o(b_i)$, we obtain that
\begin{align*}
\Opt(I_\sigma(1,t)\cap I_i)\le \gamma \Opt(I)+\frac tn b_i+b_i^{2/3}+o(b_i)
\end{align*}
Summing over all $i\in[v]$, we obtain that
\begin{align}
\Opt(I_\sigma(1,t)\cap (I_1\cup I_2\cup \dots \cup I_v))&\le v\gamma \Opt(I)+\frac tn \sum_{i\in[v]}b_i+o(\Opt(I))\nonumber\\
&=v\gamma \Opt(I)+\frac tn \Opt(I)+o(\Opt(I))
\label{majority-packing}
\end{align}
It remains to pack $R\coloneqq I_\sigma(1,t)\setminus (I_1\cup I_2\cup \dots\cup I_v)$. Observe that this set contains items that
have a rank of at least $v+1$. Hence, each item in $R$ has a size at most $1/v\le \mu$. First, we try to pack $R$ greedily,
using \nextfit{}, in the gaps in our packing of $I_\sigma(1,t)\cap (I_1\cup I_2\cup \dots \cup I_v)$.
If we completely pack $R$ in this manner, then the bound in \cref{majority-packing} itself applies. Otherwise, we open new bins for the 
leftover items in $R$ and pack them in these new bins greedily, using \nextfit{}. Then, with an exception of one bin, every bin
must be filled up to a level of at least $1-\mu$. So, the total number of bins used is at most $\frac{1}{1 - \mu} \vol ( I_\sigma(1,t) ) \leq \frac{1}{1 - \mu}\left(\frac{t}{n} \vol(I) + o(\vol(I)) \right ) \leq (1 + 2 \mu) \frac{t}{n} \OPT(I) +o(\Opt(I))$ for small enough $\mu$ with high probability, using \cref{binompermvol}, as $\vol(I) \geq \OPT(I)/2 - 1$. Hence, if extra bins are opened by Next-Fit, we have that with high probability
\begin{align}
\Opt(I_\sigma(1,t)) \leq (1+2\mu)\frac tn \Opt(I)+o(\Opt(I)).
\label{minority-packing}
\end{align}
Combining \cref{majority-packing,minority-packing}, we obtain that with high probabiity
\begin{align*}\OPT(I_\sigma(1,t))
& \leq \frac{t}{n}\left( 1 + 2 \mu + \frac{n v \gamma}{t}  \right) \OPT(I) +  o(\OPT(I)) \\
& \leq \frac{t}{n}\left(1 + 2 \mu + \frac{2 \gamma}{\mu \alpha}\right) \OPT(I)+ o(\OPT(I)) \tag{since $t>\alpha n$}\\
& \leq \frac{t}{n}(1 + 3 \mu) \OPT(I)+ o(\OPT(I)) \\ 
& \leq \frac{t}{n}(1 + \delta) \OPT(I)  
\end{align*}
as long as $\gamma < \frac{\mu^2 \alpha}{2}$ and $4 \mu < \delta$.

Using $\OPT(I) \leq \OPT(I_\sigma(1,t)) + \OPT(I_\sigma(t+1,n)) $, we again obtain with high probability that
\begin{align*}\OPT(I_\sigma(t+1,n)) \geq \OPT(I) - \OPT(I_\sigma(1,t)) 
&\geq \Opt(I)-\frac tn(1+3\mu)\Opt(I)-o(\Opt(I))\\
&= \frac{n-t}{n}
\left(1 + 3 \mu - 3 \mu \frac{n}{n-t} \right)\OPT(I) - o(\OPT(I)) \\ 
& \geq \frac{n-t}{n}\left(1 + 3 \mu - \frac{3 \mu}{\alpha}  \right)\OPT(I) - o(\OPT(I)) \tag{since $t>(1-\alpha)n$}\\ & \geq \frac{n-t}{n}\left(1 - \frac{\delta}{2}\right)\OPT(I)  - o(\OPT(I)) \\ & \geq \frac{n-t}{n}(1 - \delta) \OPT(I)
\end{align*}
with high probability,  as long as $\mu < \frac{\delta/2}{ \frac{3}{\alpha} - 3}$.  We, now use a symmetric analysis  on the time segment $(t+1, n)$ by applying the same argument on the \textit{reverse} arrival order to obtain that with high probability,
\begin{align*}\OPT(I_\sigma(t+1,n)) \leq \frac{n-t}{n}(1 + 3 \mu) \OPT(I) + o(\OPT(I))\leq \frac{n-t}{n}(1 + \delta) \OPT(I)
\end{align*}

which shows that with high probability, we have \begin{align*}
\OPT(I_\sigma(1,t)) \geq \OPT(I) - \OPT(I_\sigma(t+1,n)) 
&\ge \Opt(I)-\frac{n-t}{n}(1+3\mu)\Opt(I)-o(\Opt(I))\\
&= \frac{t}{n}\left(1 + 3 \mu - 3 \mu \frac{n}{t} \right)\OPT(I) - o(\OPT(I))\\ & \geq \frac{t}{n}\left(1 + 3 \mu - \frac{3 \mu}{\alpha}  \right)\OPT(I) - o(\OPT(I))\tag{as $t/{n} \geq \alpha $}\\ & \geq \frac{t}{n}\left(1 - \frac{\delta}{2}\right)\OPT(I)  - o(\OPT(I)) \\ & \geq \frac{t}{n}(1 - \delta) \OPT(I) 
\end{align*}

It remains to show that these bounds hold \textit{for all} $t$ satisfying $\alpha n \leq t \leq (1 - \alpha )n$ with high probability. Note that  \cref{random-order-matching2} gives a bound on the number of unmatched points at all timesteps in the matching procedure, so we only need to show that \cref{binomperm} and \cref{binompermvol} hold for all $\alpha n \leq t \leq (1 - \alpha)n$ simultaneously with high probability, whenever they are applied.

Suppose $t^- \leq t \leq t^+$ where $t^-,t^+$ are consecutive integral multiples of $\floor{\frac{n}{\OPT(I)}}$, and the above bounds hold for both $t^-,t^+$. Then,

$$ \OPT(I_\sigma(1,t))  \geq  \OPT(I_\sigma(1,t^-)) \geq \frac{t^-}{n}(1 - \frac{\delta}{2}) \OPT(I) - o(\OPT(I)) \geq   \frac{t}{n}(1 - \delta) \OPT(I)   $$

$$ \OPT(I_\sigma(1,t))  \leq  \OPT(I_\sigma(1,t^+)) \leq \frac{t^+}{n}(1 + 3 \mu) \OPT(I) + o(\OPT(I)) \leq   \frac{t}{n}(1 + \delta) \OPT(I)   $$

$$ \OPT(I_\sigma(t+1,n))  \geq  \OPT(I_\sigma(t^+ +1 ,n )) \geq  \frac{n - t^+}{n}(1 - \frac{\delta}{2}) \OPT(I) - o(\OPT(I)) \geq   \frac{n-t}{n}(1 - \delta) \OPT(I)   $$

$$  \OPT(I_\sigma(t+1,n))  \leq  \OPT(I_\sigma(t^- + 1 ,n)) \leq \frac{n - t^-}{n}(1 + 3 \mu) \OPT(I) + o(\OPT(I)) \leq   \frac{n-t}{n}(1 + \delta) \OPT(I)   $$

For a fixed $t$, since we apply \cref{binomperm} $O(v^2)$ times and \cref{binompermvol} $O(1)$ times, the failure probability is at most $c_1\exp(-c_2OPT(I)^{1/3} )$ for some constants $c_1,c_2 >0$ as $v$ is a constant. We take a union bound over all $\alpha n \leq t \leq (1 - \alpha)n$ that are integral multiples of $\floor{\frac{n}{\OPT(I)}}$, giving a failure probability of $$O( \OPT(I) \cdot c_1\exp(-c_2OPT(I)^{1/3} )  ) $$

which goes to $0$ as $\OPT(I) \to \infty$, as desired.

\subsection{Some Results about \bestfit{}}
\label{helper-claims}
\begin{claim}\label{bestfit23}
In any \bestfit{} packing, there can be at most $2$ bins that have no large items and load at most $2/3$ at any point of time.
\end{claim}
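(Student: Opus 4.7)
The plan is to argue by contradiction. Suppose that at some time $t$ there exist three distinct bins $B_1, B_2, B_3$ in the \bestfit{} packing, each with no large items and load at most $2/3$; order them so that $B_1$ is opened first, then $B_2$, and then $B_3$. Let $y_2$ and $y_3$ be the items whose arrivals opened $B_2$ and $B_3$, respectively. Since loads only grow, the loads of $B_1$ and $B_2$ remain at most $2/3$ throughout their lifetimes.

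First I would show that both $y_2$ and $y_3$ are medium. For $y_2$ to open a new bin rather than be packed into $B_1$ (which had load at most $2/3$ when $y_2$ arrived), we need $s(y_2) > 1 - 2/3 = 1/3$; and since $B_2$ contains no large items at time $t$ and items are never removed, $s(y_2) \le 1/2$. The analogous argument for $y_3$, now using both $B_1$ and $B_2$, yields $s(y_3) \in (1/3, 1/2]$.

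The core step is to show that no item is placed in $B_2$ strictly between the arrivals of $y_2$ and $y_3$. Suppose for contradiction that $z$ is the first such item. At that moment, $B_2$'s load equals $s(y_2) \le 1/2$, whereas $B_1$'s load exceeds $1 - s(y_2) \ge 1/2$ (since $y_2$ did not fit into $B_1$); hence $B_1$ is strictly fuller than $B_2$. By the \bestfit{} rule, for $z$ to be placed in $B_2$, it must not fit in $B_1$, which forces $s(z) > 1 - \vol(B_1) \ge 1/3$. Since $z$ cannot be large (as $B_2$ is large-free at time $t$), $z$ must be medium. But then $B_2$ would contain two medium items $y_2$ and $z$, each of size strictly greater than $1/3$, so its load strictly exceeds $2/3$, contradicting the bound on $B_2$'s current load.

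It follows that the load of $B_2$ at the moment $y_3$ arrives is exactly $s(y_2)$. For $y_3$ to open $B_3$ rather than be packed into $B_2$, we need $s(y_2) + s(y_3) > 1$; but since both items are medium, $s(y_2) + s(y_3) \le 1$, the desired contradiction. The main obstacle is the core step: one must carefully translate the fact that $z$ goes to $B_2$ via the \bestfit{} rule into the strict inequality $s(z) > 1/3$, and this rests on $B_1$ being strictly fuller than $B_2$ at the relevant moment. Once this is in place, the final size argument closes the proof immediately.
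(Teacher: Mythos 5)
Your proof is correct and takes essentially the same approach as the paper's: assume three such bins, show the opening items of $B_2,B_3$ are medium, and argue that $B_2$ cannot acquire a second item in a way compatible with \bestfit{} and the load bound. The only cosmetic difference is that you prove directly that $B_2$ is still a singleton when $y_3$ arrives (so $y_3$ fits there), whereas the paper notes $B_2$ must have a second item to block $x_3$ and then rules out every possible size for that second item; the two phrasings are contrapositives of each other and rest on the same two observations (a second item $\le 1/3$ would be routed to $B_1$ by \bestfit{}, and a second item $>1/3$ would push $B_2$ above $2/3$).
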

\begin{proof}
Assume for the sake of contradiction that at some point in time, there are three bins $B_1,B_2,B_3$ that have 
no large items but have load at most $2/3$. 
Let $x_2,x_3$ be the first items packed in $B_2,B_3$, respectively.
We have $x_2>1/3$ as otherwise $B_1$
would have had enough space to accommodate $x_2$. Similarly, we have $x_3>1/3$. 
As $B_2$ and $B_3$ do not contain large items, we have $x_2,x_3 \leq 1/2$. 
Therefore, when $x_2$ arrived, it must have been the case that $\vol(B_1)>1/2$.
When $x_3$ arrived, the bin $B_2$ must have had at least two items as otherwise, $x_3$ would fit in $B_2$. 
Say the second item packed in $B_2$ is $y_2$. But $y_2$ must be at most $1/3$ as otherwise 
$x_2+y_2>2/3$ which is a contradiction. However, if $y_2\le 1/3$, by \bestfit{} rule, $y_2$ would have been
packed in $B_1$ as at the time of arrival of $y_2$, we have $\vol(B_2)\le 1/2<\vol(B_1)$, thus arriving at
a contradiction. Hence, there can be at most two bins that do not contain large items and have load at most $2/3$
at any point of time.
\end{proof}
\begin{claim}
\label{claim:vol-gt-23}
If any bin $B$ satisfies $\vol(B)\ge 2/3$, then it also satisfies $W(B)\ge1$.
\end{claim}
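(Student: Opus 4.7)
The plan is to do a short case analysis on the contents of the bin $B$, splitting based on the number and type of non-tiny items it contains. The key observation driving the argument is that a large item already contributes weight $1$ by itself, and two items of type $M/S$ together contribute weight $1$; so the only delicate situation is when $B$ has at most one item that is not tiny, in which case the hypothesis $\vol(B)\ge 2/3$ forces the tiny items to contribute enough weight.

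I would structure the argument as follows. First, if $B$ contains a large item $x$, then $W(x)=1$ and, since all other weights are nonnegative, $W(B)\ge 1$. Next, if $B$ contains at least two items of type $M$ or $S$, each contributes $1/2$ to the weight, so again $W(B)\ge 1$. Finally, in the remaining case, $B$ contains no large item and at most one item of type $M$ or $S$. Here I would split into two subcases depending on whether $B$ contains exactly one $M/S$ item or only tiny items. If $B$ contains only tiny items, then $W(B)=3\cdot\vol(B)\ge 3\cdot 2/3=2\ge 1$. If $B$ contains exactly one item $x$ of type $M$ or $S$ plus tiny items of total volume $V_T$, then $s(x)\le 1/2$ and the hypothesis gives $V_T\ge 2/3 - s(x)\ge 2/3 - 1/2 = 1/6$; hence $W(B)\ge W(x) + 3V_T \ge 1/2 + 3\cdot 1/6 = 1$.

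There is no real obstacle here, since every case closes by a direct computation using only the definitions of $W$ and the size ranges of the four item types. The only thing to be careful about is that each bin can contain at most one large item, so that the ``$B$ has a large item'' case is really a single case and not further subdivided, and that bounding $s(x)\le 1/2$ for $M/S$ items in the last case is precisely what makes the tiny volume at least $1/6$, which is exactly the threshold needed for $3V_T\ge 1/2$.
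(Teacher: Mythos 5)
Your proof is correct and follows essentially the same case analysis as the paper: split on whether $B$ has a large item, two $M/S$ items, exactly one $M/S$ item plus tinies, or only tinies, then compute. The arithmetic in each case matches the paper's exactly.
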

\begin{proof}
If $B$ contained a large item, then $W(B)\ge 1$ holds since the 
weight of a large item is $1$. Similarly, if $B$ had two
items of type $M/S$, then $W(B)\ge 1$ since the weight of an item of type $M/S$ is $1/2$. 
If $B$ had only one item of type $M/S$ and no large items, then it must have had at least $(2/3-1/2)$ volume of tiny items.
Recalling that a tiny item of size $x$ has weight $3x$,
we obtain $W(B)\ge 0.5+(2/3-1/2)3=1$. Finally, if $B$ only had tiny items, then $W(B)\ge 3(2/3)=2$.
\end{proof}
\subsection{Proof of \texorpdfstring{\cref{opttrunc}}{BF good when few L and M items}}
\label{pf:opttrunc}
We prove the lemma by showing that, in the \bestfit{} packing of $I_\sigma$, all but a constant number of bins have a final
load greater than $3/4$. In particular, we will show that any bin (with at most two exceptions) that does not contain an
$L$ or $M$ item will have a load greater than $3/4$. Since $k$, the number of $L,M$ items is at most a
constant, we obtain the lemma.

First, note that the number of bins that contain either $L$ or $M$ items is at most $k$, a constant.
Thus, we will only focus on the bins in which every item is either tiny or small. We prove the following claim.
\begin{claim}
For all $t\in[n]$, in the \bestfit{} packing of $I_\sigma(1,t)$, consider the set of bins in which every item is either tiny or small. 
The following properties hold about these bins.
\begin{enumerate}
    \item All of these bins, except at most two, have a load greater than $3/4$.
    \item If there are two bins of load at most $3/4$, then one of these two bins will only contain small items.
\end{enumerate}
\end{claim}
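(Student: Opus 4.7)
My plan is to partition the bins in $\mathcal{U}$—the set of bins containing only tiny/small items whose load is at most $3/4$—into two classes: those containing at least one tiny item, and ``small-only'' bins (containing only small items). Showing separately that at most one bin of each class can lie in $\mathcal{U}$ at any fixed time $t$ immediately gives $|\mathcal{U}|\le 2$ (Property 1), and whenever $|\mathcal{U}|=2$ forces the two bins to belong to distinct classes, so at least one is small-only (Property 2). The key structural observation I will use is that a small-only bin holds at most three items of size $\in(1/4,1/3]$, so its load lies in $\{0\}\cup(1/4,1/3]\cup(1/2,2/3]\cup(3/4,1]$; in particular, any small-only bin in $\mathcal{U}$ has load at most $2/3$. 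I will also use the trivial fact that bin loads are non-decreasing over time in Best-Fit.

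For the ``at most one small-only'' bound, assume $B_1,B_2\in\mathcal{U}$ are both small-only, with $B_1$ opened first, and let $y_2$ be the item opening $B_2$. Since $B_1\in\mathcal{U}$ throughout (its load is monotone and stays $\le 3/4$), a tiny $y_2$ would fit into $B_1$; hence $y_2$ is small, and the fact that it did not fit in $B_1$ forces $B_1$'s load at that moment to exceed $1-s(y_2)\ge 2/3$. Monotonicity then gives $B_1>2/3$ at time $t$, contradicting the structural observation. For the ``at most one bin containing a tiny item'' bound, let $B_1,B_2\in\mathcal{U}$ both contain tiny items, with $B_1$ opened first. The same reasoning shows $B_2$'s opener $y_2$ is small and $B_1>2/3$ from that moment onward. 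Let $x$ be the first tiny item ever added to $B_2$, arriving at time $\tau$. Just before $\tau$, $B_2$ holds only small items, and since $B_2$'s final load is $\le 3/4$ it must have held at most two of them (three small items would already push it above $3/4$, and adding $x$ only increases the load), so its pre-$x$ load is $\le 2/3$. But then at time $\tau$ we have $B_1>2/3\ge B_2$, so Best-Fit (under any tie-breaking rule) routes $x$ into $B_1$, not $B_2$—a contradiction.

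The main subtlety, and the only part requiring real care, is the second bound, where one must argue that $B_2$ cannot have grown to exceed $B_1$ in load by the time its first tiny item arrives. The bound ``$B_2$ holds at most two small items just before receiving any tiny item'' is exactly what makes this work: it combines the structural observation about small-only bins with the fact that $B_2\in\mathcal{U}$ at the final time $t$. Notably, no induction on $t$ is needed—the argument is carried out at the single fixed time—and the reasoning is entirely insensitive to how Best-Fit breaks ties among equally loaded bins, since the contradiction uses a strict inequality $B_1>B_2$.
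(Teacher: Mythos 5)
Your proof is correct, and it takes a genuinely different route from the paper's. The paper proves the claim by induction on $t$: assuming the invariant holds after $I_\sigma(1,t)$, it does a case analysis on the type (small vs.\ tiny) of the $(t+1)$-st item and checks that the invariant is preserved. You instead argue directly at a fixed $t$, partitioning the offending bins in $\mathcal{U}$ into ``small-only'' and ``contains a tiny item,'' and derive a contradiction if either class contains two bins. This decomposition is a nice organizing idea: it yields Property~1 (the cardinality bound) and Property~2 (one bin must be small-only whenever there are two) simultaneously, whereas in the paper's induction Property~2 is a separate piece of the invariant that must be carried along and re-verified in each case. Both proofs rest on the same two structural facts---a small-only bin with load $\le 3/4$ actually has load $\le 2/3$ (since three small items already exceed $3/4$), and a tiny item always fits into a bin of load $\le 3/4$---but your argument invokes them once each in a clean ``one good moment'' style (the opening of the second bin, or the arrival of the first tiny item), while the paper's invokes them inside a per-step case analysis. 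Your handling of the trickier second bound, reasoning about the first tiny item $x$ added to $B_2$ and showing $B_2$'s pre-$x$ load is still $\le 2/3$, is the key step and is sound: at that moment $B_1$ is strictly fuller than $B_2$ and still has room for $x$, so Best-Fit cannot send $x$ to $B_2$. Both proofs work; yours is arguably cleaner in that it avoids having to re-check the invariant under every possible new-item type.
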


\begin{proof}
The claim follows by simple induction on $t$. Let the $t\Th$ item in the input sequence $I_\sigma$ be $x_t$.
For the base case of $t=1$, the claim trivially holds. For the induction step,
consider any $t<n$ and assume that the claim holds for $I_\sigma(1,t)$.
If all the bins have load at least $3/4$ before $x_{t+1}$ arrives, then the claim continues to hold after packing $x_{t+1}$ also.
Hence, assume that there is at least one bin of load at most $3/4$ just before $x_{t+1}$ arrives.
Now, if $x_{t+1}$ is of type $L,M$, the claim continues to hold as we are only concerned about bins
containing small or tiny items. Hence, we have two cases depending on whether $x_{t+1}$ is small or tiny.

\textbf{Case 1 - $x_{t+1}$ is small.} If there is only one bin of load at most $3/4$ in $\BF(I_\sigma(1,t))$,
then irrespective of whether $x_{t+1}$ opens a new bin or not, the claim continues to hold.
On the other hand, suppose there are two bins of load at most $3/4$. By the induction hypothesis,
one of these two bins, say $B$, only has small items. But since $\vol(B)\le3/4$, it can have at most
two small items, i.e., $\vol(B)$ is, in fact, at most $2/3$, and hence there is enough space to accommodate $x_{t+1}$. The claim thus continues to hold.

\textbf{Case 2 - $x_{t+1}$ is tiny.} If there is only one bin $B_1$ of load at most $3/4$ in $\BF(I_\sigma(1,t))$,
then $x_{t+1}$ will be packed in an already existing bin (since $B_1$ has space to accommodate $x_{t+1}$). Suppose there are two bins, $B_1,B_2$, of load at most $3/4$. One of $B_1,B_2$ must have a load greater than $2/3$ as both these bins contain items of size at most $1/3$.
Suppose $B_2$ has only small items. (This is guaranteed by the induction hypothesis.) Since $B_2$ has load at
most $3/4$, it must have at most two small items, which shows that $\vol(B_2)\le 2/3$.
Hence $\vol(B_1)>2/3$, and so, by the \bestfit{} packing rule, $x_{t+1}$ will either be packed in a bin with load $>3/4$ or into $B_1$ (as $\vol(B_1) > \vol(B_2)$). Thus the claim continues to hold after packing $x_{t+1}$.
\end{proof}
Hence, we have at most $k$ bins that contain $L,M$ items and among the remaining bins,
we have at most two bins of load at most $3/4$. Therefore, $\BF(I_\sigma)\le 4/3\vol(I) +(k+2) \leq 4/3\Opt(I)+(k+2)$.
This concludes the proof of \cref{opttrunc}.
\subsection{Proof of \texorpdfstring{\cref{lem:high-lm-ls-bins}}{BF good when many LM LS bins in Opt}}
\label{pf:high-lm-ls-bins}
Let us call a pair of items \emph{fitting} if their sizes sum up to at most $1$, i.e., they fit in a bin together.
Note that $r_1=r_1(\sigma^*)\ge0.91$ indicates that we have a good number of fitting $ML/SL$ pairs.
Using this fact, we will show that \whp{}, in $\tilde I_\sigma(t'_\sigma+1,t_\sigma)$, there necessarily exist a good
number of sextuplets of the form $(q_1,\ell_1,q_2,\ell_2,q_3,\ell_3)$ where each $q_i$ is either medium or small
and each $\ell_i$ is large
and such that each pair $(q_i,\ell_i)$ is fitting. We will also prove that, in the \bestfit{} packing $\BF(I_{\sigma}(1,t_{\sigma}))$, each such sextuplet uniquely corresponds to a bin of weight $3/2$, thus improving the performance of \bestfit{}.

We now proceed to formalize the above arguments.
Consider the packing $\Opt(\tilde I_{\sigma^*}(1,t_{\sigma^*}))$ and focus on the bins of type $ML/SL$ in this packing.
Let $B_i$ be the $i\Th$ such bin and denote the items it contains by $(q_i,\ell_i)$
where $q_i$ denotes the item which is small or medium and $\ell_i$ indicates the large item.



By \cref{require-prop}, we know that
\begin{align}\label{roptbound_tdelta}
\Opt(\tilde I_{\sigma^*}(1,t_{\sigma^*}))\ge \frac{1-17 \epsilon}{2}\Opt(\tilde I)
\end{align}

Thus, there must exist at least $\frac{r_1-17r_1\epsilon}{2}\OPT(\tilde I)$ many fitting pairs of type $ML/SL$ in $I$. 
A tuple of six items $(q_1,\ell_1,q_2,\ell_2,q_3,\ell_3)$ is called a \emph{fitting $ML/SL$ triplet} in $I_\sigma$
if it satisfies the following properties.
\begin{itemize}
    \item The items $q_1,\ell_1,q_2,\ell_2,q_3,\ell_3$ occur consecutively in that order in the sequence $\tilde I_\sigma$, i.e., in the sequence $I_\sigma$,
    there can only be tiny items in between $q_1,\ell_1,q_2,\ell_2,q_3,\ell_3$.
    \item Each $q_i$ is small or medium, and each $\ell_i$ is large.
    \item Each pair $(q_i,\ell_i)$ is fitting.
\end{itemize}
We obtain the following proportionality claim. 
\begin{claim}
\label{proportionality-lm-ls}
Let $\kappa$ denote the number of disjoint fitting $ML/SL$
pairs in $I$. For some positive constant $u$, suppose $\kappa \geq u \cdot \Opt(\tilde I)$.
Let $n_1,n_2$ be two integers such that $1\le n_1\le n_2\le n$ and $n_2-n_1=\Theta(n)$.
We have that the number of fitting $ML/SL$ triplets in the sequence $I_\sigma(n_1+1,n_2)$ is at least
$$ \frac{u^5}{48(3-u)^5}\left(\frac{n_2-n_1}{n}\right) \kappa  - o(\kappa)$$
with high probability.
\end{claim}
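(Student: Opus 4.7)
The plan is to mirror the strategy of \cref{proportionality-lm} for the $ML$ case, generalizing to permit the $q$-side of each fitting pair to be either medium or small. We begin by fixing $\kappa$ disjoint fitting $ML/SL$ pairs $P_1=(q_1,\ell_1),\ldots,P_\kappa=(q_\kappa,\ell_\kappa)$ in $I$. A central structural ingredient is the bound $\tilde n \le \kappa(3-u)/u$ on the number of non-tiny items: the $\kappa$ disjoint fitting pairs occupy $\kappa$ bins of $\Opt(\tilde I)$, while each of the remaining $\Opt(\tilde I)-\kappa$ bins contains at most three non-tiny items, so $\tilde n \le 2\kappa + 3(\Opt(\tilde I)-\kappa) = 3\Opt(\tilde I) - \kappa \le \kappa(3-u)/u$, using $\Opt(\tilde I) \le \kappa/u$. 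This is precisely what produces the $(3-u)^5$ factor in the denominator of the claim.

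For the expected number of fitting $ML/SL$ triplets in $I_\sigma(n_1+1,n_2)$, I would use linearity of expectation, exploiting the fact that the restriction of the uniformly random $\sigma$ to the non-tiny items is itself a uniformly random permutation of $\tilde n$ items. For each ordered triple $(P_a,P_b,P_c)$ of distinct fitting pairs from the $\kappa$ chosen pairs, I would lower bound the probability that the sextuplet $(q_a,\ell_a,q_b,\ell_b,q_c,\ell_c)$ appears as a consecutive block in $\tilde I_\sigma$ whose positions in $I_\sigma$ lie in $(n_1,n_2]$. The event factors into (i) the six items occupying a specific block of consecutive positions in $\tilde I_\sigma$ in the correct order, and (ii) the block lying in the window-portion of $\tilde I_\sigma$, whose length concentrates around $\frac{n_2-n_1}{n}\tilde n$ w.h.p.\ by an application of the concentration bounds underlying \cref{kenyon}. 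Summing over the $\kappa(\kappa-1)(\kappa-2)$ ordered triples and invoking the upper bound $\tilde n\le\kappa(3-u)/u$, one obtains the claimed expected count, with the constant $1/48$ absorbing contributions from orderings and lower-order terms.

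To upgrade the first-moment bound to a high-probability statement, I would carry out a second-moment/Chebyshev argument: write the triplet count $X$ as a sum of indicator variables $X_T$, one per ordered triple of pairs together with a starting position, and bound $\mathrm{Var}(X)$ by decomposing the covariances according to how many items and how many positions the two underlying triples share. The covariances vanish when the underlying item sets and position blocks are fully disjoint; in the remaining cases a careful enumeration shows that the contribution is dominated by $\mathbb E[X]^2$, giving $\mathrm{Var}(X)=o(\mathbb E[X]^2)$, after which Chebyshev's inequality yields the $1-o(1)$ concentration. A minor additional step is to combine this with the high-probability window-length bound so that both events (the triplet count being near its expectation, and the window in $\tilde I_\sigma$ having approximately the predicted length) hold simultaneously.

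The main obstacle will be the concentration step, since the triplet indicators are not independent owing to both shared items and shared block positions; this precludes direct Chernoff-type bounds. The technical heart of the argument is the covariance analysis, where pairs of indicators must be grouped by how many of the three fitting pairs coincide (zero, one, two, or all three) and by the relative positions of their blocks (disjoint, overlapping, or adjacent), and in each case the cumulative contribution to $\mathrm{Var}(X)$ must be verified to be a lower-order correction; the bound $\tilde n \le \kappa(3-u)/u$ is used repeatedly to estimate these sums in terms of $\kappa$.
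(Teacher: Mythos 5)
Your structural plan---treating this as the $ML/SL$ analogue of \cref{proportionality-lm} and controlling the non-tiny count via $|\tilde I|\le\kappa(3-u)/u$---matches the paper's, which in fact proves a common generalization (\cref{gen-proportionality}) and derives both \cref{proportionality-lm} and \cref{proportionality-lm-ls} from it; and your bound on $|\tilde I|$ is indeed what produces the $(3-u)^5$ denominator. However, the first-moment calculation, which is the heart of the argument, is off by a polynomial factor. You sum over the $\kappa(\kappa-1)(\kappa-2)$ ordered triples of the \emph{pre-selected} disjoint pairs and, for each, ask that the exact six items $(q_a,\ell_a,q_b,\ell_b,q_c,\ell_c)$ form a consecutive block of $\tilde I_\sigma$. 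For a fixed ordered sextuplet, the probability of appearing as a consecutive block in a uniform permutation of $|\tilde I|$ non-tiny items is $\Theta(|\tilde I|^{-5})$, so the total expectation is $\Theta\bigl(\kappa^3/|\tilde I|^5\bigr)$. Since $|\tilde I|=\Theta(\kappa)$, this is $\Theta(\kappa^{-2})\to 0$, not the required $\Theta(\kappa)$. The missing factor of $\kappa^3$ is precisely the price of insisting that the three consecutive fitting pairs coincide with three of the $\kappa$ chosen pairs.

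The paper's proof of \cref{gen-proportionality} avoids this by permitting \emph{re-pairings}. It partitions the non-tiny positions of $\tilde I_\sigma$ into $z/6$ disjoint six-position blocks ($z=|\tilde I|$) and, for each block, asks (i) that it contains three $\Gamma$-items of type $P$ and three of type $Q$ in alternating $PQPQPQ$ order (probability roughly $(2+y/x)^{-6}$), and (ii) that each of the three \emph{newly formed} adjacent pairs is fitting (probability roughly $1/8$). The device certifying (ii) is the bipartite graph $G_\Gamma$ on $\Gamma_P\cup\Gamma_Q$ with edges $(p_i,q_j)$ for all $j\ge i$ under the ordering $p_1\le\cdots\le p_x$; an edge implies $p_i+q_j\le p_j+q_j\le1$, so the re-paired pair is still fitting even though it is not one of the $\kappa$ original pairs. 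This gives each block constant probability of hosting a fitting triplet, hence $\Theta(z)=\Theta(\kappa)$ expectation, which is the correct order. This re-pairing/monotonicity idea is the essential missing ingredient; without it, the expected number of the events you consider tends to zero, and no variance bound can recover a linear-in-$\kappa$ lower bound. (By contrast, the second-moment concerns you flag are mild once one adopts the disjoint-block decomposition, since indicators on disjoint blocks are only weakly correlated.)
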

The proof of a general version of this claim
is given in \cref{sec:proportionality}. (This version generalizes both \cref{proportionality-lm} and \cref{proportionality-lm-ls}.)


We use the above claim with $\kappa = \frac{r_1-17r_1\epsilon}{2}\Opt(\tilde  I), u=\frac{r_1-17r_1\epsilon}{2}, n_1= n/4, n_2=n/2$.
Hence, we get that the number of disjoint fitting $ML/SL$ triplets in the time segment $(t'_\sigma+1,t_\sigma)\supseteq (n/4,n/2)$ is at least
\begin{align}
&\frac{1}{48}\left(\frac{\frac{r_1-17\eps r_1}{2}}{3-\frac{r_1-17\eps r_1}{2}}\right)^5\frac14\left(\frac{r_1-17r_1\eps}{2}\right)\Opt(\tilde I)-o(\Opt(\tilde I))\nonumber\\
=&\:\:\frac{(r_1-17\eps r_1)^6}{384(6-r_1+17r_1\eps)^5}\Opt(\tilde I)-o(\Opt(\tilde I))\nonumber\\
\ge&\:\:(1-16\eps)\frac{(r_1-17\eps r_1)^6}{384(6-r_1+17r_1\eps)^5}\Opt(I)-o(\Opt(I))
\label{eq:num-ml-sl-tuplets}
\end{align}
where the last inequality is due to \cref{opti'lowerbound}.

Since $r_1\ge0.91$, we obtain that the number of disjoint fitting $ML/SL$ triplets in the input sequence $I_\sigma$ is 
at least a constant fraction of $\Opt(I)$. Next we will show that,
in the packing of \bestfit{}, each fitting $ML/SL$ triplet in $I_\sigma(t'_\sigma+1,t_\sigma)$ 
corresponds to a unique bin of weight at least $3/2$.
\begin{claim}
\label{claim:ml-sl-tuple-proof}
Suppose there are $\tau$ number of disjoint fitting $ML/SL$ triplets in $I_\sigma(t'_\sigma+1,t_\sigma)$.
Then there will be at least $\tau$ number of bins of weight at least $3/2$ in the packing $\BF(I_\sigma(1,t_\sigma))$.
\end{claim}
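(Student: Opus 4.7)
The plan is to show that for each fitting $ML/SL$ triplet $(q_1,\ell_1,q_2,\ell_2,q_3,\ell_3)$ in $I_\sigma(t'_\sigma+1,t_\sigma)$, at least one of the six bins containing the triplet items in $\BF(I_\sigma(1,t_\sigma))$ has weight $\geq 3/2$. Since the triplets are disjoint in items and each large item lives in a unique bin, a charging scheme which assigns each triplet---whenever possible---to a bin containing one of its large items produces $\tau$ pairwise distinct bins of weight $\geq 3/2$.

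I would argue by contradiction. Suppose that all six bins $B_{q_1}, B_{\ell_1}, B_{q_2}, B_{\ell_2}, B_{q_3}, B_{\ell_3}$ in the final packing have weight $< 3/2$. Two structural consequences follow. First, each $B_{\ell_i}$ already contains $\ell_i$ of weight $1$, so under the assumption it cannot contain any $M/S$ item (each of weight $0.5$) and its total tiny load is strictly less than $1/6$ (so that $1+3\cdot\text{tiny load}<3/2$). Second, each $B_{q_j}$ contains no large item (otherwise weight $\geq 1.5$). Thus the $q$'s and $\ell$'s of the triplet never share a bin in the final packing.

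I would then use the best-fit rule to lower-bound the loads $L_j^{(i)}$ of $B_{q_j}$ at the arrival of $\ell_i$ (for $j\le i$). Suppose $\ell_i$ fits into $B_{q_j}$, i.e.\ $L_j^{(i)}+s(\ell_i)\le1$. By best-fit, $\ell_i$ is placed in the fullest bin where it fits, whose load is therefore $\geq L_j^{(i)}\geq s(q_j)>1/4$. That bin is $B_{\ell_i}$; by the structural fact it had no $M/S$ item before $\ell_i$, so its entire pre-$\ell_i$ load was from tiny items, giving tiny weight $>3\cdot(1/4)=3/4$. Adding $\ell_i$ pushes the weight past $1+3/4=7/4>3/2$, a contradiction. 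Hence $L_j^{(i)}>1-s(\ell_i)$ for every $j\le i$; in particular, at the arrival of $\ell_3$, every $B_{q_j}$ has load $>1-s(\ell_3)$ and contains no large item.

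The final contradiction is obtained by examining the state at $\ell_3$'s arrival. If the three bins $B_{q_1},B_{q_2},B_{q_3}$ coincide into fewer than three distinct bins, a direct count exhibits a bin carrying multiple triplet $M/S$ items; for example, if all three coincide, the bin has three $M/S$ items contributing weight $\geq 3\cdot 0.5=3/2$, contradicting the hypothesis. When the three bins are all distinct, they are three no-large bins, so \cref{bestfit23} forces at least one $B_{q_k}$ to have load $>2/3$. A case analysis on the contents of $B_{q_k}$---which under the weight hypothesis carries at most two $M/S$ items, with correspondingly tight bounds on its tiny load---combined with the best-fit rule on the tiny items arriving between consecutive triplet events (only tiny items arrive there, and best-fit routes them to the fullest feasible bin) yields the contradiction. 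The main obstacle lies in this step: specifically, the subcase where $B_{q_k}$ has $q_k$ as its sole $M/S$ item, where the $3/2$ weight threshold must be reached chiefly through tiny items; here one combines the load bound $L_k^{(3)}>1-s(\ell_3)$ with careful tracking of tiny-item flows, using the fact that at least three bins had load $>1-s(\ell_3)$ to significantly constrain where the tiny items land. The intermediate cases of partial coincidences among the $B_{q_j}$'s are handled analogously, by combining the derived load bound with the weight bookkeeping in the coinciding bin.
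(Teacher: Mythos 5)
The proposal and the paper both use \cref{bestfit23} to force one of the three bins that receive the $q_i$'s to exceed load $2/3$, but the two arguments examine this bin at \emph{different moments}, and this difference is fatal to your approach as outlined. You look at the state of $B_{q_k}$ at $\ell_3$'s arrival, \emph{after} $q_k$ has been packed; the paper instead looks at the bin just \emph{before} $q_i$ is packed. In the paper's version, $\vol(B)\ge 2/3$ \emph{before} packing $q_i$ gives $W(B)\ge 1$ by \cref{claim:vol-gt-23}, and then the $0.5$ weight of $q_i$ is added \emph{on top}, yielding $W(B)\ge 3/2$. In your version, $\vol(B_{q_k})>2/3$ at $\ell_3$'s arrival (with $q_k$ already inside) again yields $W(B_{q_k})\ge 1$ by \cref{claim:vol-gt-23} --- but $q_k$'s $0.5$ of weight is already counted inside that lower bound, so you do not get $3/2$. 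Concretely, a no-large bin containing a single medium item $q_k$ of size $s\in(1/3,1/2]$ plus tiny load $t$ with $2/3-s<t<1/3$ has $\vol>2/3$, $W=0.5+3t\in(1,3/2)$, and contradicts nothing. You flag this subcase yourself (``the main obstacle lies in this step''), but the proposed remedy --- tracking tiny-item flows using the load bound $L_j^{(3)}>1-s(\ell_3)$ --- is not spelled out and it is not clear it closes the gap, since that load bound is weaker than $2/3$. The smaller coincidence case (two of the $B_{q_j}$ coinciding) has the same flavor of gap: two $M/S$ items give weight $\ge1$, not $\ge 3/2$, and ``handled analogously'' is not a substitute for the missing bookkeeping.

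The cleaner fix is to abandon the global contradiction framing and reason forward, as the paper does, about \emph{where each $q_i$ is packed}: (i) if $q_i$ opens a new bin, then by definition of $t'_\sigma$ no tiny item can sit on $q_i$ or open a bin before $\ell_i$ arrives, so $\ell_i$ joins $q_i$ and the resulting $ML/SL$ bin has weight $3/2$ (and no further triplet item fits); (ii) if $q_i$ joins a bin containing a large item, weight $\ge 3/2$ immediately; (iii) if $q_i$ joins a no-large bin with $\vol\ge 2/3$, \cref{claim:vol-gt-23} gives prior weight $\ge1$ and hence $\ge 3/2$ after $q_i$; (iv) otherwise all three $q_i$ land in no-large bins of load $<2/3$, which \cref{bestfit23} rules out. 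Your lemma that $\ell_i$ cannot fit in $B_{q_j}$ (via the contradiction that $B_{\ell_i}$ would otherwise carry tiny weight $>3/4$) is a nice observation, but it is not needed on this route; the crucial missing idea in your write-up is the before-vs-after-$q_i$ accounting of \cref{claim:vol-gt-23}.
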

\begin{proof}
Consider any $ML/SL$ triplet $q_1,\ell_1,q_2,\ell_2,q_3,\ell_3$ in the time segment $(t'_\sigma+1,t_\sigma)$.
By the definition of an $ML/SL$ triplet, it must be the case that in $I_\sigma(t'_\sigma+1,t_\sigma)$,
there can only be tiny items in between $q_1,\ell_1,q_2,\ell_2,q_3,\ell_3$.
Now if $q_1$ opens a new bin, then $\ell_1$ must be packed along with $q_1$ as no tiny item in between $q_1$ and $\ell_1$ can be packed with $q_1$
or can open a new bin,
by definition of $t'_\sigma$. This leads to the creation of an $ML/SL$ bin which has a weight $3/2$, as desired, and none of the
items from the future $ML/SL$ triplets can be packed in this bin.

On the other hand, suppose $q_1$ is placed in an already existing bin $B$. 
If $B$ contained a large item before packing $q_1$, then we are done since this will result in
the formation of a bin of weight at least $3/2$ and no item from a future $ML/SL$ triplet can be packed in this bin. 

Hence, assume that $B$ did not contain any large items before packing $q_1$. We consider two sub-cases depending on
the volume of $B$ before $q_1$ is packed in it. As the first sub-case, suppose $\vol(B)\ge2/3$ before packing $q_1$.
By \cref{claim:vol-gt-23}, it must be the case that $W(B)\ge1$ before packing $q_1$. Hence, after packing $q_1$,
the bin $B$ has a weight of at least $3/2$. Moreover, since $\vol(B)\ge 2/3$ before packing $q_1$, we have that
$\vol(B)\ge11/12$ after packing $q_1$, implying that no item from a future $ML/SL$ triplet can be packed in $B$.
Finally, we look at the sub-case when $\vol(B)\le 2/3$ before packing $q_1$. We can no longer claim that packing $q_1$ makes the bin $B$
to have a weight of at least $3/2$. However, \cref{bestfit23} guarantees that at any point, and before the arrival of $q_1$ in particular,
there can be at most two bins of load at most $2/3$. Thus, if all of $q_1,q_2,q_3$ are packed in existing bins, this would mean that one of them is 
packed in a bin of load at least $2/3$, thereby resulting in the formation of bin of weight $3/2$.
\end{proof}
We can now complete the proof of \cref{lem:high-lm-ls-bins}. Inequality \ref{eq:num-ml-sl-tuplets} gives us a
lower bound on the number of disjoint fitting
$ML/SL$ triplets in the sequence $ I_\sigma(t'_\sigma+1,t_\sigma)$. \cref{claim:ml-sl-tuple-proof} tells us that,
the number of bins of weight $\ge 3/2$ in the packing $\BF(I_\sigma(1,t_\sigma))$ is at least the number of disjoint fitting $ML/SL$
triplets in the sequence $ I_\sigma(t'_\sigma+1,t_\sigma)$.
Hence, \whp, the
number of bins of weight $\ge3/2$ in $\BF(I_\sigma(1,t_\sigma))$ is at least
\begin{align*}  
(1-16\eps)\frac{(r_1-17\eps r_1)^6}{384(6-r_1+17r_1\eps)^5}\Opt(I)-o(\Opt(I))
\end{align*}
\subsection{Proof of \texorpdfstring{\cref{lem:high-mss-mms-sss-bins}}{BF good when high SSS MMS MSS bins in Opt}}
\label{pf:high-mss-mms-sss-bins}

Since $r_2=r_2(\sigma^*)\ge 0.089$, which is a constant, we obtain that the fraction of $SSS/MSS/MMS$ bins in $\Opt(\tilde I_{\sigma^*}(1,t_{\sigma^*}))$
is at least a constant. This, in turn, means that there are a significant number of small items. The rest of the analysis 
is as follows. First, we will show that in $\tilde I_{\sigma}(t'_\sigma+1,t_\sigma)$, there exist a good number of consecutive $S$-triplets.
Then, we will show that in the packing $\BF(I_\sigma(1,t_\sigma))$, on an average, for two $S$-triplets, there exists at least one bin of weight at least $3/2$.
We thus obtain the lemma. We will delve into the formal details now.

Since $r_2$ denotes the fraction of bins of type $SSS/MSS/MMS$ in $\Opt(\tilde I_{\sigma^*}(1,t_{\sigma^*}))$ and each of these bins contains at least 
one small item, we have that the number of small items in the instance $I$ is at least $r_2\Opt(\tilde I_{\sigma^*}(1,t_{\sigma^*}))$.
By \cref{require-prop}, we know that
\begin{align*}
\Opt(\tilde I_{\sigma^*}(1,t_{\sigma^*}))&\ge \frac{1 -17 \epsilon}{2}\Opt(\tilde I)
\end{align*}

Hence, we have that, in $\tilde I$, there are at least 
\begin{align*}
\frac{r_2 -17 r_2\epsilon}{2}\Opt(\tilde I)
\end{align*}
number of small items. On the other hand, there can be at most $2\Opt(\tilde I)$ many large or medium items in $\tilde I$ as at most $2$ such items fit into a bin. Thus, if $f_S$ denotes the fraction of small items in the instance $\tilde I$, we have
\begin{align}
\label{eq:lb-frac-small}
f_S\ge\frac{r_2 - 17 r_2 \epsilon}{4 + r_2 - 17 r_2 \epsilon}
\end{align}
We call a tuple of items $(S_1,S_2,S_3)$ in the input sequence $I_\sigma$ an $S$-triplet if the following conditions hold.
\begin{itemize}
    \item $S_1$ arrives before $S_2$ and $S_2$ arrives before $S_3$.
    \item If we consider the sequence $\tilde I_\sigma$, then $S_1,S_2,S_3$ form a substring in $\tilde I$, i.e., 
    in the original input sequence $I_\sigma$, in between $S_1,S_2,S_3$,
    there can only be tiny items.
\end{itemize}
The next claim shows that in a randomly permuted input sequence, the number of $S$-triplets in a time segment is proportional to
the length of the segment.
\begin{claim}
\label{no_of_consecutive_tripletsprop}
Suppose $f_S$, the fraction of small items in $\tilde I$, is at least some positive constant. Let $n_1,n_2$ be integers such that $1\le n_1\le n_2\le n$
and $n_2-n_1=\Theta(n)$. Then the maximum number of mutually disjoint $S$-triplets in $I(n_1+1,n_2)$ is at least
\begin{align*}
\left(\frac{n_2-n_1}{3n}\right)f_S^3\abs{\tilde I}-o\left(\abs{\tilde I}\right)
\end{align*}
with high probability.
\end{claim}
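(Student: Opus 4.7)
The plan is to partition the non-tiny items of $I_\sigma$ into consecutive blocks of three and show that enough of these blocks both sit inside $I_\sigma(n_1+1,n_2)$ and consist of three small items. The starting observation is that, since $\sigma$ is a uniform random permutation of $[n]$, the subsequence $\tilde I_\sigma$ of non-tiny items, read left to right, is itself a uniform random permutation of $\tilde I$. Writing $\tilde n := \abs{\tilde I}$, index the positions of $\tilde I_\sigma$ by $1, 2, \ldots, \tilde n$ and define consecutive blocks $B_k := (\tilde I_\sigma[3k-2], \tilde I_\sigma[3k-1], \tilde I_\sigma[3k])$ for $k = 1, \ldots, \floor{\tilde n / 3}$. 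If $B_k$ sits entirely within $I_\sigma(n_1+1,n_2)$ and all three of its items are small, then, since any items of $I_\sigma$ between them must be tiny by construction of $\tilde I_\sigma$, it forms an $S$-triplet; distinct blocks give disjoint triplets. The lower bound will be the number of such blocks.

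I would first show that the number $N$ of blocks sitting entirely within $I_\sigma(n_1+1,n_2)$ satisfies $N \ge \frac{n_2-n_1}{3n}\tilde n - o(\tilde n)$ \whp{}. The number of non-tiny items in $I_\sigma(n_1+1,n_2)$ is the size of the intersection of a uniformly random $(n_2-n_1)$-subset of $[n]$ with the fixed set of non-tiny indices; this has a hypergeometric-type distribution, and by Hoeffding's inequality for sampling without replacement (the same argument as in \cref{binomperm}) it concentrates around $\frac{n_2-n_1}{n}\tilde n$ within $O(\tilde n^{2/3})$ \whp{}. Dividing by three controls $N$.

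Next I would bound the number $Z$ of blocks in the window whose three items are all small. Conditional on which positions of $I_\sigma$ hold non-tiny items, the items occupying them form a uniformly random bijection onto $\tilde I$; in particular, by exchangeability, any specified block is all-small with probability $\binom{f_S \tilde n}{3}/\binom{\tilde n}{3} = f_S^3 (1 - o(1))$, and independence of this ``which items'' randomness from the ``which positions'' randomness gives, via linearity, $\expec{Z} \ge \frac{n_2-n_1}{3n} f_S^3 \tilde n \,(1 - o(1))$. For concentration I would use a Doob martingale on $\tilde I_\sigma$ (an Azuma-type bound for uniform random permutations): transposing two elements of $\tilde I_\sigma$ alters the all-small status of at most two blocks, so the bounded-differences constant is $O(1)$ and $\abs{Z - \expec{Z}} = O(\sqrt{\tilde n \log \tilde n}) = o(\tilde n)$ \whp{}. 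A union bound on the two high-probability events then yields the claimed lower bound.

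The main obstacle I anticipate is cleanly decoupling the two layers of randomness: which positions of $I_\sigma$ host non-tiny items, and how $\tilde I$ is arranged within those positions. I plan to handle this by first conditioning on the number of non-tiny items falling in $I_\sigma(n_1+1,n_2)$, which is concentrated by Hoeffding, and then exploiting that, conditional on this count, the relative order of the non-tiny items inside the window is still uniform, so the martingale bound for $Z$ applies. Since every conditioning event has probability $1-o(1)$, \cref{prop:probability-fact} lets the high-probability bounds survive unconditioning, giving the stated lower bound of $\frac{n_2-n_1}{3n} f_S^3 \tilde n - o(\tilde n)$ on the number of disjoint $S$-triplets in $I_\sigma(n_1+1,n_2)$.
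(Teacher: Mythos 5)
Your proposal matches the paper's proof almost exactly: both observe that $\tilde I_\sigma$ is a uniform random permutation of $\tilde I$, partition it into consecutive blocks of three, concentrate the count of non-tiny items falling in $(n_1+1,n_2)$ by a Hoeffding/Chebyshev argument for sampling without replacement, and lower-bound the all-small block count via the probability $\binom{n_S}{3}/\binom{|\tilde I|}{3}=f_S^3(1-o(1))$ together with linearity. The only substantive difference is the concentration tool for the all-small count: you propose an Azuma/McDiarmid martingale inequality for random permutations (bounded differences under transpositions), whereas the paper instead shows the block indicators $Z_j,Z_k$ are pairwise negatively correlated and applies Chebyshev; both work, the paper's being slightly more elementary and yours being a bit more robust and shorter to state.
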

The proof of above claim mainly relies on concentration inequalities. 
However, the proof is quite long, and hence, to maintain the flow of the section,
we defer the proof to \cref{no_of_consecutive_tripletsprop-proof}.
We apply the above claim to our case by choosing $n_1=n/4$, $n_2=n/2$,
and we get that, with high probability, the maximum number of mutually disjoint $S$-triplets in
$I_\sigma(n/4,n/2)$ is at least
\begin{align*}
 \frac14\frac{f_S^3}{3}\abs{\tilde I}-o\left(\abs{\tilde I}\right)\nonumber
    &\ge \frac{f_S^3}{12}\Opt(\tilde I)-o(\Opt(\tilde I))\nonumber\\
    &\ge \left(1 - 16 \epsilon\right)\frac{f_S^3}{12}\Opt(I)-o(\Opt(I))
\end{align*}
where the last inequality follows from \cref{opti'lowerbound}.
Recall that we are conditioning on $\eoneoneone$ which implies that $t'_\sigma\le n/4$ and $t_\sigma>n/2$.
Thus, we get that, with high probability, in the random sequence $I_\sigma(t'_\sigma+1,t_\sigma)$, the number of mutually disjoint
$S$-triplets is at least
\begin{align}
\label{lb-num-s-triplets}
 \left(1 - 16 \epsilon\right)\frac{f_S^3}{12}\Opt(I)-o(\Opt(I))
\end{align}
Substituting \cref{eq:lb-frac-small} in \cref{lb-num-s-triplets}, we obtain that, with high probability, the 
number of mutually disjoint $S$-triplets in $I_\sigma(t'_\sigma+1,t_\sigma)$ is at least
\begin{align}
\label{lb-num-s-triplets-1}
\frac{1 - 16 \epsilon}{12} \left(\frac{r_2-17r_2 \epsilon}{4 + r_2 -17r_2 \epsilon}\right)^3 {}\Opt(I)-o(\Opt(I))
\end{align}
The next claim shows that the presence of $S$-triplets after $t'_\sigma$
is good for the performance of \bestfit{} as a good number of bins of weight $3/2$
will be created.
\begin{claim}
\label{claim:s-triplet-good}
If there are $\varkappa$ many mutually disjoint $S$-triplets in $I_\sigma(t'_\sigma+1,t_\sigma)$, then at least $\varkappa/2-O(1)$ number of bins will be formed in $\BF(I_\sigma(1,t_\sigma))$ that have a 
weight at least $3/2$.
\end{claim}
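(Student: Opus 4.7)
The plan is to show that each $S$-triplet in $I_\sigma(t'_\sigma+1,t_\sigma)$ forces, modulo boundary exceptions, a bin of weight at least $3/2$ to be formed in $\BF(I_\sigma(1,t_\sigma))$, and then to account for the possibility that a single such bin may be charged by several triplets. I would first collect three structural facts about Best-Fit in the interval $(t'_\sigma,t_\sigma]$: (a) by the maximality of $t'_\sigma$, any tiny item arriving after $t'_\sigma$ can neither open a new bin nor enter a bin of load at most $1/2$; (b) combining \cref{bestfit23} with \cref{claim:vol-gt-23}, at any time there are at most two bins of weight $<1$, since every such bin has no large item and load at most $2/3$; (c) a small item $S$ opens a new bin only when every existing bin has load $>1-s(S)\ge 2/3$, and hence weight at least $1$, so there are zero weight-$<1$ bins immediately before such an opening.

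I would then classify each small arrival $S_i$ as \emph{good} if it enters a bin of weight $\ge 1$ (so the bin becomes weight $\ge 3/2$) and \emph{neutral} otherwise. For a fixed triplet $(S_1,S_2,S_3)$ the case analysis proceeds as follows. If $S_1$ is good we are done. Otherwise $S_1$ either opens a new bin $B_1$ (of weight $1/2$) or enters an existing weight-$<1$ bin $B_1$ that then becomes weight in $[1,3/2)$. In the opening sub-case, fact (c) guarantees that every other existing bin has load $>2/3$ when $S_2$ arrives; so either $S_2$ is good, or $S_2$ is forced by Best-Fit into $B_1$, giving $B_1$ two smalls of weight $1$ in total. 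Repeating the argument for $S_3$, either $S_3$ is good, or $S_3$ joins $B_1$, forming an $SSS$-bin of weight $3/2$. The second sub-case (where $S_1$ enters an existing weight-$<1$ bin) is handled analogously, noting that $B_1$ exits the weight-$<1$ category after absorbing $S_1$, so fact (b) increasingly pressures subsequent arrivals into good events.

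The outcome of this triplet-by-triplet analysis is that each $S$-triplet is associated with at least one bin of weight $\ge 3/2$, modulo $O(1)$ boundary exceptions arising from weight-$<1$ bins that pre-existed before the first triplet. Since a bin of weight $\ge 3/2$ contains at most three items of type $M/S$, it can be charged by at most three distinct triplets; a naive summing therefore yields at least $\varkappa/3 - O(1)$ distinct bins, and a slightly sharper pairing argument (charging good events across consecutive triplets to distinct bins whenever possible) recovers the $\varkappa/2 - O(1)$ bound stated in the claim.

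The main obstacle will be verifying the case analysis in the presence of tiny items interleaved between the small items of a triplet. By fact (a), such tiny items only enter bins of load $>1/2$, which can nevertheless include some weight-$<1$ bins (for instance, a bin holding one medium of size close to $1/2$ together with a small amount of tiny content has load $>1/2$ but weight $<1$). I would need to check that these tiny arrivals do not invalidate the counting of good events; fortunately, such arrivals can only increase the weight of a weight-$<1$ bin, possibly promoting it to weight $\ge 1$, which only \emph{helps} our counting by making subsequent arrivals more likely to be good. With this verification in place, the bound of $\varkappa/2-O(1)$ bins of weight at least $3/2$ in $\BF(I_\sigma(1,t_\sigma))$ follows.
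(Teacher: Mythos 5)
Your approach has the same flavor as the paper's (case analysis on where each $S_i$ in a triplet lands, using \cref{bestfit23} and \cref{claim:vol-gt-23}), but there are two genuine gaps.

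\textbf{First gap: the per-triplet guarantee is not established.} You claim that each triplet yields a weight-$\ge 3/2$ bin, reasoning ``either $S_3$ is good, or $S_3$ joins $B_1$, forming an $SSS$-bin.'' This dichotomy is incomplete: $S_3$ might also \emph{open a new bin}. After $S_2$ joins $B_1$, $B_1$ has two smalls and load in $(1/2,2/3]$, so post-$t'_\sigma$ tiny items may be packed on top of it, pushing $\vol(B_1)$ above $1-s(S_3)$; if all other bins are also too full, $S_3$ opens a fresh bin. The triplet then leaves $B_1$ as an $SS$-bin of weight $1+3v$ where $v$ is the tiny volume in $B_1$, which need not be $\ge 3/2$. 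The paper's proof explicitly allows this outcome as an intermediate state: a triplet yields either a weight-$3/2$ bin that is subsequently closed to $S$-items, \emph{or} an $SS$-bin, and a separate argument (using \cref{bestfit23}) is then needed to account for the $SS$-bins. Your proposal skips that stage.

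\textbf{Second gap: the factor $1/2$ is asserted, not proven.} You note that a weight-$\ge 3/2$ bin has at most three $M/S$ items and conclude ``$\varkappa/3 - O(1)$,'' then invoke ``a slightly sharper pairing argument'' to reach $\varkappa/2 - O(1)$. This pairing argument is exactly the crux of the claim and you do not supply it. In the paper the factor $1/2$ arises from a concrete mechanism: when a bin of weight $3/2$ is created directly (a ``good'' insertion into a bin of load $>2/3$), its load exceeds $3/4$, so it is closed to future small items and hence charged by only one triplet; the only double-charging occurs when a triplet produces an $SS$-bin $B$ and a later triplet completes $B$ to an $SSS$-bin, at which point $B$ is again closed to future $S$-items. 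Without spelling out something of this form, the quantitative bound does not follow. (It is true that one can show two ``good'' smalls cannot be followed by a third fitting small in the same bin when the first entered a weight-$\ge 1$ bin of only tiny items, but this requires a careful volume computation you have not done, and it still does not handle the $SS$-bin scenario above.)

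As a minor point, your observation that combining \cref{bestfit23} with \cref{claim:vol-gt-23} yields ``at most two bins of weight $<1$ at any time'' is a nice packaging of the paper's tools, and your fact~(c) (a small item opens a new bin only when all bins already have weight $\ge 1$) is correct and used implicitly in the paper as well. However, the two gaps above mean the proposal does not yet constitute a proof of the claim.
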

The proof of the above claim is by case analysis and is deferred to \cref{SSSgeneral-proof}.

Combining \cref{lb-num-s-triplets-1,claim:s-triplet-good}, we obtain that, with high probability, the number of bins of weight $3/2$
in $\BF(I_\sigma(1,t_\sigma))$ is at least
\begin{align*}
\frac{1 - 16 \epsilon}{24} \left(\frac{r_2-17r_2 \epsilon}{4 + r_2 -17r_2 \epsilon}\right)^3 {}\Opt(I)-o(\Opt(I))
\end{align*}
\subsection{Proof of \texorpdfstring{\cref{lem:low-t-sigma-prime}}{BF good when t-sigma-prime is low}}
\label{pf:final-case-analysis}
We will make use of \cref{lem:high-lm-ls-bins,lem:high-mss-mms-sss-bins} and \cref{weightfn_opt} to show the desired result, conditioned on the event \begin{align*}\eoneoneone\coloneqq\left(t_\sigma' \leq \frac{n}{4} \bigwedge \vol(T(1,t_\sigma)) < 12\epsilon \vol(I_\sigma(1,t_\sigma)) \bigwedge t_\sigma > n/2\right)
\end{align*}
For simplicity, define the quantities
\begin{align}
\alpha_1 = \frac{1-16\eps}{384}\frac{(r_1-17r_1\eps)^6}{(6-r_1+17r_1\eps)^5} \qquad \qquad \alpha_2 = \frac{1 - 16 \epsilon}{24} \left(\frac{r_2-17r_2 \epsilon}{4 + r_2 -17r_2 \epsilon}\right)^3\label{alphas}
\end{align}
For any permutation $\sigma$, we know that $\beta(\sigma)+r_1(\sigma)+r_2(\sigma)=1-o(1)$.
\begin{itemize}
\item Suppose there exists a permutation $\sigma^*$ for which $r_1\coloneqq r_1(\sigma^*)\ge0.91$. 
\footnote{such that $\sigma^*$ satisfies the high probability event given by \cref{require-prop}---$\Opt(\tilde I_{\sigma^*}(1,t_{\sigma^*}))>\frac{1-17 \epsilon}{2}\Opt(\tilde I_{\sigma^*})$\label{footnote}}
Then, from \cref{lem:high-lm-ls-bins}, we get that with high probability (conditioned on $\eoneoneone$) Best-Fit creates at least
\begin{align*}
a_1 \geq \alpha_1\Opt(I) - o(\Opt(I))
\end{align*}
many bins of weight at least $3/2$, where $\alpha_1$ is given by \cref{alphas}. 
\cref{wt-bf-atleast-one} guarantees that every bin (except possibly one) in the packing of \bestfit{} has a weight at least $1$.
Consequently, we have
\begin{align*}
W(I_\sigma(1,t_\sigma))=\sum_{B \in \BF(I_\sigma(1,t_\sigma))} W(B)  &\geq \frac{3}{2} \cdot a_1 +  \BF(I_\sigma(1,t_\sigma))- a_1 -1  \\ 
& \geq \BF(I_\sigma(1,t_\sigma)) + \frac{a_1}{2} - 1 \\ 
& \geq \BF(I_\sigma(1,t_\sigma)) + \frac{\alpha_1}{2} \cdot \Opt(I) - o(\OPT(I)) \\
&\geq  \BF(I_\sigma(1,t_\sigma)) \left( 1 + \frac{\alpha_1}{3} \right) - o(\OPT(I))
\end{align*}
Combining this with \cref{prop:min-wt-bf} and using \cref{weightfn_opt}, we get that with high probability.
\begin{align*}
\BF(I_\sigma(1,t_\sigma)) & \leq \frac{ \left(\frac{3}{2}-\frac{\beta(\sigma)}{2}\right) \left(\frac{1 + 24  \epsilon}{1 - 12 \epsilon} \right  )}{ 1 + \frac{\alpha_1}{3}   } \cdot \OPT(I_\sigma(1,t_\sigma)) + o(\OPT(I)) \\ 
& \leq \frac{\frac{3}{2} \left(\frac{1 + 24  \epsilon}{1 - 12 \epsilon} \right  )}{ 1 + \frac{\alpha_1}{3}   } \cdot \OPT(I_\sigma(1,t_\sigma)) + o(\OPT(I)) \\ 
& \leq \left(\frac{3}{2} - 10^{-7}\right) \OPT(I_\sigma(1,t_\sigma)) + o(\OPT(I))\tag{substituting $r_1=0.91$ in \cref{alphas} as $\alpha_1$ is increasing in $r_1$}\\
& \leq \left(\frac{3}{2} - 2 \epsilon\right)\OPT(I_\sigma(1,t_\sigma)) + o(\OPT(I))
\end{align*}

\item Suppose there exists a permutation $\sigma^*$ for which $r_2\coloneqq r_2(\sigma^*)\ge0.089$.
\footnote{See \cref{footnote}}
Then, from \cref{lem:high-mss-mms-sss-bins}, we get that with high probability (conditioned on $\eoneoneone$) Best-Fit creates at least
\begin{align*}
a_2 \geq \alpha_2\Opt(I) - o(\Opt(I))
\end{align*}
many bins of weight at least $3/2$, where $\alpha_2$ is given by \cref{alphas}. 
\cref{wt-bf-atleast-one} guarantees that every bin (except possibly one) in the packing of \bestfit{} has a weight at least $1$.
Consequently, we have
\begin{align*}
\sum_{B \in \BF(I_\sigma(1,t_\sigma))} W(B)  &\geq \frac{3}{2} \cdot a_2 +  \BF(I_\sigma(1,t_\sigma))- a_2 -1  \\ 
& \geq \BF(I_\sigma(1,t_\sigma)) + \frac{a_2}{2} - 1 \\ 
& \geq \BF(I_\sigma(1,t_\sigma)) + \frac{\alpha_2}{2} \cdot \Opt(I) - o(\OPT(I)) \\
&\geq  \BF(I_\sigma(1,t_\sigma)) \left( 1 + \frac{\alpha_2}{3} \right) - o(\OPT(I))
\end{align*}
Combining this with \cref{prop:min-wt-bf} and using \cref{weightfn_opt}, we get that with high probability.
\begin{align*}
\BF(I_\sigma(1,t_\sigma)) & \leq \frac{ \left(\frac{3}{2}-\frac{\beta(\sigma)}{2}\right) \left(\frac{1 + 24  \epsilon}{1 - 12 \epsilon} \right  )}{ 1 + \frac{\alpha_2}{3}   } \cdot \OPT(I_\sigma(1,t_\sigma)) + o(\OPT(I)) \\ 
& \leq \frac{\frac{3}{2} \left(\frac{1 + 24  \epsilon}{1 - 12 \epsilon} \right  )}{ 1 + \frac{\alpha_2}{3}   } \cdot \OPT(I_\sigma(1,t_\sigma)) + o(\OPT(I)) \\ 
& \leq \left(\frac{3}{2} - 10^{-7}\right) \OPT(I_\sigma(1,t_\sigma)) + o(\OPT(I))\tag{substituting $r_2=0.089$ in \cref{alphas} as $\alpha_2$ is increasing in $r_2$}\\
& \leq \left(\frac{3}{2} - 2 \epsilon\right)\OPT(I_\sigma(1,t_\sigma)) + o(\OPT(I))
\end{align*}
\item Suppose for all the permutations $\sigma$ satisfying the high probability event given by \cref{require-prop}, we have $r_1(\sigma)<0.91$ and $r_2(\sigma)<0.089$. Then $\beta(\sigma)\ge 0.0001$ for each such permutation. Hence, 
by \cref{weightfn_opt}, we have that for all permutations $\sigma$,
\begin{align*}
\BF(I_\sigma(1,t_\sigma))&\le \left(\frac{3}{2} - \frac{\beta(\sigma)}{2}\right)\left(\frac{1 +24 \epsilon}{1 - 12 \epsilon} \right  )\OPT(I_\sigma(1, t_\sigma)+O(1)\\
&\le \left(\frac{3}{2} - 10^{-6}\right)\OPT(I_\sigma(1, t_\sigma)+O(1)\\
&\le \left(\frac{3}{2} - 2\eps\right)\OPT(I_\sigma(1, t_\sigma)+O(1)
\end{align*}
\end{itemize}
Thus, we have shown that, the desired bound on $\BF(I_\sigma(1,t_\sigma))$ holds for all but a negligible fraction of permutations $\sigma$, i.e.,

\begin{align*} \prob{\BF(I_\sigma(1,t_\sigma))\le\left(\frac32-2\eps\right)\Opt(I_\sigma(1,t_\sigma))+o(\Opt(I))\bigg\vert \eoneoneone}\ge 1- o(1)
\end{align*}
as desired.

\subsection{Proof of \texorpdfstring{\cref{lem:high-t-sigma-prime}}{BF good when high t-sigma-prime}}
\label{app:case-112}
Using \cref{kenyon}, we have that the following is true with high probability since $t'_\sigma >n/4$.
\begin{align}
\OPT(I_\sigma(1,t'_\sigma)) \geq  \OPT(I_\sigma(1,n/4)) \geq \frac{1 - \delta}{4} \OPT(I) 
    &\geq  \frac{1 - \delta}{4} \OPT(I_\sigma(1,t_\sigma))
\label{eq:tsigmaprime}
\end{align}
Now, by definition of $t_\sigma'$, all the bins (except possibly one) in $\BF(I_\sigma(1,t_\sigma'))$ must have load greater than $3/4$. 
Hence, let $\mathcal{B}_1$ be the set of bins in $\BF(I_\sigma(1,t_\sigma))$
that have a load greater than $3/4$. We have $\abs{\mathcal{B}_1}\ge\BF(I_\sigma(1,t'_\sigma))-1\ge \OPT(I_\sigma(1,t'_\sigma))-1$.
Then, using \cref{eq:tsigmaprime}, we obtain that  
\begin{align}
\vol(\mathcal B_1) & \geq \frac{3}{4}(\OPT(I_\sigma(1, t_\sigma')) -1)\nonumber\\ 
&\geq \frac{3}{4} \frac{1-\delta}{4} \OPT(I_\sigma(1,t_\sigma))  - \frac34 \nonumber\\ 
&\geq \frac{1}{6} \OPT(I_\sigma(1,t_\sigma)) - \frac34 \label{eq:vol-b1}
\end{align}
with high probability, for small enough $\delta$. 

By definition of $t_\sigma$, all the bins (except possibly one) in $\BF(I_\sigma(1,t_\sigma))$
have a load at least $2/3$.
Let the set of bins in $\BF(I_\sigma(1,t_\sigma))$ 
with load $\geq 2/3$ but $\leq 3/4$ at time $t_\sigma$ be $\mathcal B_2$.
\begin{align*} 
\BF(I_\sigma(1,t_\sigma)) &\leq |\mathcal B_1| + |\mathcal B_2| + 2 \\
& \leq \frac{4}{3} \vol(\mathcal B_1) + \frac{3}{2}\Big( (\vol(I_\sigma(1,t_\sigma)) - \vol(\mathcal B_1) \Big) + 2 \\
&\leq \frac{3}{2}\vol(I_\sigma(1,t_\sigma)) - \frac{\vol(\mathcal B_1)}{6} + 2\\
&\leq \frac{3}{2} \OPT(I_\sigma(1,t_\sigma)) - \frac{1}{6 \cdot 6} \OPT(I_\sigma(1,t_\sigma)) + \frac18 + 2 \quad\text{(using \cref{eq:vol-b1})}\\
&\leq \left(\frac 32-\frac{1}{36}\right) \OPT(I_\sigma(1,t_\sigma))  + \frac{17}{8}
\end{align*}
\subsection{Proof of \texorpdfstring{\cref{claim:largetinyvolume}}{BF good when tiny volume is large}}
\label{proof-largetinyvolume}
Let $\mathcal B_{\le3/4}$ denote the set of bins in the packing $\Bf(I_\sigma(1,t_\sigma))$
that have a load of at most $3/4$. Let $t_1,t_2,\dots,t_r$ denote the tiny items
in the set of bins $\mathcal B_{\le3/4}$, indexed in the order of their arrival, and let
$\tau(1),\tau(2),\dots,\tau(r)$ denote their respective arrival times, i.e., their indices
in the input sequence $I_\sigma$. Also, for $i\in[r]$, denote the bin into which $t_i$ was
packed by $B_i$, and let $\vol\left(B^{(t)}\right)$ denote the volume of bin $B$
after the $t\Th$ item in the input sequence $I_\sigma$ is packed. Note that the $B_i$-s may not necessarily
be different since two tiny items can be packed into the same bin.

We claim that for all $i\in[r-1]$,
\begin{align}
    \vol\left(B_{i+1}^{\tau(i+1)}\right) \geq \vol\left(B_i^{\tau(i)}\right) + s(t_{i+1})\label{eq:vol-tiny-after-pack}
\end{align}
holds. To see why this is true, first consider the case when $B_i=B_{i+1}$. Then, the above condition holds 
since the volume of bin $B_i$ would have increased by at least $s(t_{i+1})$ after packing $t_{i+1}$ (possibly besides some items between $t_i,t_{i+1}$). 
So, suppose $B_i\ne B_{i+1}$. Since $B_i\in\mathcal B_{\le 3/4}$ and $s\left(t_{i+1}\right)\le 1/4$,
\bestfit{} must have chosen $B_{i+1}$ to pack $t_{i+1}$ because $\vol\left(B_i^{(\tau(i))}\right)\le \vol\left(B_{i+1}^{(\tau(i))}\right)$.
Since $\vol\left(B_{i+1}^{(\tau(i+1))}\right)\ge \vol\left(B_{i+1}^{(\tau(i))}\right)+s(t_{i+1})$, \cref{eq:vol-tiny-after-pack} holds.
As a consequence, combining \cref{eq:vol-tiny-after-pack} for all $i\in[r-1]$, we obtain that
\begin{align*}
\frac34\ge\vol\left(B_r^{(\tau(r))}\right)
        \ge\vol\left(B_1^{(\tau(1))}\right)+\sum_{i=2}^rs(t(i))
                                \ge \sum_{i=1}^rs(t(i))
\end{align*}
Hence, we obtain that the volume of tiny items in the set of bins $\mathcal B_{\le3/4}$ is at most $3/4$. However, recall from the lemma
statement that the total volume of tiny items in the sequence $I_\sigma(1,t_\sigma)$ is at least $12\eps \vol(I_\sigma(1,t_\sigma))$.
Hence, at least $12\eps \vol(I_\sigma(1,t_\sigma))-3/4$ volume of tiny items must be present in bins of load greater than $3/4$ 
in the packing $\Bf(I_\sigma(1,t_\sigma))$. This implies that there are at least $\floor{12\eps \vol(I_\sigma(1,t_\sigma))}$
many bins of load greater than $3/4$ in the packing $\Bf(I_\sigma(1,t_\sigma))$.
\subsection{Proofs of \texorpdfstring{\cref{proportionality-lm} and \cref{proportionality-lm-ls}}{ML and ML MS triplets proportionality lemmas}}
\label{sec:proportionality}
In this section, we will prove a lemma generalizing both \cref{proportionality-lm} and \cref{proportionality-lm-ls}.

First, we define some notation. Let $P\subseteq[0,1]$ be a range of sizes and let $Q\subseteq[0,1]$ be another range of sizes such that
$Q\cap P=\emptyset$, i.e., they are disjoint. Further, we say an item is of type $P$ (respectively, type $Q$) if its size lies in the range
$P$ (respectively, $Q$). Now, consider an input sequence $I_\sigma$. Let $\hat I$ denote the list $I$ obtained after removing all the items not of
type $P/Q$. Similarly, $\hat I_\sigma$ denotes the sequence $I_\sigma$ obtained after deleting the items not of type $P/Q$.
A pair of items $(p,q)$ in $I_\sigma$ is said to be a \emph{fitting $PQ$ pair} if the item $p$ is of type $P$ and item $q$ is of type $Q$
and $p+q\le 1$. Further, a sextuplet of items $(p_1,q_1,p_2,q_2,p_3,q_3)$ in $I_\sigma$ is said to be a \emph{fitting $PQ$ triplet} if
\begin{itemize}
    \item every pair $(p_i,q_i)$ is a fitting $PQ$ pair.
    \item the items $p_1,q_1,p_2,q_2,p_3,q_3$ arrive in that order.
    \item there are no items of type $P/Q$ in between them, i.e., in the sequence $\hat I_\sigma$, the items $p_1,q_1,p_2,q_2,p_3,q_3$
    appear consecutively.
\end{itemize}
We will now state the general lemma and see how \cref{proportionality-lm} and \cref{proportionality-lm-ls} reduce to it.
\begin{lemma}
\label{gen-proportionality}
Suppose $\Opt(\hat I)\to\infty$.
Let $\Gamma$ denote a maximum cardinality set of disjoint fitting $PQ$ pairs in $I$.
Define $x\coloneqq\abs{\Gamma}$ and $y$ to be the number of items in $\hat I$ that are not part of any pair in $\Gamma$.
Suppose there exist positive constants $u,v$ such that $x\ge u\Opt(\hat I)$ and $y\le v\Opt(\hat I)$.
Then, for any two arbitrary $a,b$ such that $1\le a\le b\le n$ and $b-a=\Theta(n)$, we have that the number of disjoint
fitting $PQ$ triplets in the sequence $I_\sigma(a+1,b)$ is at least
\begin{align*}
    \frac{1}{48}\left(\frac{b-a}{n}\right)\left(\frac{1}{2+\frac yx}\right)^5x-o(x)
\end{align*}

with high probability, where $\sigma$ is a uniformly randomly chosen permutation.
\end{lemma}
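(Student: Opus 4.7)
The plan is to lower bound the expected number of fitting $PQ$ triplets appearing at consecutive positions of $\hat I_\sigma(a+1,b)$ in the random permutation, show concentration around that expectation, and then extract a disjoint subset via a greedy argument.

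First, I would lower bound the number of fitting $PQ$ pairs in $\hat I$ itself. Sort the $\Gamma$-pairs so that the $Q$-items are in increasing order: $q_1 \le q_2 \le \cdots \le q_x$, and let $p_1,\ldots,p_x$ denote the corresponding $P$-items, so that $(p_i,q_i) \in \Gamma$ for each $i$. Since $p_i + q_i \le 1$ and $q_j \le q_i$ whenever $j \le i$, we have $p_i + q_j \le 1$, so $(p_i,q_j)$ is a fitting $PQ$ pair. This yields at least $\binom{x+1}{2}$ distinct fitting pairs using only $\Gamma$-paired items. From this pool, the number of ordered ``fitting 6-tuples'' $(p_{i_1},q_{j_1},p_{i_2},q_{j_2},p_{i_3},q_{j_3})$ with all six items distinct and each $(p_{i_k},q_{j_k})$ a fitting pair is at least $\bigl(\binom{x+1}{2}\bigr)^3$ minus $O(x^5)$ overlap corrections, giving at least $x^6/8 - O(x^5)$.

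Next, let $Z$ denote the number of fitting $PQ$ triplets (not necessarily disjoint) in $I_\sigma(a+1,b)$. By linearity of expectation, $\expec{Z}$ equals the sum over ordered fitting 6-tuples $T$ of the probability that $T$ occupies $6$ consecutive positions of $\hat I_\sigma$, in the specified order, with all positions in $[a+1,b]$. Writing $N := 2x+y$, for a fixed $T$ the probability of landing in a specified $6$-consecutive block of $\hat I_\sigma$ in order is $(N-6)!/N! \approx 1/N^6$, and the expected number of such $6$-blocks whose positions all lie in $[a+1,b]$ is $\approx N(b-a)/n$; these combine to give probability $\approx (b-a)/(nN^5)$ per $6$-tuple. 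Summing over the $\Omega(x^6)$ ordered fitting $6$-tuples from the previous step gives $\expec{Z} \ge \tfrac{1}{8}\cdot \tfrac{b-a}{n}\cdot\tfrac{x}{(2+y/x)^5}$.

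To turn this into a high-probability statement, I would carry out a second-moment argument: bound $\var{Z}=\sum_{T,T'}\cov{\mathbf{1}_T}{\mathbf{1}_{T'}}$ via a case analysis on how pairs $(T,T')$ overlap (in shared items and/or in shared positions of $\hat I_\sigma$), and argue that the total covariance is $o(\expec{Z}^2)$, so that Chebyshev yields $Z=(1-o(1))\expec{Z}$ with high probability. Finally, two triplet appearances in $\hat I_\sigma(a+1,b)$ whose starting positions in $\hat I_\sigma$ differ by at least $6$ involve disjoint item sets, so a greedy sweep extracts at least $\lceil Z/6 \rceil$ pairwise-disjoint fitting triplets, yielding the bound with constant $1/(8 \cdot 6) = 1/48$. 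The main obstacle is this second-moment step: with $\Omega(x^6)$ correlated indicators that can share items or positions in many distinct patterns, carefully enumerating the overlap cases and verifying that $\var{Z}=o(\expec{Z}^2)$ is the technically delicate part.
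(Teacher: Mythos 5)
Your proposal follows a genuinely different route from the paper. The paper partitions $\hat I_\sigma$ into $z/6$ \emph{fixed, disjoint} sextuplets $S_1,\dots,S_{z/6}$ and defines one indicator $H_i$ per sextuplet (asking whether $S_i$ is an alternating $PQPQPQ$ block with each adjacent pair an edge of the bipartite ``fitting graph'' $G_\Gamma$). Because the sextuplets occupy disjoint positions, $H_j$ and $H_k$ interact only through the shared supply of items, giving a clean $\cov{H_j}{H_k}=O(1/x)$ and a short Chebyshev argument; the factor $1/6$ enters because there are $z/6$ sextuplets and the factor $1/8$ comes from $\prob{H_i=1\mid F_i=1}\approx 1/8$. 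You instead let $Z$ count fitting triplets over \emph{all} sliding $6$-windows of $\hat I_\sigma(a+1,b)$ and recover disjointness at the end via a greedy sweep (at most six windows can overlap a given one, hence $\ge \lceil Z/6\rceil$ disjoint triplets). Both decompositions land on the same constant $1/48$, and your count of at least $\binom{x+1}{2}$ fitting $PQ$ pairs from the sorted $\Gamma$-items is exactly the paper's $G_\Gamma$ edge count, just presented without the graph.

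The substantive difference is that your ensemble of indicators is far more correlated: adjacent windows share up to five items and five positions, so the covariance case analysis you defer is much heavier than the paper's. The paper's fixed-partition design exists precisely to sidestep this. To be fair, I believe your variance bound is achievable: windows at offset $<6$ number $O(N)$ and each covariance is trivially $\le 1$, contributing $O(x)$; windows at offset $\ge 6$ behave like the paper's $H_j,H_k$ and contribute $O(N^2/x)=O(x)$; since $\expec{Z}=\Omega(x)$, Chebyshev still applies. But this is exactly the ``technically delicate part'' you flag and do not carry out, and it is strictly more work than the paper's route for no gain in the final constant. One further point that should be made explicit: your expectation calculation couples two sources of randomness (where $T$ lands in $\hat I_\sigma$, and which $\hat I_\sigma$-positions correspond to $I_\sigma$-indices in $[a+1,b]$); the paper decouples these by first conditioning on the high-probability estimates for $X_a$ and $X_b$ (the number of $P/Q$-items among the first $a$ and $b$ arrivals), and your argument should do the same rather than treating the window size as a deterministic $\approx N(b-a)/n$.
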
 
\begin{deferredproof}{\cref{proportionality-lm}}
In \cref{gen-proportionality}, substitute type $P$ with type $M$ and type $Q$ with type $L$.
Then $\hat I$ will just be $I'$, and $x$ will just be $d'\ge u\Opt(I')$. We need to calculate what the value
of $v$ will be. Since, in $\Opt(I')$, at least $u$ fraction of bins are of type $LM$,
there can be at most $(1-u)$ fraction of bins of type $L/MM$, which in turn, implies that 
there can be at most $2(1-u)\Opt(I')$ number of items in $I'$ that are not part of any fitting $ML$ pair.
Hence $y\le 2(1-u)\Opt(I')$. Finally, we substitute $b=n_2,a=n_1$ to obtain that the number of disjoint fitting
$ML$ triplets in $I_\sigma(n_1+1,n_2)$ is at least
\begin{align*}
\frac{1}{48}\left(\frac{n_2-n_1}{n}\right)\left(\frac{1}{2+\frac yx}\right)^5x-o(x)
\ge\:\: & \frac{1}{48}\left(\frac{n_2-n_1}{n}\right)\left(\frac{1}{2+\frac{2-2u}{u}}\right)^5d'-o(d')\\
\ge\:\:&\frac{u^5}{1536}\left(\frac{n_2-n_1}{n}\right)d'-o(d')
\end{align*}
with high probability.
\end{deferredproof}
\begin{deferredproof}{\cref{proportionality-lm-ls}}
In \cref{gen-proportionality}, substitute type $P$ with type $M/S$ and type $Q$ with type $L$.
Then $\hat I$ will just be $\tilde I$, and $x$ will just be $\kappa\ge u\Opt(\tilde I)$. 
Since, in $\Opt(\tilde I)$, at least $u$ fraction of bins are of type $ML/SL$,
there can be at most $(1-u)$ fraction of bins of type $L/MM/MSS/MMS/SSS$, which in turn, implies that
there can at most $3(1-u)\Opt(\tilde I)$ number of items in $\tilde I$ that are not part of any fitting $ML/SL$ pair.
Hence $y\le 3(1-u)\Opt(\tilde I)$. Finally, we substitute $b=n_2,a=n_1$ to obtain that the number of disjoint fitting
$ML/SL$ triplets in $I_\sigma(n_1+1,n_2)$ is at least
\begin{align*}
\frac{1}{48}\left(\frac{n_2-n_1}{n}\right)\left(\frac{1}{2+\frac yx}\right)^5x-o(x)\ge\:\: & \frac{1}{48}\left(\frac{n_2-n_1}{n}\right)\left(\frac{1}{2+\frac{3-3u}{u}}\right)^5\kappa-o(\kappa)\\
\ge\:\:&\frac{u^5}{48(3-u)^5}\left(\frac{n_2-n_1}{n}\right)\kappa-o(\kappa)
\end{align*}
with high probability.
\end{deferredproof}
We will now prove the general claim.
\begin{deferredproof}{\cref{gen-proportionality}}
Let the pairs in $\Gamma$ be ordered as $(p_1,q_1),(p_2,q_2),\cdots,(p_x,q_x)$
where $p_1,p_2,\dots,p_x$ are in non-decreasing order. 
At times, we will use $\Gamma$ to denote the set $\{p_1,p_2,\dots,p_x,q_1,q_2,\dots,q_x\}$.
What usage we are referring to will be clear from the context.
All the expectation, variance, and covariance
calculations will be computed over the randomness of $\sigma$. Define $z\coloneqq\abs{\hat I}$.
Observe that, by definitions of $x,y$, it follows that $z=2x+y$.

For a given index $i$, let $X_i$ be the random variable that denotes the number of items of type $P/Q$
in $I_\sigma(1,i)$. We first estimate $X_a$ and $X_b$. Let $Y_j$ be the indicator random variable that
denotes if the $j\Th$ item in $I_\sigma$ is of type $P/Q$. Then, $X_i = \sum_{j=1}^i Y_j$,
and since there are $z$ $P/Q$ items in total, we get
\begin{align*}
\prob{Y_j = 1} = \frac{z}{n}\quad \text{and} \quad \var{Y_j} = \expec{Y_j^2} - \expec{Y_j}^2 \leq  \frac{z}{n}
\end{align*}
Using linearity of expectations, we obtain 
\begin{align} 
\label{expec_X2-1} \expec{X_i} = i\frac{z}{n}
\end{align}
Next, we show that $Y_j, Y_k$ are negatively correlated for $j \neq k$.
Note that $\prob{Y_j = 1 | Y_k  = 1} = \frac{z-1}{n-1}$. This is because once the $k\Th$ position is
occupied by a $P/Q$ item, there are $z-1$ number of $P/Q$ items left to occupy the $j\Th$ position
among the remaining $n-1$ items. Since $\frac{z-1}{n-1}<\frac{z}{n}$, we have that
$\prob{Y_j = 1 | Y_k  = 1}<\prob{Y_j = 1}$. This implies that $\prob{Y_j = 1 \land Y_k  = 1}<\prob{Y_j = 1}\prob{Y_k = 1}$.
Hence,
\begin{align*}
\cov{Y_j}{Y_k} = \expec{Y_jY_k} - \expec{Y_j}\expec{Y_k} < 0
\end{align*}
This gives us the variance bound 
\begin{align} \label{var_X2-1} \var{X_i} = \sum_{j=1}^i \var{Y_j} + 2 \sum_{1 \leq j < k \leq n} \cov{Y_j}{Y_k} \leq i\frac{z}{n}
\end{align}

Hence using \cref{expec_X2-1}, \cref{var_X2-1} and Chebyshev's inequality, we obtain
\begin{align*}
\prob{ \abs{X_{a} - \frac{z}{n}a} \geq  z^{2/3}} \leq \frac{\var{X_{a}}}{z^{4/3}}
 \leq   \frac{a(z/n)}{z^{4/3}} = O\left(\frac{1}{z^{1/3}}\right)
\end{align*}
\begin{align*}
\prob{ \abs{X_{b} - \frac{z}{n}b} \geq  z^{2/3}} \leq \frac{\var{X_{b}}}{z^{4/3}}
 \leq   \frac{b(z/n)}{z^{4/3}} = O\left(\frac{1}{z^{1/3}}\right)
\end{align*}

Hence, 
\begin{align} 
\label{deletion_indices2_highprob-1} 
X_{a} \leq a\frac{z}{n} + z^{2/3} \quad \text{and} \quad X_{b} \geq b\frac{z}{n} - z^{2/3} 
\end{align} 
occur simultaneously with probability at least $1 - O(1/z^{1/3})$. 

We now argue that we have a good number of fitting $PQ$ triplets in between the above indices $a,b$ using a deletion argument.
Observe that randomly shuffling $I$ and then removing all the items not of type $P/Q$ gives us a random
permutation of $\hat I$. We group the $z$ items in $\hat I$ into $z/6$ number of sextuplets as shown below.
\begin{align*}
     \underbrace{*****\:*}_{\mathrm{Sextuplet}\:S_1} \underbrace{*****\:*}_{\mathrm{Sextuplet}\:S_2} \cdots\underbrace{*****\:*}_{\mathrm{Sextuplet}\:S_{z/6}}
\end{align*}
Let $F_i$ be the indicator random variable that takes value $1$ if the sextuplet $S_i$ is a $PQ$ triplet, where all the $6$ items belong to $\Gamma$,
and $0$ otherwise.
(It's not imperative that the items must be from $\Gamma$; they can be from $\hat I\setminus \Gamma$ too.
However, this restriction that we impose will ease the calculations in the concentration analysis that comes later.)
We calculate the probability of $F_i=1$ as follows. The first item needs to be of type $P$ and from $\Gamma$;
there are $x$ choices for this to happen among a total of $z$. Then, among the remaining $z-1$ items,
we need to select one of $x$ items of type $Y$ from $\Gamma$. Then, for the third item, we have $x-1$ choices
(as we already chose the first item to be of type $P$) among $z-2$. We continue in this manner
to obtain that
\begin{align}  \label{QPQPQP_prob}
\prob{F_i = 1}  &= \frac{x}{z} \cdot \frac{x}{z-1} \cdot  \frac{x-1}{z-2}\cdot \frac{x-1}{z-3}\cdot  \frac{x-2}{z-4}\cdot  \frac{x-2}{z-5} \nonumber\\
              &= \frac{x}{2x+y} \cdot \frac{x}{2x+y-1} \cdot  \frac{x-1}{2x+y-2}\cdot \frac{x-1}{2x+y-3}\cdot  \frac{x-2}{2x+y-4}\cdot  \frac{x-2}{2x+y-5} \nonumber\\
 &= \frac{1}{(2 + \frac{y}{x})^6} - o(1)
\end{align}
Now, we proceed to calculate the probability that these $PQ$ triplets are indeed fitting. Towards this,
we construct a bipartite graph as follows. The vertex set is given by $\Gamma_P\cup\Gamma_Q$ where
\begin{align*}
\Gamma_P=\{p_1,p_2,\dots,p_x\}\quad\text{and}\quad\Gamma_Q=\{q_1,q_2,\dots,q_x\}.
\end{align*}
For every $i\in[x]$, we draw an edge between $(p_i,q_i),(p_i,q_{i+1}),\dots,(p_i,q_x)$. Note that an
edge between $p_i$ and $q_j$ implies that the pair $(p_i,q_j)$ is fitting. This is because
$p_i+q_j\le p_j+q_j\le 1$ as $(p_j,q_j)$ is fitting.
Let us denote this graph by $G_\Gamma$.
An example of $G_\Gamma$ when $x=4$ looks like \cref{fig:bfg}.
\begin{figure}[H]
\centering
\includegraphics{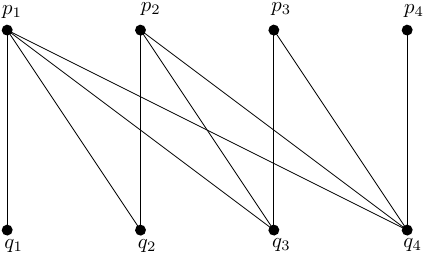}
\caption{The bipartite graph $G_\Gamma$ when $x=4$. Every edge corresponds to a fitting $PQ$ pair.
As a side note, the converse may not be true.}
\label{fig:bfg}
\end{figure}
Using the graph $G_\Gamma$, we now proceed to compute the probability that a $PQ$ triplet is indeed fitting.
Define $H_i$ to be the indicator random variable which takes value $1$ when the sextuplet $S_i$ is a fitting
$PQ$ triplet \emph{and} each of the three consecutive $PQ$ pairs in $S_i$ 
corresponds to an edge in $G_\Gamma$.

Conditioning on $F_i=1$ (i.e., $S_i$ is a $PQ$ triplet),
the probability that the first $PQ$ pair corresponds to an edge in $G_\Gamma$ is given by $\frac{x(x+1)/2}{x^2}$
as there are $x(x+1)/2$ number of edges in $G_\Gamma$ and $x^2$ number of $PQ$ pairs are possible in total.
Once the first $PQ$ pair is chosen such that it corresponds to an edge, all the edges incident on both these vertices
will be deleted, as none of these edges can be the candidates for the $PQ$ pairs chosen next. The number of
these deleted edges will be at most $2x-1$. (This worst case happens when the $PQ$ pair picked is $(p_1,q_x)$.)
Therefore, the number of remaining edges will be at least $x(x+1)/2-(2x-1)=(x-2)(x-1)/2$.
Hence, the probability that the second $PQ$ pair corresponds to an edge in $G_\Gamma$ obtained after removing the 
edges incident on the vertices corresponding to the first $PQ$ pair is at least $\frac{(x-2)(x-1)/2}{(x-1)^2}$.
Similarly, the probability that the third $PQ$ pair corresponds to an edge in $G_\Gamma$ obtained after removing the edges incident 
on the vertices corresponding to the first $PQ$ pairs is at least $\frac{(x-4)(x-3)/2}{(x-2)^2}$. Therefore,
\begin{align} \label{cond_lowerbound}\prob{H_i = 1| F_i = 1} & \geq \frac{x(x+1)/2}{x^2} \cdot  \frac{(x-2)(x-1)/2}{(x-1)^2} \cdot \frac{(x-4)(x-3)/2}{(x-2)^2}\nonumber\\
& = \left(\frac{1}{2} + O(1/x)\right)\left(\frac{1}{2} - O(1/x)\right)\left(\frac{1}{2} - O(1/x)\right)
= \frac{1}{8} - O(1/x)
\end{align}
Now, we compute an upper bound on $\prob{H_i=1|F_i=1}$ in a way similar to how we computed the lower bound.
Assuming $F_i=1$, the probability that the first $PQ$ pair corresponds to an edge in $G_\Gamma$ remains $\frac{x(x+1)/2}{x^2}$
as before. The probability that the second $PQ$ pair corresponds to an edge in $G_\Gamma$ obtained after removing the
edges incident on the vertices in the first $PQ$ pair is at most $\frac{x(x+1)/2-x}{x^2}=\frac{x-1}{2x}$ because at least $x$
edges will be lost due to the first $PQ$ pair.
(This case happens when $(p_1,q_1)$ form the first pair.)
Similarly, at least $(x-1)$ edges will be lost due to the second $PQ$ pair.
Therefore, the probability that the third $PQ$ pair corresponds to an edge in the remaining graph is
$\frac{x(x+1)/2-x-(x-1)}{(x-2)^2}=\frac{x-1}{2(x-2)}$. Therefore
\begin{align} 
\label{cond_upperbound} 
\prob{H_i = 1| F_i = 1} 
& \leq \frac{x+1}{2x} \cdot  \frac{x}{2(x-1)} \cdot \frac{x-1}{2(x-2)}
\nonumber\\
&= \frac{1}{8} + O(1/x)
\end{align}

Now, a lower bound on the number of fitting $PQ$ triplets in the sequence $I_\sigma(a+1,b)$ is given by the random variable
\begin{align*}
S_{(a,b)}=H_{(\frac a6+1)\frac zn}+\dots+H_{\frac b6\frac zn}
\end{align*}
By linearity of expectations, \cref{QPQPQP_prob}, and \cref{cond_lowerbound} we obtain 
\begin{align}
\expec{S_{(a,b)}}\ge \frac{b-a}{6}\cdot \frac{z}{n}\cdot \frac18\cdot \left(\frac{1}{2+\frac yx}\right)^6-o(z)
\label{expec_H}
\end{align}
Now, to prove concentration around the expectation, we compute $\var{S_{(a,b)}}$
and use Chebyshev's inequality. For any $i$, since $H_i$ takes values $0,1$,
\begin{align*}
\var{H_i}=\expec{H_i^2}-\expec{H_i}^2\le 1
\end{align*}
Now, consider any two sextuplets $S_j, S_k$. We claim that the events $H_j=1$ and $H_k=1$ are weakly correlated.
\begin{align*}
\prob{H_j=1| H_k=1}
&=\prob{F_j=1|H_k=1}\prob{H_j=1|F_j=1,H_k=1}\\
&\:\:+\prob{F_j=0|H_k=1}\prob{H_j=1|F_j=0,H_k=1}
\end{align*}
However, for the event $H_j=1$ to occur in the first place, $F_j=1$ must happen. Therefore,
\begin{align}
\prob{H_j = 1 | H_k = 1 } &= \prob{F_j = 1| H_k = 1 }\cdot    \prob{H_j = 1| F_j = 1, H_k = 1}\label{complicated-eq}
\end{align}

The quantity $\prob{F_j = 1| H_k = 1 }$, and an upper bound on $\prob{H_j = 1| F_j = 1, H_k = 1}$ can be calculated similar to \cref{QPQPQP_prob}
and \cref{cond_upperbound}, respectively, except that, instead of $x,z$, we substitute $x-3,z-3$, respectively. 
This is because we are conditioning on $H_k=1$, which means that we are at a loss of three items of
type $P$ and three items of type $Q$ from $\Gamma$. Therefore, we obtain that
\begin{align}
\prob{H_j = 1 | H_k = 1 }& \leq \left(\frac{x-3}{2x+y-6}\frac{x-3}{2x+y-7}\frac{x-4}{2x+y-8} \frac{x-4}{2x+y-9} \frac{x-5}{2x+y-10}  \frac{x-5}{2x+y-11}\right) \nonumber\\
&\:\:\times \left(\frac{(x-3)+1}{2(x-3)} \cdot  \frac{(x-3)}{2((x-3)-1)} \cdot \frac{(x-3)-1}{2((x-3)-2)}\right)\nonumber\\
& \leq \frac{x}{2x+y}  \frac{x}{2x+y-1}  \frac{x-1}{2x+y-2} \frac{x-1}{2x+y-3}  \frac{x-2}{2x+y-4}  \frac{x-2}{2x+y-5} \nonumber\\
&\:\:\times \left(\frac{1}{2} + O(1/x) \right)^3 \nonumber\\ 
& \leq \prob{F_j = 1} \cdot \left(\frac{1}{8} + O(1/x) \right) \label{h-upperbound}
\end{align}Using \cref{cond_lowerbound}, \cref{complicated-eq},we get the covariance estimate
\begin{align}
\cov{H_j}{H_k}  &=  \prob{H_j = 1 } \cdot (\prob{H_j = 1 | H_k = 1 }  - \prob{H_k = 1 })  \nonumber\\
                &=  \prob{H_j = 1|F_j=1}\prob{F_j=1} \cdot (\prob{H_j = 1 | H_k = 1 }  - \prob{H_k = 1|F_k=1}\prob{F_k=1})  \nonumber
\end{align}
An upper bound on $\prob{H_j = 1|F_j=1}$ is given by \cref{cond_upperbound}, an upper bound on $\prob{H_j = 1 | H_k = 1 }$
is given by \cref{h-upperbound},
and a lower bound on $\prob{H_k = 1|F_k=1}$
is given by \cref{cond_lowerbound}. Thus, we obtain
\begin{align}
 \label{cov_upperbound}
 \cov{H_j}{H_k}  
& \leq \prob{F_j = 1}\left(\frac{1}{8}  + O(1/x)\right)  \cdot \left( \prob{F_j = 1}\left(\frac{1}{8}  + O(1/x)\right) - \prob{F_k = 1}\left(\frac{1}{8}  - O(1/x)\right)  \right)   \nonumber \\
& \leq (\prob{F_j = 1})^2\left(\frac{1}{8}  + O(1/x)\right) \cdot O(1/x) \leq O(1/x)
\end{align}
where the penultimate inequality follows since $\prob{F_j=1}=\prob{F_k=1}$. 
Now using \cref{cov_upperbound}, since $z = 2x + y = (2 + \frac{y}{x} ) x = O(x)$ as $y/x$ is upper bounded by some constant as per the lemma statement, we get
\begin{align}\label{H_varbound}
\var{S_{(a,b)}}& =\sum_{i= (\frac a6 +1 )\frac zn}^{\frac b6\frac zn}\var{H_i}+2\sum_{(\frac a6 + 1)\frac zn\le j<k\le \frac a6\frac zn}\cov{H_j}{H_k} \nonumber \\ &\le\frac{z}{6}  + O(z^2/x) = O(x) 
\end{align}
Thus using Chebyshev's inequality and \cref{expec_H}, \cref{cov_upperbound}
\begin{align*}
\prob{S_{(a,b)}\le\expec{S_{(a,b)}}-\left(\expec{S_{(a,b)}}\right)^{2/3}}
                &\le\prob{\abs{S_{(a,b)}-\expec{S_{(a,b)}}}\ge\left(\expec{S_{(a,b)}}\right)^{2/3}}\\
                &\le\frac{\var{S_{(a,b)}}}{\left(\expec{S_{(a,b)}}\right)^{4/3}}\\
                &\le O\left(\frac{x}{x^{4/3}}\right)\\
                &=O\left(\frac{1}{x^{1/3}}\right)
\end{align*}
This thus gives us $S_{(a,b)} \geq \frac{b-a}{n} \cdot \frac{1}{48} \left(\frac{1}{2 + \frac{y}{x}}\right)^5  x - o(x)$ with high probability.\\

Hence, the number of disjoint fitting $PQ$ triplets in $\hat I_\sigma$ between the indices $((a/n)z, (b/n)z)$ is at least $\frac{b-a}{n} \cdot \frac{1}{48} \left(\frac{1}{2 + \frac yx}\right)^5  x - o(x)$ with high probability. As a corollary, the number of disjoint fitting $PQ$ triplets in $\hat I_\sigma$ between the indices $((a/n) z + z^{2/3}, (b/n) z - z^{2/3} )$ is at least 
$\frac{b-a}{n} \cdot \frac{1}{48} \left(\frac{1}{2 + \frac yx}\right)^5  x - o(x) - 2z^{2/3} = \frac{b-a}{n}  \cdot \frac{1}{48} \left(\frac{1}{2 + \frac yx}\right)^5  x - o(x) $ with high probability.\\

Combining this with the high probability event from \cref{deletion_indices2_highprob-1}
\begin{align*}
X_{a} \leq a\frac{z}{n} + z^{2/3} \quad \text{and} \quad X_{b} \geq b\frac{z}{n} - z^{2/3} 
\end{align*}

We obtain that the number of  disjoint fitting $PQ$ triplets in $I_\sigma(a, b)$ is at least
\begin{align*}
 \frac{b-a}{n}\frac{1}{48}\left(\frac{1}{2+\frac yx}\right)^5x-o(x)
\end{align*}
with high probability, as  $(X_{a} , X_{b}) \supseteq ((a/n)z+ z^{2/3} , (b/n)z - z^{2/3})$ with high probability. 
\end{deferredproof}

\subsection{Proof of \texorpdfstring{\cref{no_of_consecutive_tripletsprop}}{S-triplets proportionality lemma}}
\label{no_of_consecutive_tripletsprop-proof}
In the entire proof, we will implicitly refer to a uniform random permutation $\sigma$ according to which the input $I$ is permuted.
All the expectations and variances will be taken over the randomness of $\sigma$.
Also, let $m=\abs{\tilde I}$.

For a given index $i \in [n]$, let $X_i$ be the random variable that denotes the number of non-tiny items (i.e., of type $L/M/S$)
in $I_\sigma(1,i)$. We will first estimate $X_{n_1}, X_{n_2}$. Let $Y_j$ be the indicator random variable that denotes
if the $j\Th$ item in $I_\sigma$ is non-tiny. Then, $X_i = \sum_{j=1}^i Y_j$,
and since there are $m$ non-tiny items in total, we get
\begin{align*}
\prob{Y_j = 1} = \frac{m}{n}\quad \text{and} \quad \var{Y_j} = \expec{Y_j^2} - \expec{Y_j}^2 \leq  \frac{m}{n}
\end{align*}
Using linearity of expectations, we obtain 
\begin{align} 
\label{expec_X2} \expec{X_i} = i\frac{m}{n}
\end{align}
Next, we show that $Y_j, Y_k$ are negatively correlated for $j \neq k$.
Note that $\prob{Y_j = 1 | Y_k  = 1} = \frac{m-1}{n-1}$. This is because once the $k\Th$ position is
occupied by a non-tiny item, there are $m-1$ non-tiny items left to occupy the $j\Th$ position
among the remaining $n-1$ items. Since $\frac{m-1}{n-1}<\frac{m}{n}$, we have that
$\prob{Y_j = 1 | Y_k  = 1}<\prob{Y_j = 1}$. This implies that $\prob{Y_j = 1 \land Y_k  = 1}<\prob{Y_j = 1}\prob{Y_k = 1}$.
Hence,
\begin{align*}
\cov{Y_j}{Y_k} = \expec{Y_jY_k} - \expec{Y_j}\expec{Y_k} < 0
\end{align*}
This gives us the variance bound 
\begin{align} \label{var_X2} \var{X_i} = \sum_{j=1}^i \var{Y_j} + 2 \sum_{1 \leq j < k \leq n} \cov{Y_j}{Y_k} \leq i\frac{m}{n}
\end{align}

Hence, using \cref{expec_X2}, \cref{var_X2} and Chebyshev's inequality, we obtain
\begin{align*}
\prob{ \abs{X_{n_1} - \frac{m}{n}n_1} \geq  m^{2/3}} \leq \frac{\var{X_{n_1}}}{m^{4/3}}
 \leq   \frac{n_1(m/n)}{m^{4/3}} = O\left(\frac{1}{m^{1/3}}\right)
\end{align*}
\begin{align*}
\prob{ \abs{X_{n_2} - \frac{m}{n}n_2} \geq  m^{2/3}} \leq \frac{\var{X_{n_2}}}{m^{4/3}}
 \leq   \frac{n_2(m/n)}{m^{4/3}} = O\left(\frac{1}{m^{1/3}}\right)
\end{align*}

Hence, 
\begin{align} 
\label{deletion_indices2_highprob} 
X_{n_1} \leq n_1\frac{m}{n} + m^{2/3} \quad \text{and} \quad X_{n_2} \geq n_2\frac{m}{n} - m^{2/3} 
\end{align} 
occur simultaneously with probability at least $1 - O(1/m^{1/3})$. We now argue that we have many $S$-triplets in between the above indices $n_1,n_2$ using a deletion argument.

Observe that randomly shuffling $I$ and then removing all the tiny items gives us a random permutation of $\tilde I$.
We group the $m$ items in $\tilde I$ into $m/3$ number of triplets as shown below. 
\begin{align*}
     \underbrace{***}_{\mathrm{Triplet}\:T_1}\quad\underbrace{***}_{\mathrm{Triplet}\:T_2}\quad\cdots\quad\underbrace{***}_{\mathrm{Triplet}\:T_{m/3}}
\end{align*}

Let $n_S$ denote the number of small items in the input $I$ and recall that $f_S$ denotes the fraction of small items in $\tilde I$,
i.e., $f_S=n_S/m$.
Also, let $Z_i$ be the indicator random variable denoting if the triplet $T_i$ is of type $SSS$ (i.e., only small items). Then,
$$\prob{Z_i = 1} =  \frac{n_S}{m}\cdot \frac{n_S-1}{m-1} \cdot \frac{n_S-2}{m-2} $$

Now, a lower bound for the number of $S$-triplets in the time segment $(n_1+1,n_2)$
is given by the random variable $S_{(n_1,n_2)} = Z_{(\frac{n_1}{3}+1)\frac{m}{n}}  + \cdots +Z_{\frac{n_2}{3} \frac mn}$. By linearity of expectations, we obtain
\begin{align}\label{expec_S}
\expec{S_{(n_1,n_2)}} \geq \frac{n_2-n_1}{3}\cdot  \frac{m}{n}\cdot \frac{n_S}{m}\cdot\frac{n_S-1}{m-1} \cdot \frac{n_S-2}{m-2} =   \frac{n_2-n_1}{3n}f_S^3m - o(m) \end{align}
The above equality follows due to the fact that $f_S$ is a constant and $m$ is large enough; so, for all $i\in\{0,1,2\}$, $(n_S-i)/(m-i)\to f_S$.

We now compute $\var{S_{(n_1,n_2)}}$ and use Chebyshev's inequality. For any $i$, since $Z_i$ is an indicator random variable,
\begin{align*}
    \var{Z_i}&=\expec{Z_i^2}-\expec{Z_i}^2 \leq 1
\end{align*}
Now, consider any two triplets $T_j,T_k$. We claim that the events $Z_j$ and $Z_k$ are negatively correlated. Intuitively this is clear, since if $Z_j=1$, 
the number of small items available for placement in $T_k$ is fewer. Indeed, if $Z_j = 1$ there are $n_S-3$ small items available for placement at $T_k$, and we have
\begin{align*} \prob{Z_k = 1| Z_j = 1}  &= \frac{n_S-3}{m-3} \cdot \frac{n_S-4}{m-4} \cdot  \frac{n_S-5}{m-5} \leq  \frac{n_S}{m} \cdot \frac{n_S-1}{m-1} \cdot  \frac{n_S-2}{m-2}    =   \prob{Z_k = 1}
\end{align*}
Hence we obtain
\begin{align*}
\cov{Z_j}{Z_k}=\expec{Z_jZ_k} -\expec{Z_j}\expec{Z_k}\leq 0
\end{align*}
Combining the above, 
\begin{align} \label{var_S}
\var{S_{(n_1,n_2)}}&=\sum_{i= (n_1/3+1)(m/n)}^{(n_2/3)(m/n)}\var{Z_i}+2\sum_{(n_1/3+1)(m/n)\le j<k\le (n_2/3)(m/n)}\cov{Z_j}{Z_k}\nonumber\\
            &\le\frac{(n_2-n_1)}{3}\frac mn\le \frac m3
\end{align}
Now using \cref{expec_S}, \cref{var_S}  and Chebyshev's inequality,
\begin{align*}
\prob{S_{(n_1,n_2)}\le\expec{S_{(n_1,n_2)}}-\left(\expec{S_{(n_1,n_2)}}\right)^{2/3}}
                &\le\prob{\abs{S_{(n_1,n_2)}-\expec{S_{(n_1,n_2)}}}\ge\left(\expec{S_{(n_1,n_2)}}\right)^{2/3}}\\
                &\le\frac{\var{S_{(n_1,n_2)}}}{\left(\expec{S_{(n_1,n_2)}}\right)^{4/3}}\\
                &\le O\left(\frac{m}{m^{4/3}}\right)\\
                &=O\left(\frac{1}{m^{1/3}}\right)
\end{align*}
This thus gives us $S_{(n_1,n_2)} \geq \frac{n_2-n_1}{3n} f_S^3m - o(m)$ with high probability.

Hence, the number of disjoint $S$-triplets in $\tilde I_\sigma$ in the range of indices $(n_1(m/n), n_2(m/n))$ is at least $\frac{n_2-n_1}{3n} f_S^3m - o(m)$ 
with high probability. The number of disjoint $S$-triplets in $\tilde I_\sigma$ between the indices 
$(n_1(m/n) + m^{2/3}, n_2(m/n) - m^{2/3} )$ is at least $\frac{n_2-n_1}{3n} f_S^3m - o(m) - 2m^{2/3} = \frac{n_2-n_1}{3n} f_S^3m - o(m)$ with high probability.\\

Combining this with the high probability event from \cref{deletion_indices2_highprob}
\begin{align*}
X_{n_1} \leq n_1\frac{m}{n} + m^{2/3} \quad \text{and} \quad X_{n_2} \geq n_2\frac{m}{n} - m^{2/3} 
\end{align*}
we obtain that the number of  disjoint $S$-triplets in $I_\sigma(n_1,n_2)$ is at least
$\frac{n_2-n_1}{3n} f_S^3m - o(m) $ with high probability, as  $(X_{n_1} , X_{n_2}) \supseteq (n_1(m/n) + m^{2/3} , n_2(m/n) - m^{2/3})$ with high probability.

\subsection{Proof of \texorpdfstring{\cref{claim:s-triplet-good}}{S-triplet is good for BF}}
\label{SSSgeneral-proof}
The following claim will be helpful.
\begin{claim}
\label{claim:medium-items-bins}
In any packing of \bestfit{}, at any point of time, there cannot be two $M$-bins both with load at most $3/4$ and both containing tiny items.
\end{claim}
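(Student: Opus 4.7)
The plan is to argue by contradiction. Suppose that at some time $t$ there exist two $M$-bins $B_1, B_2$ each with load at most $3/4$ and each containing at least one tiny item; write $m_i$ for the unique medium item in $B_i$. Since only one item arrives per time step, one of the two bins opens strictly before the other; without loss of generality assume $B_1$ opens first, and let $t_0$ denote the moment at which $B_2$ is created. The goal is then to derive a contradiction from the state of $B_1$ just before time $t_0$.

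First I would identify the item that opens $B_2$. This opener must fail to fit into the already open $B_1$, so its size exceeds $1 - \vol(B_1)$ just before time $t_0$. Because $\vol(B_1)$ is non-decreasing in time and is at most $3/4$ at time $t$, its value just before $t_0$ is also at most $3/4$, so the opener has size strictly greater than $1/4$ and in particular cannot be tiny. Since $B_2$ is an $M$-bin at time $t$, its only non-tiny item is $m_2$, so $m_2$ itself must open $B_2$. This gives the key inequality $\vol(B_1) > 1 - s(m_2) \ge 1/2$ just before $t_0$.

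Next I split into two sub-cases according to whether $m_1$ has already been placed into $B_1$ by time $t_0 - 1$. If yes, then from time $t_0$ onwards only tiny items may be placed into $B_1$ or $B_2$, since any subsequent $L$, $M$, or $S$ arrival would violate one of the bins being an $M$-bin at time $t$. Because $\vol(B_1) > 1/2 \ge s(m_2) = \vol(B_2)$ immediately after $t_0$ and every tiny item fits into both bins (whose loads never exceed $3/4$), Best-Fit directs every subsequent tiny item to the strictly fuller $B_1$, so $B_2$ never acquires a tiny item, contradicting our assumption. If instead $m_1 \notin B_1$ just before $t_0$, then $B_1$ consists only of tiny items whose sizes sum to more than $1/2$; when $m_1$ is eventually placed into $B_1$ at some later time $T_1 > t_0$, monotonicity ensures $\vol(B_1) > 1/2$ just before $T_1$, and placing $m_1$ pushes the load strictly above $1/2 + s(m_1) > 5/6 > 3/4$, again contradicting the bound on the final load of $B_1$.

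The main delicacy is the analysis at the opening moment $t_0$ of $B_2$: one has to exclude a tiny opener using the monotonicity of bin loads together with the final-load bound on $B_1$, and then to observe that the only possible non-tiny opener for an $M$-bin is its medium item. Once these two observations are in hand, the two-case analysis closes via a direct application of Best-Fit's ``fullest-bin'' rule, so I do not expect any further technical hurdles.
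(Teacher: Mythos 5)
Your proof is correct and follows essentially the same line of reasoning as the paper's: the opener of $B_2$ cannot be tiny (else $\vol(B_1)>3/4$ already), so it is the medium item $m_2$, giving $\vol(B_1)>1/2$ at that moment, and from there Best-Fit's fullest-bin rule forces every subsequent tiny item to prefer $B_1$ over $B_2$. The extra case split on whether $m_1$ has already been placed in $B_1$ is not needed---the fullest-bin argument of your sub-case (a) already applies in sub-case (b), since $\vol(B_1)$ only grows and tiny items continue to fit in $B_1$---but including it causes no harm.
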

\begin{proof}
Assume for the sake of contradiction that there are two $M$-bins $B_1,B_2$ with tiny items satisfying $\vol(B_1)\le 3/4$
and $\vol(B_2)\le 3/4$, where $B_1$ was opened before $B_2$.
If $B_2$ was opened by a tiny item, then $\vol(B_1) > 3/4$ at that instant, which is a contradiction. 
On the other hand, if $B_2$ was opened by a medium item, then $\vol(B_1) > 1/2$ at that instant, since medium items have size at most $1/2$.
Hence, when the first tiny item is packed in $B_2$, it must be the case that
$\vol(B_1)>3/4$ at that instant, which is a contradiction.
\end{proof}

Now, we will proceed to prove \cref{claim:s-triplet-good}.
First, we prove that every $S$-triplet arriving after $t'_\sigma$ 
(with an exception of at most $O(1)$ number of them) results in the formation of a bin of weight $3/2$ (in which future $S$-items cannot be packed) or an $SS$-bin.
For each $i\in[3]$, let $B_i$ be the bin where $S_i$ was packed. If two of the $B_i$-s are the same, this would create
an $SS$-bin and the lemma stands proved. Hence, from now on, we will assume that all the $B_i$-s are distinct.

\begin{itemize}
\item If any of the $B_i$-s is a $2$-bin before packing $S_i$, then after packing $S_i$, it becomes a $3$-bin, thus becoming a bin of 
weight $3/2$ as well as being closed for the further arriving $S$-items.
\item Suppose for some $i\in\{1,2\}$, $S_i$ opened a new bin or was packed into a bin containing only tiny items. By definition of $t'_\sigma$, no tiny item appearing in between $S_i,S_{i+1}$
can be packed on top of $S_i$, or can open a new bin. So the latter case of being packed into a bin containing only tiny items can occur at most once after $t_\sigma'$, as all bins (with at most one exception) in the packing $\BF(I_\sigma(1,t_\sigma'))$ have load greater than $3/4$.  Consequently, we consider the former case where $S_i$ opens a new bin. As $S_{i+1}$ fits in $B_i$, it must be packed in an already existing bin. Further, since $S_i$ opened a new bin, all the other bins except $B_i$ -- in particular,
$B_{i+1}$ -- must have had a load greater than $2/3$ (at the time when $S_i$ arrived). Therefore, by \cref{claim:vol-gt-23}, since $\vol(B_{i+1})> 2/3$ before the
arrival of $S_{i+1}$, we have that $W(B_{i+1})\ge 1$ before $S_{i+1}$ arrived. Hence, after $S_{i+1}$ is packed, since $W(S_{i+1})=0.5$,
we have that $W(B_{i+1})\ge3/2$. Also, after packing $S_{i+1}$, no small item can be packed in $B_{i+1}$, as it has volume $> 2/3 + 1/4 > 3/4$.
\item Next, we consider the case when each of $S_1,S_2$ is packed in a $1$-bin. 
If any $B_i$ $(i\in\{1,2\})$ was an $L$-bin, then after packing $S_i$, it would become a bin of weight $3/2$, and is closed to future $S$-items.
Similarly, if any $B_i$ $(i\in\{1,2\})$ was an $S$-bin, it would become an $SS$-bin after packing $S_i$.
The case of both $B_1,B_2$ being $M$-bins is slightly trickier. First, note that $\vol(B_1),\vol(B_2) \leq 3/4$ before the arrival of $S_1,S_2$, and that an $M$-bin with tiny items can only be created before $t'_\sigma$, since after $t_\sigma'$, tiny items cannot be added to bins with load $\leq 1/2$ or can open new bins. Moreover by \cref{claim:medium-items-bins}, there can be at most one such $M$-bin with tiny items and load $\leq 3/4$ in $\BF(I_\sigma(1,t_\sigma'))$. Hence, this case can only occur $O(1)$ many times.
\end{itemize}
Thus, we have established that, barring $O(1)$ number of $S$-triplets, every other $S$-triplet results in the formation of a bin of weight $3/2$
or an $SS$-bin. However, our aim is to prove a lower bound on the number of bins of weight $3/2$.

Consider an $SS$-bin $B$ formed in this process.
If another item of type $M/S$ is packed in the bin $B$, then it would mean that the bin $B$ has transformed into a bin of weight $3/2$,
in which case we are good. Assume otherwise, i.e., the bin $B$ continued to be an $SS$-bin. 
But, by \cref{bestfit23}, at any point in time, there can be at most two bins that do not contain a large item that have a 
load of at most $2/3$. And by \cref{claim:vol-gt-23}, every bin with load at least $2/3$ has a weight of at least $3/2$.
Hence, every $SS$-bin (with an exception of at most one) will get converted into a bin of weight $3/2$.

There is one final detail, however. Consider two disjoint $S$-triplets $(S_1,S_2,S_3)$ and $(S_4,S_5,S_6)$. It can happen that
the former $S$-triplet resulted in an $SS$-bin $B$ and one of $S_4,S_5,S_6$ is packed in $B$, thus creating an $SSS$-bin
which has a weight of at least $3/2$. Hence, the bins of weight $3/2$ created by both the triplets are the same.
However, when this happens, note that any of the future non-tiny items---in particular, any of the items from the future $S$-triplets---cannot be packed in $B$.

Therefore, if there are $\varkappa$ number of mutually disjoint $S$-triplets after $t'_\sigma$, at least $\varkappa/2-O(1)$ number of bins with weight $3/2$
will be created after $t'_\sigma$.

\subsection{Other Omitted Proofs}
\label{sec:other-omitted-proofs}
\begin{deferredproof}{\cref{prop:probability-fact}}
We have
\begin{align*}
\prob{X|Y}&=\frac{\prob{X\land Y}}{\prob{Y}}\\
        &=\frac{\prob{Y}-\prob{Y\land \overline X}}{\prob{Y}}\\
        &\ge\frac{\prob{Y}-\prob{\overline X}}{\prob{Y}}\\
        &=\frac{\prob{Y}-o(1)}{\prob{Y}}\\
        &=1-\frac{o(1)}{\prob{Y}}\\
        &=1-o(1)
\end{align*}
The last inequality follows because $\prob{Y}$ is at least a constant.
\end{deferredproof}
\begin{deferredproof}{\cref{wt-bf-atleast-one}}
Recall from the notations section (\cref{sec:notation}) that when specifying the type of a bin, we ignore
the tiny items in it.
By \cref{kenyon2}, with at most one exception, every bin is filled to a level at least $2/3$.
Consider any bin $B$ with load at least $2/3$.
If $B$ is of type $L/MM/MS/SS$, then it has a weight of at least one.
If $B$ is of type $LM/LS/SSS/MSS/MMS$, then it has a weight of at least $3/2$.
Otherwise, we consider three cases depending on the contents of $B$.
\begin{itemize}
\item If $B$ has only tiny items, $W(B) \geq 2$ since $\vol(B)\ge 2/3$ and the weight of tiny item is three times its size.
\item If $B$ had a medium item along with tiny items, $W(B) \geq \frac{1}{2} + (\frac{2}{3} - \frac{1}{2}) \cdot 3 =  1$ 
as a medium item has size $\leq \frac{1}{2}$ and $B$ is at least $2/3$ full.
\item If $B$ had a small item along with tiny items, $W(B) \geq \frac{1}{2} + (\frac{2}{3} - \frac{1}{3}) \cdot 3 =  3/2$ 
as a small item has size $\leq \frac{1}{3}$ and $B$ is at least $2/3$ full.
\end{itemize}
\end{deferredproof}
\begin{deferredproof}{\cref{prop:min-wt-bf}}
The lemma follows from the following string of inequalities. Let $\mathcal P$ denote the packing $\BF(I_\sigma(1,t_\sigma))$.
\begin{align*}
\BF(I_\sigma(1,t_\sigma))=\sum_{B\in \mathcal P}1 
    &\le\sum_{B\in \mathcal P}W(B)+1\tag{by \cref{wt-bf-atleast-one}}\\
    &=\sum_{B\in \mathcal P}\sum_{x\in B}W(x)+1\\
    &=\sum_{x\in I_\sigma(1,t_\sigma)}W(x)+1\\
    &=W(I_\sigma(1,t_\sigma))+1
\end{align*}
The lemma stands proved.
\end{deferredproof}
\begin{deferredproof}{\cref{opti'lowerbound}}
We first pack $\tilde I_\sigma(1,t_\sigma)$ in $\Opt(\tilde I_\sigma(1,t_\sigma))$ number of bins. Then we pack
$T(1,t_\sigma)$ using \nextfit{} \cite{johnson1973near}; each bin (with only the last bin being a possible exception) will be filled to a level greater than $3/4$. Therefore, the
total number of bins used is at most $\Opt(\tilde I_\sigma(1,t_\sigma))+\frac43 \vol(T(1,t_\sigma))+1$. Thus, we have
\begin{align*}\OPT(I_\sigma(1 , t_\sigma)) &\le \OPT(\tilde I_\sigma(1 , t_\sigma)) + \frac43 \cdot 12 \epsilon \vol(I_\sigma(1,t_\sigma)) + 1\\
 & = \OPT(\tilde I_\sigma(1 , t_\sigma)) + 16 \epsilon \OPT(I_\sigma(1 , t_\sigma))) + 1
\end{align*}
which gives the following lower bound on $ \OPT(\tilde I_\sigma(1 , t_\sigma))$:
\begin{align*}
 \OPT(\tilde I_\sigma(1 , t_\sigma))\ge \left(1-16 \epsilon\right)\Opt(I_\sigma(1,t_\sigma))-1
\end{align*}
\end{deferredproof}
\begin{deferredproof}{\cref{weightoptnotiny}}
We first upper bound the weight of tiny items in the time segment $(1,t_\sigma)$ in terms of the weight of the non-tiny items in 
$(1,t_\sigma)$ as follows.
\begin{align*}
W(T(1,t_\sigma)) &= 3 \vol(T(1,t_\sigma)) \\
& \leq 3 (12 \epsilon \vol(I_\sigma(1,t_\sigma)))\\
& = 3 (12 \epsilon \vol(\tilde I_\sigma(1,t_\sigma)) +   12\epsilon \vol(T(1,t_\sigma)) ) \\
& = 3 \left(12 \epsilon \vol(\tilde I_\sigma(1,t_\sigma)) +  4\epsilon W(T(1,t_\sigma))\right) 
\end{align*}
Rearranging terms, we obtain that
\begin{align*}
W(T(1,t_\sigma)) \leq \frac{36 \epsilon}{1 - 12 \epsilon }  \vol(\tilde I_\sigma(1,t_\sigma)) \leq  \frac{36 \epsilon}{1 - 12 \epsilon}  W(\tilde I_\sigma(1,t_\sigma)) 
\end{align*}
Then,
\begin{align*}
\frac{W(I_\sigma(1, t_\sigma))}{\OPT(I_\sigma(1, t_\sigma))} \leq \frac{W(\tilde I_\sigma(1, t_\sigma)) + W(T(1,t_\sigma)) }{\OPT(\tilde I_\sigma(1, t_\sigma))} 
&\leq \frac{W(\tilde I_\sigma(1, t_\sigma)) + \frac{36 \epsilon}{1 - 12 \epsilon}W(\tilde I_\sigma(1, t_\sigma))   }{\OPT(\tilde I_\sigma(1, t_\sigma) )} \nonumber\\ &\leq \frac{W(\tilde I_\sigma(1, t_\sigma) )}{\OPT(\tilde I_\sigma(1, t_\sigma) )} \left (\frac{1 + 24  \epsilon}{1 - 12 \epsilon} \right  )
\end{align*}
\end{deferredproof}

\begin{deferredproof}{\cref{weightfn_opt}}
We lower bound $\OPT(\tilde I_\sigma(1, t_\sigma))$ in terms of the weight $W(\tilde I_\sigma(1,t_\sigma))$ as follows
\begin{align} 
W(\tilde I_\sigma(1,t_\sigma))
=\sum_{B \in \OPT(\tilde I_\sigma(1, t_\sigma) )} W(B)&\leq \beta(\sigma) \OPT(\tilde I_\sigma(1, t_\sigma) ) + 
        (1 - \beta(\sigma))\OPT(\tilde I_\sigma(1, t_\sigma)) \frac{3}{2} + O(1) \nonumber\\ 
    &\leq \left(\frac{3}{2} - \frac{\beta(\sigma)}{2}\right)\OPT(\tilde I_\sigma(1, t_\sigma) ) + O(1)\label{eq:high-one-wt-bins}
\end{align}

Substituting \cref{eq:high-one-wt-bins} in \cref{weightoptnotiny}, we obtain a lower bound on $\Opt(I_\sigma(1,t_\sigma))$ in terms of $W(I_\sigma(1, t_\sigma)$.
\begin{align}
\frac{W(I_\sigma(1, t_\sigma))}{\OPT(I_\sigma(1, t_\sigma))} \leq \frac{W(\tilde I_\sigma(1, t_\sigma) )}{\OPT(\tilde I_\sigma(1, t_\sigma) )} \left (\frac{1 + 24  \epsilon}{1 - 12 \epsilon} \right  )\le \left (\frac{1 + 24  \epsilon}{1 - 12 \epsilon} \right  )\left(\frac32-\frac{\beta(\sigma)}{2}\right)+\frac{O(1)}{\OPT(\tilde I_\sigma(1, t_\sigma) )}\label{weight-by-opt}
\end{align}

Using \cref{prop:min-wt-bf}, and \cref{weight-by-opt}, we obtain
\begin{align*} \BF(I_\sigma(1,t_\sigma))&\le W(I_\sigma(1, t_\sigma))+1 \\
    &\leq \left(\frac{3}{2} - \frac{\beta(\sigma)}{2}\right)
\left(\frac{1 +24 \epsilon}{1 - 12 \epsilon} \right  ) \OPT(I_\sigma(1, t_\sigma))+\frac{O(1)\cdot\OPT(I_\sigma(1, t_\sigma) )}{\OPT(\tilde I_\sigma(1, t_\sigma) )}+1\\
&\le \left(\frac{3}{2} - \frac{\beta(\sigma)}{2}\right)\left(\frac{1 +24 \epsilon}{1 - 12 \epsilon} \right  ) \OPT(I_\sigma(1, t_\sigma))+O(1)
\end{align*}
where the last inequality follows from \cref{opti'lowerbound}. 
\end{deferredproof}
\begin{deferredproof}{\cref{claim:opt-prime-lb}}
Since $\hat b$ is the number of $LM$ bins in $\Opt(I'_\sigma(1,t_\sigma))$,
we have that $\hat b\le \Opt(I'_\sigma(1,t_\sigma))\le \hat\ell+\frac{\hat m-\hat b}{2}+1$ (from \cref{kdjfjkdkjfdkj}).
Rearranging terms, we obtain $\hat\ell+\frac{\hat m}{2}+1\ge \frac32\hat b$.
Adding $2\hat\ell+\hat m$ on both sides, we obtain
\begin{align*}
3\hat\ell+\frac32\hat m+1\ge \frac32\hat b+2\hat \ell+\hat m
\end{align*}
Rearranging terms, we obtain
\begin{align}
3\left(\hat\ell+\frac{\hat m-\hat b}{2}\right)+1\ge 2\hat\ell+\hat m\label{gfkjgjkfjkghjklfhjklh}
\end{align}
From \cref{kdjfjkdkjfdkj}, we have $\Opt(I'_\sigma(t_\sigma+1,n))\ge \hat\ell+\frac{\hat m-\hat b}{2}$. Hence
\begin{align*}
\frac32\Opt(I'_\sigma(t_\sigma+1,n))&\ge \frac32\left(\hat\ell+\frac{\hat m-\hat b}{2}\right)\\
                            &= \frac32\left(\hat\ell+\frac{\hat m-\hat b}{2}\right)\\
                            &\ge \hat\ell+\frac{\hat m}{2}-\frac12\tag{from \cref{gfkjgjkfjkghjklfhjklh}}
\end{align*}
Multiplying both sides by $2/3$ gives us the claim.
\end{deferredproof}
\bibliographystyle{alpha}
\bibliography{ref.bib}
\end{document}